\newtheorem{prop}{Proposition}[section]     
\newtheorem{thm}{Theorem}[section]
\newtheorem{lemma}{Lemma}[section]
\newtheorem{Remark}{Remark}[section]
\newtheorem{assumption}{Assumption}[section]
\theoremstyle{definition}
\newcommand{\im}{\mathrm{i}}
\newcommand{\HD}{\textcolor{red}}
\newcommand{\znum}{\mathbb{Z}}
\newcommand{\mynegspace}{\hspace{-0.12em}}
\newcommand{\Bigsnorm}[1]{\Bigg\rvert\mynegspace\Bigg\rvert\mynegspace\Bigg\rvert\mynegspace {#1} \Bigg\rvert\mynegspace \Bigg\rvert\mynegspace \Bigg\rvert}
\newcommand{\bigsnorm}[1]{\Big\rvert\mynegspace\Big\rvert\mynegspace\Big\rvert\mynegspace {#1} \Big\rvert\mynegspace \Big\rvert\mynegspace \Big\rvert}
\newcommand{\snorm}[1]{\rvert\mynegspace\rvert\mynegspace\rvert\mynegspace {#1} \rvert\mynegspace \rvert\mynegspace \rvert}
\newcommand{\norm}[1]{{\|}{#1}{\|}}
\newcommand{\bignorm}[1]{\Big{\|}{#1}\Big{\|}}
\newcommand{\Bignorm}[1]{\Bigg{\|}{#1}\Bigg{\|}}
\newcommand{\inprod}[2]{\langle #1, #2 \rangle}
\newcommand{\biginprod}[2]{\Big\langle #1, #2 \Big\rangle}
\newcommand\tageq{\addtocounter{equation}{1}\tag{\theequation}}
\newcommand{\lamx}[2]{\lambda^{(#1)}_{X,#2}}
\newcommand{\lamy}[2]{\lambda^{(#1)}_{Y,#2}}
\newcommand{\phix}[1]{{\Pi}^{(\omega)}_{X,#1}}
\newcommand{\phiy}[1]{{\Pi}^{(\omega)}_{Y,#1}}
\newcommand{\phiT}[1]{\hat{\Pi}^{(\omega)}_{#1}}
\newcommand{\phixT}[1]{\hat{\Pi}^{(\omega)}_{X,#1}}
\newcommand{\phiyT}[1]{\hat{\Pi}^{(\omega)}_{Y,#1}}
\newcommand{\lamxT}[1]{\hat{\lambda}^{(\omega)}_{X,#1}}
\newcommand{\lamyT}[1]{\hat{\lambda}^{(\omega)}_{Y,#1}}
\newcommand{\cnum}{\mathbb{C}}
\newcommand{\rnum}{\mathbb{R}}
\newcommand{\E}{\mathbb{E}}
\newcommand{\Cov}{\text{Cov}}
\newcommand{\F}{\mathcal{F}}
\newcommand{\G}{\mathcal{G}}
\newcommand{\Hi}{\mathcal{H}}
\newcommand{\hi}{\mathbb{H}}
\newcommand{\op}{\mathcal{L}}
\newcommand{\flo}[1]{\lfloor #1 \rfloor}
\newcommand{\ldm}[2]{\mathcal{M}^{{#1}(\omega)}_{#2,m}}
\newcommand{\dmp}[2]{D^{(\omega)}_{#2,m,#1}}
\newcommand{\dmpi}[3]{D^{(\omega_{#2})}_{#3,m,#1}}
\newcommand{\jpr}{j^\prime}
\newcommand{\X}{\bm{{X}}}
\newcommand{\Hd}{\mathcal{H}}
\newcommand{\Hs}{\mathscr{H}}
\providecommand{\AMS}[1]{\textbf{\textit{AMS subject classification: }} #1}
\begin{document}

\begin{frontmatter}
\title{Pivotal tests for relevant differences in the second order dynamics  of functional time series}
\runtitle{Pivotal tests for relevant differences  in the second order dynamics  of functional time series}

\begin{aug}
\author[A]{\fnms{Anne} \snm{van Delft}\ead[label=e1]{anne.vandelft@columbia.edu}}
\and
\author[B]{\fnms{Holger} \snm{Dette}\ead[label=e2,mark]{holger.dette@rub.de}}
\address[A]{Department of Statistics, Columbia University, 1255 Amsterdam Avenue, New York, NY 10027, USA. \printead{e1}}

\address[B]{Ruhr-Universit\"at Bochum, Fakult\"at f\"ur Mathematik, 44780 Bochum, Germany \printead{e2}}
\end{aug}

\begin{abstract}
Motivated by the need to statistically quantify differences between modern (complex) data-sets which commonly result as high-resolution measurements of stochastic processes varying over a continuum, we propose novel testing procedures to detect relevant differences between the second order dynamics of two functional time series. In order to take the between-function dynamics into account that characterize this type of functional data, a frequency domain approach is taken. Test statistics are developed to compare differences in the spectral density operators and in the primary modes of variation as encoded in the associated eigenelements. Under mild moment conditions, we show convergence of the underlying statistics to Brownian motions and construct  pivotal test statistics. The latter is essential because the  nuisance parameters can be unwieldy and their robust estimation infeasible, especially if the two functional time series are dependent. In addition to these novel features, the properties of the tests are robust to any choice of frequency band enabling also to compare energy contents at a single frequency. The finite sample performance of the tests are verified through a simulation study 
and are  illustrated with an application to fMRI data.
\end{abstract}

\begin{keyword}
\kwd{functional data}
\kwd{time series}
\kwd{spectral analysis}
\kwd{relevant tests}
\kwd{self-normalization}
\kwd{martingale theory}
\end{keyword}

\end{frontmatter}
 \AMS{Primary: 62M15, 60G10; Secondary 62M10.}

\section{Introduction}\label{sec:intro}
\def\theequation{1.\arabic{equation}}
\setcounter{equation}{0}

Functional time series analysis is concerned with the development of inference methods to model and  analyze data measurements from processes that take values over some continuum like a curve, a surface or a sphere and which exhibit a natural dependency between the  observations, each considered as a point in the function space $\Hi$. 
In the current day and age of technological advances where measurements of a process can be taken over its entire domain of definition at a high precision, it is not surprising that functional time series analysis is of increased applicability in numerous 
research areas. Examples can be found in molecular biophysics \citep[][]{TavPan2016}, brain imaging \citep{aston2012}, climatology  \citep[][]{zhang2011,Zhang2015}, environmental data \citep{HorKokNis17} or yet economics \citep{AnPapSap06,KMR2019}.  Naturally, this has led to an upsurge in the available literature on statistical  methodology  for the analysis of functional time series.

The main purpose of this paper is to develop frequency domain based inference methods which allow to quantify differences in the second order characteristics of two weakly stationary (possibly dependent)
 functional time series, say, $\{X_t\}_{t\in \znum}$ and $\{Y_t\}_{t\in \znum}$. Comparison of the second order characteristics of two functional time series is of interest in various applications and controlled experiments. The motivation in most cases is to know whether two series are similar or that a joint analysis on the pooled data is relevant to consider. Inherent to this type of sequentially collected functional data is the presence of temporal dependence. The second order structure is therefore more involved than for independent functional data, yet the development of appropriate inference methods are of the same eminent importance; the second order dynamics play a key role in providing information on the smoothness properties of the random functions and optimal dimension reduction techniques. 

For  independent functional data, statistical inference tools for comparing covariance operators 
 have  been developed by \cite{Panaretos2010}, \cite{fremdt2013},  \cite{Guo2016} and  \cite{Paparoditis2016}. \cite{BHK2009} and \cite{PoStGh16} investigated in how far the distribution of two random samples of independent functional data coincide by means of their Karhunen-Lo{\`e}ve expansion, and developed tests to compare the functional principal components, i.e., the eigenvalues and eigenfunctions of the autocovariance operator. In the context of temporally dependent functional data, methods in this direction have also been considered. Motivated by climate downscaling studies, \cite{Zhang2015} proposed testing for equality of the 0-lag covariance operators of two functional time series and of their associated eigenvalues and eigenfunctions. More recently \cite{Pilavakis2019}, proposed a test for the equality  of the $0$-lag covariance operators of several independent functional time series. 

Time domain methods as considered in aforementioned literature however suffer from important shortcomings when one wants to infer on the second order dynamics of temporally dependent functional data. 
The autocovariance operator only captures \textit{static} features and the long-run covariance operator, being a sum of the sequence of $h$-lag covariance operators, only captures crude features of the dynamics. In addition, functional principal component analysis (FPCA) does not provide an optimal dimension reduction since it ignores any temporal dynamics present in the collection of functional observations.
\\
To analyze or compare second order dynamics of functional time series, a frequency domain approach might in fact  be more appropriate. 
Under regularity conditions, the \textit{spectral density operator} $\{\F_X^{(\omega)}\}_{\omega \in [-\pi,\pi]}$ characterizes the full second order dynamics of the process and the corresponding sequences of eigenelements $\{\phi^{(\omega)}_{X,k},\lambda_{X,k}^{(\omega)} \}_{k \ge 1}$ provide a starting point for an optimal lower dimensional representation that also captures the temporal dynamics of the process. More specifically, the \textit{Cram{\'e}r-Karhunen-Lo{\`e}ve decomposition} \citep[][]{PanTav2013b} of a zero-mean $\Hi$-valued stochastic process $\{X_t\}_{t \in \znum}$ is (formally) given by\[ 
X_t = \int_{-\pi}^\pi e^{\im \omega t} \big(\sum_{j=1}^{\infty} \phi^{\omega}_j \otimes \phi^{\omega}_j\big) Z_{X,\omega}  =
\int_{-\pi}^\pi e^{\im \omega t} \sum_{j=1}^{\infty} \inprod{d Z_{X,\omega}}{\phi^{\omega}_j} \phi^{\omega}_j~.
\]
This representation essentially decomposes the process into uncorrelated function-valued frequency components and --in analogy with the classical Karhunen-Lo{\`e}ve representation-- it separates the functional and stochastic components. Truncation of the series inside the integral at a finite level then provide a \textit{harmonic} FPCA, an optimal lower dimensional representation of the functional time series \citep[see also][on this topic]{PanTav2013b,Hormann2015,vDE19}. Indeed, the covariance operator of the infinitesimal increment $d Z_{X,\omega}$ is $\F_X^{(\omega)}$, and thus the eigenfunctions $\{\phi^{(\omega)}_{X,k}\}_{k \ge 1}$ form the optimal basis to expand $d Z_{X,\omega}$, whereas the corresponding eigenvalues $\{\lambda_{X,k}^{(\omega)}\}_{k \ge 1}$ provide insight on the relative contribution of each frequency component to the total variation in the process, as well as on the dimensionality of each component. In order to compare second order characteristics of functional time series, it is therefore of interest to be able to compare the spectral density operators as well as to compare the primary modes of variation as given by the respective eigenprojectors $\phix{k} = \phi_{X,k}^{(\omega)} \otimes \phi_{X,k}^{(\omega)}$ and eigenvalues $\lamx{\omega}{k}$ ($k\ge 1$).
\\
In this paper, our goal is to develop  pivotal tests statistics to detect relevant  differences in the second order structure between two functional time series 
based on the spectral density operators  and their associated characteristics as given by the eigensystems (eigenprojectors and eigenvalues). 
\\ 

 The novelty of our approach lies in four different aspects.
\begin{itemize} 
\item[(i)] 
 Firstly, while methods to test for equality of spectral density operators of two functional time series are available
  \citep[see e.g.,][]{TavPan2016,LPapSap2018},  tests to compare the  eigenelements  of spectral density operators have, to the best of our knowledge, not yet been considered in existing (function-valued) time series literature. Due to their central role in dimension reduction techniques, these tests are extremely relevant but far from trivial to construct. 
\item[(ii)] 
Secondly, our approach is in terms of a relevant testing framework, which means that we are only  
interested in deviations  that  surpass a certain threshold.  For example, in the context of  comparing spectral density operators  we 
do \textit{not} consider the problem of testing  for exact equality
of the spectral density operators  $\F_X^{(\cdot)} $ and $\F_Y^{(\cdot)} $, but instead propose to investigate hypotheses of the form 
\[ \label{h0} 
H_{0}:  \int_a^b\snorm{\F_X^{(\omega)}    - \F_Y^{(\omega)}}^{2}  d\omega  \leq \Delta \quad a\le b \in [0,\pi]
\]
  of no relevant deviation between $\F_X^{(\cdot)} $ and $\F_{Y}^{(\cdot)} $ over a given frequency band. 
Here $\snorm{\, \cdot}$ denotes an appropriate  norm and    $\Delta > 0$ is a pre-specified threshold. 
Note that classical hypotheses as considered in \cite{TavPan2016} and \cite{LPapSap2018}  are  obtained  with the threshold  set to zero but the case $\Delta=0$ is not considered in this paper.
Our motivation for considering relevant hypotheses (i.e., $\Delta >0$)   stems from  the observation that in many applications, it is clear from the scientific background  that exact equality of the second order structures  of  functional  time series  $\{X_t\}_{t\in \znum}$ and $\{Y_t\}_{t\in \znum}$  does not hold. However, one might be interested in working  under this assumption if the deviation between the two operators is small. In this case,  the two series could be merged in the statistical analysis  because  the 
difference between $\F_X^{(\cdot)}$ and $\F_Y^{(\cdot)}$  is small.  A similar comment applies   to  the   eigenfunctions and eigenprojectors of a spectral density operator, for which relevant  hypotheses can  be defined similarly  (see  Section \ref{sec22} for more  details).
We would like to emphasize that testing relevant hypotheses avoids the consistency problem as mentioned in   \cite{berkson1938}:   any consistent test will detect any arbitrary small change in the parameters if the sample size is sufficiently large.
\item[(iii)] 
Thirdly,  tests for hypotheses involving quantities derived from the spectral density operators are of a very complicated nature. The asymptotic distributions of corresponding test statistics oftentimes depend on the unknown objects of interest or on the higher order dynamics of the functional time series. 
For example, \cite{LPapSap2018} consider classical testing problems and use  the bootstrap to avoid estimation of a functional of the spectral density operator. However, if relevant hypotheses of the form  \eqref{h0} have to be tested then the construction of a bootstrap procedure is highly non-trivial as one has to mimic the distribution of
a  test statistic under a null hypothesis,  which differs only  in  a quantitative but not in  a qualitative way  from the alternative.
The situation becomes even more difficult in the construction of testing procedures for relevant hypotheses involving the eigenfunctions and eigenprojectors. 
In this paper we solve this problem; we develop  tests  that  are pivotal and do neither  require the estimation of such nuisance parameters nor a bootstrap approach. 
\item[(iv)] 
Fourthly,  we derive our results under extremely mild moment conditions, which
are  much weaker than those available in the existing literature on (functional) time series (see \autoref{sec3} for details). 
\end{itemize} 
The structure of this article is as follows. First, we introduce the precise form of our hypotheses, relate this to existing literature, and highlight the importance of considering pivotal test statistics. In \autoref{sec2}, we introduce our testing frameworks. All proposed test statistics can be expressed as a functional of a  `building block'  process.  In  \autoref{sec3}, we introduce and investigate this process, and establish its limiting distribution. These results are then used to develop new tests and to investigate  their statistical  properties.  In \autoref{sec:sec5}, we study  the finite sample properties of the proposed tests in a simulation study. Finally, in \autoref{sec:sec4}, we provide the main argument to establish the weak convergence of the  `building block'  process. Further technical  details are deferred to \autoref{sec:mainstat} and \autoref{sec:proofs} of the supplementary file  \cite{vDD20}. An application of the proposed methodology to resting state fMRI data may be found in  \autoref{sec:data}.

\section{Relevant hypotheses for second order dynamics} \label{sec2}
\def\theequation{2.\arabic{equation}}
\setcounter{equation}{0}
 \subsection{Notation} \label{sec21}

We start by introducing some required terminology. Let $\Hi$ be a separable Hilbert space with inner product $\inprod{\cdot}{\cdot}$ and induced norm $\|\cdot\|_{\mathcal{H}}$. We denote the Hilbert tensor product between two Hilbert spaces $(\Hi_j, \inprod{\cdot}{\cdot}_{\Hi_j})_{j=1,2}$ by $\Hi^{\otimes_2}=\Hi_1 \otimes \Hi_2$, whose elements are linear combinations of the simple tensors $h_1\otimes h_2$, $h_j \in \Hi_j, j =1,2$. This is a Hilbert space formed from the algebraic tensor product together with a bilinear map $\psi: \Hi_1 \times \Hi_2 \to \Hi_1 \otimes_{\mathrm{alg}} \Hi_2$ satisfying $\inprod{\psi(h_1,h_2)}{\psi(g_1,g_2)} = \inprod{h_1}{g_1}_{\Hi_1}\inprod{h_2}{g_2}_{\Hi_2}$ for $h_1, g_1 \in \Hi_1$ and $h_2,g_2 \in H_2$ and then taking the completion with respect to the induced norm. We denote the direct sum of two Hilbert spaces by $\Hi^{\oplus_2} =\Hi_1 \oplus \Hi_2$, of which elements are of the form  $h=(h_1, h_2)^\top$, where $(\cdot)^\top$ denotes the transpose operation. Observe that this is again a Hilbert space with inner product $\inprod{g}{h}=\sum_{j=1}^2 \inprod{g_j}{h_j}_{\Hi_j}$, for any $g, h \in \Hi^{\oplus_2}$. For more details on these facts we refer to \cite{KadRing97}. Let $\{\chi_j\}_{j \ge 1}$ be an orthonormal basis of $\Hi_1$. For a bounded linear operator $A:\Hi_1 \to \Hi_2$ we define, respectively, the operator norm by $\snorm{A}_{\infty}=\sup_{\|g\|_{\Hi_1}\le 1}\|A(g)\|_{\Hi_2}$, $g\in \Hi_1$, the Hilbert-Schmidt norm by $\snorm{A}_2 = \sum_{j\geq 1}  \big(\|A(\chi_j)\|^2_{\Hi_2}\big)^{1/2}$, which is induced by the inner product $\inprod{A_1}{A_2}_{S_2} = \sum_{j \ge 1} \inprod{A_1 \chi_j}{A_2 \chi_j}_{\Hi_2},$ $A_1$, $A_2: \Hi_1 \to \Hi_2$, and for $A: \Hi_1 \to \Hi_1$ the trace class norm by $\snorm{A}_1 = \sum_j \inprod{ (AA^{\dagger})^{1/2}(\chi_j)}{\chi_j}_{\Hi_1}$, where $A^\dagger$ denotes the adjoint of $A$. We write $A \in S_2(\Hi_1, \Hi_2)$ if it has finite Hilbert-Schmidt norm and abbreviate $S_2(\Hi) =S_2(\Hi, \Hi)$. For a bounded linear operator $A: \Hi \to \Hi$, with $\snorm{A}_1<\infty$ we write $A \in S_1(\Hi)$. For $f, g, v \in \Hi$, we define the tensor product operator $f \otimes g\colon \Hi \to \Hi$ as the bounded linear operator $(f \otimes g)v =\inprod{v}{g}f$.
We additionally define the Kronecker tensor product as $(A \widetilde{\otimes} B)C = ACB^{\dagger}$ for $A, B, C \in S_{2}(H)$ .  

Next, for a $\Hi$-valued random element $X$ over a probability space $(\Omega,\mathcal{A},\mathbb{P})$, we shall denote $X \in \op^p_\Hi$ if $\|X\|_{\hi,p}=(\E\|X\|_{\Hi}^p)^{1/p} <\infty$.  Observe that $\op^2_\Hi$ is a Hilbert space  consisting of $\Hi$-valued random elements with finite second order moment. We note moreover that for any $X, Y  \in \op^2_\Hi$ with zero mean, the cross-covariance operator is given by $\text{Cov}(X,Y) = \E(X\otimes Y)$ and belongs to $S_1(\Hi)$. For a zero mean element $X=(X_1,X_2)^\top \in \op^2_{\Hi \oplus \Hi}$, we note that $\Cov(X,X)=\E X^{\otimes_2}$
 consists of the components $\E (X_i \otimes X_j)$, $i,j =1,2$ which are elements of $S_1(\Hi)$. Furthermore, we denote the imaginary unit by $\im$ and $\overline{g}$ denotes the complex conjugate (function) of $g$. We use $\Re(\cdot)$ and $\Im(\cdot)$ to denote  the real and imaginary part, respectively, of a complex-valued object. We write $a_T \sim b_T$ if $\lim_{T \to \infty} \frac{a_T}{b_T} =1$. Weak convergence in $D[0,1]$ {--  the space of right-continuous functions with left-hand limits --} with respect to the Skorokhod topology will be denoted by  ${\rightsquigarrow}$, while convergence in distribution as $T\to \infty$ will be denoted by $\underset{T\to\infty}{\Rightarrow}$. Finally, we reserve $\mathbb{B}$ to denote standard Brownian motion on the interval $[0,1]$ and remark that $\flo{\cdot}$ denotes the floor function.

 \subsection{Relevant hypotheses} \label{sec22}
 
In this paper, we consider weakly stationary processes $\{\X_t \colon t \in \znum\}$ with $\X_t=(X_t, Y_t)^\top \in \op^2_{\Hi \oplus \Hi}$. This implies in particular that the $h$-lag covariance operator of $\{\X_t\}_{t\in \znum}$, satisfies 
\begin{align*} 
\Cov(\X_{t+h}, \X_{t}) &=\begin{pmatrix}\Cov(X_{t+h}, X_{t}) &  \Cov(X_{t+h}, Y_{t}) \\ \Cov(Y_{t+h}, X_{t}) & \Cov(Y_{t+h}, Y_{t}) \end{pmatrix}  
= \Cov(\X_h, \X_0) 
\end{align*}
for all $t, h \in \znum.$ Under mild regularity conditions, which we shall explain in more detail in \autoref{sec3}, the full second order dynamics of the component processes $\{X_t\}_{t\in \znum}$ and $\{Y_t\}_{t\in \znum}$ are respectively captured by the spectral density operators 
\begin{align*}
\F_X^{(\omega)} = \frac{1}{2\pi}\sum_{h \in \znum} \Cov(X_{h}, X_{0}) e^{-\im \omega h} \quad \text{ and } \quad \F_Y^{(\omega)} = \frac{1}{2\pi}\sum_{h \in \znum} \Cov(Y_{h}, Y_{0}) e^{-\im \omega h}, \quad \omega \in [-\pi,\pi].
\end{align*}
\\
 In the following, we introduce the three testing frameworks to test for relevant differences in the second order characteristics of 
 the component processes $\{X_t\}_{t \in \znum}$ and $\{Y_t\}_{t\in \znum}$. 
 As a first option, this can be framed as the following hypothesis testing problem on the spectral density operators;
\begin{align} \label{eq:testHypOp} 
H_0: \int_{a}^{b}\bigsnorm{\F^{(\omega)}_X-\F^{(\omega)}_Y}_2^2 d\omega \le \Delta  \quad \text{ versus } \quad H_A: \int_{a}^{b}\bigsnorm{\F^{(\omega)}_X-\F^{(\omega)}_Y}_2^2 d\omega >  \Delta, 
\end{align}
where $[a, b ] \subseteq  [0,\pi]$ and $\Delta >0$ is a pre-specified constant that represents the maximal value for which the distance
$ \int_{a}^{b}\snorm{\F^{(\omega)}_X-\F^{(\omega)}_Y}_2^2 d\omega $ is considered as not relevant. 
Note that by specifying the choice of $a$ and $b$, one can compare the spectral density operators  within a certain narrow frequency band or even at a single frequency, which is of interest in certain applications. For instance, activities of certain areas of the brain, such as the Nucleas Accumbens, are usually located within a small frequency band around frequency zero \citep[see e.g.,][]{FiecOmb16} and the characteristics of resting-state fMRI data tend to have rather frequency-specific biological interpretations \citep[see e.g.,][and references therein]{yuen2019}.

Besides \eqref{eq:testHypOp}, the main focus in this paper is on two more refined hypotheses testing problems that allow to infer relevant differences in the primary modes of variation. To make these more precise, we shall make use of the fact that the spectral density operators $\F_X^{(\omega)}$ and $\F_Y^{(\omega)}$ admit real-valued discrete spectra which are, respectively, given by
\[\F^{(\omega)}_X = \sum_{k=1}^{\infty}\lamx{\omega}{k}  \phix{k} \quad \text{ and } \quad \F^{(\omega)}_Y = \sum_{k=1}^{\infty}\lamy{\omega}{k}  \phiy{k}, \]
where $\{\lamx{\omega}{k}\}_{k\ge 1}$ is the sequence of eigenvalues of $\F^{(\omega)}_{X}$ arranged in descending order and where $\phix{k} = \phi_{X,k}^{(\omega)} \otimes \phi_{X,k}^{(\omega)}$ with $\big\{\phi_{X,k}^{(\omega)}\big\}_{k\ge 1}$ denoting the corresponding sequence of eigenfunctions. 
The operator $\phix{k}$ is a self-adjoint rank-one operator and will be referred to as the $k$-th \textit{eigenprojector} (at frequency $\omega$) since it projects onto the eigenspace of $\F^{(\omega)}_X$ corresponding to the $k$-th largest eigenvalue $\lamx{\omega}{k}$. The eigenelements of $\F^{(\omega)}_Y$ are defined in a similar manner.  
\\
To consider the relevant differences at component $k$ for some $k \in \mathbb{N}$, we are in particularly interested in providing a meaningful test for the hypotheses of  no relevant difference between the $k$-th eigenprojectors, that is
\begin{align} \label{eq:testHyp}
H_0: \int_{a}^{b}\bigsnorm{\phix{k}-\phiy{k}}_2^2 d\omega \le \Delta_{\Pi,k}  \quad \text{ versus } \quad H_A: \int_{a}^{b}\bigsnorm{\phix{k}-\phiy{k}}_2^2 d\omega >  \Delta_{\Pi,k}, 
\end{align}
where  $[a,b] \subseteq  [0,\pi]$ and where $\Delta_{\Pi,k}>0$ denotes, similarly to $\Delta$, a pre-specified constant. It is worth mentioning that the eigenfunctions $\{ \phi_{X,k}^{(\omega)}\}_{k \ge 1}$ are complex elements of $\Hi$ (except at $\omega=0,\pi$). Due to this, a test statistic based upon the difference of the empirical eigenfunctions is not feasible because these are only identifiable up to a rotation on the unit circle. The testing framework in \eqref{eq:testHyp} is therefore formulated in terms of the eigenprojectors since their empirical counterparts are rotationally invariant. We come back to this in \autoref{sec3}. Finally, we also consider the hypotheses of  no relevant difference between the $k$-th eigenvalues, that is 
\begin{align} \label{eq:testHypEv}
H_0: \int_{a}^{b}\big \vert\lamx{\omega}{k}-\lamy{\omega}{k}\big\vert^2 d\omega \le \Delta_{\lambda,k}  \quad \text{ versus } \quad H_A: \int_{a}^{b}\big\vert\lamx{\omega}{k}-\lamy{\omega}{k}\big\vert^2 d\omega >  \Delta_{\lambda,k},
\end{align}
where  $[a,b] \subseteq  [0,\pi]$ and where $\Delta_{\lambda,k}  >0$ is again a pre-specified constant that represents the maximal value for which the difference between the $k$-th eigenvalues is deemed not relevant.

In this article, we develop pivotal tests for the hypotheses in \eqref{eq:testHypOp}, \eqref{eq:testHyp} and \eqref{eq:testHypEv}, where the threshold  is positive (thus we do {\bf not} consider testing of classical hypotheses).
To elaborate on its relevance and to motivate that
this is a very challenging problem, observe that a natural approach to test hypotheses of the form \eqref{eq:testHypOp} is to construct an empirical distance measure 
\begin{equation}   \label{hd22}
\widehat{M}^{2}= \int_{a}^{b} \bigsnorm{ \hat{\F}^{(\omega)}_X-\hat{\F}^{(\omega)}_Y }^2_2 d\omega,
\end{equation} 
of the distance $M^{2}= \int_{a}^{b} \snorm{\F^{(\omega)}_X-\F^{(\omega)}_Y}_2^2 d\omega  $, 
where $\hat{\F}^{(\omega)}_X$ and $\hat{\F}^{(\omega)}_Y$ are suitable estimators of the spectral density operators ${\F}^{(\omega)}_X$ and ${\F}^{(\omega)}_Y$, respectively, and to reject the null hypothesis for large values of \eqref{hd22}. For classical hypotheses, i.e., where $H_0: M^2 = \int_{a}^{b} \snorm{\F^{(\omega)}_X-\F^{(\omega)}_Y}_2^2 d\omega  =0$, one then requires the (asymptotic) distribution of the statistic at $M^2 = 0$ in order to determine the critical values, which involves the estimation of certain nuisance parameters. The latter was for example considered by \cite{TavPan2016}, who construct a test for equality of spectral density operators based upon this distance restricted to a finite-dimensional subspace \citep[see also][who considered  this approach for covariance operators]{Panaretos2010}. A drawback is that the method can be sensitive to the specific choice of several regularization parameters, including an appropriate truncation level for the dimension  of which the optimal value is frequency-dependent. Another approach was taken in \cite{vDD18}, who introduced a fully functional similarity measure for (time-varying) spectral density operators of possibly nonstationary functional time series where the distance measure is estimated based upon integrated functionals of (localized) periodogram operators. While this avoids sensitivity to certain regularization parameters, the expressions of the asymptotic variance can still become quite involved when certain assumptions, such as independence of the two series, are relaxed. Alternatively, one could consider a bootstrap method to obtain the critical values of the test statistic, an approach taken by \cite{LPapSap2018}. However, even for classical hypotheses such an approach is computationally expensive. \\
For testing relevant hypotheses of the form \eqref{eq:testHypOp}, \eqref{eq:testHyp} and \eqref{eq:testHypEv} the problem is substantially more intricate:  In particular, the determination of critical values for the relevant hypotheses in  \eqref{eq:testHypOp} requires the (asymptotic) distribution of the statistic $\widehat{M}^{2}$ at \textit{any} point $M^{2} \geq 0$ of the alternative. As will be demonstrated in this paper, an appropriately normalized version of $\widehat{M}^{2}-M^2$ is in general asymptotically normally distributed but,  compared to the classical hypothesesis $H_0: M^2=0$, the variance of the limiting distribution now depends in a much more complicated way on the spectral density operators  ${\F}^{(\cdot)}_X$ and ${\F}^{(\cdot)}_Y$ and is therefore extremely difficult to estimate. Moreover, for the same reason it is unclear whether a bootstrap method for relevant hypotheses can be developed since one basically has to mimic the (asymptotic) distribution of the test statistic for any  pair of time series  $\{X_t\}_{t\in \znum}$ and $\{Y_t\}_{t\in \znum}$ such that their spectral density operators satisfy the null hypotheses in  \eqref{eq:testHypOp}.

The above approaches become even more problematic, if not infeasible, if either classical or relevant tests of the form  \eqref{eq:testHyp} and   \eqref{eq:testHypEv} for the eigenelements have to be constructed. As will become clear in the subsequent sections, the distributional properties of the corresponding empirical distance measures depend in a highly complicated manner on the dependence structure of the underlying processes (see, for example,  \autoref{thm:hatZbigPi} below). This to the extent that the estimation of nuisance parameters becomes close to impossible and such an approach highly unstable.  To circumvent this problem, we propose tests based on self-normalized or ratio statistics which are constructed via appropriate standardized estimators of the distance measures in \eqref{eq:testHypOp}, \eqref{eq:testHyp} and \eqref{eq:testHypEv}, and 
 have a limiting distribution which does not  depend on the dependence structure of the underlying processes. 
The concept of  self-normalization has been used in other settings by numerous authors in the context of testing classical hypotheses 
\citep[see][among others]{shazha2010,shao2010,zhang2011,shao2015,Zhang2015}. 
Recently,
a new concept of self-normalization for testing relevant hypotheses regarding the mean and covariance operator of 
functional  time series has also been developed by  \cite{DKV2018}.
However, the development of frequency domain based tests  for relevant hypotheses and hence that allow to infer on the (full) second order dynamics of these processes is far from trivial.
As a further matter, we derive our results under very mild moment conditions which improve upon $L^{p}_m$-approximability assumptions and do not require summability of functional cumulant-mixing conditions. 
For the hypothesis in \eqref{eq:testHypOp}, our current work therefore not only provides a stable alternative to existing work but also relaxes upon underlying moment assumptions. 
 Because the construction and the distributional properties of the statistics are highly technical, we start the next section by providing the framework and assumptions and the main ingredient to our method. We then develop the test statistics in full detail for all three hypotheses.

\begin{Remark} \label{remarkh1}
\rm 
We conclude this section with a brief discussion of the choice of the threshold in the formulation of relevant hypotheses.
The classical approach avoids this choice by simply putting $\Delta =0$, but --as pointed out in the introduction-- in many applications   exact equality is unlikely to hold, but the distance between the parameters might be small. 
It is therefore recommended to define the size of the deviation one is really interested in by taking into consideration the scientific background of the testing problem such that the threshold reflects in every particular application the specific scientific context.
As an example we refer to  \cite{fogarty2014}, who use 
relevant hypotheses  to analyze data from a comparison study between two devices for pulmonary function testing. 
These authors actually interchange the null and alternative hypothesis, which is called  {\it bio-equivalence} problem.
We also emphasize that for univariate parameters bio-equivalence problems have been considered by numerous authors, and for specific problems  thresholds have been developed  by regulators (see   \cite{wellek2010} for a recent review). 
 \\
Finally, in cases where the choice of the threshold $\Delta$ is difficult, the methodology presented in the following section can also be used to provide (asymptotic) confidence intervals for the quantities
\begin{equation}
    \label{h40}
\int_{a}^{b}\bigsnorm{\F^{(\omega)}_X-\F^{(\omega)}_Y}_2^2 d\omega ~~,~~~
 \int_{a}^{b}\bigsnorm{\phix{k}-\phiy{k}}_2^2 d\omega ~~~\text{ and } ~~
 \int_{a}^{b}\big \vert\lamx{\omega}{k}-\lamy{\omega}{k}\big\vert^2 d\omega 
\end{equation}
which provide information about the size of the deviation with statistical guarantees.
We refer to Remark \ref{remarkh2}, where we also propose  to test relevant hypotheses for 
several thresholds simultaneously and to identify a maximal threshold $\Delta^*$ for which the 
relevant null hypothesis is not rejected at a controlled type I error.
\end{Remark}

\section{Methodology}\label{sec3}
\def\theequation{3.\arabic{equation}}
\setcounter{equation}{0}

Suppose that we observe a sample of length $T_1$ from component process $\{X_t\}_{t \in \znum}$ and of length $T_2$ from component process $\{Y_t\}_{t \in \znum}$. Central in the construction of the pivotal test statistics and the corresponding asymptotic  level $\alpha$ tests for the hypotheses \eqref{eq:testHypOp}, \eqref{eq:testHyp} and \eqref{eq:testHypEv} are processes of the form 
\begin{align*}
&\Big\{\eta \sqrt{b_1 T_1+b_2 T_2}\Big( \frac{r_1}{\sqrt{b_1 T_1}}\int_a^b\Re\Big(\biginprod{ \mathcal{Z}^{X,\omega}_{T,\eta} }{\mathcal{U}^{(\omega)}_{XY}}_{S_2} \Big)d\omega+ \frac{r_2}{\sqrt{b_2 T_2}}\int_a^b\Re\Big(\biginprod{ \mathcal{Z}^{Y,\omega}_{T,\eta} }{\mathcal{U}^{(\omega)}_{YX}}_{S_2} \Big)d\omega \Big\}_{\eta \in [0,1]}~, 
\tageq \label{eq:limZXsec1}
\end{align*}
where $r_1, r_2 \in \rnum$. 
Here, the operators $\mathcal{U}^{(\omega)}_{XY}, \mathcal{U}^{(\omega)}_{YX}   \in S_2(\Hi )$ are such that $\int_{-\pi}^{\pi}\snorm{\mathcal{U}^{(\omega)}_{\cdot, \cdot}}_{2} d\omega <\infty$, and are of varying nature depending on  the specific hypotheses under consideration. 
The $S_{2}(\Hi )$-valued partial
sum processes $\eta \mapsto \mathcal{Z}^{X,\omega }_{T,\eta }$ and
$\eta \mapsto  \mathcal{Z}^{Y,\omega }_{T,\eta }$ are defined by 
\begin{align} 
\mathcal{Z}^{X,\omega}_{T,\eta} & =\frac{\sqrt{b_1}}{\sqrt{T_1}}\sum_{s=1}^{\flo{\eta T_1}} \Big(\sum_{t=1}^{\flo{\eta T_1}}  \tilde{w}^{(\omega)}_{b_1,s,t}(X_s \otimes X_t)  - \F^{(\omega)}_X\Big), \tageq \label{eq:Zx}
\\
\mathcal{Z}^{Y,\omega}_{T,\eta} & =\frac{\sqrt{b_2}}{\sqrt{T_2}}\sum_{s=1}^{\flo{\eta T_2}} \Big(\sum_{t=1}^{\flo{\eta T_2}} \tilde{w}^{(\omega)}_{b_2,s,t}  (Y_s \otimes Y_t) - \F^{(\omega)}_Y\Big), \tageq \label{eq:Zy}
\end{align}
where 
\begin{align}  \label{weight}
\tilde{w}^{(\omega)}_{b_{i},s,t}=(2\pi)^{-1}w(b_{i}(t-s)) e^{\im \omega (t-s)} 
\end{align}
for some window function  $w(\cdot)$ and where $b_{i} = b({T_i})$, $i = 1,2$, are bandwidth parameters which are functions of the corresponding sample lengths $T_i$. Intuitively, the operators \eqref{eq:Zx} and \eqref{eq:Zy} can be interpreted as scaled and centered sequential estimators of the spectral density operators $\F^{(\omega)}_X$ and $\F^{(\omega)}_Y$.
While perhaps not immediately obvious, we shall demonstrate in the following three sections that the distributional properties of empirical versions of the three distance measures in \eqref{eq:testHypOp}, \eqref{eq:testHyp} and  \eqref{eq:testHypEv}, respectively, can --after centering around the population distance measure-- be derived from those of processes of the form \eqref{eq:limZXsec1}. For example, we will show in the next section that $\hat{M}^2-M^2$, with $\hat{M}^2$ as in \eqref{hd22}, can be expressed in terms of such a process that is evaluated at $\eta=1$. 

In order to make this more precise and to derive the distributional properties of the process defined in  \eqref{eq:limZXsec1}, we require the following technical assumptions. Firstly, we specify the dependence structure of $\{X_t\}_{t \in \znum} \in \op^2_\Hi$ and $\{Y_t\}_{t \in \znum} \in \op^2_\Hi$ jointly in terms of the bivariate functional time series $\{\X_t\}_{t \in \znum} = \{ (X_{t},Y_{t})^{\top}\}_{t \in \znum} $. For this, we consider conditions as given in \citet{vD19}, who studied limiting distributions of quadratic form statistics of functional time series under mild moment conditions and provided generalizations of the physical dependence measure \citep{Wu05} to Hilbert-valued processes. 
A functional time series $\{V_t\}_{t \in \znum}$ taking values in a separable Hilbert space $\Hi$ 
is said to have a \textit{physical dependence structure} for some $p >0$ if:
\begin{enumerate}[label=\textbf{A.\arabic*}]
\item \label{I} The series admits a representation of the form
$V_t = g(\epsilon_t, \epsilon_{t-1},\ldots,) $
where $\{\epsilon_t: t\in\mathbb{Z}\}$ is an i.i.d. sequence of elements in some measurable space $S$ and $g: S^{\infty} \to \Hd$ is a measurable function. 
\item \label{II} The series' dependence structure is of the following nature. Define the measure
\[
\nu_{\hi,p}(V_t)=\|V_t -\E[V_t|\G_{t,\{0\}}] \|_{\hi,p},
\] 
where $\G_{t,\{0\}}$ is the filtration up to time $t$ but with the element at time 0 replaced with an independent copy, i.e., $\G_{t,\{0\}} = \sigma(\epsilon_t, \epsilon_{t-1},\ldots, \epsilon^\prime_0, \epsilon_{-1}, \ldots)$, for some independent copy $\epsilon^\prime_0$ of $\epsilon_0$ and where the conditional expectation is to be understood in the sense of a Bochner integral. The dependence structure of the process satisfies
\[
\sum_{j=0}^{\infty}\nu_{\hi,p}(V_j)< \infty \tageq \label{eq:depstruc2}
\]
for some $p \ge 0$.
\end{enumerate} 
The summability condition in \eqref{eq:depstruc2} is generally a weaker assumption to make than $L^p_m$- approximability as introduced by \cite{HorKok10}, or than summability  of the $p$-th order cumulant tensor for $p>2$ (see also \cite{vD19}).

Throughout this paper, we assume the following conditions on the function-valued time series.
\begin{assumption} \label{as:depstruc} 
The process $\{\X_t \}_{ t\in\mathbb{Z}} = \{ (X_t, Y_t)^\top \}_{ t\in\mathbb{Z}}$ is a centered weakly stationary bivariate functional time series in $\op^p_{\Hi\oplus \Hi}$ of which the component processes satisfy \ref{I}-\ref{II} with $p=4+\varepsilon$, for some small $\varepsilon >0$. 
\end{assumption}
Elementary calculations show  that \autoref{as:depstruc} implies the process $\{\X_t\}_{ t\in\mathbb{Z}}$ satisfies \ref{I}-\ref{II} with $p=4+\varepsilon$.
Observe furthermore that \autoref{as:depstruc} allows for the scenario of independence between the two component processes. It is worth mentioning that the zero-mean assumption simplifies notation but, in practice, the data can be centered without affecting the results of this paper. Processes which satisfy  conditions \ref{I}-\ref{II} for some $p\ge2$ have a well-defined spectral density operator. In particular, for processes that satisfy \autoref{as:depstruc}, the second order structure arises as elements of $(\Hi^{\oplus_2})^{\otimes_2}$ and the full second order dynamics are therefore described via the vector of spectral density operators given by 
\[
\F_{\X}^{(\omega)} = \begin{pmatrix} \F_X^{(\omega)},  \F_{XY}^{(\omega)}, \F_{YX}^{(\omega)},  \F_{Y}^{(\omega)}  \end{pmatrix}^\top \quad \omega \in [-\pi,\pi],
\]
where the operators $\F_{XY}^{(\omega)}=\frac{1}{2\pi}\sum_{h \in \znum} \E(X_h \otimes Y_0) e^{-\im \omega h}$ and $\F_{YX}^{(\omega)}=\frac{1}{2\pi}\sum_{h \in \znum} \E(Y_h \otimes X_0) e^{-\im \omega h}$ define the cross-spectral density operators. 
It can be shown that the convergence of the series is with respect to $\snorm{\cdot}_1$, uniformly in $\omega \in \rnum$. 

As a starting point for our test statistics, consider the following estimators of $\F_X^{(\omega)}$ and $\F_Y^{(\omega)}$
\[\hat{\F}_{X}^{\omega}= \frac{1}{ T_1}\sum_{s,t=1}^{T_1} \tilde{w}^{(\omega)}_{b_1,s,t} \big( X_s \otimes X_t \big) ;  \tageq \label{eq:relFQ}\]
\[\hat{\F}_{Y}^{\omega}= \frac{1}{ T_2}\sum_{s,t=1}^{T_2}  \tilde{w}^{(\omega)}_{b_2,s,t}\big( Y_s \otimes Y_t \big) , \tageq \label{eq:relFQy}\]
 where the weights $\tilde{w}^{(\omega)}_{b_{T_i},s,t}$ are given by \eqref{weight}. For the construction of pivotal test statistics, we require sequential versions of the lag window estimators in \eqref{eq:relFQ} and \eqref{eq:relFQy}, which are respectively given by
\[
\hat{\F}_X^{(\omega)}(\eta)=\frac{1}{\flo{\eta T_1}}\sum_{s=1}^{\flo{\eta T_1}}\Big(\sum_{t=1}^{\flo{\eta T_1}}  \tilde{w}^{(\omega)}_{b_1,s,t} (X_s \otimes X_t)\Big), \tageq \label{eq:seqFQ}
\]
and 
\[
\hat{\F}_Y^{(\omega)}(\eta)=\frac{1}{\flo{\eta T_2}}\sum_{s=1}^{\flo{\eta T_2}}\Big(\sum_{t=1}^{\flo{\eta T_2}} \tilde{w}^{(\omega)}_{b_2,s,t} (Y_s \otimes Y_t)\Big), \tageq \label{eq:seqFQy}
\]
where $\eta \in [0,1]$. We denote the eigenvalues and eigenprojectors of \eqref{eq:seqFQ} and \eqref{eq:seqFQy},  by $\big\{ \lamxT{k}(\eta)\big\}_{k \ge 1}$, $\big\{  \phixT{k}(\eta) \big\}_{k \ge 1}$ and $\big\{ \lamyT{k}(\eta)\big\}_{k \ge 1}$, $\big\{  \phiyT{k}(\eta) \big\}_{k \ge 1}$, respectively. Empirical versions of the distance measures in \eqref{eq:testHypOp}, \eqref{eq:testHyp} and \eqref{eq:testHypEv} can then be expressed in terms of the sequential estimators evaluated at $\eta =1$, i.e.,
\[
\begin{array}{l c r}\int_a^b \snorm{\,\hat{\F}^{(\omega)}_X(1)-\hat{\F}^{(\omega)}_Y(1)}^2_2 d\omega, & \int_a^b \snorm{\,\phixT{k}(1)-\phiyT{k}(1)}^2_2 d\omega,& \int_a^b \big(\lamxT{k}(1)-\lamyT{k}(1)\big)^2 d\omega.
\end{array}\]
We assume the following mild requirements on the lag window function $w$.
\begin{assumption}\label{as:Weights}
Let $w$ be an even, bounded function on $\mathbb{R}$ with $\lim_{x \to 0} w(x) =1$ that is continuous except at a finite number of points. Suppose that $\lim_{b \to 0}b \sum_{h \in \znum} w^2(b h) = \kappa$ where $\kappa=\int^{\infty}_{-\infty} w^2(x) dx < \infty$ such that $\sup_{0 \le b \le 1} b \sum_{|h| \ge L/b} w^2(b h) \to 0$ as $L \to \infty$. Furthermore, we assume $w(x) - 1=O(x)$ as $x \to 0$.
\end{assumption}
Under these conditions, the following consistency result on the lag window estimators can be obtained.

\begin{prop}\label{prop:consF}
Suppose $\{V_t\}_{ t\in\mathbb{Z}}$ is a centered weakly stationary process in $\mathcal{L}^p_\Hi$ that satisfies conditions \ref{I}-\ref{II} with $p=2q, q \ge 1$. Furthermore, let \autoref{as:Weights} be satisfied and assume
\[\sum_{h \in \znum} h^\ell \nu_{\hi,2}(V_h) < \infty \tageq \label{eq:sumnul}\]
for some $\ell \ge 1$. Let $\hat{\F}_{V}^{\omega}= \frac{1}{ T}\sum_{s,t=1}^{T} \tilde{w}^{(\omega)}_{b_T,s,t}\big( V_s \otimes V_t \big)  $. Then, for $q \ge 2$, 
\begin{align*}
\norm{\hat{\F}_{V}^{\omega} -\E\hat{\F}_{V}^{\omega}}^q_{S_2,q} = O\big( {(b_T T)^{-q/2}}\big) ~~\mbox{and} ~~
 \norm{\E\hat{\F}_{V}^{\omega} -{\F}_V^{\omega} }^q_{S_2,q} =O\big(b^{\ell q}_T+T^{-q}\big) 
\end{align*}
uniformly in $\omega \in [-\pi,\pi]$, where $\F_V^\omega$ is the spectral density operator of process $\{V_t\}$.  In particular $ \norm{\hat{\F}_{V}^{\omega}  -{\F}_V^{\omega} }^q_{S_2,q} =O\big( {(b_T T)^{-q/2}}\big) +O\big(b^{\ell q}_T\big) $ uniformly in $\omega \in [-\pi,\pi]$.
\end{prop}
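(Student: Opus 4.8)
The plan is to prove the two displayed bounds separately---a stochastic (variance) bound and a deterministic (bias) bound, since $\E\hat{\F}_V^{\omega}$ is non-random---and then to combine them through the triangle inequality for $\norm{\cdot}_{S_2,q}$. For the combination, raising $\norm{\hat{\F}_V^{\omega}-\F_V^\omega}_{S_2,q}\le \norm{\hat{\F}_V^{\omega}-\E\hat{\F}_V^{\omega}}_{S_2,q}+\snorm{\E\hat{\F}_V^{\omega}-\F_V^\omega}_2$ to the $q$-th power yields $O((b_T T)^{-q/2})+O(b_T^{\ell q})+O(T^{-q})$; since $b_T\to 0$ and $b_T T\to\infty$ force $T^{-1}=o((b_T T)^{-1/2})$, the term $T^{-q}$ is absorbed into the stochastic rate, leaving precisely the asserted $O((b_T T)^{-q/2})+O(b_T^{\ell q})$.

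For the variance bound, I would write $\hat{\F}_V^{\omega}-\E\hat{\F}_V^{\omega}=(2\pi T)^{-1}\sum_{s,t=1}^{T}w(b_T(t-s))e^{\im\omega(t-s)}\,W_{s,t}$ with $W_{s,t}:=V_s\otimes V_t-\E(V_s\otimes V_t)$, i.e.\ a centered, weighted $S_2(\Hi)$-valued quadratic form. For $q=2$ I would expand $\E\snorm{\hat{\F}_V^{\omega}-\E\hat{\F}_V^{\omega}}_2^2$ into a quadruple index sum of terms $\E\inprod{W_{s,t}}{W_{s',t'}}_{S_2}$, rewrite each through the lag-covariance operators $C_u:=\E(V_u\otimes V_0)$ and the fourth-order cumulants of $\{V_t\}$, and control the resulting sums by the summability of these objects guaranteed by \ref{I}--\ref{II} with $p\ge 4$ (as established in \citet{vD19}). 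The index counting combined with the kernel identity $\sum_{|h|<T}w^2(b_T h)=O(b_T^{-1})$ from \autoref{as:Weights} then gives $O((b_T T)^{-1})$; it is essential to exploit the cancellation across lags encoded in the covariance/cumulant summability, since the naive Minkowski bound over $h$ is too lossy by a factor $b_T^{-1/2}$. For general $q\ge 2$ I would invoke the moment inequalities for quadratic forms of physically dependent Hilbert-valued sequences from \citet{vD19}, which require exactly $p=2q$ moments: after an approximating-martingale decomposition of the outer sum followed by Burkholder's inequality in the Hilbert space $S_2(\Hi)$, the $L^q$-norm of the normalized form is dominated by its $L^2$-norm up to smaller-order contributions, so the rate $(b_T T)^{-q/2}$ is retained. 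Every bound depends on $w$ only through $|w|$ and $w^2$ and on covariance/cumulant norms, none of which involve $\omega$ (the factors $e^{\im\omega(\cdot)}$ have modulus one), whence uniformity in $\omega\in[-\pi,\pi]$.

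For the bias bound, reindexing by the lag $u=s-t$ and using stationarity together with the evenness of $w$, I would write $\E\hat{\F}_V^{\omega}=(2\pi)^{-1}\sum_{|u|<T}(1-|u|/T)\,w(b_T u)\,e^{-\im\omega u}C_u$ and $\F_V^\omega=(2\pi)^{-1}\sum_{u\in\znum}C_u e^{-\im\omega u}$, and split the difference into a tapering term $\sum_{|u|<T}(w(b_T u)-1)C_u e^{-\im\omega u}$, a triangular-weight term $-\sum_{|u|<T}(|u|/T)w(b_T u)C_u e^{-\im\omega u}$, and a truncation tail $-\sum_{|u|\ge T}C_u e^{-\im\omega u}$. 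The one structural input needed is $\sum_u|u|^\ell\snorm{C_u}_2<\infty$, which follows from the hypothesis $\sum_h|h|^\ell\nu_{\hi,2}(V_h)<\infty$ together with the standard fact for physically dependent sequences that the autocovariance operators inherit the decay of the dependence measures (via a martingale-projection estimate bounding $\snorm{C_u}_2$ by a convolution of the quantities $\nu_{\hi,2}(V_h)$). Bounding each piece in $\snorm{\cdot}_2$: splitting the tapering term at $|u|\lesssim b_T^{-1}$ and using the flatness of $w$ at the origin on the low-lag part and the weighted summability on the high-lag part gives $O(b_T^\ell)$; the triangular-weight term is $O(T^{-1})$ from $\sum_u|u|\snorm{C_u}_2<\infty$; and the tail is $O(T^{-\ell})\subseteq O(T^{-1})$. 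As these are deterministic, $\norm{\cdot}_{S_2,q}=\snorm{\cdot}_2$, so raising to the $q$-th power yields $O(b_T^{\ell q}+T^{-q})$, once more uniformly in $\omega$.

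The main obstacle is the general-$q$ stochastic bound: securing the Gaussian-type rate $(b_T T)^{-q/2}$ for a high moment of a weighted quadratic form of a dependent functional time series, rather than a lossy $(b_T^2 T)^{-q/2}$. This is exactly where the approximating-martingale machinery and the sharp cumulant summability afforded by \ref{I}--\ref{II} with $p=2q$ are indispensable, and where leaning on the quadratic-form moment bounds of \citet{vD19} does the real work; by comparison the bias computation and the final assembly are routine.
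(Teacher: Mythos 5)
Your proposal is correct in outline and takes essentially the same route as the paper: the paper's proof of this proposition is a one-line deferral to Theorem 4.1(ii) of \cite{vD19} ``adjusted for the value of $\ell$,'' and that result is proved by exactly the variance--bias decomposition you describe, with the $(b_TT)^{-q/2}$ moment bound for the centered quadratic form obtained through the approximating-martingale plus Burkholder argument under $p=2q$ physical-dependence moments, and the bias handled by reindexing over lags. One point in your sketch is stated too quickly, though the imprecision is arguably inherited from the paper itself: on the low-lag range $|u|\lesssim b_T^{-1}$ the tapering term $\sum_{|u|<T}(w(b_Tu)-1)C_ue^{-\im\omega u}$ is only $O(b_T)$ under the stated kernel condition $|w(x)-1|=O(x)$, so the claimed $O(b_T^{\ell})$ for $\ell>1$ requires the stronger flatness $|w(x)-1|=O(|x|^{\ell})$ at the origin (satisfied, e.g., by the Daniell window but not by the Bartlett window); the summability $\sum_u|u|^{\ell}\snorm{C_u}_2<\infty$ alone controls only the high-lag part and the truncation tail. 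Apart from making that kernel requirement explicit, the assembly, the uniformity in $\omega$, and the absorption of $T^{-q}$ into the stochastic rate are all fine.
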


See section 4-5 of \cite{vD19} for a comparison of this estimator and the underlying assumptions with those considered in  \cite{PanTav2013a}, who derived a consistent estimator of a smoothed periodogram operator under cumulant mixing conditions. Note that the value of $\ell$ in \eqref{eq:sumnul} only affects the order of the bias, which decreases faster for processes with shorter memory. It is also worth mentioning that the estimator remains consistent for $\ell=0$. To ensure consistency in $q$-th mean, Proposition \autoref{prop:consF} gives rise to the following conditions on the rate of the bandwidth. 
{\begin{assumption}\label{as:bwrates}
Given $\sum_{h \in \znum} h^\ell \nu_{\hi,2}(V_h) < \infty$ for some $\ell \ge 1$, we require that 
$b_T \to 0$ such that $b_T T/\log(T) \to \infty$ and such that $b_T^{1+2\ell} T \to 0$ as $T \to \infty$.
\end{assumption}}
Observe that the last part of the assumption simply means that larger bandwidths are allowed for processes with a "smoother'' spectral distribution. 

Under Assumptions \ref{as:depstruc}--\ref{as:bwrates}, the sequential estimators  \eqref{eq:seqFQ} and \eqref{eq:seqFQy} provide us with  consistent estimators for $\F_X^{(\omega)}$ and $\F_Y^{(\omega)}$. Furthermore, the elements of their respective eigensystems $\big\{ \lamxT{k}(\eta),\phixT{k}(\eta) \big\}_{k \ge 1}$, $\big\{ \lamyT{k}(\eta),\phiyT{k}(\eta) \big\}_{k \ge 1}$, can then be shown to be  consistent estimators for their population counterparts for each $\eta \in [0,1], \omega \in [-\pi,\pi]$ (see \autoref{lem:diffboundEig}). Additionally, we obtain under these conditions a useful bound on the maximum of partial sum of the estimators of the spectral density operators (see \autoref{lem:boundseqSDO}).

The last assumption concerns the `balance' of the convergence rates.
\begin{assumption}\label{as:ratiorates}
Let $b_{i}$, $i \in \{1,2\}$ satisfy \autoref{as:bwrates} for some $\ell \ge 1$. If the component processes $\{X_t\}$ and $\{Y_t\}$ of $\{\X_t\}$ are independent, we assume there exists a constant $\theta \in (0,1)$ such that
\[\lim_{T_1, T_2 \to \infty}\frac{b_1 {T_1} }{b_1 {T_1} +b_2 {T_2} } = \theta \in (0,1).\]
If the processes are dependent, we assume ${T_1}={T_2}$ and $b_1\sim b_2$. 
\end{assumption}

We can now state the main technical result of this paper which is crucial for the construction of pivotal tests for the hypotheses
\eqref{eq:testHypOp}, \eqref{eq:testHyp}
and \eqref{eq:testHypEv}  of no relevant difference in the spectral density operators, eigenprojectors or eigenvalues, respectively (see Sections \ref{sec31} - \ref{sec33}
for details).

\begin{thm}\label{thm:BuildingBlock}
Suppose  Assumptions \ref{as:depstruc}-\ref{as:ratiorates} are satisfied. Then
\begin{align*}
& \Big\{\eta \sqrt{b_1 T_1+b_2 T_2}\Big( \frac{1}{\sqrt{b_1 T_1}}\int_a^b\Re\Big(\biginprod{ \mathcal{Z}^{X,\omega}_{T,\eta} }{\mathcal{U}^{(\omega)}_{XY}}_{S_2} \Big)d\omega
- \frac{1}{\sqrt{b_2 T_2}}\int_a^b\Re\Big(\biginprod{ Z^{Y, \omega}_{T,\eta} }{\mathcal{U}^{(\omega)}_{YX}}_{S_2} \Big)d\omega\Big)
 \Big\}_{\eta \in [0,1]}
\\&\\&
~~~~~~~~~~~~~~~~~~~~~~~~~~~~~~~~~~~~~~~~~~~~~~~~~~~~~~~~~~
\,{\rightsquigarrow}\,  \big \{\tau_{XY} \eta \, \mathbb{B} (\eta) \big  \}_{\eta \in [0,1]}, \quad \text{ as } T_1, T_2 \to \infty, 
\end{align*}
where $\tau_{XY}$ is a constant, $\mathbb{B}$ is a Brownian motion
 and where $\mathcal{Z}^{X,\omega}_{T,\eta}$
 and $\mathcal{Z}^{Y,\omega}_{T,\eta}$ 
 are defined in \eqref{eq:Zx} and \eqref{eq:Zy}, respectively.
\end{thm}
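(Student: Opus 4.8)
The plan is to recognise the displayed object as $\eta$ times a normalised sequential (partial-sum) statistic of a weakly dependent array and to establish its weak convergence by combining a martingale approximation for the finite-dimensional distributions with a moment bound for tightness. First I would rewrite the inner products explicitly: interchanging the $\omega$-integral with the double sum in \eqref{eq:Zx} and reindexing by the lag $h=s-t$ turns $\frac{1}{\sqrt{b_1 T_1}}\int_a^b\Re\langle\mathcal{Z}^{X,\omega}_{T,\eta},\mathcal{U}^{(\omega)}_{XY}\rangle_{S_2}\,d\omega$ into a normalised partial sum $\frac{1}{\sqrt{b_1 T_1}}\sum_{t=1}^{\flo{\eta T_1}}\xi^X_{T,t}$, where $\xi^X_{T,t}$ collects, for each $t$, the window-weighted lagged products $\langle X_{t}\otimes X_{t-h},\,\cdot\,\rangle$ against the $\omega$-averaged operator $\mathcal{U}^{(\omega)}_{XY}$, centred by its mean. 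Because the window $w(b_1 h)$ is effectively supported on $|h|\lesssim 1/b_1$, the array $\{\xi^X_{T,t}\}$ is a triangular array with slowly growing dependence range; the analogous reduction applies to the $Y$-term, and the two are treated jointly for the bivariate series $\X_t=(X_t,Y_t)^\top$ so that cross-dependence is retained. After this reduction the quantity of interest is $\eta$ times a normalised partial-sum process of $\{\xi^X_{T,t}-\xi^Y_{T,t}\}$.

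For the finite-dimensional distributions I would fix $\eta$ and apply a martingale central limit theorem. Using the physical-dependence representation \ref{I}--\ref{II}, I project each summand onto the innovation filtration $\mathcal{F}_t=\sigma(\epsilon_t,\epsilon_{t-1},\dots)$ and approximate the partial sum by a martingale $\sum_{t=1}^{\flo{\eta T}}D_{T,t}$ with differences $D_{T,t}=\sum_{j\ge 0}(\E[\,\cdot\,|\mathcal{F}_t]-\E[\,\cdot\,|\mathcal{F}_{t-1}])$ acting on the lagged products; the approximation error is controlled by the summability $\sum_j\nu_{\hi,4+\varepsilon}(\cdot)<\infty$ of \autoref{as:depstruc}. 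The conditional variances, after normalisation by $b_iT_i$, converge to a deterministic limit that identifies $\tau_{XY}^2$; this computation invokes \autoref{as:Weights} through $\kappa=\int w^2$ and produces an explicit quadratic functional of $\F^{(\cdot)}_X,\F^{(\cdot)}_Y$ (together with the cross-spectra $\F^{(\cdot)}_{XY},\F^{(\cdot)}_{YX}$ in the dependent case) integrated against $\mathcal{U}^{(\cdot)}_{XY},\mathcal{U}^{(\cdot)}_{YX}$ over $[a,b]$. \autoref{as:ratiorates} fixes how the two contributions combine: the ratio $\theta$ in the independent case, and matched sample sizes and bandwidths in the dependent case, where the cross terms cannot be discarded. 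Since the conditional variance is linear in $\eta$, the martingale CLT yields $\tau_{XY}\mathbb{B}(\eta)$ for the normalised partial sum, and multiplication by the deterministic factor $\eta$ gives the finite-dimensional distributions of $\{\tau_{XY}\eta\,\mathbb{B}(\eta)\}$.

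Tightness in $D[0,1]$ I would obtain from a moment bound on the increments of the normalised partial-sum process. The fourth-order condition $p=4+\varepsilon$, together with \autoref{prop:consF} and the maximal bound on partial sums of the sequential estimators (\autoref{lem:boundseqSDO}), yields an increment estimate of the form $\E|\,\cdot\,|^{4}\lesssim|\eta_2-\eta_1|^{2}$, which suffices for the standard moment tightness criterion; the Lipschitz prefactor $\eta$ only improves matters. Combining finite-dimensional convergence with tightness gives weak convergence in the Skorokhod topology to $\{\tau_{XY}\eta\,\mathbb{B}(\eta)\}_{\eta\in[0,1]}$.

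The main obstacle is the martingale approximation in the triangular-array regime of the second paragraph. Because $b_T\to 0$, the dependence range $1/b_T$ of the summand grows with $T$, so the martingale approximation must be shown to be asymptotically negligible uniformly as the lag window widens; this is precisely where the sharp moment exponent $4+\varepsilon$ and the bandwidth rate $b_T^{1+2\ell}T\to 0$ of \autoref{as:bwrates} enter. A secondary difficulty is retaining the $X$--$Y$ cross terms in the dependent case, for which the equal sample-size and bandwidth assumption in \autoref{as:ratiorates} is essential.
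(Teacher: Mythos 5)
Your overall architecture (martingale approximation for the quadratic form, conditional-variance computation identifying $\tau_{XY}^2$, bandwidth and ratio assumptions entering where you say they do) matches the spirit of the paper, but there are two concrete gaps. The first is the tightness step. You propose the moment criterion $\E|\cdot|^{4}\lesssim|\eta_2-\eta_1|^{2}$ and attribute it to the moment condition $p=4+\varepsilon$. But the increments here are increments of a \emph{quadratic} form in $X$: a fourth moment of $\langle \mathcal{Z}^{X,\omega}_{T,\eta},\mathcal{U}^{(\omega)}_{XY}\rangle_{S_2}$ involves expectations of products of four terms of the type $\langle X_s\otimes X_t,\cdot\rangle$, i.e.\ eighth moments of $X$, which Assumption~\ref{as:depstruc} does not provide. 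Under $p=4+\varepsilon$ the quadratic-form summands only admit moments of order $\gamma=2+\varepsilon/2$, which is exactly the exponent the paper works with. The paper avoids a separate tightness proof entirely: it establishes the invariance principle for the $m$-dependent martingale array $N^{(\omega)}_{m,T,t}$ directly via McLeish's Corollary~3.8 (conditional Lindeberg plus convergence of conditional variances and pseudo-covariances, all requiring only $\gamma>2$ moments of the differences), and then transfers the limit to the true process by showing the approximation error is $o_P(1)$ \emph{uniformly in $\eta$}, using M\'oricz's maximal inequality at the same exponent $\gamma$. Your route could be repaired by switching to a $\gamma$-th moment increment bound with $\gamma=2+\varepsilon/2$, but as written the step fails under the stated assumptions.

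The second gap is structural. In \eqref{eq:Zx} \emph{both} summation limits are $\flo{\eta T_1}$, so the object is not a partial-sum process of a fixed triangular array: the increment from $\eta_1$ to $\eta_2$ contains cross terms $X_s\otimes X_t$ with $s\le\flo{\eta_1T}<t\le\flo{\eta_2T}$, and your reindexing by the lag $h=s-t$ into summands $\xi^X_{T,t}$ silently includes products whose second index exceeds the current cap. The paper handles this by splitting the double sum into its lower-triangular part plus the adjoint of that part, each of which \emph{is} a genuine martingale transform in $t$ (this is the role of the array $N^{(\omega)}_{m,T,t}$ containing both $\mathcal{D}^{(\omega)}_{XY,s,t}$ and its adjoint), and by controlling the diagonal and floor-function discrepancies separately (\autoref{lem:floor}, \autoref{lem:secM}). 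Relatedly, the paper's martingale differences $D^{(\omega)}_{m,k}$ are built in the frequency domain from the $m$-truncated process $\X^{(m)}_t$ and the limit is reached by the double limit $T\to\infty$ followed by $m\to\infty$; your time-domain projection of the quadratic summands onto the innovation filtration is a different construction whose error control in the regime $b_T\to0$ is precisely the part you defer, and it is the part that carries most of the work in the paper's argument.
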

 The proof of this statement 
 relies on approximating martingale theory and  is postponed to \autoref{sec:proofBB}. 
We believe that the following  results  for  the dependent case can also be obtained for unequal sample sizes
satisfying 
$\lim_{T_1, T_2 \to \infty} {b_1 {T_1} }/{(b_1 {T_1} +b_2 {T_2})} = \theta \in (0,1).$
However, this considerably complicates the line of proof and is therefore not considered. Note that the scaling factor $\tau_{XY}$ depends  in a rather  complicated way 
 on the properties of $\{\F_X^{(\omega)}\}_{\omega \in [a,  b]}$ and $\{\F_Y^{(\omega)}\}_{\omega \in [a, b]}$  (see \autoref{sec:proofBB} for details)
 and is therefore very difficult to estimate. In the next sections, we  develop tests for the hypotheses of relevant differences
 between  the spectral density operators  $\{\F_X^{(\omega)}\}_{\omega \in [a,  b]}$ and $\{\F_Y^{(\omega)}\}_{\omega \in [a, b]}$ and the associated eigenelements, which do not require estimation of $\tau_{XY}$  and are in this sense pivotal. 
 
\subsection{No relevant difference in the spectral density operators: hypothesis \eqref{eq:testHypOp}}
\label{sec31}

We start with the construction of a pivotal test for  hypothesis \eqref{eq:testHypOp}
of no relevant difference between  the spectral density operators. Proofs of the statements can be found in Section  \ref{sec:proofshyps} of the supplement.
For fixed $\eta \in [0,1]$ and fixed $\omega$, denote the (pointwise) population distances and empirical distances of the spectral density operators by
\[
{M}_{\F}(\omega)=\F^{(\omega)}_X-\F^{(\omega)}_Y\text{~~ and ~ }\widehat{M}_{\hat\F}(\eta,\omega)  =\eta \big(\hat{\F}^{(\omega)}_X(\eta)-\hat{\F}^{(\omega)}_Y(\eta)\big),
\]
and observe that  under \autoref{as:depstruc}, these are both well-defined elements of $S_1(\mathcal{H})$ for any $\eta \in [0,1]$. The next step is to define a process which quantifies the difference between the empirical and population measures over a given frequency band, i.e., 
\begin{align*}
\hat{\mathcal{Z}}^{[a,b]}_{\F,T_1,T_2} (\eta)&
=\int_{a}^{b} \bigsnorm{\eta \big(\hat{\F}^{(\omega)}_X(\eta)-\hat{\F}^{(\omega)}_Y(\eta)\big)}^2_2 -\eta^2\bigsnorm{\F^{(\omega)}_X-\F^{(\omega)}_Y}_2^2 d\omega. 
\tageq \label{eq:hatZF}
\end{align*}
Elementary calculations show that we can write \eqref{eq:hatZF} as
\begin{align*}
\hat{\mathcal{Z}}^{[a,b]}_{\F,T_1,T_2} (\eta) &=\int_{a}^{b} \bigsnorm{\widehat{M}_{\hat\F}(\eta,\omega) }^2_2 -\eta^2\bigsnorm{{M}_{\F}(\omega)}_2^2 d\omega 
\\&=\int_{a}^{b}\Big\{\bigsnorm{\widehat{M}_{\hat\F}(\eta,\omega)-\eta {M}_{\F}(\omega) }^2_2 +\biginprod{\widehat{M}_{\hat\F}(\eta,\omega)  -\eta {M}_{\F}(\omega)}{\eta {M}_{\F}(\omega)}_{S_2}\\&
\phantom{\int_{a}^{b}}+ \overline{\biginprod{\widehat{M}_{\hat\F} (\eta,\omega)-\eta {M}_{\F}(\omega)}{\eta {M}_{\F}(\omega)}_{S_2}}\Big\}d\omega. \tageq \label{eq:hatZFapprox}
\end{align*}
Moreover, notice that
\begin{align*}
\widehat{M}_{\hat\F}(\eta,\omega)  -\eta {M}_{\F}(\omega) 
=\eta\big(\hat{\F}_X^{(\omega)}(\eta)-\F_X^{(\omega)}\big) -\eta\big(\hat{\F}^{(\omega)}_Y(\eta)-{\F}^{(\omega)}_Y\big)~.\tageq \label{eq:MdiffF}
\end{align*}
The following result, which requires to control the maximum of partial sums of \eqref{eq:seqFQ} and \eqref{eq:seqFQy}, shows that the first term of \eqref{eq:hatZFapprox} is of smaller order than the two other terms.

\begin{lemma} \label{lem:Fbound}
Suppose Assumptions \ref{as:depstruc}-\ref{as:ratiorates} are satisfied. Then 
\[
 \sup_{\eta \in [0,1]}\int_{a}^{b}\bigsnorm{\widehat{M}_{\hat\F}(\eta,\omega)-\eta {M}_{\F}(\omega) }^2_2 d\omega =o_P\Big(\frac{1}{\sqrt{b_{1} T_1+ b_{2}T_2}}\Big)~. \tageq \label{eq:lemFbound1}
\]
\end{lemma}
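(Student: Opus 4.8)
The plan is to reduce the claim to a maximal inequality for the centred sequential estimators and then to exploit the favourable normalisation together with \autoref{as:ratiorates}. Using the decomposition \eqref{eq:MdiffF} and the elementary bound $\snorm{A-B}_2^2 \le 2\snorm{A}_2^2 + 2\snorm{B}_2^2$, I would first split the integrand and reduce the problem to controlling each component process separately: it suffices to show
\[
\sup_{\eta \in [0,1]}\int_{a}^{b}\bigsnorm{\eta\big(\hat{\F}_X^{(\omega)}(\eta)-\F_X^{(\omega)}\big)}_2^2 \, d\omega = o_P\Big(\tfrac{1}{\sqrt{b_1 T_1 + b_2 T_2}}\Big),
\]
and the analogous statement with $X$ replaced by $Y$. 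The central step is then to rewrite the summand in terms of the building-block process of \eqref{eq:Zx}. Comparing \eqref{eq:Zx} with the sequential estimator \eqref{eq:seqFQ} gives the exact identity
\[
\eta\big(\hat{\F}_X^{(\omega)}(\eta)-\F_X^{(\omega)}\big) = \frac{\eta\sqrt{T_1}}{\sqrt{b_1}\,\flo{\eta T_1}}\,\mathcal{Z}^{X,\omega}_{T,\eta}.
\]
For $\eta \ge 2/T_1$ one has $\flo{\eta T_1} \ge \eta T_1/2$, whence $\frac{\eta\sqrt{T_1}}{\flo{\eta T_1}} \le \frac{2}{\sqrt{T_1}}$ and therefore
\[
\int_{a}^{b}\bigsnorm{\eta\big(\hat{\F}_X^{(\omega)}(\eta)-\F_X^{(\omega)}\big)}_2^2 \, d\omega \le \frac{4}{b_1 T_1}\int_{a}^{b}\bigsnorm{\mathcal{Z}^{X,\omega}_{T,\eta}}_2^2 \, d\omega .
\]

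In the complementary range $\eta < 2/T_1$ the sequential estimator is built from at most one summand. With the usual convention that the empty partial sum vanishes, $\eta(\hat{\F}_X^{(\omega)}(\eta)-\F_X^{(\omega)})$ equals either $-\eta\F_X^{(\omega)}$ or $\eta(\tilde w^{(\omega)}_{b_1,1,1}(X_1\otimes X_1)-\F_X^{(\omega)})$; in both cases $\int_a^b\bigsnorm{\cdot}_2^2 d\omega = O_P(\eta^2) = O_P(T_1^{-2})$, using $\E\|X_1\|^4<\infty$ from \autoref{as:depstruc}. Since $b_1 T_1 \to \infty$, this is of smaller order than $(b_1 T_1 + b_2 T_2)^{-1/2}$, so the small-$\eta$ part is negligible and may be discarded.

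It therefore remains to establish the maximal bound
\[
\sup_{\eta \in [0,1]}\int_{a}^{b}\bigsnorm{\mathcal{Z}^{X,\omega}_{T,\eta}}_2^2 \, d\omega = o_P\big(\sqrt{b_1 T_1}\big).
\]
Granting this and combining it with the penultimate display yields $\frac{4}{b_1 T_1}\,o_P(\sqrt{b_1 T_1}) = o_P\big((b_1 T_1)^{-1/2}\big)$; under \autoref{as:ratiorates} one has $b_1 T_1 \asymp b_2 T_2 \asymp b_1 T_1 + b_2 T_2$, so $(b_1T_1)^{-1/2}\asymp (b_1 T_1 + b_2 T_2)^{-1/2}$, and the $Y$-term is treated identically. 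The two bounds combine through the initial splitting to give \eqref{eq:lemFbound1}. That such a maximal bound is plausible is clear pointwise: applying \autoref{prop:consF} to the subsample of length $\flo{\eta T_1}$ gives, for each fixed $\eta$, $\mathcal{Z}^{X,\omega}_{T,\eta}=O_P(1)$ uniformly in $\omega$ (the bias contribution being $O\big((b_1^{1+2\ell}T_1)^{1/2}\big)=o(1)$ by \autoref{as:bwrates}), so the integral over the bounded band $[a,b]$ is $O_P(1)$ for fixed $\eta$; only a uniform bound growing strictly slower than the diverging factor $\sqrt{b_1 T_1}$ is needed, and this is exactly what \autoref{lem:boundseqSDO} supplies.

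The main obstacle is precisely this last, uniform-in-$\eta$ control, i.e.\ the maximal inequality of \autoref{lem:boundseqSDO}. Whereas the pointwise order of $\mathcal{Z}^{X,\omega}_{T,\eta}$ is immediate from \autoref{prop:consF}, passing to the supremum over the partial-sum index $\eta$ requires a moment or chaining argument for the double-indexed quadratic forms $\sum_{s,t\le \flo{\eta T_1}}\tilde w^{(\omega)}_{b_1,s,t}(X_s\otimes X_t)$ under the physical-dependence assumption \ref{II}; everything else is bookkeeping of normalising constants and the routine reduction afforded by \autoref{as:ratiorates}.
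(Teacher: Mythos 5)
Your proposal is correct and follows essentially the same route as the paper: both reduce the claim, after the $2(\snorm{A}_2^2+\snorm{B}_2^2)$ splitting into the two component processes, to the maximal inequality for the centred sequential estimators supplied by \autoref{lem:boundseqSDO} (with $i=1/2$), and then conclude via Assumptions \ref{as:bwrates} and \ref{as:ratiorates}. The only differences are cosmetic bookkeeping — you handle the $\eta/\flo{\eta T_1}$ normalisation by the case split $\eta \gtrless 2/T_1$ and the bound $\flo{\eta T_1}\ge \eta T_1/2$, whereas the paper uses the identity $\frac{\eta}{\flo{\eta T}}=\frac{1}{T}\big(\frac{\eta T}{\flo{\eta T}}-1\big)+\frac{1}{T}$ together with Tonelli's theorem to bound the expectation of the supremum directly (which also makes the uniformity in $\omega$ of the moment bounds explicit before integrating over the frequency band).
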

The next statement in turn then shows that we can approximate  the process in \eqref{eq:hatZFapprox} as a linear combination of functionals of processes of the form in \eqref{eq:Zx} and \eqref{eq:Zy}. 
\begin{thm}\label{thm:hatZFhyp1}
Suppose Assumptions \ref{as:depstruc}-\ref{as:ratiorates} are satisfied. Then 
\begin{align*}
 \sqrt{b_1 T_1+b_2 T_2}\hat{\mathcal{Z}}^{[a,b]}_{\F,T_1,T_2} (\eta)& =  \sqrt{b_1 T_1+b_2 T_2} \int_{a}^{b}
\frac{2}{\sqrt{b_1 T_1} }
 \Re\biginprod{\mathcal{Z}^{X,\omega}_{T,\eta}}{\eta {M}_{\F}(\omega)}_{S_2}  d\omega 
 \\&-\sqrt{b_1 T_1+b_2 T_2} \int_{a}^{b}\frac{2}{{\sqrt{b_2 T_2} }}\Re \biginprod{\mathcal{Z}^{Y,\omega}_{T,\eta}}{\eta {M}_{\F}(\omega)}_{S_2}d\omega +o_P(1).
\end{align*}
\end{thm}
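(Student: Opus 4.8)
The plan is to start from the algebraic decomposition \eqref{eq:hatZFapprox} and to show that, after multiplication by $\sqrt{b_1 T_1 + b_2 T_2}$, the quadratic term vanishes asymptotically while the two conjugate inner-product terms reproduce exactly the asserted expression. Note first that the last two summands in \eqref{eq:hatZFapprox} are complex conjugates of one another, so their sum equals $2\Re\biginprod{\widehat{M}_{\hat\F}(\eta,\omega) - \eta {M}_{\F}(\omega)}{\eta {M}_{\F}(\omega)}_{S_2}$. Multiplying \eqref{eq:hatZFapprox} by $\sqrt{b_1 T_1 + b_2 T_2}$ and applying \autoref{lem:Fbound} to the first (quadratic) summand, the scaled quadratic term is dominated in absolute value by $\sqrt{b_1 T_1 + b_2 T_2}\,\sup_{\eta \in [0,1]}\int_a^b \bigsnorm{\widehat{M}_{\hat\F}(\eta,\omega) - \eta {M}_{\F}(\omega)}_2^2 \, d\omega = o_P(1)$, uniformly in $\eta$. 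It therefore remains to analyse the linear term $2\sqrt{b_1 T_1 + b_2 T_2}\int_a^b \Re\biginprod{\widehat{M}_{\hat\F}(\eta,\omega) - \eta {M}_{\F}(\omega)}{\eta {M}_{\F}(\omega)}_{S_2} \, d\omega$.

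The second step is to re-express $\widehat{M}_{\hat\F}(\eta,\omega) - \eta {M}_{\F}(\omega)$ through the building-block processes. By \eqref{eq:MdiffF} this difference equals $\eta(\hat{\F}_X^{(\omega)}(\eta) - \F_X^{(\omega)}) - \eta(\hat{\F}_Y^{(\omega)}(\eta) - \F_Y^{(\omega)})$, and comparing the definitions \eqref{eq:seqFQ} and \eqref{eq:Zx} gives the elementary identity $\eta(\hat{\F}_X^{(\omega)}(\eta) - \F_X^{(\omega)}) = \frac{\eta\sqrt{T_1}}{\flo{\eta T_1}\sqrt{b_1}}\,\mathcal{Z}^{X,\omega}_{T,\eta}$ whenever $\flo{\eta T_1}\ge 1$ (both sides being $0$ otherwise), with the analogous identity for $Y$ from \eqref{eq:seqFQy} and \eqref{eq:Zy}. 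By linearity of $\Re\biginprod{\cdot}{\eta {M}_{\F}(\omega)}_{S_2}$ the linear term splits into an $X$-part and a $Y$-part, each of the form $2\sqrt{b_1 T_1 + b_2 T_2}\,\frac{\eta\sqrt{T_1}}{\flo{\eta T_1}\sqrt{b_1}}\int_a^b \Re\biginprod{\mathcal{Z}^{X,\omega}_{T,\eta}}{\eta {M}_{\F}(\omega)}_{S_2} \, d\omega$.

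The third, and main, step is to replace the normalising factor $\frac{\eta\sqrt{T_1}}{\flo{\eta T_1}\sqrt{b_1}}$ by $\frac{1}{\sqrt{b_1 T_1}}$ (and likewise for $Y$), which turns the previous display into precisely the right-hand side of the theorem. Writing $\frac{\eta\sqrt{T_1}}{\flo{\eta T_1}\sqrt{b_1}} - \frac{1}{\sqrt{b_1 T_1}} = \frac{1}{\sqrt{b_1 T_1}}\big(\frac{\eta T_1}{\flo{\eta T_1}} - 1\big) = \frac{1}{\sqrt{b_1 T_1}}\cdot\frac{\eta T_1 - \flo{\eta T_1}}{\flo{\eta T_1}}$, and using $0 \le \eta T_1 - \flo{\eta T_1} < 1$, the error incurred is controlled by $\frac{\sqrt{b_1 T_1 + b_2 T_2}}{\sqrt{b_1 T_1}}\cdot\frac{1}{\flo{\eta T_1}}\int_a^b \bigsnorm{\mathcal{Z}^{X,\omega}_{T,\eta}}_2\,\eta\bigsnorm{{M}_{\F}(\omega)}_2 \, d\omega$. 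Here the prefactor $\sqrt{b_1 T_1 + b_2 T_2}/\sqrt{b_1 T_1}$ is bounded under \autoref{as:ratiorates}, and $\int_a^b \bigsnorm{{M}_{\F}(\omega)}_2 \, d\omega < \infty$ since ${M}_{\F}(\cdot) = \F_X^{(\cdot)} - \F_Y^{(\cdot)}$ has uniformly bounded $S_1(\Hi) \subseteq S_2(\Hi)$ norm on $[a,b]$. The delicate point is to show that $\frac{1}{\flo{\eta T_1}}\sup_\omega \bigsnorm{\mathcal{Z}^{X,\omega}_{T,\eta}}_2 = o_P(1)$ uniformly over $\eta \in [0,1]$; this is exactly the regime $\eta \in [1/T_1,\delta]$ where $\flo{\eta T_1}$ is a small positive integer and the naive bound degenerates near the origin. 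I would resolve this using the uniform bound on the maximum of the partial-sum processes of the sequential spectral density estimators established in \autoref{lem:boundseqSDO}, which supplies precisely the control of $\max_k$ over the partial sums defining $\mathcal{Z}^{X,\omega}_{T,\eta}$ needed to absorb the $1/\flo{\eta T_1}$ factor uniformly in $\eta$ and $\omega$. Collecting the $X$- and $Y$-contributions then yields the claim. The main obstacle is this last uniform-in-$\eta$ estimate near $\eta = 0$, which is where the maximal-inequality input of \autoref{lem:boundseqSDO} is indispensable; the remaining steps are routine algebra together with the application of \autoref{lem:Fbound}.
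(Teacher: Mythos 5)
Your proposal is correct and follows essentially the same route as the paper: the paper's proof consists of citing \autoref{lem:Fbound} for the quadratic term, the identity $c+\overline{c}=2\Re(c)$ for the conjugate cross terms, and \autoref{lem:floor} for the remaining step. Your third step is precisely a re-derivation of \autoref{lem:floor}, and you correctly identify that its proof rests on the maximal inequality of \autoref{lem:boundseqSDO} to control the floor-function error uniformly near $\eta=0$.
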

We can now use \autoref{thm:BuildingBlock} with $ \mathcal{U}^{(\omega)}_{XY} = \mathcal{U}^{(\omega)}_{YX} = 2{M}_{\F}(\omega)$ and \autoref{thm:hatZFhyp1} to find the limiting distribution of the process in \eqref{eq:hatZF}, that is 
\begin{align}
\big\{\sqrt{b_{1} T_1+ b_{2}T_2}\big( \hat{\mathcal{Z}}^{[a,b]}_{\F,T_1,T_2} (\eta)\big)\big \}_{\eta \in [0,1]} {\rightsquigarrow} \tau_{\F} \{\eta  \mathbb{B}(\eta)\}_{\eta \in [0,1]}, \quad \text{ as } T_1, T_2 \to \infty  \tageq \label{eq:ZBF}~,
\end{align}
where $ \tau_{\F}$ is a    constant. To make the test independent of $\tau_{\F}$, we consider the following self-normalizing approach, which is similar in nature to \cite{DKV2018}. To be precise, define  the statistic 
 \[
\hat{V}^{[a,b]}_{\F,T_1, T_2}= \Big(\int^{1}_0 \Big( \int_a^b\bigsnorm{\widehat{M}_{\hat\F}(\eta,\omega)}^2_2 - \eta^2 \bigsnorm{\widehat{M}_{\hat\F}(1,\omega)}^2_2 d\omega \Big)^2 \nu(d\eta) \Big)^{1/2} ~,
\tageq \label{eq:hatVF}
\]
 where $\nu$ is a probability measure on the interval $(0,1)$. Then it is easy to see that 
 \[
 \hat{V}^{[a,b]}_{\F,T_1, T_2}=
\Big(\int^{1}_0 \big(  \hat{\mathcal{Z}}^{[a,b]}_{\F,T_1,T_2} (\eta)- \eta^2 \hat{\mathcal{Z}}^{[a,b]}_{\F,T_1,T_2} (1)  \big)^2 \nu(d\eta) \Big)^{1/2}
\]
 and the continuous mapping theorem  and the weak convergence  \eqref{eq:ZBF} imply 
 \begin{align} \label{eq:distZVF}
\sqrt{b_{1} T_1+ b_{2}T_2}& \Big( \hat{\mathcal{Z}}^{[a,b]}_{\F,T_1, T_2}(1), \hat{V}^{[a,b]}_{\F,T_1, T_2}\Big)\, \underset{T_1,T_2 \to \infty}{\Rightarrow} \,
\Bigg( \tau_{\F}\mathbb{B}(1), \Big(\int^{1}_0\tau_{\F}^2 \eta^2  \Big(  \mathbb{B}(\eta)- \eta \mathbb{B}(1)\Big)^2  \nu(d\eta)\Big)^{1/2} \Bigg)~.
\end{align}
Consequently, a further application of the continuous mapping theorem yields
 \begin{align*}
 \frac{ \hat{\mathcal{Z}}^{[a,b]}_{\F,T_1, T_2}(1)}{ \hat{V}^{[a,b]}_{\F,T_1, T_2}}
 \,\underset{T_1,T_2 \to \infty}{\Rightarrow}\, 
\mathbb{D} = \frac{\mathbb{B}(1)}{\big(\int^{1}_0 \eta^2  \big(  \mathbb{B}(\eta)- \eta \mathbb{B}(1)\big)^2  \nu(d\eta)\big)^{1/2} } \tageq \label{eq:matbbD}
\end{align*} 
whenever $\tau_{\F} \neq 0$. From this, we can obtain a pivotal test statistic for the hypothesis \eqref{eq:testHypOp} of no relevant  difference between the 
spectral density operators given by 
\[
\widehat{\mathbb{D}}^{[a,b]}_{T_1,T_2} = \frac{\int_{a}^{b} \snorm{\widehat{M}_{\hat\F}(1,\omega)}^2_2d\omega-\Delta}{\hat{V}^{[a,b]}_{\F,T_1, T_2}}, \tageq\label{eq:hatDF}
\]
and a  natural decision rule is then to reject the null hypothesis in \eqref{eq:testHypOp} whenever 
 \begin{align}
 \label{testOp}
\hat{\mathbb{D}}^{[a,b]}_{T_1,T_2} > q_{1-\alpha}(\mathbb{D})  \quad \Leftrightarrow \quad  \int_{a}^{b} \snorm{\widehat{M}_{\hat\F}(1,\omega)}^2_2d\omega> \Delta+ \hat{V}^{[a,b]}_{\F,T_1, T_2}\cdot q_{1-\alpha}(\mathbb{D}) 
 \end{align}
where $q_{1-\alpha}(\mathbb{D})$ denotes the $(1-\alpha)$-th quantile of the distribution of the  random variable $
\mathbb{D}$ defined in \eqref{eq:matbbD}.
 Consequently, the test no longer depends on the unknown nuisance parameter but only  on the measure $\nu$ used in the definition
 of the  self-normalizing factor $\hat{V}^{[a,b]}_{\F,T_1, T_2}$, which can be chosen by the 
 statistician and is therefore known.  Observe further that the quantiles $q_{1-\alpha}(\mathbb{D})$ are straightforward to simulate. The next result now shows that the test in \eqref{testOp} provides a consistent and asymptotic level $\alpha$ test. 
\begin{thm} \label{thm:leveltestZF}
Suppose Assumptions \ref{as:depstruc}-\ref{as:ratiorates} are satisfied. Then the decision rule  \eqref{testOp} provides an asymptotic level $\alpha$ test for the hypothesis 
 \eqref{eq:testHypOp} of no relevant difference between the spectral density operators $\F^{(\cdot)}_X$ and $\F^{(\omega)}_Y$, i.e., 
\[
\lim_{T_1,T_2 \to \infty} \mathbb{P}\Big(\hat{\mathbb{D}}^{[a,b]}_{T_1,T_2} >q_{1-\alpha}(\mathbb{D}) \Big) =
\begin{cases}
0 & \mbox{if } \Delta > \int_a^b \snorm{M^{(\omega)}_\F}^2 _2 d\omega; \\ 
\alpha & \mbox{if } \Delta = \int_a^b \snorm{M^{(\omega)}_\F}^2 _2 d\omega { \mbox{  and } \tau_{\F} \neq 0 };\\ 
1 & \mbox{if } \Delta < \int_a^b \snorm{M^{(\omega)}_\F}^2 _2 d\omega. \
\end{cases}
\]
\end{thm}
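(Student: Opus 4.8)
The plan is to reduce the entire statement to the joint weak convergence already obtained in \eqref{eq:distZVF}--\eqref{eq:matbbD}, combined with a case distinction governed by the sign of $M^2-\Delta$, where I abbreviate $M^2:=\int_a^b\snorm{M_\F(\omega)}_2^2\,d\omega$ and $N:=b_1T_1+b_2T_2$. The starting point is the algebraic identity obtained by evaluating \eqref{eq:hatZF} at $\eta=1$: since $\widehat{M}_{\hat\F}(1,\omega)=\hat\F_X^{(\omega)}(1)-\hat\F_Y^{(\omega)}(1)$ and $M_\F(\omega)=\F_X^{(\omega)}-\F_Y^{(\omega)}$,
\[
\hat{\mathcal{Z}}^{[a,b]}_{\F,T_1,T_2}(1)=\int_a^b\snorm{\widehat{M}_{\hat\F}(1,\omega)}_2^2\,d\omega-M^2 ,
\]
so that the test statistic \eqref{eq:hatDF} can be rewritten as
\[
\hat{\mathbb{D}}^{[a,b]}_{T_1,T_2}=\frac{\hat{\mathcal{Z}}^{[a,b]}_{\F,T_1,T_2}(1)+(M^2-\Delta)}{\hat{V}^{[a,b]}_{\F,T_1, T_2}} .
\]
I would use this representation throughout. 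From \eqref{eq:ZBF} and the continuous mapping theorem (applied to the path $\eta\mapsto\sqrt N\,\hat{\mathcal{Z}}^{[a,b]}_{\F,T_1,T_2}(\eta)$, of which $\sqrt N\,\hat V^{[a,b]}_{\F,T_1,T_2}$ is a continuous functional) one has $\sqrt N\,\hat{\mathcal{Z}}^{[a,b]}_{\F,T_1,T_2}(1)=O_P(1)$ and $\sqrt N\,\hat V^{[a,b]}_{\F,T_1,T_2}=O_P(1)$; in particular $\hat{\mathcal{Z}}^{[a,b]}_{\F,T_1,T_2}(1)\to 0$ and $\hat V^{[a,b]}_{\F,T_1,T_2}\to 0$ in probability, while $\hat V^{[a,b]}_{\F,T_1,T_2}\ge 0$ by construction. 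Crucially, these facts hold under both the null and the alternative, since the centering in \eqref{eq:hatZF} is around the true distance $M_\F$ whatever its value.

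For the boundary case $\Delta=M^2$ (with $\tau_\F\neq 0$) the identity collapses the statistic to the self-normalized ratio $\hat{\mathcal{Z}}^{[a,b]}_{\F,T_1,T_2}(1)/\hat V^{[a,b]}_{\F,T_1,T_2}$ --- the $\sqrt N$ normalizations cancel, which is the whole point of the construction --- and \eqref{eq:matbbD} gives $\hat{\mathbb{D}}^{[a,b]}_{T_1,T_2}\Rightarrow\mathbb{D}$. It then remains to conclude $\mathbb{P}(\hat{\mathbb{D}}^{[a,b]}_{T_1,T_2}>q_{1-\alpha}(\mathbb{D}))\to\mathbb{P}(\mathbb{D}>q_{1-\alpha}(\mathbb{D}))=\alpha$, where the last equality uses that $q_{1-\alpha}(\mathbb{D})$ is a continuity point of the law of $\mathbb{D}$. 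For the two interior cases I would argue directly from the representation: the numerator converges in probability to the nonzero constant $M^2-\Delta$ while the denominator $\hat V^{[a,b]}_{\F,T_1,T_2}$ tends to $0$ from above. When $\Delta>M^2$ the numerator is eventually bounded above by a strictly negative constant, so $\hat{\mathbb{D}}^{[a,b]}_{T_1,T_2}\to-\infty$ in probability and the rejection probability tends to $0$; when $\Delta<M^2$ the same computation with reversed sign yields $\hat{\mathbb{D}}^{[a,b]}_{T_1,T_2}\to+\infty$ and the rejection probability tends to $1$ (here $q_{1-\alpha}(\mathbb{D})$ is finite because $\mathbb{D}$ is an almost surely finite random variable). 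Notably, both interior cases require only $\hat V^{[a,b]}_{\F,T_1,T_2}\to 0^+$ and $\hat{\mathcal{Z}}^{[a,b]}_{\F,T_1,T_2}(1)\to 0$, and so hold regardless of the value of $\tau_\F$.

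The main obstacle I anticipate is the boundary case, specifically the two non-degeneracy facts underpinning it. First, for the map $(x,y)\mapsto x/y$ to be almost surely continuous at the limit in \eqref{eq:distZVF} one needs the limiting denominator $|\tau_\F|\big(\int_0^1\eta^2(\mathbb{B}(\eta)-\eta\mathbb{B}(1))^2\nu(d\eta)\big)^{1/2}$ to be strictly positive almost surely; since $\eta\mapsto\mathbb{B}(\eta)-\eta\mathbb{B}(1)$ is a Brownian bridge that is almost surely not identically zero on any set of positive $\nu$-measure, the integral is almost surely positive provided $\nu$ is non-degenerate, and this is exactly where the assumption $\tau_\F\neq 0$ is used. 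Second, the conclusion $\mathbb{P}(\mathbb{D}>q_{1-\alpha}(\mathbb{D}))=\alpha$ requires the distribution function of $\mathbb{D}$ to be continuous; this follows from the non-degeneracy of the Gaussian process $\eta\mapsto\eta\mathbb{B}(\eta)$ together with standard arguments showing that ratios of such Gaussian functionals admit no atoms, so that $q_{1-\alpha}(\mathbb{D})$ is indeed a continuity point and the asymptotic level is exactly $\alpha$.
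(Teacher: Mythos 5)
Your proposal is correct and follows essentially the same route as the paper: both rewrite the rejection event via the centered quantity $\hat{\mathcal{Z}}^{[a,b]}_{\F,T_1,T_2}(1)=\int_a^b\snorm{\widehat{M}_{\hat\F}(1,\omega)}_2^2\,d\omega-M^2$, use \eqref{eq:distZVF} to send numerator and self-normalizer to zero in probability for the interior cases, and invoke \eqref{eq:matbbD} at the boundary. The only differences are cosmetic (the paper multiplies through by $\hat{V}^{[a,b]}_{\F,T_1,T_2}$ in \eqref{hol25} rather than dividing, sidestepping any division-by-zero concern) plus your welcome explicit justification of the two non-degeneracy facts — a.s. positivity of the limiting denominator and continuity of the law of $\mathbb{D}$ — which the paper leaves implicit.
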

\begin{proof}
 Suppose first that $\int_a^b \snorm{M^{(\omega)}_\F}^2 _2 d\omega =0$. Then \eqref{eq:hatZF} becomes $\hat{Z}^{[a,b]}_{\F,T_1, T_2}(1) = \int_{a}^{b} \snorm{\widehat{M}_{\hat\F}(1,\omega)}^2_2d\omega$. By \eqref{eq:distZVF}, we have $\hat{Z}^{[a,b]}_{\F,T_1, T_2}(1) =o_P(1)$ and $ \hat{V}^{[a,b]}_{\F,T_1, T_2}= o_P(1)$  as $ T_1,T_2 \to \infty$. Consequently, we obtain $\lim_{T_1, T_2 \to \infty}\mathbb{P}\Big(\hat{\mathbb{D}}^{[a,b]}_{T_1,T_2} >q_{1-\alpha}(\mathbb{D}) \Big) =0$. 
Next, suppose $ \int_a^b \snorm{M^{(\omega)}_\F}^2 _2 d\omega >0$. In this case, we can write
\begin{align} \label{hol25}
& \mathbb{P}\Big(\hat{\mathbb{D}}^{[a,b]}_{T_1,T_2} >q_{1-\alpha}(\mathbb{D}) \Big) \\&
 = \mathbb{P}\Big({\int_{a}^{b} \snorm{\widehat{M}_{\hat\F}(1,\omega)}^2_2d\omega -\int_{a}^{b} \snorm{M_\F(\omega)}^2_2 d\omega}> {\Delta -\int_{a}^{b} \snorm{M_\F(\omega)}^2_2 d\omega} +q_{1-\alpha}(\mathbb{D})  {\hat{V}^{[a,b]}_{\F,T_1, T_2}}\Big). \nonumber 
\end{align}
From \eqref{eq:distZVF} it follows that 
 \begin{align*}
  \hat{\mathcal{Z}}^{[a,b]}_{\F,T_1, T_2}(1) &=  {\int_{a}^{b} \snorm{\widehat{M}_{\hat\F}(1,\omega)}^2_2d\omega -\int_{a}^{b} \snorm{M_\F(\omega)}^2_2 d\omega}=  o_{P}(1) ~,~~
\hat{V}^{[a,b]}_{\F,T_1, T_2}  = o_{P}(1),
\end{align*}
and consequently the assertion in   the cases  $ \Delta > \int_a^b \snorm{M^{(\omega)}_\F}^2 _2 d\omega$ and
  $ \Delta < \int_a^b \snorm{M^{(\omega)}_\F}^2 _2 d\omega$ follows easily.
Finally, if $ \tau_{\F} \neq 0$ we have  from \eqref{eq:matbbD} that 
$$
\frac{\int_{a}^{b} \snorm{\widehat{M}_{\hat\F}(1,\omega)}^2_2d\omega -\int_{a}^{b} \snorm{M_\F(\omega)}^2_2 d\omega}{\hat{V}^{[a,b]}_{\F,T_1, T_2}}
=  \frac{ \hat{\mathcal{Z}}^{[a,b]}_{\F,T_1, T_2}(1)}{ \hat{V}^{[a,b]}_{\F,T_1, T_2}}
\underset{T_1,T_2 \to \infty}{\Rightarrow} 
\mathbb{D},
$$
and 
 we obtain  the remaining case   $ \Delta = \int_a^b \snorm{M^{(\omega)}_\F}^2 _2 d\omega$  from \eqref{hol25}. 
\end{proof}

\begin{Remark}[sensitivity with respect to the measure $\nu$] \label{remarkh0}
{\rm  Note that the test \eqref{testOp} depends on the specification of the measure $\nu$, which has to be chosen in advance. We give a heuristic argument that the test is not very sensitive with respect to the choice of this measure. To see this, consider 
\begin{equation}
\label{hol51}
\mathbb{V} =\Big(\int^{1}_0 \eta^2  \big(  \mathbb{B}(\eta)- \eta \mathbb{B}(1)\big)^2  \nu(d\eta)\Big)^{1/2}  ,
\end{equation}
and $M^2 = \int_{a}^{b} \snorm{M_\F(\omega)}^2_2 d\omega $, 
and note  that it follows from \eqref{eq:distZVF} and \eqref{hol25}
\begin{align}  \nonumber 
\mathbb{P}\Big(\hat{\mathbb{D}}^{[a,b]}_{T_1,T_2} >q_{1-\alpha}(\mathbb{D}) \Big) 
& \approx 
 \mathbb{P}\big( \mathbb{B}(1) > \sqrt{b_{1} T_1+ b_{2}T_2} \cdot
  (\Delta - M^2)/\tau_{\F}  + \mathbb{V }\cdot q_{1-\alpha}(\mathbb{D})   \big) \\
  & =
 \mathbb{P}\big( \mathbb{B}(1) > \sqrt{b_{1} T_1+ b_{2}T_2} \cdot
  (\Delta - M^2) /\tau_{\F}  + \mathbb{V }\cdot q_{1-\alpha}( \mathbb{B}(1)/\mathbb{V})   \big)~.
  \label{h40a}
\end{align}
Observe that in the last expression  only the quantity $\mathbb{V }\cdot q_{1-\alpha}( \mathbb{B}(1)/\mathbb{V})$ depends 
on the measure $\nu$, which enters in the definition of the random variable $\mathbb{V}$.
However,  for fixed $v>0 $ we have $ {v }\cdot q_{1-\alpha}( \mathbb{B}(1)/v) = q_{1-\alpha}( \mathbb{B}(1) )$,
which gives a  heuristic explanation  why 
the 
probability in  \eqref{h40a} is not very sensitive with respect to the choice of  the measure $\nu$ (which was also observed empirically).
Note that the same argument applies to the tests proposed in following Section \ref{sec32} 
and \ref{sec33}. 
}
\end{Remark}

\subsection{No relevant difference in the eigenprojectors: hypothesis \eqref{eq:testHyp}} 
\label{sec32}

In this section, we construct a pivotal test  for hypothesis \eqref{eq:testHyp}  of 
no relevant difference between the eigenprojectors $\phix{k}$ and $\phiy{k}$ of the functional time series  
$\{X_t\}_{t \in \znum}$ and  $\{Y_t\}_{t \in \znum}$.
The development is of a more intricate nature than for the spectral density operators, which we will elaborate upon.  Proofs of the statements provided in this section are relegated to \autoref{sec:mainstat}. To ease notation, denote the (pointwise) population distances and empirical distances of the $k$-th eigenprojectors at frequency $\omega$ by
\[
{M}_{\Pi, k}(\omega)=\phix{k}-\phiy{k} \text{ ~~ and ~~ }\widehat{M}_{\Pi,k}(\eta,\omega)  =\eta \big(\phixT{k}(\eta,\omega) -\phiyT{k}(\eta)\big), \quad \eta \in [0,1].\]
As already briefly mentioned in \autoref{sec2}, we construct a test based on the eigenprojectors rather than on the eigenfunctions because the latter are only defined up to some multiplicative factor $c$ on the unit circle. To understand the problem, suppose for simplicity that we would like to test for relevant differences in the $k$-th eigenspace of $\F^{(\omega)}_X$ and $\F^{(\omega)}_Y$. From the estimators in \eqref{eq:seqFQ} and \eqref{eq:seqFQy} we can obtain $c_1 \hat{\phi}_{X,k}^{(\omega)}$ and $c_2 \hat{\phi}_{Y,k}^{(\omega)}$ for some unknown $c_1, c_2 \in \cnum$ with $|c_1|^2 =|c_2|^2=1$. The empirical eigenfunctions might therefore not be comparable due the unknown rotation in different directions. Moreover, a consequence of this rotation is that a bound on the differences in norm between the population and empirical eigenfunctions does not follow from those of the corresponding operators. This is in contrast with eigenfunctions that strictly belong to the real-valued subspace of $\Hi$, of which only the sign is unknown for the empirical counterparts. We can however construct a test using the eigenprojectors because these are rotationally invariant since $\phixT{k} 
=c_1 \overline{c_1} \hat{\phi}_{X,k}^{(\omega)}\otimes  \hat{\phi}_{X,k}^{(\omega)}$. As a consequence, $\phixT{k}$ and $\phiyT{k}$ are directly comparable and a bound on the differences in norm between the population and empirical eigenprojectors can be derived. 
The derivation of the  theoretical properties of a test based upon eigenprojectors is however more involved than one based upon eigenfunctions {\citep[see also][who developed self-normalized tests for relevant changes  in the eigenfunctions of the covariance operator]{auedettrice2019}. }

We start by defining the process
\begin{equation}
\hat{\mathcal{Z}}^{[a,b],(k)}_{\Pi,T_1,T_2} (\eta)=\int_{a}^{b} \bigsnorm{\widehat{M}_{\Pi,k}(\eta,\omega) }^2_2 -\eta^2\bigsnorm{{M}_{\Pi, k}(\omega) }_2^2 d\omega \tageq \label{eq:hatZbig} 
\end{equation}
and observe in this case that
\begin{align*}
\widehat{M}_{\Pi,k}(\eta,\omega)  - \eta M_{\Pi,k}(\omega) = \eta \Big(\phixT{k}(\eta)-\phix{k}\Big) - \eta \Big(\phiyT{k}(\eta)-\phiy{k}\Big).  
\end{align*}
Unlike the terms in \eqref{eq:MdiffF}, the properties of the two terms of the right-hand side are not obvious to  disentangle.  As a first step, denote $ \F^{\omega}_{\widetilde{\otimes}}= \F^{\omega}\widetilde{\otimes} \F^{\omega}$ and observe that this operator has decomposition
\[
\F^{\omega}_{\widetilde{\otimes}} = \sum_{j, \jpr} \lambda^{(\omega)}_{j}\lambda^{(\omega)}_{\jpr} \Big(\Pi^{(\omega)}_{j}{\widetilde{\otimes}} \Pi^{(\omega)}_{\jpr}\Big),
\]
from which it follows that $\F^{\omega}_{\widetilde{\otimes}} (\Pi^{(\omega)}_{k,l}) =  \lambda^{(\omega)}_{k} \lambda^{(\omega)}_{l} \Pi^{(\omega)}_{k,l}$. Thus, the operators $\big\{\Pi^{(\omega)}_{k,l}\big\}_{k,l \ge 1}$ may be viewed as the eigenfunctions of $\F^{\omega}_{\widetilde{\otimes}}$. By means of the formalism of perturbation theory, we exploit this in Section \ref{sec:Eigapprox} of the supplement to derive the following expansion for the sequential eigenprojectors.
 \begin{prop}\label{prop:bexp}
Let $\F^{\omega}$ be the spectral density operator of a weakly stationary functional time series with eigendecomposition $\sum_{i=1}^{\infty}\lambda^{(\omega)}_i \Pi^{\omega}_i$. Furthermore,  let $\big\{  \hat{\lambda}^{(\omega)}_i (\eta) \big\}_{i \ge 1}$, $\big\{  \hat{\Pi}^{(\omega)}_i (\eta) \big\}_{i \ge 1}$  be the sequence of eigenvalues and eigenprojectors, respectively, of the sequential estimators $\hat{\F}^{\omega}(\eta)$, $\eta \in [0,1]$, of $\F^{\omega}$ . Then 
\begin{align*}
\hat{\Pi}^{(\omega)}_{k}-{\Pi}^{(\omega)}_{k}&= \biginprod{\hat{\Pi}^{\omega}_{k}(\eta)-{\Pi}^{\omega}_{k}}{{\Pi}^{\omega}_{k}}_{S_2}\Pi^{(\omega)}_{k}\\&+ \sum_{\substack{j,  \jpr=1,\\ \{j,  \jpr = k\}^\complement}}^{\infty} \frac{1}{\lambda^{(\omega)}_{j}\lambda^{(\omega)}_{\jpr} -(\lambda^{(\omega)}_{k})^2}\Big[\biginprod{\big(\F^{\omega} \widetilde{\otimes} \F^{\omega} -\hat{\F}^{\omega}(\eta) \widetilde{\otimes} \hat{\F}^{\omega} (\eta)\big) \Pi^{(\omega)}_{k}}{\Pi^{(\omega)}_{j\jpr}}_{S_2} \\
& ~~~~~~~~~ ~~~~~~~~~ ~~~~~~~~~ ~~~~~~~~~  ~~~~~~~~~ ~~~~~~~~~  +
\biginprod{E^{(\omega)}_{k,b_T}(\eta)}{\Pi^{(\omega)}_{j\jpr}}_{S_2}\Big] \Pi^{(\omega)}_{j\jpr}
\end{align*}
where $\{\cdot\}^{\complement}$ denotes the complement set, $\Pi^{(\omega)}_{j\jpr}= \phi^{(\omega)}_j \otimes \phi^{(\omega)}_{\jpr}$,  and where 
\begin{align*}
E^{(\omega)}_{k,b_T}(\eta) =\Big[(\F^{\omega} \widetilde{\otimes} \F^{\omega} -\hat{\F}^{\omega}(\eta) \widetilde{\otimes} \hat{\F}^{\omega} (\eta))\Big]\Big[\phiT{k}(\eta) - \Pi^{(\omega)}_{k}\Big]+\Big[(\hat{\lambda}^{(\omega)}_k(\eta))^2 \ -( \lambda^{(\omega)}_k)^2\Big]\Big[\phiT{k}(\eta) - \Pi^{(\omega)}_{k}\Big].
\end{align*}
\end{prop}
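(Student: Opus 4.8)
The plan is to \emph{linearize} the (nonlinear) eigenprojector perturbation problem by lifting it to the Hilbert space $S_2(\Hi)$ through the Kronecker tensor product, where it turns into an ordinary eigenvector perturbation problem for a self-adjoint operator, and then to read off the coefficients of $\hat{\Pi}^{(\omega)}_k(\eta)-\Pi^{(\omega)}_k$ in a convenient orthonormal basis. First I would record the key algebraic fact. Since $\F^{\omega}$ is Hermitian with $\F^{\omega}\phi^{(\omega)}_j=\lambda^{(\omega)}_j\phi^{(\omega)}_j$, the definition $(A\widetilde{\otimes}B)C=ACB^{\dagger}$ gives, for the rank-one operators $\Pi^{(\omega)}_{j\jpr}=\phi^{(\omega)}_j\otimes\phi^{(\omega)}_{\jpr}$,
\[
(\F^{\omega}\widetilde{\otimes}\F^{\omega})\,\Pi^{(\omega)}_{j\jpr}=\F^{\omega}\big(\phi^{(\omega)}_j\otimes\phi^{(\omega)}_{\jpr}\big)(\F^{\omega})^{\dagger}=\lambda^{(\omega)}_j\lambda^{(\omega)}_{\jpr}\,\Pi^{(\omega)}_{j\jpr}.
\]
Because $\{\phi^{(\omega)}_j\}_{j\ge1}$ is an orthonormal basis of $\Hi$, the family $\{\Pi^{(\omega)}_{j\jpr}\}_{j,\jpr\ge1}$ is an orthonormal basis of $S_2(\Hi)$; hence $\mathcal{A}:=\F^{\omega}\widetilde{\otimes}\F^{\omega}$ is a self-adjoint operator on $S_2(\Hi)$ whose spectrum is $\{\lambda^{(\omega)}_j\lambda^{(\omega)}_{\jpr}\}$, and $\Pi^{(\omega)}_k=\Pi^{(\omega)}_{kk}$ is its eigenvector with eigenvalue $(\lambda^{(\omega)}_k)^2$. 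Since $\hat{\F}^{\omega}(\eta)$ is likewise Hermitian (the weights satisfy $\overline{\tilde{w}^{(\omega)}_{b,s,t}}=\tilde{w}^{(\omega)}_{b,t,s}$ as $w$ is even), the same computation shows $\phiT{k}(\eta)$ is the eigenvector of $\hat{\mathcal{A}}:=\hat{\F}^{\omega}(\eta)\widetilde{\otimes}\hat{\F}^{\omega}(\eta)$ with eigenvalue $(\hat{\lambda}^{(\omega)}_k(\eta))^2$. The problem is thereby reduced to perturbing a simple eigenvector of the self-adjoint operator $\mathcal{A}$.

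Next I would expand $\phiT{k}(\eta)-\Pi^{(\omega)}_k$ in the orthonormal basis $\{\Pi^{(\omega)}_{j\jpr}\}$ via Parseval. The $(j,\jpr)=(k,k)$ term is $\biginprod{\phiT{k}(\eta)-\Pi^{(\omega)}_k}{\Pi^{(\omega)}_k}_{S_2}\Pi^{(\omega)}_k$, which is exactly the first summand of the claimed identity. For $(j,\jpr)\neq(k,k)$ the Fourier coefficient reduces to $\biginprod{\phiT{k}(\eta)}{\Pi^{(\omega)}_{j\jpr}}_{S_2}$, since orthonormality kills $\biginprod{\Pi^{(\omega)}_k}{\Pi^{(\omega)}_{j\jpr}}_{S_2}$. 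To evaluate it I would pair the two eigenrelations: from $\hat{\mathcal{A}}\phiT{k}(\eta)=(\hat{\lambda}^{(\omega)}_k(\eta))^2\phiT{k}(\eta)$ and, using self-adjointness of $\mathcal{A}$ with $\mathcal{A}\Pi^{(\omega)}_{j\jpr}=\lambda^{(\omega)}_j\lambda^{(\omega)}_{\jpr}\Pi^{(\omega)}_{j\jpr}$, from $\biginprod{\mathcal{A}\phiT{k}(\eta)}{\Pi^{(\omega)}_{j\jpr}}_{S_2}=\lambda^{(\omega)}_j\lambda^{(\omega)}_{\jpr}\biginprod{\phiT{k}(\eta)}{\Pi^{(\omega)}_{j\jpr}}_{S_2}$, subtracting yields
\[
\biginprod{(\hat{\mathcal{A}}-\mathcal{A})\phiT{k}(\eta)}{\Pi^{(\omega)}_{j\jpr}}_{S_2}=\big((\hat{\lambda}^{(\omega)}_k(\eta))^2-\lambda^{(\omega)}_j\lambda^{(\omega)}_{\jpr}\big)\biginprod{\phiT{k}(\eta)}{\Pi^{(\omega)}_{j\jpr}}_{S_2}.
\]

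Finally I would rearrange this into the stated form. Writing $(\hat{\lambda}^{(\omega)}_k)^2-\lambda^{(\omega)}_j\lambda^{(\omega)}_{\jpr}=-(\lambda^{(\omega)}_j\lambda^{(\omega)}_{\jpr}-(\lambda^{(\omega)}_k)^2)+((\hat{\lambda}^{(\omega)}_k)^2-(\lambda^{(\omega)}_k)^2)$, decomposing $(\hat{\mathcal{A}}-\mathcal{A})\phiT{k}(\eta)=(\hat{\mathcal{A}}-\mathcal{A})\Pi^{(\omega)}_k+(\hat{\mathcal{A}}-\mathcal{A})(\phiT{k}(\eta)-\Pi^{(\omega)}_k)$, and using that for $(j,\jpr)\neq(k,k)$ the coefficient equals $\biginprod{\phiT{k}(\eta)-\Pi^{(\omega)}_k}{\Pi^{(\omega)}_{j\jpr}}_{S_2}$ to move the eigenvalue-perturbation term across, the two perturbation pieces assemble precisely into $E^{(\omega)}_{k,b_T}(\eta)$ (this is exactly its definition), leaving
\[
\biginprod{\phiT{k}(\eta)-\Pi^{(\omega)}_k}{\Pi^{(\omega)}_{j\jpr}}_{S_2}=\frac{\biginprod{(\F^{\omega}\widetilde{\otimes}\F^{\omega}-\hat{\F}^{\omega}(\eta)\widetilde{\otimes}\hat{\F}^{\omega}(\eta))\Pi^{(\omega)}_k}{\Pi^{(\omega)}_{j\jpr}}_{S_2}+\biginprod{E^{(\omega)}_{k,b_T}(\eta)}{\Pi^{(\omega)}_{j\jpr}}_{S_2}}{\lambda^{(\omega)}_j\lambda^{(\omega)}_{\jpr}-(\lambda^{(\omega)}_k)^2}.
\]
Substituting back into the Parseval expansion gives the proposition.

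I expect the delicate point to be the denominator rather than the algebra, which is exact and discards nothing. The division is legitimate only where $\lambda^{(\omega)}_j\lambda^{(\omega)}_{\jpr}\neq(\lambda^{(\omega)}_k)^2$ for all $(j,\jpr)\neq(k,k)$, i.e. $(\lambda^{(\omega)}_k)^2$ must be a simple, isolated point of the lifted spectrum of $\mathcal{A}$; this separation is what the hypothesis that the leading eigenvalues of $\F^{\omega}$ are distinct and positive must be used to guarantee, and verifying it is the step requiring care. Finally, I would emphasize that the identity is \emph{exact}, with the whole higher-order remainder retained in $E^{(\omega)}_{k,b_T}(\eta)$; its value lies in the subsequent analysis, where $E^{(\omega)}_{k,b_T}(\eta)$ is shown to be asymptotically negligible by combining the consistency rates of \autoref{prop:consF} for $\hat{\F}^{\omega}(\eta)-\F^{\omega}$ with the corresponding control of $\phiT{k}(\eta)-\Pi^{(\omega)}_k$ and $(\hat{\lambda}^{(\omega)}_k(\eta))^2-(\lambda^{(\omega)}_k)^2$.
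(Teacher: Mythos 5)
Your proposal is correct and follows essentially the same route as the paper: both lift the eigenprojector problem to $S_2(\Hi)$ via the Kronecker product $\F^{\omega}\widetilde{\otimes}\F^{\omega}$, use that $\{\Pi^{(\omega)}_{j\jpr}\}$ is an orthonormal eigenbasis of the lifted operator with eigenvalues $\lambda^{(\omega)}_j\lambda^{(\omega)}_{\jpr}$, and extract the Fourier coefficients of $\phiT{k}(\eta)-\Pi^{(\omega)}_k$ exactly, leaving the higher-order terms in $E^{(\omega)}_{k,b_T}(\eta)$. Your pairing of the two eigenrelations is just a rearrangement of the paper's explicit perturbed eigen-equation, and your remark that the nonvanishing of the denominators $\lambda^{(\omega)}_j\lambda^{(\omega)}_{\jpr}-(\lambda^{(\omega)}_k)^2$ is the genuinely delicate point is apt — the paper defers that verification to the proof of \autoref{lem:errortildeM}.
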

In order to make sure the above expansion is well-defined for the eigenprojectors   $\Pi_{X,k}^{(\omega )}$  and $\Pi_{Y,k}^{(\omega )}$  we require the following assumption on the eigenvalues.  Let $G_{\ell,\widetilde{\otimes},k} =  \inf_{j,j^\prime:\{j,  j^\prime = k\}^\complement}|\lambda^{(\omega)}_{\ell,j}\lambda^{(\omega)}_{\ell,j^\prime}-(\lambda^{(\omega)}_{\ell,k})^2|$ for  $\ell \in \{X,Y\}$. 
\begin{assumption}\label{as:eigsep}
The first $k_0 > 1$ eigenvalues of $\F^{\omega}_{X}$ and of $\F^{\omega}_{Y}$ are positive, and 
$G_{X,\widetilde{\otimes},k_0}, G_{Y,\widetilde{\otimes},k_0} >0$. 
\end{assumption}
This assumption guarantees separability of the eigenvalues and ensures that we can test for relevant differences in the first $k_0-1$ eigenprojectors. Even though the above expansion expression is quite involved, the next statement  shows that the properties are controlled by a functional of a stochastic process, $\widetilde{M}_{\Pi,k}(\eta,\omega)$ that we will be able to link again to a process of the form \eqref{eq:limZXsec1}. 

\begin{lemma} \label{lem:errortildeM}
Suppose Assumptions \ref{as:depstruc}-\ref{as:ratiorates} are satisfied and let $\mathcal{Z}^{X,\omega}_{T,\eta}$ and $\mathcal{Z}^{Y,\omega}_{T,\eta}$ be given by \eqref{eq:Zx} and \eqref{eq:Zy}, respectively. Suppose furthermore that Assumption \ref{as:eigsep} holds true. Then,
\begin{align*}
\mathrm{(i)} & \sup_{\eta \in [0,1]}\int_a^b \bigsnorm{\widehat{M}_{\Pi,k}(\eta,\omega) - \eta \Big(\phix{k}-\phiy{k}\Big)-\widetilde{M}_{\Pi,k}(\eta,\omega)  }_2 d\omega = o_P\Big(\frac{1}{\sqrt{b_1 T_1+b_2T_2}}\Big);\\
\mathrm{(ii)} & \sup_{\eta \in [0,1]}\int_a^b \bigsnorm{\widetilde{M}_{\Pi,k}(\eta,\omega)  }^2_2 d\omega = o_P\Big(\frac{1}{\sqrt{b_1 T_1+b_2T_2}}\Big),\end{align*}
where 
\begin{align*}
\widetilde{M}_{\Pi,k}(\eta,\omega) &=  \frac{1}{\sqrt{b_1 T_1}}\sum_{\substack{j,  \jpr=1,\\ \{j,  \jpr = k\}^\complement}}^{\infty}  \frac{1}{\lambda^{(\omega)}_{X,j}\lambda^{(\omega)}_{X,\jpr} -(\lambda^{(\omega)}_{X,k})^2} \biginprod{\F_X^{(\omega)} \widetilde{\otimes} \mathcal{Z}^{X,\omega}_{T,\eta}+ \mathcal{Z}^{X,\omega}_{T,\eta}\widetilde{\otimes} \F_X^{(\omega)} }{\phix{j\jpr}  \widetilde{\otimes} \phix{k}}_{S_2} \phix{j\jpr}\\& 
\phantom{\approx}-\frac{1}{\sqrt{b_2 T_2}}\sum_{\substack{j,  \jpr=1,\\ \{j,  \jpr = k\}^\complement}}^{\infty}  \frac{1}{\lambda^{(\omega)}_{Y,j}\lambda^{(\omega)}_{Y,\jpr} -(\lambda^{(\omega)}_{Y,k})^2} \biginprod{  \F_Y^{(\omega)} \widetilde{\otimes} \mathcal{Z}^{Y,\omega}_{T,\eta}+ \mathcal{Z}^{Y,\omega}_{T,\eta}\widetilde{\otimes} \F_Y^{(\omega)} }{\phiy{j\jpr}  \widetilde{\otimes} \phiy{k}}_{S_2} \phix{j\jpr}. \tageq \label{eq:tildeM}
\end{align*}
\end{lemma}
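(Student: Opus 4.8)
The plan is to begin from the perturbation expansion in Proposition~\ref{prop:bexp}, applied to the sequential eigenprojectors $\hat{\Pi}^{(\omega)}_{X,k}(\eta)$ and $\hat{\Pi}^{(\omega)}_{Y,k}(\eta)$ of the estimators \eqref{eq:seqFQ} and \eqref{eq:seqFQy}, and to subtract the two, so that $\widehat{M}_{\Pi,k}(\eta,\omega) - \eta M_{\Pi,k}(\omega)$ is the difference of two such expansions. I then linearize each expansion in the estimation error $\hat{\F}^{(\omega)}(\eta) - \F^{(\omega)}$ and identify $\widetilde{M}_{\Pi,k}(\eta,\omega)$ as the leading linear contribution. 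The key algebraic input is the exact identity $\flo{\eta T_i}\big(\hat{\F}^{(\omega)}(\eta) - \F^{(\omega)}\big) = (\sqrt{T_i}/\sqrt{b_i})\,\mathcal{Z}^{\cdot,\omega}_{T,\eta}$ obtained directly from \eqref{eq:Zx}--\eqref{eq:seqFQy}, which together with $\flo{\eta T_i} = \eta T_i + O(1)$ yields $\eta(\hat{\F}^{(\omega)}(\eta) - \F^{(\omega)}) = (b_iT_i)^{-1/2}\mathcal{Z}^{\cdot,\omega}_{T,\eta} + (\text{discretization error})$, plus the bilinearity of $\widetilde{\otimes}$ that expands $\F^{\omega}\widetilde{\otimes}\F^{\omega} - \hat{\F}^{\omega}(\eta)\widetilde{\otimes}\hat{\F}^{\omega}(\eta)$ into $-\F^{\omega}\widetilde{\otimes}(\hat{\F}^{\omega}(\eta) - \F^{\omega}) - (\hat{\F}^{\omega}(\eta) - \F^{\omega})\widetilde{\otimes}\F^{\omega}$ plus a quadratic remainder. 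Rewriting the resulting first-order term via the adjoint relation $(A\widetilde{\otimes} B)C = ACB^{\dagger}$ reproduces exactly the inner products $\biginprod{\F_X^{(\omega)}\widetilde{\otimes}\mathcal{Z}^{X,\omega}_{T,\eta} + \mathcal{Z}^{X,\omega}_{T,\eta}\widetilde{\otimes}\F_X^{(\omega)}}{\phix{j\jpr}\widetilde{\otimes}\phix{k}}_{S_2}$ of \eqref{eq:tildeM}.

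For part (i), I must then show that every term discarded in this linearization is, after integration over $[a,b]$ in $\snorm{\cdot}_2$ and after taking $\sup_{\eta}$, of order $o_P((b_1T_1 + b_2T_2)^{-1/2})$. There are four such contributions: (a) the diagonal term $\biginprod{\hat{\Pi}^{\omega}_{k}(\eta) - \Pi^{\omega}_{k}}{\Pi^{\omega}_{k}}_{S_2}\Pi^{(\omega)}_{k}$, which is genuinely quadratic because for a rank-one projector $\biginprod{\hat{\Pi}^{\omega}_{k}(\eta) - \Pi^{\omega}_{k}}{\Pi^{\omega}_{k}}_{S_2} = -\tfrac12\snorm{\hat{\Pi}^{\omega}_{k}(\eta) - \Pi^{\omega}_{k}}_2^2$; (b) the remainder $E^{(\omega)}_{k,b_T}(\eta)$, whose two summands are each a product of an operator difference (respectively an eigenvalue difference) with the projector difference $\hat{\Pi}^{\omega}_{k}(\eta) - \Pi^{\omega}_{k}$; (c) the quadratic Kronecker term $(\hat{\F}^{\omega}(\eta) - \F^{\omega})\widetilde{\otimes}(\hat{\F}^{\omega}(\eta) - \F^{\omega})$; and (d) the discretization error from replacing $\eta/\flo{\eta T_i}$ by $1/T_i$. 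In each case the bound reduces to a product of two factors of size $O_P((b_iT_i)^{-1/2})$, hence $O_P((b_iT_i)^{-1})$, which is $o_P((b_1T_1 + b_2T_2)^{-1/2})$ under \autoref{as:ratiorates} together with $b_iT_i\to\infty$. The uniformity in $\eta$ is supplied by the maximal partial-sum bound of \autoref{lem:boundseqSDO} and the uniform consistency of the eigenelements from \autoref{lem:diffboundEig}, while the denominators $\lambda^{(\omega)}_{j}\lambda^{(\omega)}_{\jpr} - (\lambda^{(\omega)}_{k})^2$ stay bounded away from zero on the relevant index set by the spectral gap in \autoref{as:eigsep}, which also guarantees summability of the series uniformly in $\omega\in[a,b]$.

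Part (ii) is then comparatively direct: since $\widetilde{M}_{\Pi,k}(\eta,\omega)$ carries the explicit prefactors $(b_1T_1)^{-1/2}$ and $(b_2T_2)^{-1/2}$, and the series multiplying them is, by \autoref{lem:boundseqSDO} and \autoref{as:eigsep}, uniformly (in $\eta$ and $\omega$) of order $O_P(1)$ in $\snorm{\cdot}_2$, one obtains $\sup_{\eta}\int_a^b\snorm{\widetilde{M}_{\Pi,k}(\eta,\omega)}_2^2\,d\omega = O_P\big((b_1T_1)^{-1} + (b_2T_2)^{-1}\big)$. Multiplying by $\sqrt{b_1T_1+b_2T_2}$ and invoking \autoref{as:ratiorates}, under which $b_1T_1 \asymp b_2T_2 \asymp b_1T_1+b_2T_2$, shows this is $o_P(1)$, which is the claim.

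I expect the main obstacle to be the uniform-in-$\eta$ control near $\eta=0$: there $\hat{\F}^{(\omega)}(\eta)$ is built from only $\flo{\eta T_i}$ observations, so the eigenvalue separation underlying Proposition~\ref{prop:bexp} and the associated perturbation bounds degrade, and the discretization factor $\eta/\flo{\eta T_i}$ is least favourable. Handling this rigorously requires absorbing the small-$\eta$ regime through the partial-sum maximal inequality of \autoref{lem:boundseqSDO} rather than through any pointwise-in-$\eta$ argument, and one must verify that the second-order bounds (a)--(d) hold carrying precisely the $\eta$-factor that renders $\widehat{M}_{\Pi,k}(\eta,\omega)$ well behaved.
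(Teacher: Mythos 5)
Your proposal is correct and follows essentially the same route as the paper's proof: the perturbation expansion of Proposition~\ref{prop:bexp}, the identity $\biginprod{\hat{\Pi}^{\omega}_{k}(\eta)-\Pi^{\omega}_{k}}{\Pi^{\omega}_{k}}_{S_2}=-\tfrac12\snorm{\hat{\Pi}^{\omega}_{k}(\eta)-\Pi^{\omega}_{k}}^2_2$ for the diagonal term, separate quadratic bounds for the Kronecker remainder and $E^{(\omega)}_{k,b_T}(\eta)$, the spectral-gap lower bound on the denominators from Assumption~\ref{as:eigsep}, the floor-function correction, and uniformity in $\eta$ via the maximal partial-sum bounds. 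The only cosmetic difference is that the paper's maximal inequality yields an extra $\log^{2/\gamma}(T)$ factor in the $O_P((b_iT_i)^{-1})$ rates, which does not affect the $o_P((b_1T_1+b_2T_2)^{-1/2})$ conclusion.
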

The proof of this result is involved and left to Section  \ref{sec:proofshyps} of the Appendix. 
The expression \eqref{eq:tildeM} follows from the definition of $\mathcal{Z}^{X,\omega}_{T,\eta}$ and $\mathcal{Z}^{Y,\omega}_{T,\eta}$ together with an application of \autoref{lem:floor} in the Appendix. 
This subsequently allows us to establish that the process $\hat{\mathcal{Z}}^{[a,b],(k)}_{\Pi,T_1,T_2} $ in \eqref{eq:hatZbig} admits a stochastic expansion
 of the form as given in \autoref{thm:BuildingBlock}, from which its distributional properties can be obtained.
\begin{thm}\label{thm:hatZbigPi}
Suppose Assumptions \ref{as:depstruc}-\ref{as:eigsep} are satisfied. Then 
\begin{align*}
\hat{\mathcal{Z}}^{[a,b],(k)}_{\Pi,T_1,T_2} (\eta)&=
\eta\int_a^b \Big(  \frac{1}{\sqrt{b_2 T_2}}\Re\biginprod{ \mathcal{Z}^{Y,\omega}_{T,\eta} }{\widetilde{\Pi}^{(\omega)}_{Y,X,k}}_{S_2}+\frac{1}{\sqrt{b_1 T_1}}\Re\biginprod{ \mathcal{Z}^{X,\omega}_{T,\eta} }{\widetilde{\Pi}^{(\omega)}_{X,Y,k}}_{S_2} \Big) d\omega+o_P\Big(\frac{1}{\sqrt{b_1 T_1+b_2T_2}}\Big),
\end{align*}
where
\begin{align*}
\widetilde{\Pi}^{(\omega)}_{X,Y,k}& =4\sum_{j \ne k} \frac{\lambda^{(\omega)}_{X,k}\Re\biginprod{\phix{k j}}{\phiy{k}}_{S_2}}{\lambda^{(\omega)}_{X,k}\lambda^{(\omega)}_{X,j} -(\lambda^{(\omega)}_{X,k})^2} \phix{jk}, 
\quad \text{and} \quad \widetilde{\Pi}^{(\omega)}_{Y,X,k} =4\sum_{j \ne k} \frac{\lambda^{(\omega)}_{Y,k}\Re\biginprod{\phiy{k j}}{\phix{k}}_{S_2}}{\lambda^{(\omega)}_{Y,k}\lambda^{(\omega)}_{Y,j} -(\lambda^{(\omega)}_{Y,k})^2} \phiy{jk}.
\end{align*}
\end{thm}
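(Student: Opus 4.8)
The plan is to begin from the definition \eqref{eq:hatZbig} and expand the squared Hilbert--Schmidt norm about the population difference. Setting $R(\eta,\omega):=\widehat{M}_{\Pi,k}(\eta,\omega)-\eta M_{\Pi,k}(\omega)$, polarization in $S_2(\Hi)$ gives
\[
\bigsnorm{\widehat{M}_{\Pi,k}(\eta,\omega)}_2^2-\eta^2\bigsnorm{M_{\Pi,k}(\omega)}_2^2=\bigsnorm{R(\eta,\omega)}_2^2+2\eta\,\Re\biginprod{R(\eta,\omega)}{M_{\Pi,k}(\omega)}_{S_2},
\]
so that $\hat{\mathcal{Z}}^{[a,b],(k)}_{\Pi,T_1,T_2}(\eta)=\int_a^b\snorm{R}_2^2\,d\omega+2\eta\int_a^b\Re\langle R,M_{\Pi,k}\rangle_{S_2}\,d\omega$. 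I would then insert $R=\widetilde{M}_{\Pi,k}+(R-\widetilde{M}_{\Pi,k})$ through the linearization supplied by \autoref{lem:errortildeM}(i), and nominate $2\eta\int_a^b\Re\langle\widetilde{M}_{\Pi,k},M_{\Pi,k}\rangle_{S_2}\,d\omega$ as the candidate leading term.

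Next I would discard every other piece as $o_P\big(1/\sqrt{b_1T_1+b_2T_2}\big)$, uniformly in $\eta$. Since $\snorm{M_{\Pi,k}(\omega)}_2\le2$ uniformly on $[a,b]$, the mixed remainder satisfies $\sup_\eta\big|2\eta\int_a^b\Re\langle R-\widetilde{M}_{\Pi,k},M_{\Pi,k}\rangle_{S_2}\,d\omega\big|\le4\sup_\eta\int_a^b\snorm{R-\widetilde{M}_{\Pi,k}}_2\,d\omega$, which is negligible by part (i). For the quadratic term I write $\snorm{R}_2^2\le2\snorm{\widetilde{M}_{\Pi,k}}_2^2+2\snorm{R-\widetilde{M}_{\Pi,k}}_2^2$; the first contribution is controlled by part (ii), while the second is handled by combining the $L^1(d\omega)$ bound of part (i) with a uniform $O_P(1)$ bound on $\snorm{R-\widetilde{M}_{\Pi,k}}_2$ (a by-product of the uniform consistency of the sequential eigenprojectors underlying \autoref{lem:errortildeM}), so that $\int_a^b\snorm{R-\widetilde{M}_{\Pi,k}}_2^2\,d\omega\le\big(\sup_\omega\snorm{R-\widetilde{M}_{\Pi,k}}_2\big)\int_a^b\snorm{R-\widetilde{M}_{\Pi,k}}_2\,d\omega=o_P\big(1/\sqrt{b_1T_1+b_2T_2}\big)$. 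This reduces the theorem to evaluating the cross term in closed form.

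The heart of the proof is therefore the algebraic identification of $2\eta\int_a^b\Re\langle\widetilde{M}_{\Pi,k},M_{\Pi,k}\rangle_{S_2}\,d\omega$ with the claimed expression. I would expand $\widetilde{M}_{\Pi,k}$ from \eqref{eq:tildeM} in the rank-one eigenbasis, using $\biginprod{\F_X^{(\omega)}}{\phix{j\jpr}}_{S_2}=\lamx{\omega}{j}\,\delta_{j\jpr}$. Tracing the Kronecker pairing back to the perturbation expansion of \autoref{prop:bexp}, where the operator acts on $\phix{k}$, the contribution is supported only on indices with one of $j,\jpr$ equal to $k$, so the double sum collapses: the $X$-block of $\widetilde{M}_{\Pi,k}$ is a combination of $\phix{kj}$ and $\phix{jk}$ with $j\neq k$ and denominators $\lamx{\omega}{k}\lamx{\omega}{j}-(\lamx{\omega}{k})^2$. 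Two observations then finish the computation. First, this $X$-block is orthogonal to $\phix{k}$, so in $\langle\widetilde{M}_{\Pi,k},\phix{k}-\phiy{k}\rangle_{S_2}$ it pairs only against $\phiy{k}$, producing the cross inner products $\biginprod{\phix{kj}}{\phiy{k}}_{S_2}$ (and symmetrically the $Y$-block pairs only against $\phix{k}$, giving $\biginprod{\phiy{kj}}{\phix{k}}_{S_2}$). Second, $\mathcal{Z}^{X,\omega}_{T,\eta}$ is self-adjoint: $\hat{\F}^{(\omega)}_X(\eta)$ is, because the weights satisfy $\overline{\tilde{w}^{(\omega)}_{b_1,s,t}}=\tilde{w}^{(\omega)}_{b_1,t,s}$, whence $\biginprod{\mathcal{Z}^{X,\omega}_{T,\eta}}{\phix{kj}}_{S_2}=\overline{\biginprod{\mathcal{Z}^{X,\omega}_{T,\eta}}{\phix{jk}}_{S_2}}$ and likewise $\biginprod{\phix{kj}}{\phiy{k}}_{S_2}=\overline{\biginprod{\phix{jk}}{\phiy{k}}_{S_2}}$. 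Fusing the $\phix{kj}$ and $\phix{jk}$ contributions through these conjugation identities turns each conjugate pair into a single real term, collapses the sum to a single index $j\neq k$, and yields the numerical factor $4$, delivering exactly $\tfrac{\eta}{\sqrt{b_1T_1}}\int_a^b\Re\biginprod{\mathcal{Z}^{X,\omega}_{T,\eta}}{\widetilde{\Pi}^{(\omega)}_{X,Y,k}}_{S_2}\,d\omega$ together with the analogous $Y$-term. The replacement of $\eta(\hat{\F}^{(\omega)}_X(\eta)-\F^{(\omega)}_X)$ by $\tfrac{1}{\sqrt{b_1T_1}}\mathcal{Z}^{X,\omega}_{T,\eta}$ throughout is justified by the floor approximation of \autoref{lem:floor}.

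I expect the main obstacle to be precisely this bookkeeping: keeping the complex conjugates, the self-adjointness of $\mathcal{Z}^{X,\omega}_{T,\eta}$, and the real-part operation aligned so that the a priori double sum over $(j,\jpr)$ with the Kronecker structure genuinely telescopes into the single sums defining $\widetilde{\Pi}^{(\omega)}_{X,Y,k}$ and $\widetilde{\Pi}^{(\omega)}_{Y,X,k}$, while simultaneously verifying uniformity in $\eta$ and the legitimacy of interchanging the $\omega$-integral with the infinite spectral expansions. Here \autoref{as:eigsep} guarantees the eigenvalue gaps in the denominators stay bounded away from zero and the summability $\int_a^b\snorm{\cdot}\,d\omega<\infty$ controls the tails of the series.
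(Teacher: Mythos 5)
Your proposal is correct and follows essentially the same route as the paper's proof: the same expansion of $\snorm{\widehat{M}_{\Pi,k}}_2^2-\eta^2\snorm{M_{\Pi,k}}_2^2$ into a quadratic remainder plus a cross term, the same appeal to \autoref{lem:errortildeM} to replace $\widehat{M}_{\Pi,k}-\eta M_{\Pi,k}$ by $\widetilde{M}_{\Pi,k}$ and discard the quadratic pieces, the same collapse of the double sum over $(j,\jpr)$ to indices with one entry equal to $k$ via the Kronecker/orthogonality computation, and the same conjugation bookkeeping producing the factor $4$ and the real parts (the paper computes $\langle\phi_{\jpr},\mathcal{Z}^{X,\omega}_{T,\eta}\phi_k\rangle$ and $\langle\mathcal{Z}^{X,\omega}_{T,\eta}\phi_k,\phi_j\rangle$ directly where you invoke self-adjointness of $\mathcal{Z}^{X,\omega}_{T,\eta}$, a purely cosmetic difference). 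Only make sure to carry the minus sign coming from pairing the $X$-block against $-\phiy{k}$ (and symmetrically for the $Y$-block), which the paper accounts for when it subsequently applies \autoref{thm:BuildingBlock} with $\mathcal{U}^{(\omega)}_{XY}=-\widetilde{\Pi}^{(\omega)}_{X,Y,k}$.
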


By   \autoref{thm:BuildingBlock}   with $ \mathcal{U}^{(\omega)}_{XY} =- \widetilde{\Pi}^{(\omega)}_{X,Y,k}$, $ \mathcal{U}^{(\omega)}_{YX} =- \widetilde{\Pi}^{(\omega)}_{Y,X,k}$ 
 and \autoref{thm:hatZbigPi}  we  obtain the weak convergence 
\begin{equation}\label{hd90}
\big\{\sqrt{b_{1} T_1+ b_{2}T_2}\big( \hat{\mathcal{Z}}^{[a,b],(k)}_{\Pi,T_1, T_2}(\eta)\big)\big \}_{\eta \in [0,1]} {\rightsquigarrow} \tau_\Pi \{\eta  \mathbb{B}(\eta)\}_{\eta \in [0,1]}, \quad \text{ as } T_1, T_2 \to \infty,
\end{equation}
for some  constant  $\tau_\Pi$ and an application of the continuous mapping theorem 
shows that 
\begin{align} \label{hol41} 
\frac{\hat{\mathcal{Z}}^{[a,b],(k)}_{\Pi,T_1, T_2}(1)}{\hat{V}^{[a,b],(k)}_{\Pi,T_1, T_2}}\underset{T_1,T_2 \to \infty}{\Rightarrow} 
\mathbb{D}~,
\end{align}
where the random variable $\mathbb{D}$ is defined in \eqref{eq:matbbD} and the normalizing factor $\hat{V}^{[a,b],(k)}_{\Pi,T_1, T_2}$ is given by  
\[
\hat{V}^{[a,b],(k)}_{\Pi,T_1, T_2}= \Big(\int^{1}_0 \Big( \int_a^b\bigsnorm{\widehat{M}_{\Pi,k}(\eta,\omega)}^2_2 - \eta^2 \bigsnorm{\widehat{M}_{\Pi, k}(1,\omega)}^2_2 d\omega \Big)^2 \nu(d\eta) \Big)^{1/2}. 
\]
Combining these findings with the arguments given in the proof of \autoref{thm:leveltestZF} yields a consistent and  asymptotic level $\alpha$ test for the hypothesis \eqref{eq:testHyp} of no relevant difference in the $k$-th eigenprojector.

\begin{thm}\label{thm:leveltestZPi}
Suppose Assumptions \ref{as:depstruc}-\ref{as:eigsep} satisfied.
Then the test which rejects  the  null hypothesis  in \eqref{eq:testHyp} of no relevant difference in the $k$-th eigenprojector whenever 
\begin{equation} \label{hol30}
\widehat{\mathbb{D}}^{[a,b],k}_{\Pi,T_1,T_2} = \frac{\int_{a}^{b} \snorm{\widehat{M}_{\Pi,k}(1,\omega)}^2_2d\omega-\Delta_{\Pi,k}}{\hat{V}^{[a,b],k}_{\Pi,T_1, T_2}}
> q_{1-\alpha}(\mathbb{D})
\end{equation}
is consistent and has  asymptotic level $\alpha$, i.e., 
\[
\lim_{T_1,T_2 \to \infty} \mathbb{P}\Big(\widehat{\mathbb{D}}^{[a,b],k}_{\Pi,T_1,T_2} >q_{1-\alpha}(\mathbb{D}\Big) =
\begin{cases}
0 & \mbox{if } \Delta_{\Pi,k} > \int_a^b \snorm{M_{\Pi,k}{(\omega)}}^2 _2 d\omega; \\ 
\alpha & \mbox{if } \Delta_{\Pi,k} = \int_a^b \snorm{M_{\Pi,k}{(\omega)}}^2 _2 d\omega {\mbox{  and } \tau_{\Pi} \neq 0}; \\ 
1 & \mbox{if } \Delta_{\Pi,k} < \int_a^b \snorm{M_{\Pi,k}{(\omega)}}^2 _2 d\omega. \
\end{cases}
\]
\end{thm}

\subsection{No relevant difference in the eigenvalues: hypothesis \eqref{eq:testHypEv}}
\label{sec33}

Finally, we briefly discuss the test for the hypothesis in \eqref{eq:testHypEv} of no relevant difference in the $k$-th eigenvalue. Denote the (pointwise) population distances and empirical distances of the $k$-th largest eigenvalues  at frequency $\omega$ by
\[
{M}_{\lambda, k}(\omega)=\lamx{\omega}{k}-\lamy{\omega}{k} \text{~~ and ~~ }\widehat{M}_{\lambda,k}(\eta,\omega) =\eta \big(\lamxT{k}(\eta)-\lamyT{k}(\eta)\big)\quad \eta \in [0,1],\]
and define 
\begin{align*}
\hat{\mathcal{Z}}^{[a,b],(k)}_{\lambda,T_1,T_2} (\eta)&=\sqrt{b_1 T_1+b_2 T_2}\int_{a}^{b} |\widehat{M}_{\lambda,k}(\eta,\omega)|^2 -\eta^2 |{M}_{\lambda, k}(\omega)|^2 d\omega.  \tageq \label{eq:Zlam}
\end{align*}
We make use of the following proposition, which is proved in Section \ref{sec:Eigapprox} of the Appendix.
\begin{prop}\label{prop:bexplam}
Let $\F^{\omega}$ have eigendecomposition $\sum_{k=1}^{\infty}\lambda^{(\omega)}_k \Pi^{\omega}_k$ and let  $\Big\{  \hat{\lambda}^{(\omega)}_k (\eta) \Big\}_{k \ge 1}$ $\Big\{  \hat{\Pi}^{(\omega)}_k (\eta) \Big\}_{k \ge 1}$ be the sequence of eigenvalues and eigenprojectors , respectively, of $\hat{\F}^{\omega}(\eta)$, $\eta \in [0,1]$. Then, 
\begin{align*}
\hat{\lambda}^{(\omega)}_{k}(\eta)-\lambda^{(\omega)}_{k}
= \biginprod{\hat{\F}^{\omega}(\eta)-\F^{\omega}}{\Pi^{(\omega)}_{k}}_{S_2}+ \biginprod{E^{(\omega)}_{\lambda, k,b_T}(\eta)}{\Pi^{(\omega)}_{k}}_{S_2}, 
\end{align*}
where
\[
E^{(\omega)}_{\lambda, k,b_T}(\eta) = \big(\hat{\F}^{\omega}(\eta)-\F^{\omega}\big) \big( \hat{\Pi}^{(\omega)}_{k}(\eta) -{\Pi}^{(\omega)}_{k}\big)- \big(\hat{\lambda}^{(\omega)}_{k}(\eta)-\lambda^{(\omega)}_{k}\big) \big( \hat{\Pi}^{(\omega)}_{k}(\eta) -{\Pi}^{(\omega)}_{k}\big).
\]
\end{prop}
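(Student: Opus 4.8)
The plan is to prove the decomposition as an \emph{exact} algebraic identity for each fixed $\eta\in[0,1]$ and each $\omega$; no asymptotics enter at this stage. Write $\Delta\F:=\hat\F^{\omega}(\eta)-\F^{\omega}$, $\Delta\Pi:=\hat\Pi^{(\omega)}_k(\eta)-\Pi^{(\omega)}_k$ and $\Delta\lambda:=\hat\lambda^{(\omega)}_k(\eta)-\lambda^{(\omega)}_k$, so that the remainder reads $E^{(\omega)}_{\lambda,k,b_T}(\eta)=(\Delta\F-\Delta\lambda\,\mathrm{Id})\Delta\Pi$. The one structural fact I would record first is the rank-one reduction of the Hilbert--Schmidt pairing: since $\Pi^{(\omega)}_k=\phi^{(\omega)}_k\otimes\phi^{(\omega)}_k$ with $\|\phi^{(\omega)}_k\|_{\Hi}=1$, the definition $(f\otimes g)v=\inprod{v}{g}f$ together with that of $\inprod{\cdot}{\cdot}_{S_2}$ gives $\inprod{A}{\Pi^{(\omega)}_k}_{S_2}=\inprod{A\phi^{(\omega)}_k}{\phi^{(\omega)}_k}$ for every $A\in S_2(\Hi)$. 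This turns each $S_2$-pairing appearing in the statement into a scalar quadratic form tested against the true eigenfunction $\phi^{(\omega)}_k$.

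Next I would verify the identity by reducing its right-hand side. Applying the rank-one reduction to both summands, the right-hand side equals $\inprod{\Delta\F\,\phi^{(\omega)}_k}{\phi^{(\omega)}_k}+\inprod{(\Delta\F-\Delta\lambda\,\mathrm{Id})\Delta\Pi\,\phi^{(\omega)}_k}{\phi^{(\omega)}_k}$. The decisive simplification is that $\Delta\Pi\,\phi^{(\omega)}_k=\hat\Pi^{(\omega)}_k(\eta)\phi^{(\omega)}_k-\phi^{(\omega)}_k$, because $\Pi^{(\omega)}_k\phi^{(\omega)}_k=\phi^{(\omega)}_k$. Substituting this makes the two terms carrying $\Delta\F$ telescope, and what remains is $\inprod{\Delta\F\,\hat\Pi^{(\omega)}_k(\eta)\phi^{(\omega)}_k}{\phi^{(\omega)}_k}-\Delta\lambda\,\inprod{\Delta\Pi\,\phi^{(\omega)}_k}{\phi^{(\omega)}_k}$.

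The final step uses the two eigen-relations. From $\hat\F^{\omega}(\eta)\hat\Pi^{(\omega)}_k(\eta)=\hat\lambda^{(\omega)}_k(\eta)\hat\Pi^{(\omega)}_k(\eta)$ I would write $\Delta\F\,\hat\Pi^{(\omega)}_k(\eta)=\hat\lambda^{(\omega)}_k(\eta)\hat\Pi^{(\omega)}_k(\eta)-\F^{\omega}\hat\Pi^{(\omega)}_k(\eta)$; then self-adjointness of $\F^{\omega}$ and the fact that $\lambda^{(\omega)}_k$ is real (as $\F^{\omega}$ is a non-negative Hermitian operator) give $\inprod{\F^{\omega}\hat\Pi^{(\omega)}_k(\eta)\phi^{(\omega)}_k}{\phi^{(\omega)}_k}=\inprod{\hat\Pi^{(\omega)}_k(\eta)\phi^{(\omega)}_k}{\F^{\omega}\phi^{(\omega)}_k}=\lambda^{(\omega)}_k\inprod{\hat\Pi^{(\omega)}_k(\eta)\phi^{(\omega)}_k}{\phi^{(\omega)}_k}$. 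Hence the first remaining term equals $\Delta\lambda\,\inprod{\hat\Pi^{(\omega)}_k(\eta)\phi^{(\omega)}_k}{\phi^{(\omega)}_k}$, while $\inprod{\Delta\Pi\,\phi^{(\omega)}_k}{\phi^{(\omega)}_k}=\inprod{\hat\Pi^{(\omega)}_k(\eta)\phi^{(\omega)}_k}{\phi^{(\omega)}_k}-1$; the two combine to $\Delta\lambda$, which is exactly the left-hand side.

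Because the statement is an exact identity, there is no analytic obstacle; the only points needing care are the conjugate-linearity convention of $\inprod{\cdot}{\cdot}$ (so that $\overline{\lambda^{(\omega)}_k}=\lambda^{(\omega)}_k$ is genuinely invoked) and that $\Pi^{(\omega)}_k$ and $\hat\Pi^{(\omega)}_k(\eta)$ are the rank-one spectral projectors, which is ensured by the simplicity of the leading eigenvalues under \autoref{as:eigsep}. The real content is the \emph{choice} of decomposition rather than its verification: it isolates the linear statistic $\inprod{\hat\F^{\omega}(\eta)-\F^{\omega}}{\Pi^{(\omega)}_k}_{S_2}$, which can be matched to the building-block process of \autoref{thm:BuildingBlock}, from a genuinely quadratic remainder $E^{(\omega)}_{\lambda,k,b_T}(\eta)$ whose integrated $S_2$-norm is then shown to be negligible by combining \autoref{prop:consF} with the consistency of the empirical eigenprojectors.
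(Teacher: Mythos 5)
Your argument is correct: the identity you verify is exactly the one asserted, every step checks out, and the structural inputs you flag (rank-one reduction of the $S_2$-pairing against $\Pi^{(\omega)}_k$, both eigen-relations, self-adjointness of $\F^{\omega}$ and realness of $\lambda^{(\omega)}_k$) are precisely what is needed. The route differs from the paper's in its mechanics, though both are elementary exact algebra. The paper argues \emph{forward}: it subtracts the unperturbed eigen-equation from the perturbed one, expands $\hat\Pi^{(\omega)}_k(\eta)-\Pi^{(\omega)}_k=\sum_{n,m}a^{\eta}_{n,m}\Pi^{(\omega)}_{nm}$ in the tensor basis built from the eigenfunctions of $\F^{\omega}$, pairs the resulting operator equation with $\Pi^{(\omega)}_k$ in $S_2$, and uses orthonormality of the $\Pi^{(\omega)}_{nm}$ to cancel the $\lambda^{(\omega)}_k a^{\eta}_{k,k}$ terms on both sides before rearranging for $\hat\lambda^{(\omega)}_k(\eta)-\lambda^{(\omega)}_k$. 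You argue \emph{backward}: you take the claimed decomposition, collapse each pairing via $\inprod{A}{\Pi^{(\omega)}_k}_{S_2}=\inprod{A\phi^{(\omega)}_k}{\phi^{(\omega)}_k}$, telescope the two $\Delta\F$ contributions, and finish with $\hat\F^{\omega}(\eta)\hat\Pi^{(\omega)}_k(\eta)=\hat\lambda^{(\omega)}_k(\eta)\hat\Pi^{(\omega)}_k(\eta)$ plus self-adjointness. Your version avoids the basis expansion of the projector increment altogether and makes the minimal hypotheses transparent; the paper's version is the same derivation template that, in Proposition \ref{prop:bexp}, also yields the off-diagonal coefficients $a^{\eta}_{j,j'}$ needed for the eigenprojector expansion, so it is the more natural companion to that result even if slightly heavier here. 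One presentational remark: your closing sentence attributes the negligibility of $\int_a^b\snorm{E^{(\omega)}_{\lambda,k,b_T}(\eta)}_2\,d\omega$ to Proposition \ref{prop:consF} plus projector consistency, whereas the paper routes this through the uniform-in-$\eta$ maximal bounds of \autoref{lem:boundseqSDO} and \autoref{lem:diffboundEig} (see \autoref{lem:errortildeMlam}); that is outside the scope of the proposition itself, but worth stating precisely if you extend the argument.
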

The following theorem is the counterpart of \autoref{thm:hatZFhyp1} and  \autoref{thm:hatZbigPi}, and shows that we can  express \eqref{eq:Zlam} into a process of the form \eqref{eq:limZXsec1}. 
\begin{thm}\label{thm:hatZbiglam}
Suppose Assumptions \ref{as:depstruc}--\ref{as:eigsep} are satisfied. Then 
\begin{align*}
\hat{\mathcal{Z}}^{[a,b],(k)}_{\lambda,T_1,T_2} (\eta)
 &=\int_a^b \Big( \frac{2 \eta M_{\lambda,k}(\omega)}{\sqrt{b_1 T_1}}\Re\biginprod{ \mathcal{Z}^{X,\omega}_{T,\eta} }{\Pi^{(\omega)}_{X,k}}_{S_2} 
- \frac{2 \eta M_{\lambda,k}(\omega) }{\sqrt{b_2 T_2}}\Re\biginprod{ \mathcal{Z}^{Y,\omega}_{T,\eta} }{\Pi^{(\omega)}_{Y,k}}_{S_2} \Big)d\omega
\\&+o_P\Big(\frac{1}{\sqrt{b_1 T_1+b_2T_2}}\Big),
\end{align*}
\end{thm}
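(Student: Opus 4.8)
The plan is to mirror the arguments already used for \autoref{thm:hatZFhyp1} and \autoref{thm:hatZbigPi}, exploiting that the eigenvalues of the Hermitian, non-negative spectral density operators are real, so that $\widehat{M}_{\lambda,k}(\eta,\omega)$ and $M_{\lambda,k}(\omega)$ are real-valued and $|\cdot|^2=(\cdot)^2$. The starting point is the elementary identity
\begin{align*}
\widehat{M}_{\lambda,k}(\eta,\omega)^2 - \eta^2 M_{\lambda,k}(\omega)^2 &= \big(\widehat{M}_{\lambda,k}(\eta,\omega) - \eta M_{\lambda,k}(\omega)\big)^2 \\
&\quad + 2\eta M_{\lambda,k}(\omega)\big(\widehat{M}_{\lambda,k}(\eta,\omega) - \eta M_{\lambda,k}(\omega)\big),
\end{align*}
which splits the integrand of \eqref{eq:Zlam} into a quadratic piece and a linear piece, exactly as in \eqref{eq:hatZFapprox}. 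First I would show that the integrated quadratic piece $\int_a^b \big(\widehat{M}_{\lambda,k}(\eta,\omega) - \eta M_{\lambda,k}(\omega)\big)^2 d\omega$ is, uniformly in $\eta$, of smaller order than $1/\sqrt{b_1 T_1+b_2 T_2}$ -- the eigenvalue analogue of \autoref{lem:Fbound}. Since $\widehat{M}_{\lambda,k}(\eta,\omega) - \eta M_{\lambda,k}(\omega) = \eta\big[(\lamxT{k}(\eta)-\lamx{\omega}{k}) - (\lamyT{k}(\eta)-\lamy{\omega}{k})\big]$, this follows from the uniform consistency of the sequential eigenvalue estimators (\autoref{lem:diffboundEig}) together with the maximal bound on partial sums of the sequential spectral density estimators (\autoref{lem:boundseqSDO}).

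Second, for the linear piece I would linearise $\lamxT{k}(\eta)-\lamx{\omega}{k}$ and $\lamyT{k}(\eta)-\lamy{\omega}{k}$ by invoking \autoref{prop:bexplam}, which gives
\begin{align*}
\lamxT{k}(\eta) - \lamx{\omega}{k} = \biginprod{\hat{\F}_X^{(\omega)}(\eta) - \F_X^{(\omega)}}{\phix{k}}_{S_2} + \biginprod{E^{(\omega)}_{\lambda,X,k,b_1}(\eta)}{\phix{k}}_{S_2},
\end{align*}
and analogously for $Y$. The second-order remainder $E^{(\omega)}_{\lambda,X,k,b_1}(\eta)$ is, by its definition in \autoref{prop:bexplam}, a product of $(\hat{\F}_X^{(\omega)}(\eta)-\F_X^{(\omega)})$ (or of $\lamxT{k}(\eta)-\lamx{\omega}{k}$) with the eigenprojector error $\phixT{k}(\eta)-\phix{k}$; bounding its integrated contribution by the product of the rates of \autoref{prop:consF}/\autoref{lem:boundseqSDO} and of \autoref{lem:diffboundEig} shows it is negligible relative to the leading term after scaling. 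The first-order term is then converted into the building-block process: from the definitions \eqref{eq:Zx} and \eqref{eq:seqFQ} one has the exact identity $\flo{\eta T_1}\big(\hat{\F}_X^{(\omega)}(\eta)-\F_X^{(\omega)}\big) = \tfrac{\sqrt{T_1}}{\sqrt{b_1}}\mathcal{Z}^{X,\omega}_{T,\eta}$, so that $\hat{\F}_X^{(\omega)}(\eta)-\F_X^{(\omega)} = \tfrac{1}{\eta\sqrt{b_1 T_1}}\mathcal{Z}^{X,\omega}_{T,\eta}$ up to the replacement of $\flo{\eta T_1}$ by $\eta T_1$, controlled uniformly in $\eta$ by \autoref{lem:floor}. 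Because $\phix{k}$ is self-adjoint and $\hat{\F}_X^{(\omega)}(\eta)-\F_X^{(\omega)}$ is (asymptotically) Hermitian, the real eigenvalue perturbation equals the real part of the inner product, yielding $\lamxT{k}(\eta)-\lamx{\omega}{k} = \tfrac{1}{\eta\sqrt{b_1 T_1}}\Re\biginprod{\mathcal{Z}^{X,\omega}_{T,\eta}}{\phix{k}}_{S_2}$ plus negligible terms.

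Finally I would assemble the pieces: multiplying the first-order expansions by $2\eta M_{\lambda,k}(\omega)$, the factors $\eta$ cancel, and integration over $[a,b]$ produces exactly the two leading terms stated in the theorem, while all remainders combine into the announced $o_P\big(1/\sqrt{b_1 T_1+b_2 T_2}\big)$ term. I expect the main obstacle to be the uniformity in $\eta\in[0,1]$: the normalisation $1/\flo{\eta T_i}$ degenerates near $\eta=0$, so the passage from $\flo{\eta T_i}$ to $\eta T_i$ and the control of both the quadratic and the second-order remainder terms must be carried out uniformly over $\eta$, which is precisely where \autoref{lem:floor} and the maximal inequality of \autoref{lem:boundseqSDO} are essential. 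Compared with \autoref{thm:hatZbigPi}, the eigenvalue case is simpler because the first-order perturbation of a simple eigenvalue is the single inner product $\biginprod{\hat{\F}_X^{(\omega)}(\eta)-\F_X^{(\omega)}}{\phix{k}}_{S_2}$, so no summation over the Kronecker-tensor pairs $(j,\jpr)$ is needed and \autoref{as:eigsep} is required only to guarantee that $\lamx{\omega}{k}$ and $\lamy{\omega}{k}$ are simple.
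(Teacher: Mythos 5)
Your proposal is correct and follows essentially the same route as the paper: the same quadratic/linear decomposition of $\widehat{M}_{\lambda,k}^2-\eta^2 M_{\lambda,k}^2$ (the paper writes the cross term as $c+\overline{c}=2\Re(c)$, which for real eigenvalues is your version), linearisation via Proposition \ref{prop:bexplam} with the remainder $E^{(\omega)}_{\lambda,k,b_T}(\eta)$ controlled through \autoref{lem:diffboundEig} and \autoref{lem:boundseqSDO}, and the floor-function replacement handled by \autoref{lem:floor}. In fact your write-up supplies the details that the paper's own proof (which only cites Proposition \ref{prop:bexplam} and \autoref{lem:errortildeMlam} and omits the rest "for brevity") leaves implicit.
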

In this case, an application of \autoref{thm:BuildingBlock}   with $ \mathcal{U}^{(\omega)}_{XY} = 2M_{\lambda,k}(\omega) \Pi^{(\omega)}_{X,k} $ and $ \mathcal{U}^{(\omega)}_{YX} = 2M_{\lambda,k}(\omega) \Pi^{(\omega)}_{Y,k}  $ yields therefore
\begin{equation} \label{hd91}
\big\{\sqrt{b_{1} T_1+ b_{2}T_2}\big( \hat{\mathcal{Z}}^{[a,b],(k)}_{\lambda,T_1, T_2}(\eta)\big)\big \}_{\eta \in [0,1]} {\rightsquigarrow} \tau_\lambda \{\eta  \mathbb{B}(\eta)\}_{\eta \in [0,1]}, \quad \text{ as } T_1, T_2 \to \infty,
\end{equation}
for some constant  $\tau_\lambda$ (see Section \ref{sec:proofBB} for details). Now using the same arguments as in Section \ref{sec31}  and \ref{sec32} 
we obtain that the test, which rejects the null hypothesis of no relevant difference in the $k$-th eigenvector in \eqref{eq:testHypEv}, whenever
\begin{equation} \label{hol31}
\widehat{\mathbb{D}}^{[a,b],k}_{\lambda,T_1,T_2} = \frac{\int_{a}^{b} \snorm{\widehat{M}_{\lambda,k}(1,\omega)}^2_2d\omega-\Delta_{\lambda,k}}{\hat{V}^{[a,b],k}_{\lambda,T_1, T_2}}
 > q_{1-\alpha}(\mathbb{D})~,
\end{equation} 
is consistent and has asymptotic levek $\alpha$.  The proof is omitted for the sake of brevity. 
\begin{thm}\label{thm:leveltestZlam}
Suppose Assumptions \ref{as:depstruc}--\ref{as:eigsep} are satisfied.
Then, 
\[
\lim_{T_1,T_2 \to \infty} \mathbb{P}\Big(\widehat{\mathbb{D}}^{[a,b],k}_{\lambda,T_1,T_2} >q_{1-\alpha}(\mathbb{D) }\Big) =
\begin{cases}
0 & \mbox{if } \Delta_{\lambda,k} > \int_a^b \snorm{M^{(\omega)}_{\lambda,k}}^2 _2 d\omega; \\ 
\alpha & \mbox{if } \Delta_{\lambda,k} = \int_a^b \snorm{M^{(\omega)}_{\lambda,k}}^2 _2 d\omega {\mbox{  and } \tau_\lambda \neq 0;}\\ 
1 & \mbox{if } \Delta_{\lambda,k} < \int_a^b \snorm{M^{(\omega)}_{\lambda,k}}^2 _2 d\omega. \
\end{cases}
\]
\end{thm}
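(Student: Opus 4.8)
The plan is to replicate, almost verbatim, the argument used to prove \autoref{thm:leveltestZF}, since by \autoref{thm:hatZbiglam} the centered eigenvalue distance plays exactly the same role for the hypothesis \eqref{eq:testHypEv} as the centered spectral-density distance does for \eqref{eq:testHypOp}. Write $M^2_\lambda := \int_a^b |M_{\lambda,k}(\omega)|^2 d\omega$ for the true relevant distance. First I would record the two distributional facts that drive the argument: combining \autoref{thm:hatZbiglam}, \autoref{thm:BuildingBlock} (with $\mathcal{U}^{(\omega)}_{XY}=2M_{\lambda,k}(\omega)\Pi^{(\omega)}_{X,k}$ and $\mathcal{U}^{(\omega)}_{YX}=2M_{\lambda,k}(\omega)\Pi^{(\omega)}_{Y,k}$) and the continuous mapping theorem yields, exactly as in \eqref{eq:distZVF}, the joint convergence
\[
\sqrt{b_1 T_1+b_2 T_2}\Big(\int_a^b|\widehat{M}_{\lambda,k}(1,\omega)|^2 d\omega - M^2_\lambda,\ \hat{V}^{[a,b],k}_{\lambda,T_1,T_2}\Big) \underset{T_1,T_2\to\infty}{\Rightarrow} \Big(\tau_\lambda \mathbb{B}(1),\ \big(\textstyle\int_0^1 \tau_\lambda^2 \eta^2(\mathbb{B}(\eta)-\eta\mathbb{B}(1))^2\nu(d\eta)\big)^{1/2}\Big).
\]
The crucial point is that both coordinates are continuous functionals of the single weakly convergent process $\{\sqrt{b_1 T_1+b_2 T_2}\,\hat{\mathcal{Z}}^{[a,b],(k)}_{\lambda,T_1,T_2}(\eta)\}_{\eta\in[0,1]}\rightsquigarrow \tau_\lambda\{\eta\mathbb{B}(\eta)\}_{\eta\in[0,1]}$, so their joint limit, and hence the limit of their ratio, is inherited from the continuous mapping theorem without further probabilistic input.

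Next I would rewrite the rejection event $\widehat{\mathbb{D}}^{[a,b],k}_{\lambda,T_1,T_2}>q_{1-\alpha}(\mathbb{D})$ in the additive form
\[
\int_a^b|\widehat{M}_{\lambda,k}(1,\omega)|^2 d\omega - M^2_\lambda > (\Delta_{\lambda,k}-M^2_\lambda) + q_{1-\alpha}(\mathbb{D})\,\hat{V}^{[a,b],k}_{\lambda,T_1,T_2},
\]
mirroring \eqref{hol25}. From the display above both the left-hand side and $\hat{V}^{[a,b],k}_{\lambda,T_1,T_2}$ are $o_P(1)$. Hence if $\Delta_{\lambda,k}>M^2_\lambda$ the threshold converges to the strictly positive constant $\Delta_{\lambda,k}-M^2_\lambda$ while the left-hand side vanishes, giving rejection probability $0$; if $\Delta_{\lambda,k}<M^2_\lambda$ the threshold converges to a strictly negative constant and the probability tends to $1$. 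In the boundary case $\Delta_{\lambda,k}=M^2_\lambda$ with $\tau_\lambda\neq 0$, the event reduces to $\{(\int_a^b|\widehat{M}_{\lambda,k}(1,\omega)|^2 d\omega - M^2_\lambda)/\hat{V}^{[a,b],k}_{\lambda,T_1,T_2}>q_{1-\alpha}(\mathbb{D})\}$, and the continuous mapping theorem gives convergence of the studentized ratio to $\mathbb{D}$ (the denominator limit being almost surely positive because the bridge $\mathbb{B}(\eta)-\eta\mathbb{B}(1)$ is non-degenerate and $\tau_\lambda\neq0$); thus the probability tends to $\mathbb{P}(\mathbb{D}>q_{1-\alpha}(\mathbb{D}))=\alpha$ by continuity of the law of $\mathbb{D}$ defined in \eqref{eq:matbbD}.

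The genuinely substantive work is not in this final transcription but is already packaged in \autoref{thm:hatZbiglam}, whose proof must show that the eigenvalue perturbation remainder $E^{(\omega)}_{\lambda,k,b_T}(\eta)$ from \autoref{prop:bexplam}, once integrated over $[a,b]$ and maximized over $\eta$, is of order $o_P((b_1T_1+b_2T_2)^{-1/2})$; this is the eigenvalue analogue of \autoref{lem:Fbound} and is where the mild moment and bandwidth conditions of Assumptions \ref{as:depstruc}--\ref{as:ratiorates} are consumed. I expect the main obstacle, as in the eigenprojector case, to be controlling this remainder uniformly in $\eta$, in particular bounding $\sup_\eta\|\hat{\Pi}^{(\omega)}_{k}(\eta)-\Pi^{(\omega)}_k\|$ and $\sup_\eta|\hat{\lambda}^{(\omega)}_k(\eta)-\lambda^{(\omega)}_k|$ via the maximal inequality for partial sums of the sequential spectral estimators. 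Since \autoref{thm:hatZbiglam} may be assumed here, however, the level and consistency statement itself is a routine adaptation of the proof of \autoref{thm:leveltestZF}.
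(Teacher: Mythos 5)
Your proposal is correct and follows exactly the route the paper intends: the paper explicitly omits this proof, stating that it follows from \autoref{thm:hatZbiglam}, \autoref{thm:BuildingBlock} and the same arguments as in the proof of \autoref{thm:leveltestZF}, which is precisely the transcription you carry out. Your additional remarks correctly locate the substantive work in \autoref{thm:hatZbiglam} and the control of the remainder $E^{(\omega)}_{\lambda,k,b_T}(\eta)$ (handled in the paper via \autoref{lem:errortildeMlam}), so nothing is missing.
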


\begin{Remark} [further statistical applications] 
\label{remarkh2}  ~~
\rm
\begin{itemize}
\item[(1)]
{Our results can be used  to construct   (asymptotic) confidence intervals for the distance between the spectral operators, eigenprojectors and eigenvalues from the two series. We    recommend to use  these intervals  
if  it  is  difficult to specify the threshold  in the hypothesis of a relevant difference. 
To be precise, note  that it follows 
from \eqref{eq:matbbD} that the interval 
$$  
\Bigg [0,  \int_{a}^{b} \snorm{\widehat{M}_{\hat\F}(1,\omega)}^2_2d\omega  +q_{1-\alpha}(\mathbb{D} )  \hat{V}^{[a,b]}_{\F,T_1, T_2} \Bigg ].
$$
defines  
an asymptotic  $(1-\alpha)$ confidence interval for the distance 
$\int_{a}^{b}\snorm{\F^{(\omega)}_X-\F^{(\omega)}_Y}_2^2 d\omega$ between the two spectral operators. 
In the same way 
confidence intervals for the distance between the eigenprojectors and eigenvalues defined in \eqref{h40}
can be obtained from  \eqref{hol41} and the weak convergence 
 $ {\hat{\mathcal{Z}}^{[a,b],(k)}_{\lambda ,T_1, T_2}(1)}/{\hat{V}^{[a,b],(k)}_{\lambda,T_1, T_2}} {\Rightarrow}  
 \mathbb{D}$, respectively.}
\item[(2)]  
 Moreover, it is also possible to test for relevant differences for a finite number thresholds simultaneously. 
 To be precise, consider the problem of testing the hypotheses in \eqref{eq:testHypEv} for the  different
 thresholds $\Delta_{\lambda,k}^{(1)} < \ldots < \Delta_{\lambda,k}^{(L)}$.  In particular if the null hypothesis
  is accepted  with the threshold  $ \Delta_{\lambda,k}^{(L_0)} $, it is also accepted for all 
  thresholds $ \Delta_{\lambda,k}^{(L_0+1)} , \ldots  , \Delta_{\lambda,k}^{(L_)} $.
  Correspondingly, rejection for a $ \Delta_{\lambda,k}^{(L_0)} $  means rejection for all smaller thresholds. In this sense, evaluating the test for several thresholds is  logically consistent for the user, and it is possible to determine for fixed level $\alpha$   the largest threshold such that the null hypotheses is rejected.
\end{itemize} 
\end{Remark}

\begin{Remark}[local alternatives] \label{remarkh3}  
{\rm
The tests developed  in this section can detect local alternatives converging to the null hypothesis a rate $1/\sqrt{b_{1} T_1+ b_{2}T_2}$. Compared to the problem of testing classical hypotheses the formulation of this property is more complicated and for the sake of brevity we restrict ourselves to the case of comparing the $k$th eigenvalues (but similar statements can also be made for the other testing problems). To be precise, consider the testing problem  \eqref{eq:testHypEv} and assume a local alternative, such that 
\begin{equation} \label{hol50} 
\int_{a}^{b}\big \vert\lamx{\omega}{k}-\lamy{\omega}{k}\big\vert^2 d\omega  =   \Delta_{\lambda,k}  + {c \over  
\sqrt{b_{1} T_1+ b_{2}T_2} }  \big (1 +  o(1) \big )
\end{equation}
for some   constant $c >0$. Then it follows by similar  arguments as given in \eqref{hol25} 
that 
\[
\lim_{T_1,T_2 \to \infty} \mathbb{P}\Big(\widehat{\mathbb{D}}^{[a,b],k}_{\lambda,T_1,T_2} >q_{1-\alpha}(\mathbb{D) }\Big) =\mathbb{P} \Big( \mathbb{D} >  q_{1-\alpha}(\mathbb{D})  - {c \over \tau_{\lambda} 
\mathbb{V} }  \Big )
> \alpha ~,
\]
where $\mathbb{V}$ is defined in \eqref{hol51}.
Note that the condition \eqref{hol50} is implied by  $| \lamx{\omega}{k}-\lamy{\omega}{k} |  =   \sqrt{\Delta_{\lambda,k} } + g(\omega) \big / \sqrt{b_{1} T_1+ b_{2}T_2} $ for some function $g$ such that 
$\int_a^b g(\omega) d\omega > 0$.
}
\end{Remark}

\section{Finite sample properties} \label{sec:sec5}
\def\theequation{4.\arabic{equation}}
\setcounter{equation}{0}

In order to assess the finite sample properties of the tests proposed in Section \ref{sec31}-\ref{sec33}, we conducted an extensive simulation study of which an overview is provided in this section. 
In all scenarios, the empirical rejection probabilities (ERP) are calculated over 1000 repetitions and the processes are generated on a grid of  1000 equi-spaced points in the interval $[0,1]$ and then converted into functional data objects using a Fourier basis, which we shall denote by $\{\psi_k\}_{k \ge 1}$. In order to define the self-normalization sequence we used the measure $\nu =\frac{1}{n-1}\sum_{i=1}^{n-1}\delta_{i/n}$, where $\delta_\eta$ denotes the Dirac measure at $\eta \in [0,1]$. Simulations reported below are conducted with $n=20$. Other values were also considered but we found comparable results for all other choices of $n$ for which the positive mass is sufficiently bounded away from the boundaries see also Remark \ref{remarkh0} for a heuristic  explanation of this observation). 
In the simulations reported below, we used a Daniell window with bandwidth $b_T = T^{-1/3}$ to estimate the sequential spectral density operators. Self-evidently, a more sophisticated calibration could provide better results but we found that moderate variations of the bandwidth did not fundamentally change the findings as presented in this paper. 
\\

 {\bf Setting A: Brownian bridges.} In the first setting, we generate a sequence $\{X_t\}^T_{t=1}$ of independent Brownian bridges with variance multiplied by a factor $2\pi$. Using the closed-form expression of the Karhunen-Lo{\`e}ve (KL) expansion of a Brownian Bridge \citep[see e.g.,][]{DeMa03}, it can be shown that the eigenvalues and eigenfunctions of the spectral density operator $\F^{(\omega)}_X$ are then respectively given by $\lamx{\omega}{k} = 1/(\pi k)^2$ and $\phi^{(\omega)}_{X,k}(\tau) = \sqrt{2} \sin (\pi k \tau)$, $\tau \in [0,1]$, $k \in \mathbb{N}$, for all $\omega \in \rnum$.  The number of basis functions is chosen to be $d=21$, which captures more than 95 percent of variation. We report the results for $[a,b]=[0,\pi]$. 
\begin{itemize}[leftmargin=*]
\item {\bf Scenario 1: shift in the eigenfunctions}. We generate the alternative processes $\{Y_{t}\}^T_{t=1}$ similar to $\{X_t\}^T_{t=1}$, i.e., as sequences of independent Brownian bridges with variance multiplied by a factor $2\pi$. However, we shift the first eigenfunction in the KL expansion of the  $\{Y_{t}\}^T_{t=1}$  to $ \sqrt{2} \sin (\pi k (\tau+\iota))$ with $\iota$ varying  between 0 and 0.15.  The corresponding  shifts cause a change in various eigenfunctions of the spectral density operator. In \autoref{fig:scenario1}(a), we provide the ERP corresponding to a true value $\iota=0.05$ of the test for the hypothesis of no relevant differences between the spectral density operators \eqref{testOp}, while \autoref{fig:scenario1}(b)--(c) depict the ERP of the test corresponding to the hypothesis of no relevant differences between the eigenprojectors (\eqref{hol30}) for $k=1,2$, respectively. This particular shift induces corresponding threshold values $\Delta \approx 0.00047$, $\Delta_{\Pi,1}\approx 0.0474$ and  $\Delta_{\Pi,2}\approx 0.040$, respectively. The behavior visible in the three plots clearly corroborates with the theoretical findings stated in  \autoref{thm:leveltestZF} and \autoref{thm:leveltestZPi}, respectively.
 For values of the shift belonging to the interior of the null hypothesis, i.e., $\iota<0.05$, we observe that the ERP are below the nominal level and are getting closer to zero as the value of $\iota$ gets  close to zero. For those values that belong to the interior of the alternative, i.e., $\iota >0.05$, we observe ERP strictly larger than the nominal level and which increase to 1 as the size of the shift increases. At the boundary of the null hypothesis, i.e., where $\iota=0.05$, the test is close to the nominal level of $\alpha=0.05$. As expected, one observes that estimation precision improves as the sample size $T$ increases. 
\begin{figure}[t!]
\vspace*{-10pt}
\centering
\begin{subfigure}[b]{0.33\linewidth}
\includegraphics[width=\linewidth]{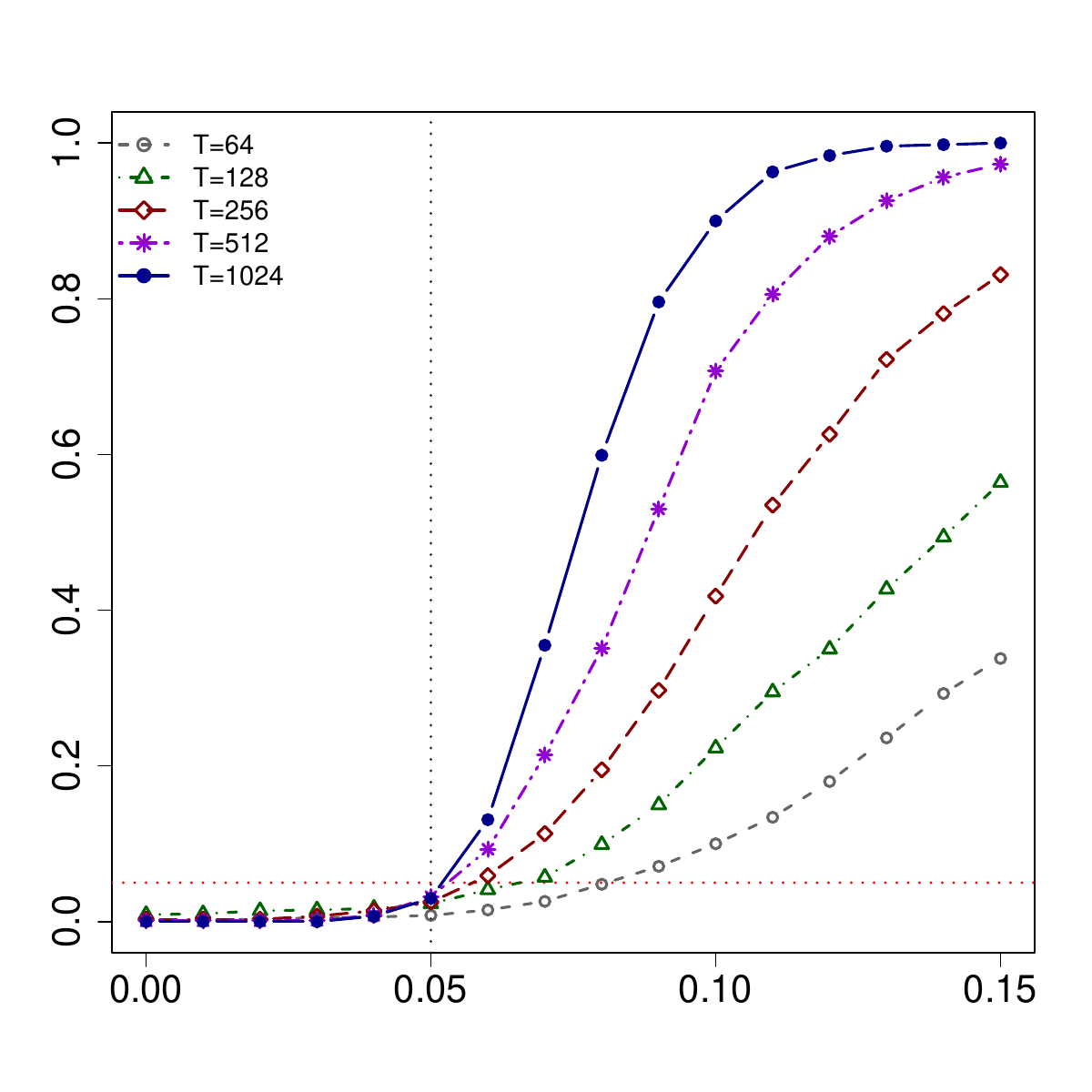}
\setlength{\abovecaptionskip}{-10pt}
\setlength{\belowcaptionskip}{-8pt} 
\caption{}
\end{subfigure}\hfil
\begin{subfigure}[b]{0.33\linewidth}
\includegraphics[width=\linewidth]{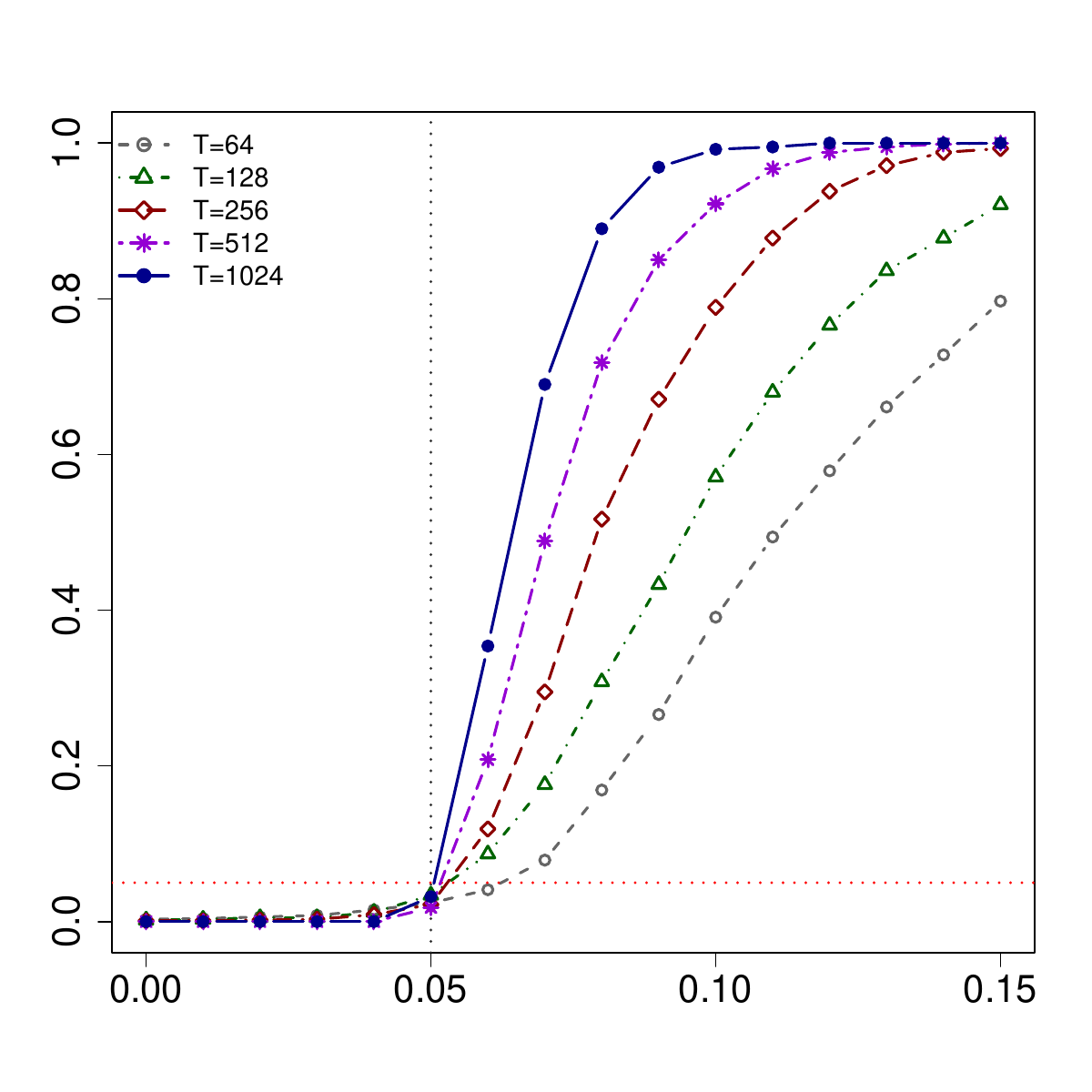}
\setlength{\abovecaptionskip}{-10pt}
\setlength{\belowcaptionskip}{-8pt} 
\caption{}
 \end{subfigure}\hfil
\begin{subfigure}[b]{0.33\linewidth}
\includegraphics[width=\linewidth]{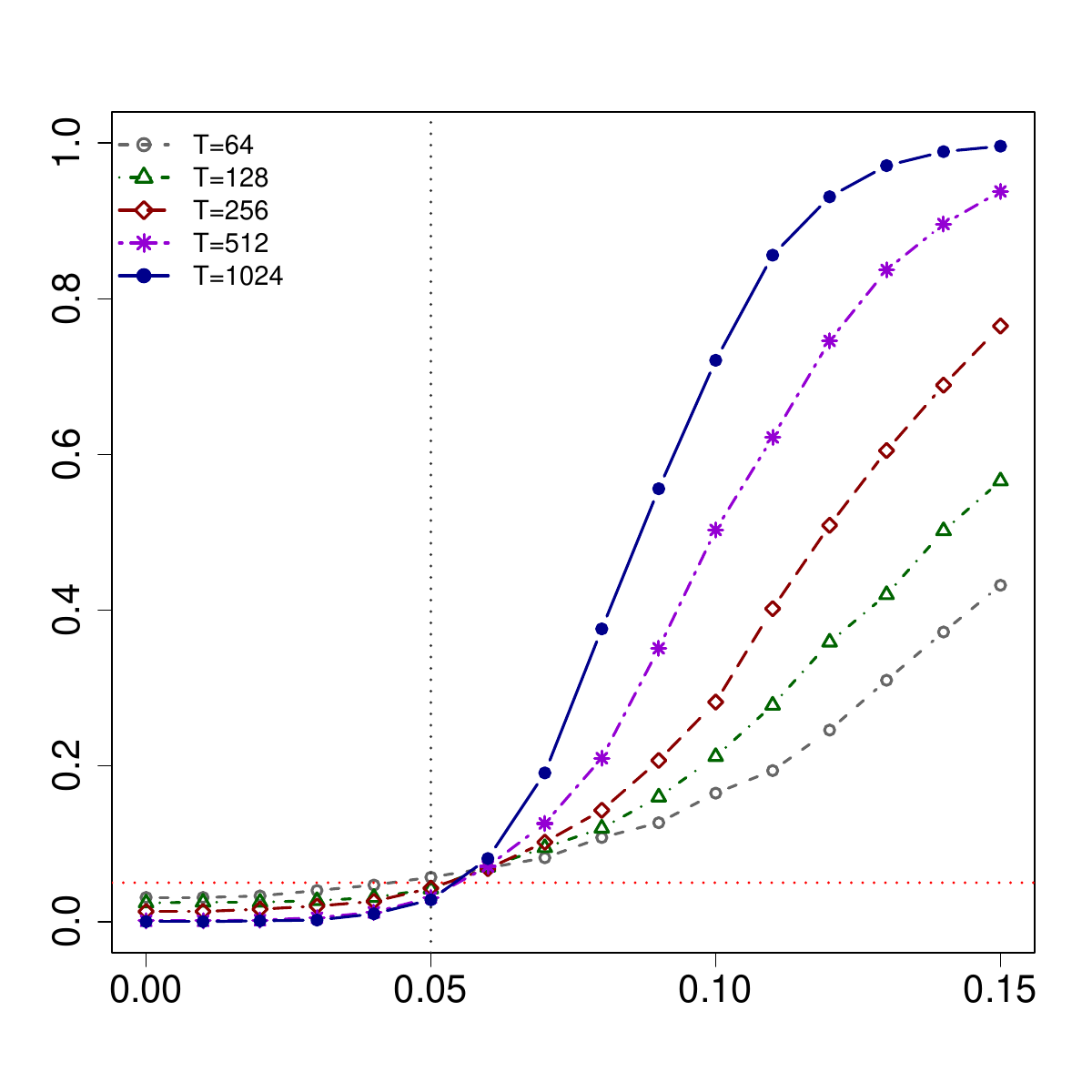}
\setlength{\abovecaptionskip}{-10pt}
\setlength{\belowcaptionskip}{-8pt} 
\caption{}
\end{subfigure}\hfil
\setlength{\belowcaptionskip}{-8pt}  
\caption{\it ERP under scenario 1 of the relevant hypotheses tests \eqref{testOp} (panel (a)) and \eqref{hol30} for $k=1,2$ (panel (b)--(c), resp.) plotted as a function of $\iota$ at the nominal level 0.05 (horizontal dotted line). The true shift $\iota=0.05$ is marked in the three panels by the vertical dotted line and corresponds to induced threshold values $\Delta \approx 0.00047
$, $\Delta_{\Pi,1}\approx 0.047$ and  $\Delta_{\Pi,2} \approx 0.040$, respectively.
}  
\label{fig:scenario1}
\end{figure}
\begin{figure}[!h]
\vspace*{-10pt}
\centering
\begin{subfigure}[b]{0.33\linewidth}
\includegraphics[width=\linewidth]{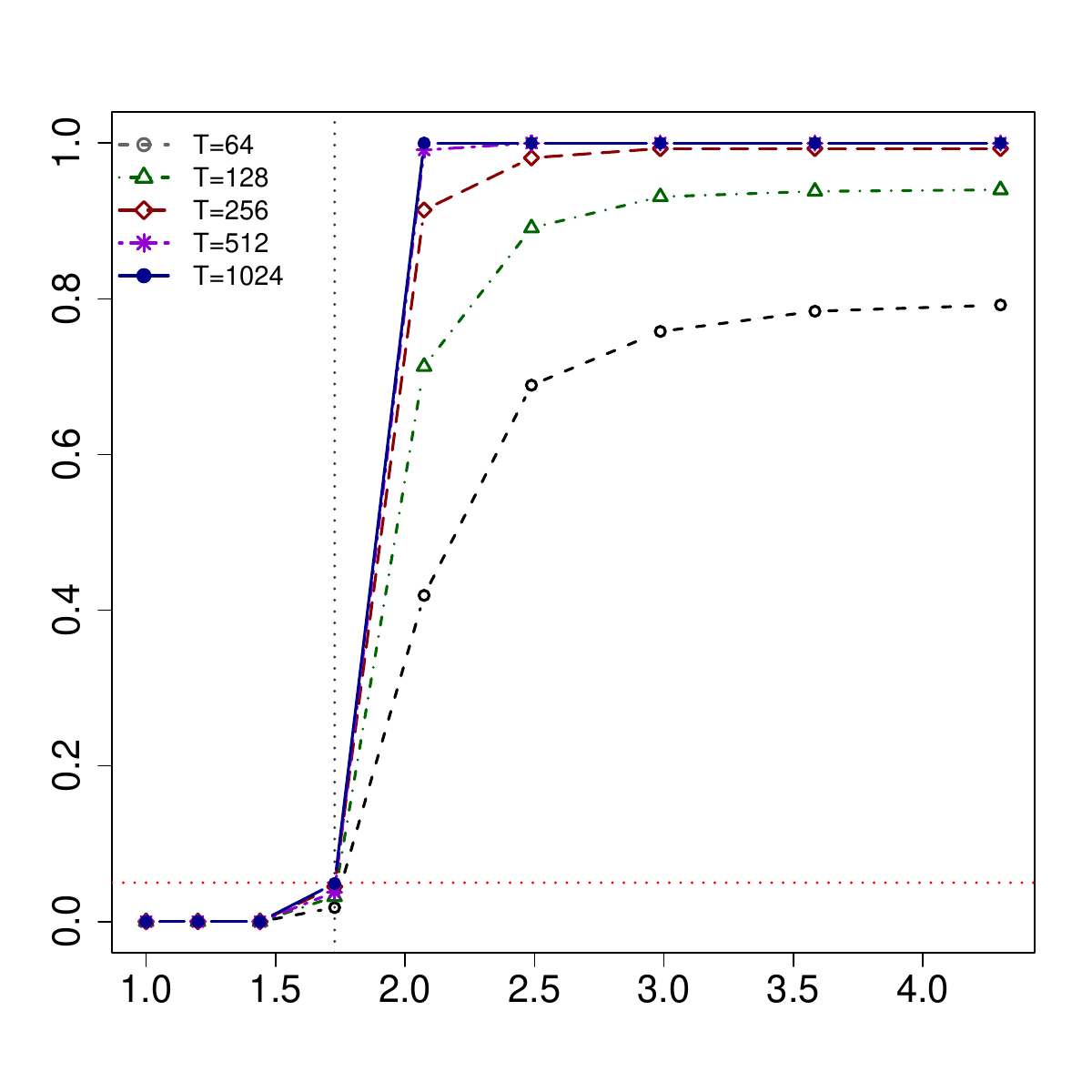}
\setlength{\abovecaptionskip}{-10pt}
\setlength{\belowcaptionskip}{-8pt} 
\caption{}
\end{subfigure}\hfil
\begin{subfigure}[b]{0.33\linewidth}
\includegraphics[width=\linewidth]{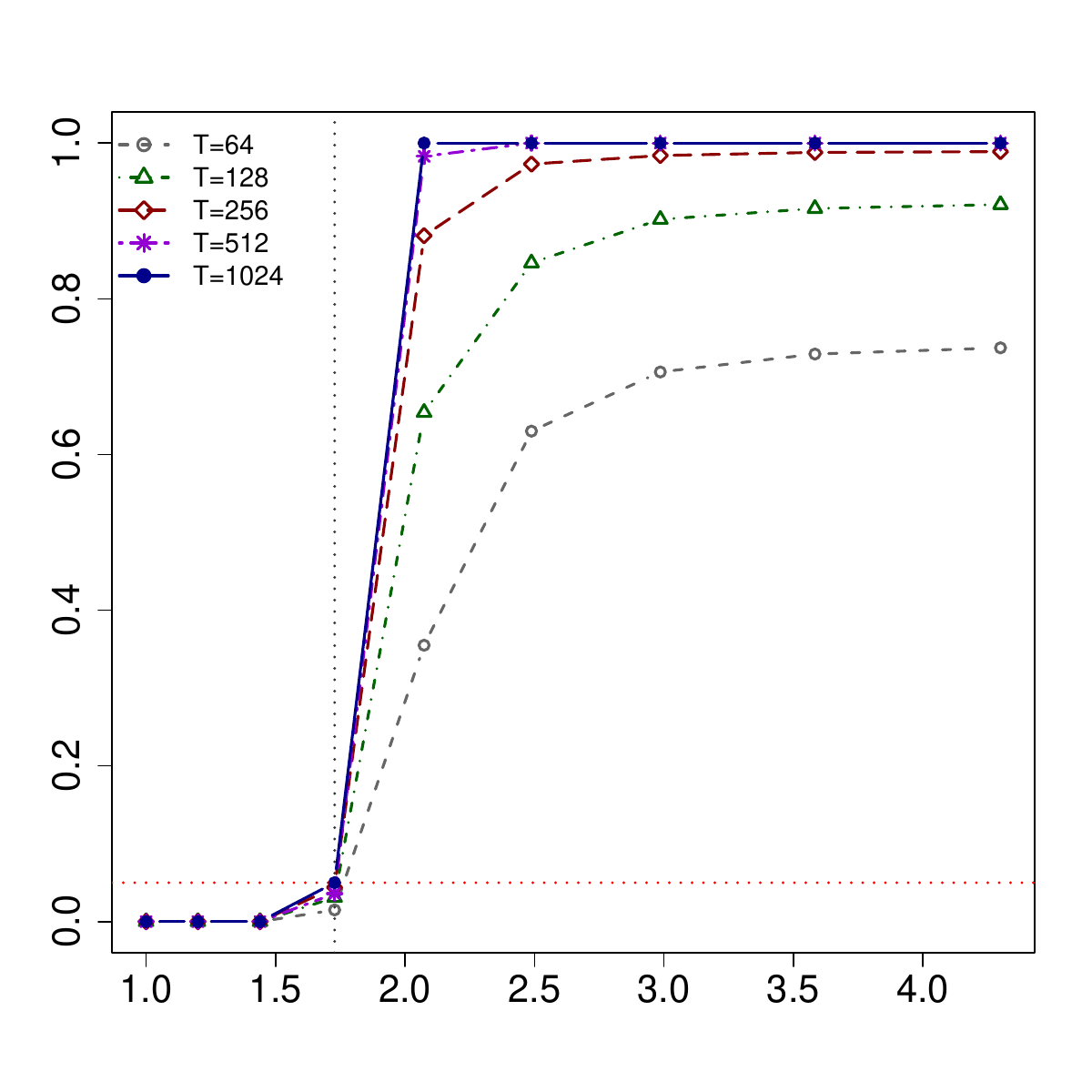}
\setlength{\abovecaptionskip}{-10pt}
\setlength{\belowcaptionskip}{-8pt} 
\caption{}
 \end{subfigure}\hfil
\begin{subfigure}[b]{0.33\linewidth}
\includegraphics[width=\linewidth]{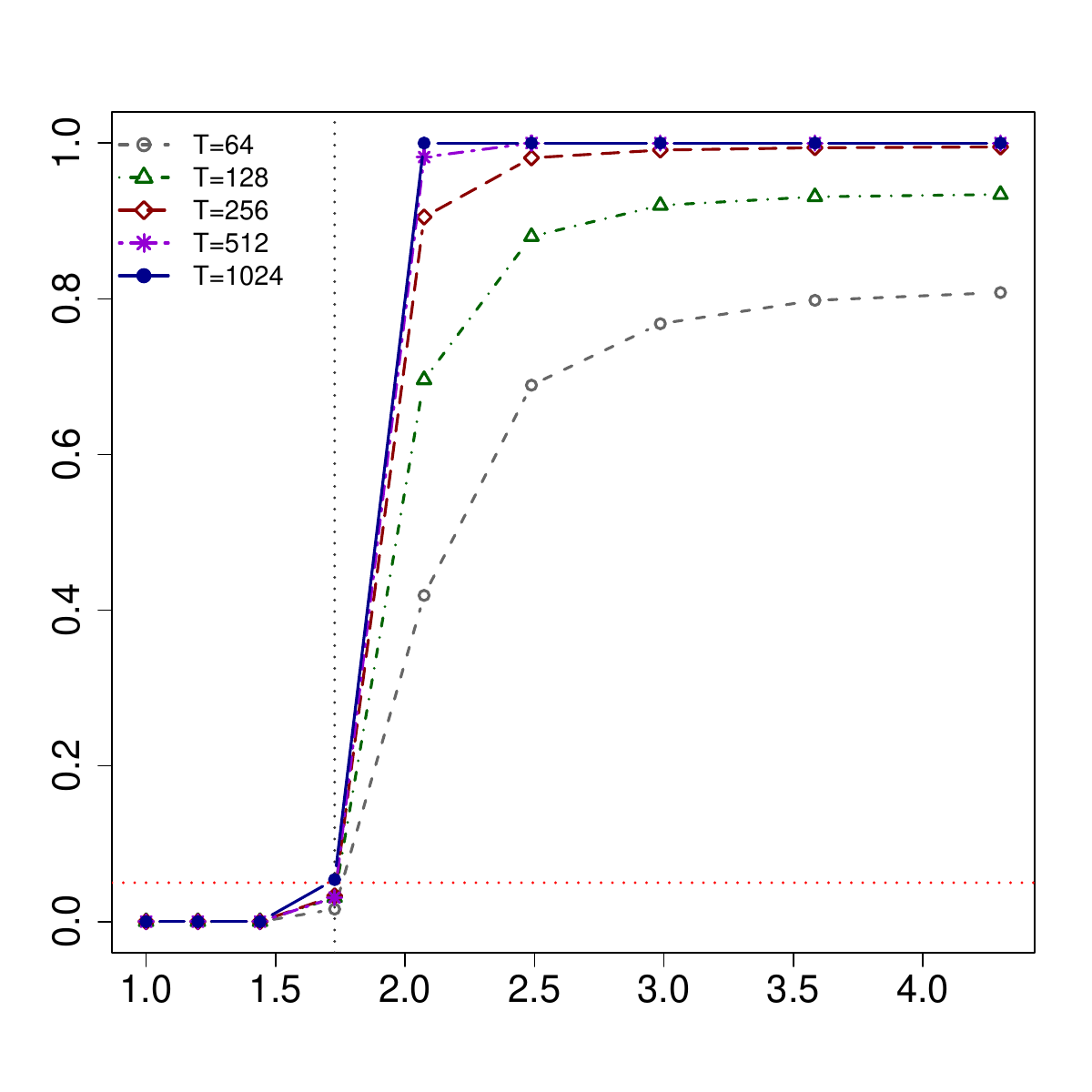}
\setlength{\abovecaptionskip}{-10pt}
\setlength{\belowcaptionskip}{-8pt} 
\caption{}
\end{subfigure}\hfil
\setlength{\belowcaptionskip}{-8pt}  
\caption{\it ERP under scenario 2 of the relevant hypotheses tests \eqref{testOp} (panel (a)) and \eqref{hol31} for $k=1,2$ (panel (b)--(c), resp.) plotted as a function of $\iota$ at the nominal level 0.05 (horizontal dotted line). The true amplitude factor $\iota=1.2^3$ is marked by the vertical dotted line and corresponds to induced threshold values $\Delta \approx 0.044$, $\Delta_{\lambda,1}\approx 0.040$ and  $\Delta_{\lambda,2}\approx 0.0025$, respectively.
}  
\label{fig:scenario2}
\end{figure}
\item {\bf Scenario 2: amplitude variation}. The alternative processes $\{Y_{t}\}^T_{t=1}$ are now generated as sequences from independent Brownian bridges where the standard deviation is multiplied by a factor $\iota=1.2^\ell \times \sqrt{2\pi}, \ell =0,1, \ldots 8$. We consider a true factor $\iota=1.2^3$.  \autoref{fig:scenario2} provides the corresponding ERP of the test \eqref{testOp} in Section \ref{sec31}  (difference between operators, panel (a)) and of the test \eqref{hol31} in Section \ref{sec33} (difference between eigenvalues for $k=1,2$, panel (b)--(c), resp.)  Similar observations as in the previous scenario  allow to conclude that the tests behave according to the derived theory, where the precision is quite accurate, even for the smaller sample sizes.
\end{itemize}

{\bf Setting B: Functional moving average}. Next, we consider processes of the form
\[
X_{t}= \sum_{s=0}^2 A_s (\epsilon_{t-s}) \tageq \label{eq:fma2}
\]
where $\{\epsilon_{t}\}$ is a collection of independent Brownian motions on $[0,1]$. It is well known that  $\epsilon_{t}$ can be represented using its KL expansion $\epsilon_t(\tau) =\sum_{k=1}^{\infty} \zeta_{k,t} e_{k}(\tau)$ where $\{\zeta_{k,t}\}_{k \ge 1}$ is a sequence of independent Gaussian random variables with variance $\text{var}(\zeta_{k,t}) =\frac{1}{(k-1/2)^2 \pi^2}$, and where the sequence $\{e_k\}_{k \ge 1}$ with  $e_k(\tau) = \sqrt{2}\sin\big( (k-1/2) \pi \tau \big)$, $\tau \in [0,1]$, forms an orthonormal system of $L^2([0,1])$. We represent the operators $A_s$ in the basis  $\{\psi_l \otimes e_k\}^{d,m}_{l=1,k=1}$, where we recall that $\{\psi_l\}$ denotes the Fourier basis on $[0,1]$. The vectors of coefficients $\tilde{X_t}=(\inprod{X_t}{\psi_1}, \ldots, \inprod{X_t}{\psi_d})^\top$ can then be represented as \[
\tilde{X_t}= \sum_{s=0}^2 \sum_{k=1}^m \tilde{A}_{s,(\cdot) k} \zeta_{k,t-s} 
= \sum_{s=0}^2 \tilde{{A}}_{s} \boldsymbol{\zeta}_{t-s}.
\]
We set $d=15, m=51$ and simulate the matrices $\tilde{{A}}_{s}=\{\tilde{A}_{s,lj}\}_{l,j}$ from a Gaussian distribution with independent entries such that $\tilde{A}_{0,lk} \sim \mathcal{N}(0,1)$,  $\tilde{A}_{1,lj} \sim \mathcal{N}(0, (l+j^{3/2})^{-1})$, and $\tilde{A}_{2,lj} \sim \mathcal{N}(0, l^{-2})$.
\setlength{\itemindent}{-2em}
\begin{itemize}[leftmargin=*]
\item {\begin{figure}[!h]
\vspace*{-10pt}
\centering
\begin{subfigure}[b]{0.33\linewidth}
\includegraphics[width=\linewidth]{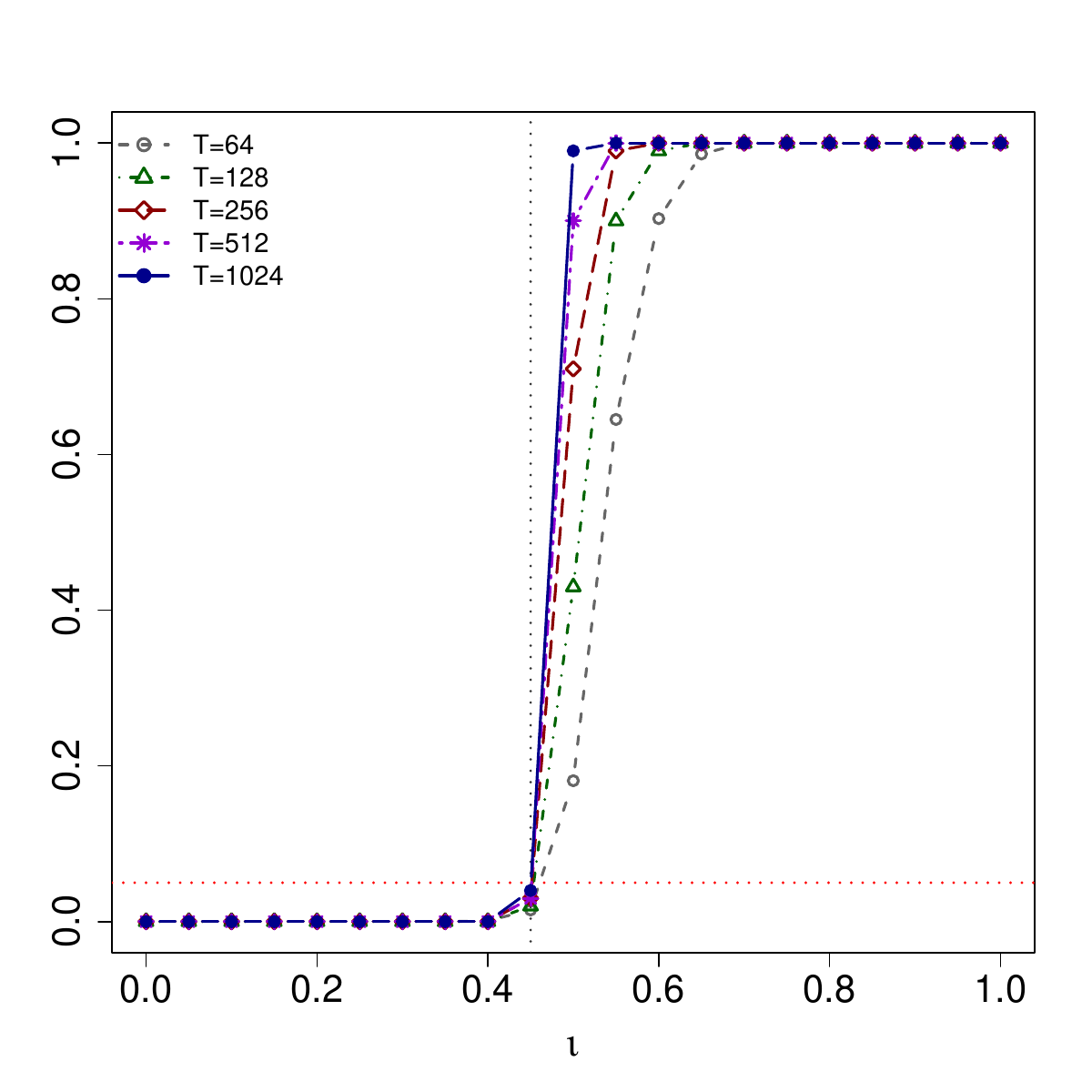}
\setlength{\abovecaptionskip}{-10pt}
\setlength{\belowcaptionskip}{-8pt} \caption{}
 \end{subfigure}\hfil
 \begin{subfigure}[b]{0.33\linewidth}
\includegraphics[width=\linewidth]{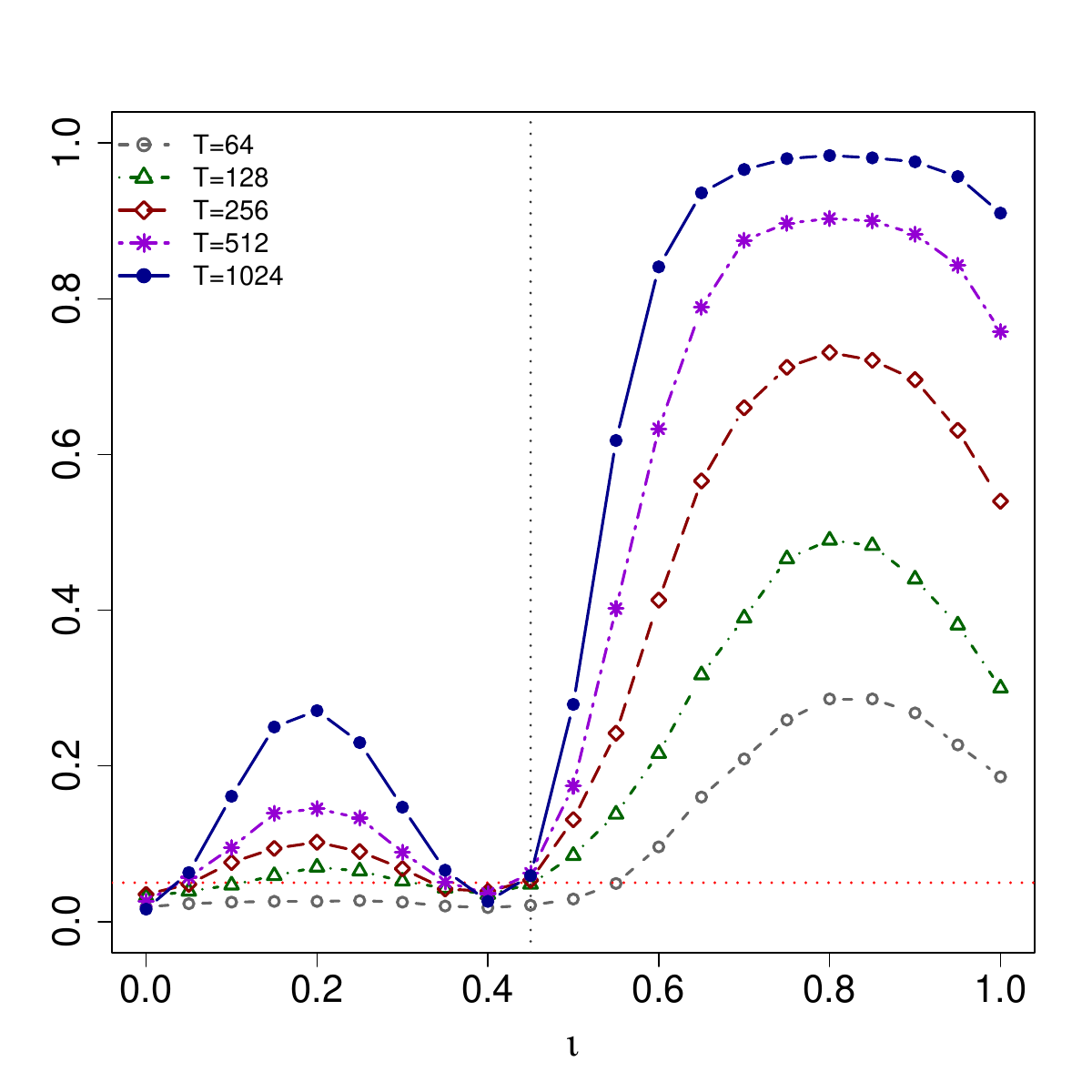}
\setlength{\abovecaptionskip}{-10pt}
\setlength{\belowcaptionskip}{-8pt}   \caption{}
  \end{subfigure}
\caption{\it  ERP under scenario 3 for the relevant hypotheses tests \eqref{hol30} (panel (a)) and \eqref{hol31} (panel (b))  for $k=1$ plotted as a function of the parameter $\iota$ at the nominal level $0.05$ (horizontal dotted line). The vertical dotted line illustrates the true shift $\iota=0.45$. The induced thresholds values for $[a,b]=[0,\pi]$ are $\Delta_{\Pi,1} \approx 0.24$ and $\Delta_{\lambda ,1} \approx 0.03$, respectively.
}
\label{fig:FMA2_feigf}
\end{figure}
\begin{figure}[!h]
\vspace*{-10pt}
\centering
\begin{subfigure}[b]{0.33\linewidth}
\includegraphics[width=\linewidth]{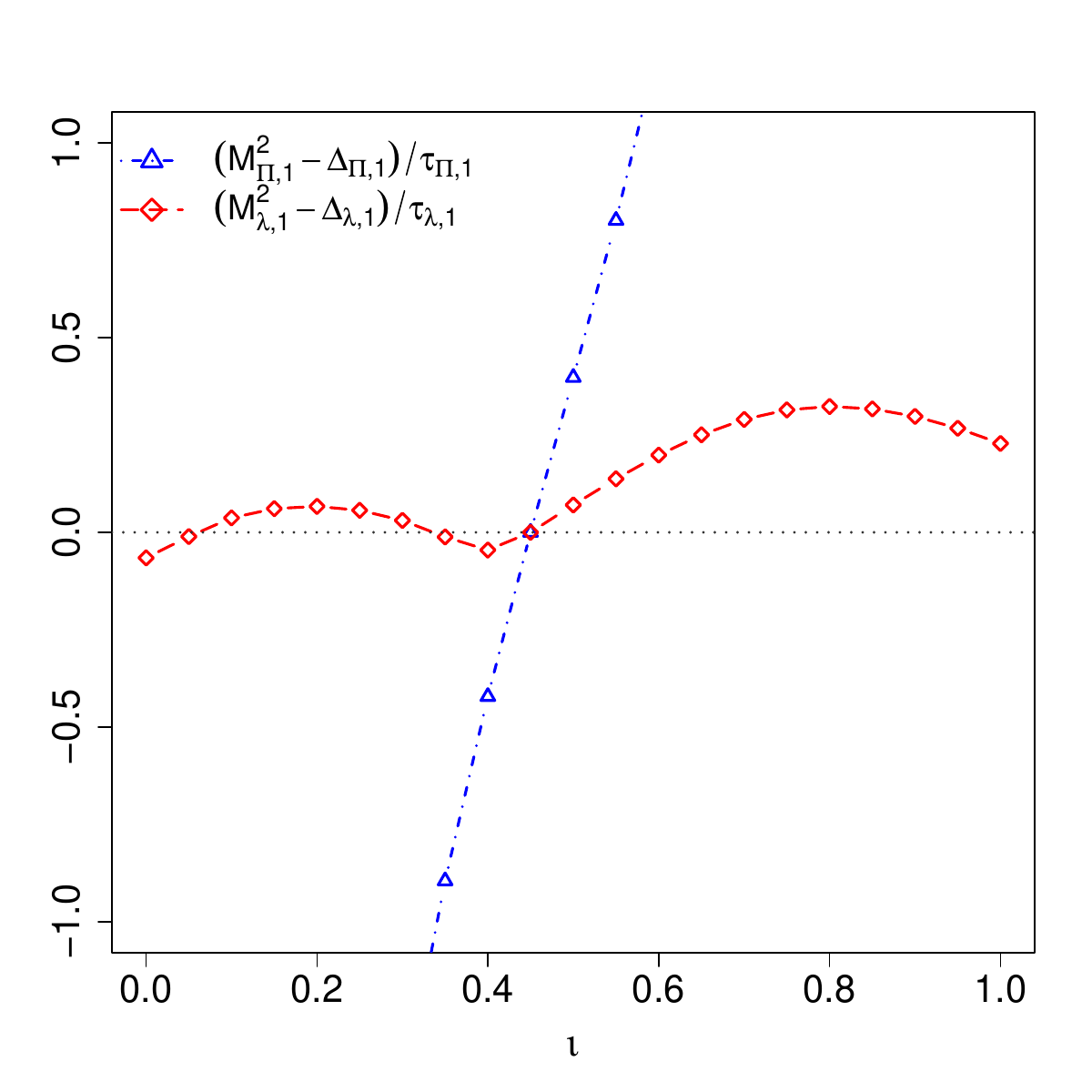}
\setlength{\abovecaptionskip}{-10pt}
\setlength{\belowcaptionskip}{-8pt} 
\caption{}
\end{subfigure}\hfil
\begin{subfigure}[b]{0.33\linewidth}
\includegraphics[width=\linewidth]{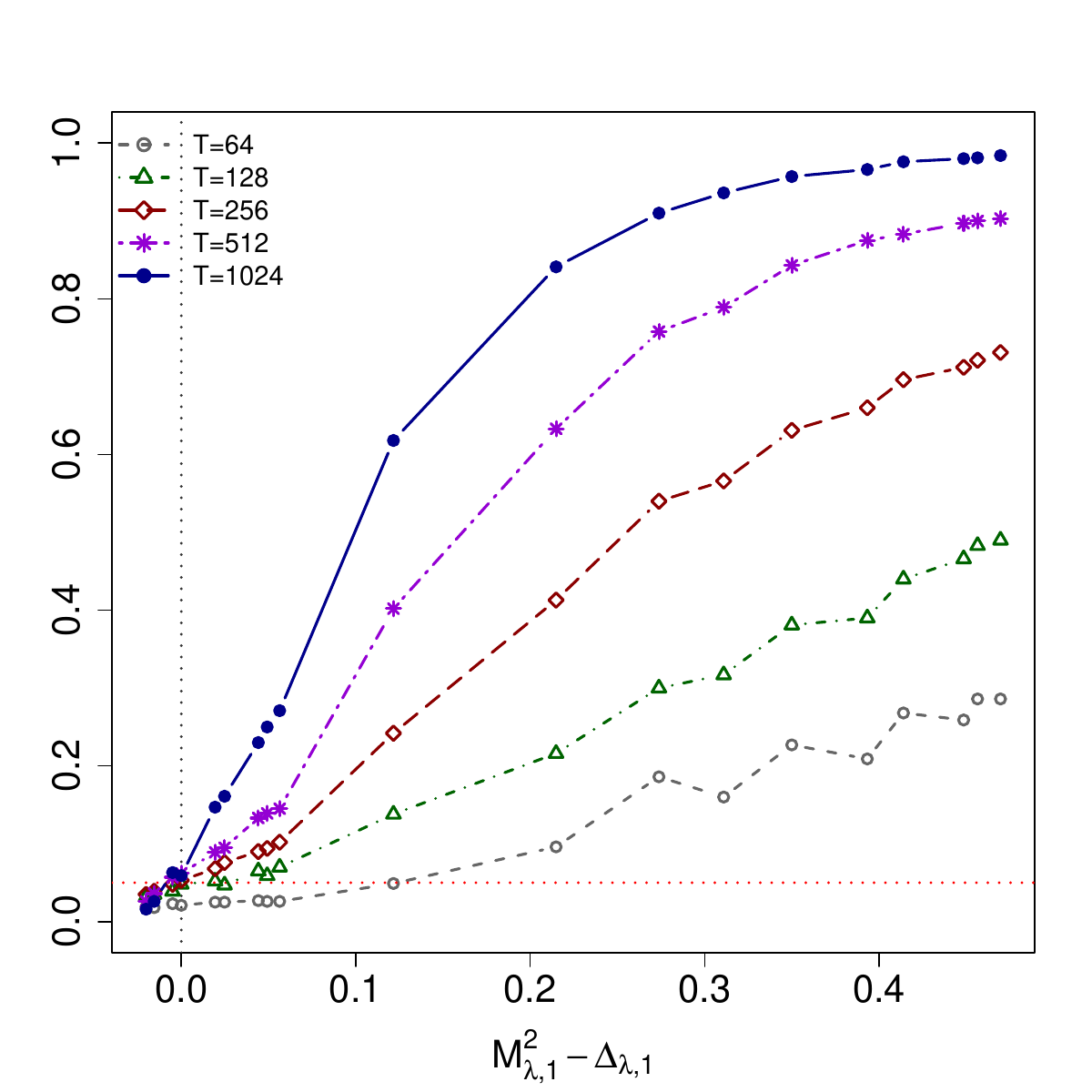}
\setlength{\abovecaptionskip}{-10pt}
\setlength{\belowcaptionskip}{-8pt} 
\caption{}
 \end{subfigure}
\setlength{\belowcaptionskip}{-8pt}  \caption{\it 
Panel (a): the ``relative'' differences  $(M^2_{\Pi ,1} - \Delta_{\Pi,1} )/ \tau_{\Pi}$ and $( M^2_{\lambda ,1}- \Delta_{\Pi,1}) / \tau_{\lambda}$ in scenario 3.
Panel (b):  ERP as a function of $M^2_{\lambda,1}-\Delta_{\lambda,1}$, where the vertical dotted lines corresponds to $M^2_{\lambda,1}=\Delta_{\lambda,1}$.}  
\label{fig:FMA2_ev}
\end{figure}
{\bf Scenario 3: shift in the eigenfunctions.} We simulate $\{X_t\}_{t=1}^T$ and the alternative processes $\{Y_t\}_{t=1}^T$ from \eqref{eq:fma2} , but for the alternative processes $\{Y_t\}_{t=1}^T$ we change the first eigenfunction of the innovations $\{\epsilon_{t}\}$ to ${e}_1(\tau) = \sqrt{2}\sin\big((1/2+\iota) \pi \tau \big), \tau \in [0,1]$, where $\iota$ varies from $0$ to $1$. This shift affects various aspects of the second order structure non-uniformly over the frequency band, and most prominently the first order marginal characteristics.
We therefore concentrate on testing relevant hypotheses for the largest  eigenvalues and corresponding  eigenprojectors  on $[a,b]=[0,\pi]$. We let $\iota=0.45$ be the true value. This  choice corresponds to the 
threshold $\Delta_{\Pi , 1} \approx 0.24$ in \eqref{eq:testHyp} and  $\Delta_{\lambda , 1}  \approx 0.02$ in \eqref{eq:testHypEv}. 
 In \autoref{fig:FMA2_feigf}(a), we depict the ERP of the test  
\eqref{hol30}  for the hypothesis 
  of no relevant differences between the first eigenprojectors
 (\autoref{thm:leveltestZPi}),  while the results corresponding  to  the test   \eqref{hol31}   for  the hypothesis  of no relevant differences between 
 the largest eigenvalues (\autoref{thm:leveltestZlam})  can be found   in \autoref{fig:FMA2_feigf}(b). 
 Both the test for the first eigenprojector  (a) and the test for the largest eigenvalue  (b) closely align with the derived theory, 
 but this statement needs a little more explanation.
 \\
 First, note that in the model under consideration it is not immediately clear  that the distances $M^2_{\lambda ,1} -\Delta_{\lambda ,1} $  and $M^2_{\Pi ,1} -\Delta_{\Pi ,1} $ are  monotone function of the parameter $\iota$. Therefore, we display in \autoref{fig:FMA2_ev}(a) 
 the ``relative''  differences $(M^2_{\lambda ,1} -\Delta_{\lambda ,1} ) / \tau_\lambda$  and $(M^2_{\Pi ,1} -\Delta_{\Pi ,1} )/  \tau_\Pi$ 
 as a function of $\iota$, where $ \tau_\lambda  $ and $\tau_\Pi$
are the standard deviations appearing in the limiting processes
\eqref{hd90} and \eqref{hd91}, respectively. Note again that the choice $\iota=0.45$
 corresponds to the boundary of the null hypothesis, that is  $M^2_{\Pi ,1}-\Delta_{\Pi  ,1}\approx 0$, $M^2_{\lambda ,1}-\Delta_{\lambda ,1}\approx 0$. We observe that  
  the  null hypothesis $H_0: M^2_{\Pi ,1}  \leq \Delta_{\Pi ,1} $ is equivalent to  $H_0: \iota \le 0.45$. 
 Moreover, a similar argument as in \eqref{hol25} shows that for large sample sizes the power of the test is  approximately given by 
\begin{align} 
 \mathbb{P}\Big( \mathbb{D}  > - {\sqrt{b_{1} T_1+ b_{2}T_2} 
 \over  \mathbb{V }} \cdot
  { M^2  - \Delta \over  \tau  } +  q_{1-\alpha}( \mathbb{D})   \Big)
~, 
  \label{hd50}
\end{align}
where 
$(M^2 -\Delta) / \tau  = ( M^2_{\Pi, 1}    - \Delta_{\Pi, 1} ) / \tau_\Pi $  for the test \eqref{hol30} (eigenprojectors) and
$( M^2 -\Delta ) /\tau   = ( M^2_{\lambda, 1} - \Delta_{\lambda, 1} ) /\tau_\lambda$   for the test \eqref{hol31}  (eigenvalues). Thus, the rejection probability depends  (approximately) on
the size of the relative difference $( M^2 - \Delta) /\tau $ and a larger  value means more power.
Consequently, for  the hypothesis of no relevant 
difference between the eigenprojectors, formula \eqref{hd50} indicates that for large sample sizes the power of the test \eqref{hol30} is a strictly increasing function of $\iota$ and this property is confirmed by  the ERP displayed in  \autoref{fig:FMA2_feigf}(a).
\\
On the other hand, from \autoref{fig:FMA2_ev}(a) we also observe that 
 the distance $( M^2_{\lambda ,1} -\Delta_{\lambda ,1} )/ \tau_{\lambda} $ is not a monotone function of the parameter $\iota$. 
 In fact, most values of  $\iota \in [0,1]$ represent the alternative
 $H_1: M^2_{\lambda ,1} > \Delta_{\lambda ,1} $  and only  small neighbourhoods at $0$ and $0.45$ correspond  to the  null  hypothesis.
Consequently, \autoref{fig:FMA2_feigf}(b) mainly displays ERP under the alternative, which explains the fact that the simulated rejection probabilities   exceed the nominal level for most $\iota \in [0,1]$ (note also that for $M^2_{\lambda ,1} -\Delta_{\lambda ,1} \approx 0 $
the level is well approximated).
The approximation for the power in formula \eqref{hd50} also provides an explanation for the non-monotonicity of the ERP in \autoref{fig:FMA2_feigf}(b). 
Additionally, we can  use formula  \eqref{hd50}  for an heuristic  explanation why 
 the test \eqref{hol30} for the eigenprojectors has more power than the test \eqref{hol31} for the eigenvalues (compare \autoref{fig:FMA2_feigf})(a) and (b)).  Observe   in  \autoref{fig:FMA2_ev}(a) that  for  $\iota >  0.45$  the curve $(M^2_{\Pi, 1} - \Delta_{\Pi, 1} ) / \tau_\Pi$  exceeds    the curve   $(M^2_{\lambda ,1} -\Delta_{\lambda ,1} )/ \tau_\lambda$, which explains the (substantially) larger power of the test for the eigenprojectors.
\\ 
 Finally, 
  we also plot in panel (b) of \autoref{fig:FMA2_ev}  the ERP of  the test \eqref{hol31} against the differences $M^2_{\lambda , 1}-\Delta_{\lambda , 1}$ and  observe that this test behaves according to the derived theory: i) the ERP converge to zero in the interior of the null hypothesis $M^2_{\lambda_1}< \Delta_{\lambda , 1}=0.02$ ii) the ERJP converge to the nominal level at the boundary of the null hypothesis, i.e., where  $M^2_{\lambda_1}\approx\Delta_{\lambda_1}$, and (iii) for those values belonging to the alternative hypothesis $M^2_{\lambda_1}>\Delta_{\lambda_1}$ power increases to $1$.
 } 
 \medskip 
 
\item {\bf Scenario 4: amplitude variation.} Consider again the FMA(2) process as specified in \eqref{eq:fma2}. We now simulate  $\{X_t\}_{t=1}^T$ and  $\{Y_t\}_{t=1}^T$ from this process, but the variance of the noise process of the alternative processes  $\{Y_t\}_{t=1}^T$ is multiplied  with a factor $\iota=1.05^{\ell}$ where $\ell =0,\ldots, 20$, that is, $\text{var}(\zeta_{k,t})=\frac{\iota}{ ((k-1/2)\pi)^{2}}$. We consider a true factor $\iota =1.05^{9}$ and $[a,b]=[0,\pi]$. 
The eigenprojectors are not affected by this change. Therefore, we display in 
\autoref{fig:FMA2_amp1}  the ERP of the test \eqref{testOp}   in Section \ref{sec31}  (difference between operators, panel (a)) and of the test  \eqref{hol31} in Section \ref{sec33} (difference between eigenvalues for $k=1,2$, panel  (b)--(c)). 
 The effect of this change in amplitude is in fact almost fully captured by the sequences of largest eigenvalues. This is also what one may observe in panel (a) and (b), where the ERP of the test for the spectral density operators and of the test for the   largest eigenvalues follow a similar pattern. 
\begin{figure}[!h]
\vspace*{-10pt}
\centering
\begin{subfigure}[b]{0.33\linewidth}
\includegraphics[width=\linewidth]{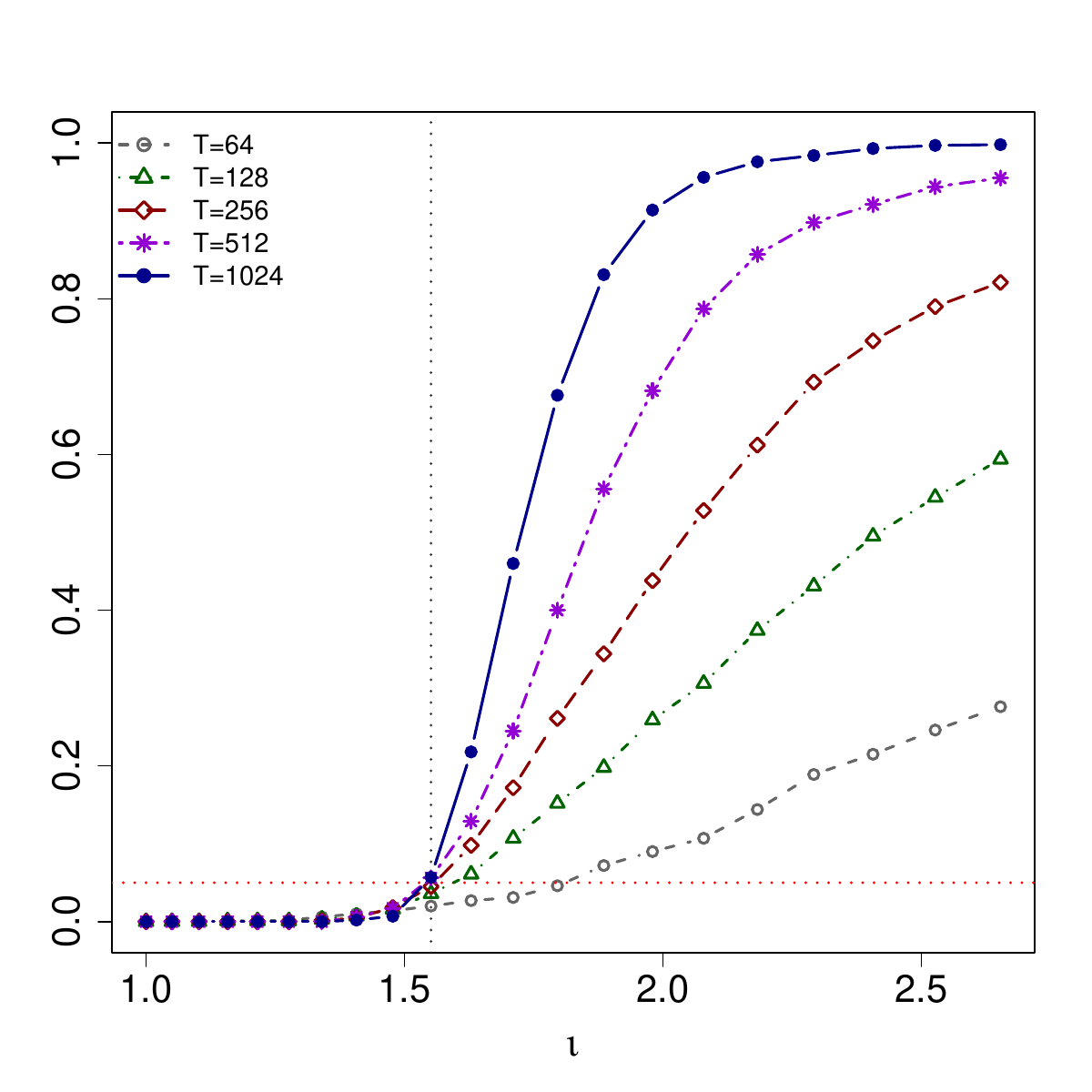}
\setlength{\abovecaptionskip}{-10pt}
\setlength{\belowcaptionskip}{-8pt} \caption{}
\end{subfigure}\hfil
 \begin{subfigure}[b]{0.33\linewidth}
 \includegraphics[width=\linewidth]{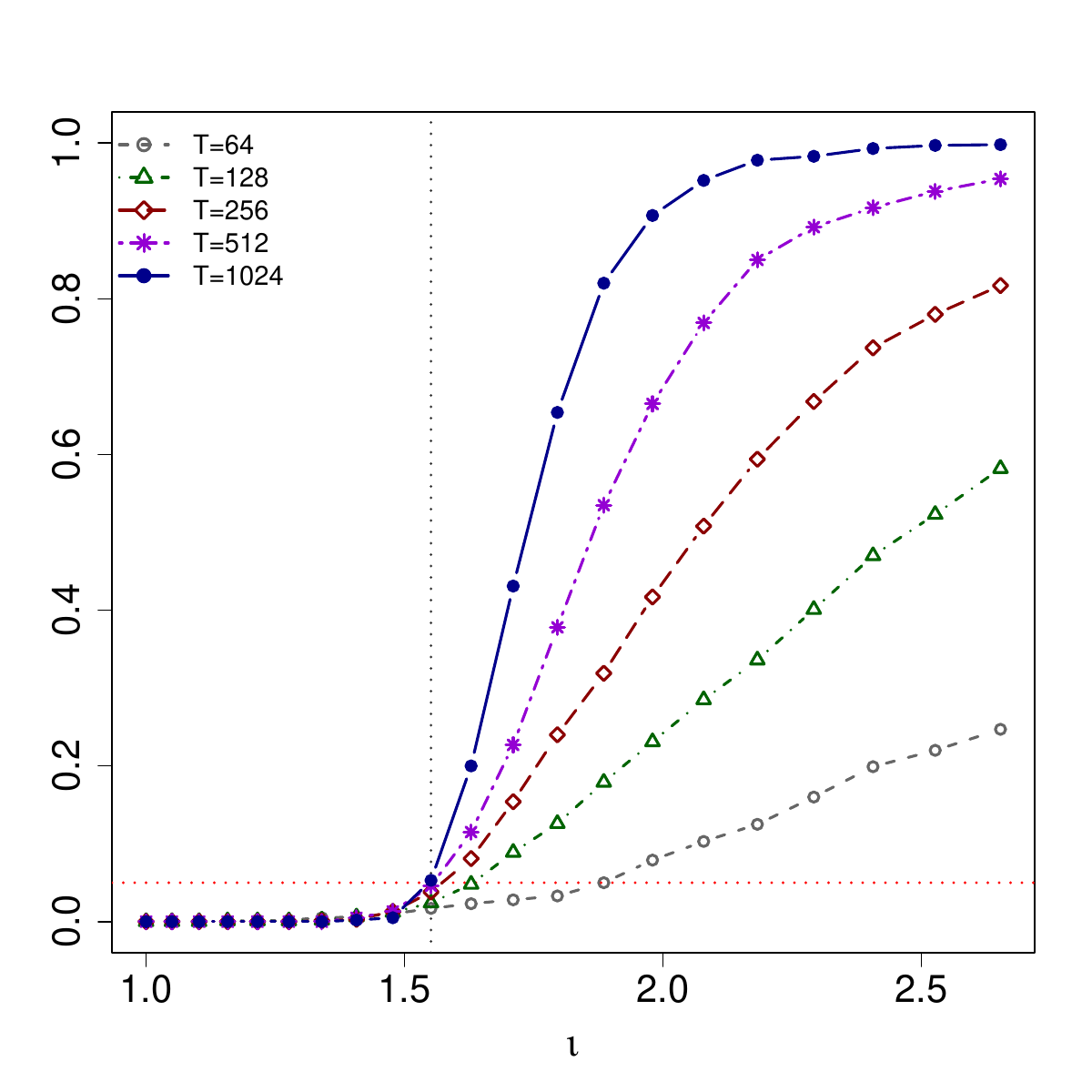}
\setlength{\abovecaptionskip}{-10pt}
\setlength{\belowcaptionskip}{-8pt}   \caption{}
  \end{subfigure}
   \begin{subfigure}[b]{0.33\linewidth}
 \includegraphics[width=\linewidth]{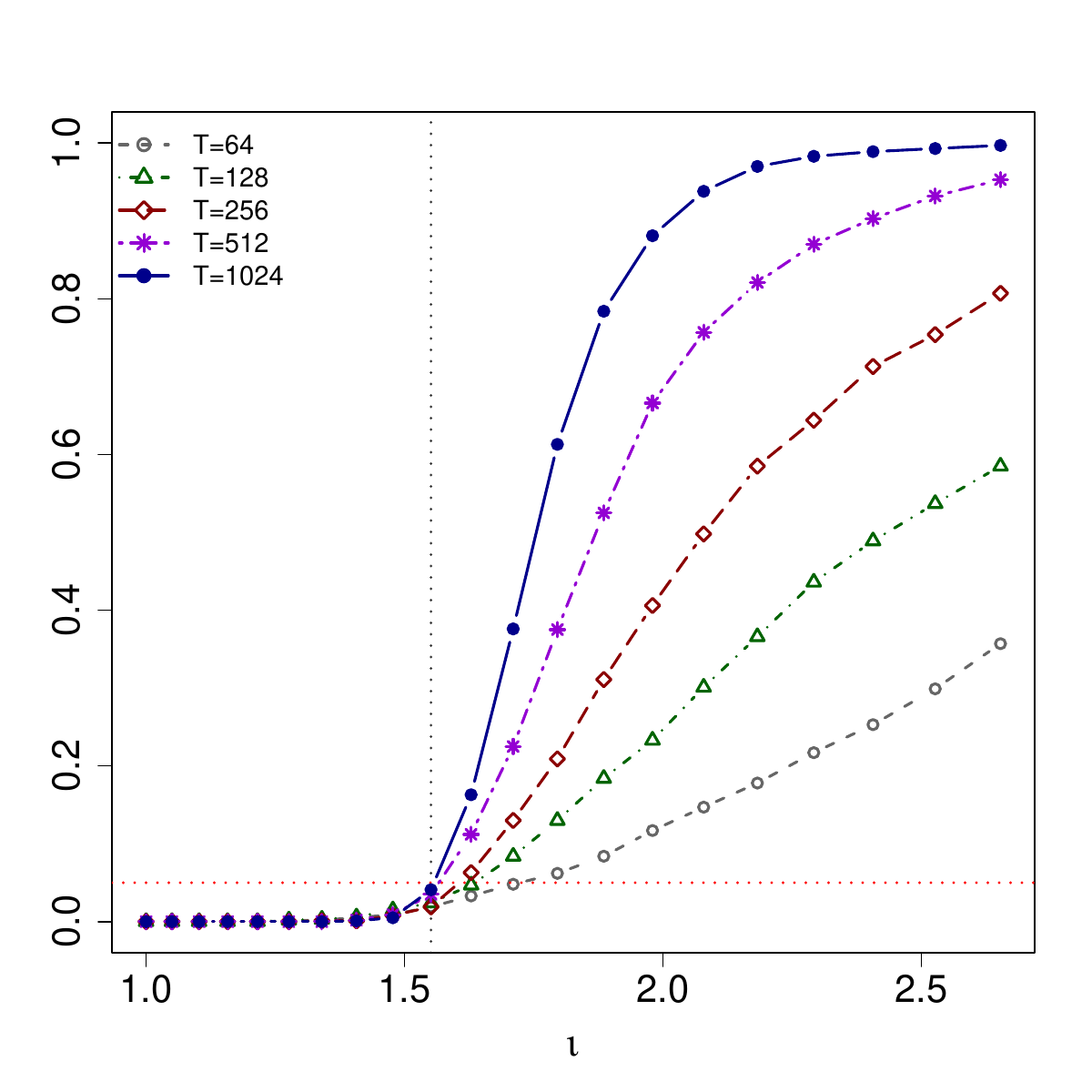}
\setlength{\abovecaptionskip}{-10pt}
\setlength{\belowcaptionskip}{-8pt}   \caption{}
  \end{subfigure}
\setlength{\belowcaptionskip}{-10pt}   \caption{\it  ERP under scenario 4 for the relevant hypotheses tests  \eqref{testOp} (panel (a)) and \eqref{hol31} for $k=1,2$ (panel (b)--(c), resp.) as a function of $\iota$ at the nominal level 0.05 (horizontal dotted line). The vertical dotted line illustrates the true factor $\iota=1.05^{9}$. The induced thresholds values for $[a,b]=[0,\pi]$ are $\Delta \approx 0.610$, $ \Delta_{\lambda , 1}  \approx 0.605$ and $ \Delta_{\lambda , 2}  \approx 0.004$, respectively.}
\label{fig:FMA2_amp1}
\end{figure}
\end{itemize}

{\bf Setting C: Functional autoregressive processes.}
Finally, we investigate finite sample performance in the context of functional AR(p) models 
\[X_{t} =\sum_{j=1}^{p} A_{j}(X_{t-j}) + \epsilon_t, \tageq \label{eq:FAR2}\]
 where $A_j \in S_{\infty}(\Hi)$ and $\{\epsilon_t\}_{t\in \mathbb{Z}} \in \mathcal{L}^2_\Hi$. A basis expansion yields that  the first $d$ coefficients $\widetilde{X}_t = \left(\langle X_t, \psi_1 \rangle, \dots, \langle X_t, \psi_{d} \rangle \right)^\top$ approximately satisfy the VAR(p) equation $\widetilde{{X}}_t = \sum_{j=1}^{p} \widetilde{A}_j(\widetilde{X}_{t-j}) + \widetilde{\boldsymbol{\epsilon}}_t$
where $\widetilde{\epsilon}_t = \left(\langle \epsilon_t, \psi_1 \rangle, \dots, \langle \epsilon_t, \psi_{d} \rangle \right)^T$ and where the $(l,l')$-th entry of $\tilde{{A}}_j$ is given by $\langle A_j(\psi_l),\psi_{l'}\rangle$. To ensure that the $A_j$ are bounded operators, we require that $\widetilde{A}_{j,l,l'} \to 0$ as $l,l' \to \infty$. To this end, the entries of the matrix $\widetilde{A}_j$ are generated as mutually independent $\mathcal{N}\big (0, \nu_{l,l'}\big )$ with  $\nu_{l,l'}=\exp{(-i-j)}$. We fix $p=2$, $d=15$ and set $\widetilde{A}_j=\kappa_j \widetilde{A}_j/\snorm{\widetilde{A}_j}_{\infty}, j=1,2$. The function-valued innovations $\epsilon_1, \dots, \epsilon_T$ are i.i.d. Gaussian with coefficient variances $\text{Var}(\langle \epsilon_t, \psi_l \rangle) = 2\pi\exp(-l), l=1,\ldots, d$. Results are again provided for $[a,b]=[0,\pi]$.
\begin{itemize}[leftmargin=*]
\item {\bf Scenario 5: amplitude variation.}
 The processes $\{X_t\}_{t=1}^{T}$ are generated from \eqref{eq:FAR2} with $\kappa_1=\kappa=0.7$ and $\kappa_2=0$. The alternative processes  $\{Y_t\}_{t=1}^{T}$ are generated similarly but with $\text{Var}(\langle \epsilon_t, \psi_l \rangle) = \iota \times 2\pi\exp(-l), l=1,\ldots, d$, where $\iota = 1.1^\ell, \ell =0,\ldots, 15$. In \autoref{fig:FAR1_lamp}, we present the results for a true factor $\iota=1.1^{7}$. Panel (a) displays the corresponding ERP  of the test \eqref{testOp} in Section \ref{sec31}  (difference between operators), whereas the ERP of the tests  \eqref{hol31} in Section \ref{sec33} (difference between eigenvalues for $k=1,2$) are given in panel (c) and (d), respectively. We again observe both good nominal levels and good power for sample sizes $T \ge 128$. 
  \begin{figure}[h!]
\centering
\vspace*{-10pt}
\begin{subfigure}[b]{0.33\linewidth}
\includegraphics[width=\linewidth]{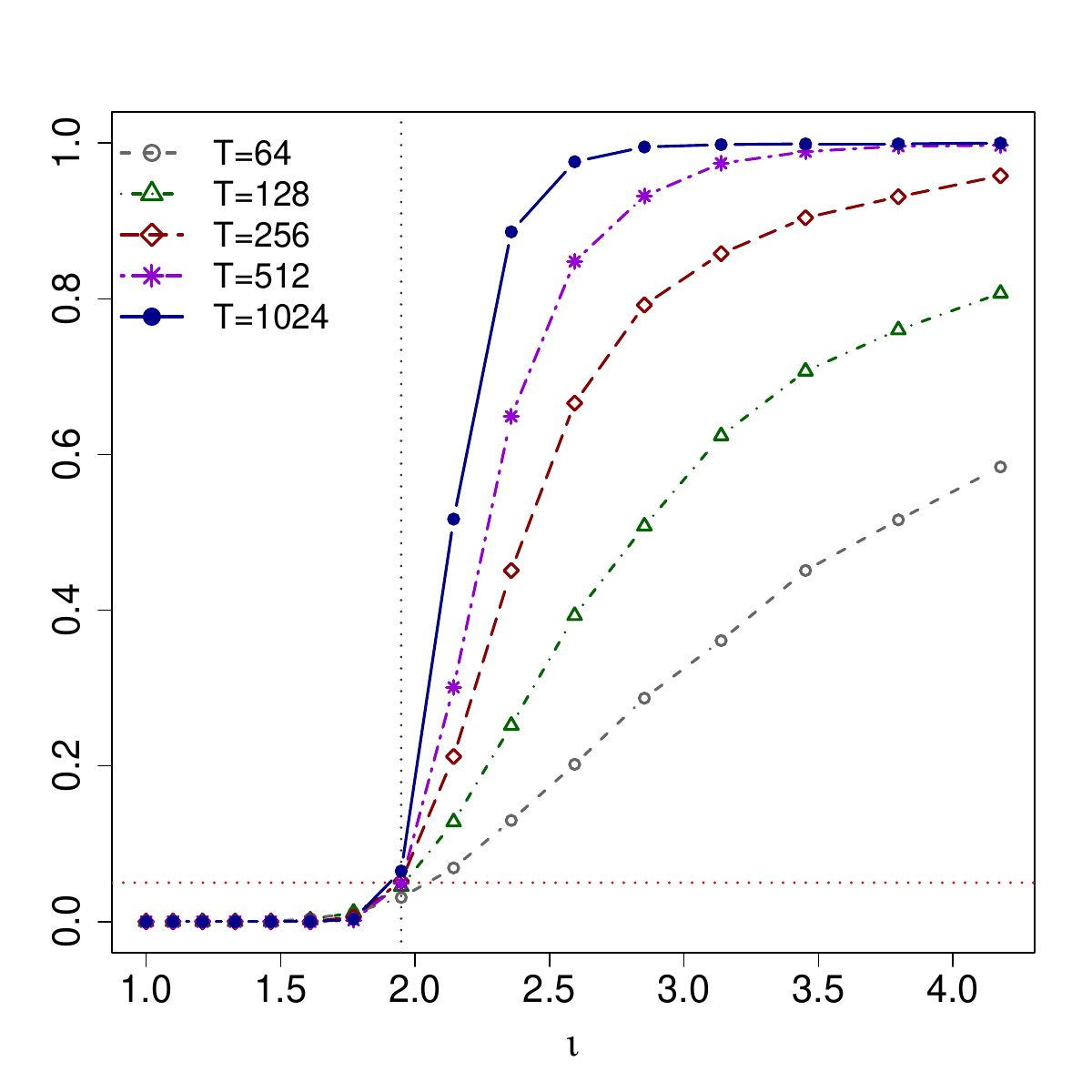}
\setlength{\abovecaptionskip}{-10pt}
\setlength{\belowcaptionskip}{-8pt} \caption{}
\end{subfigure}\hfil
\begin{subfigure}[b]{0.33\linewidth}
\includegraphics[width=\linewidth]{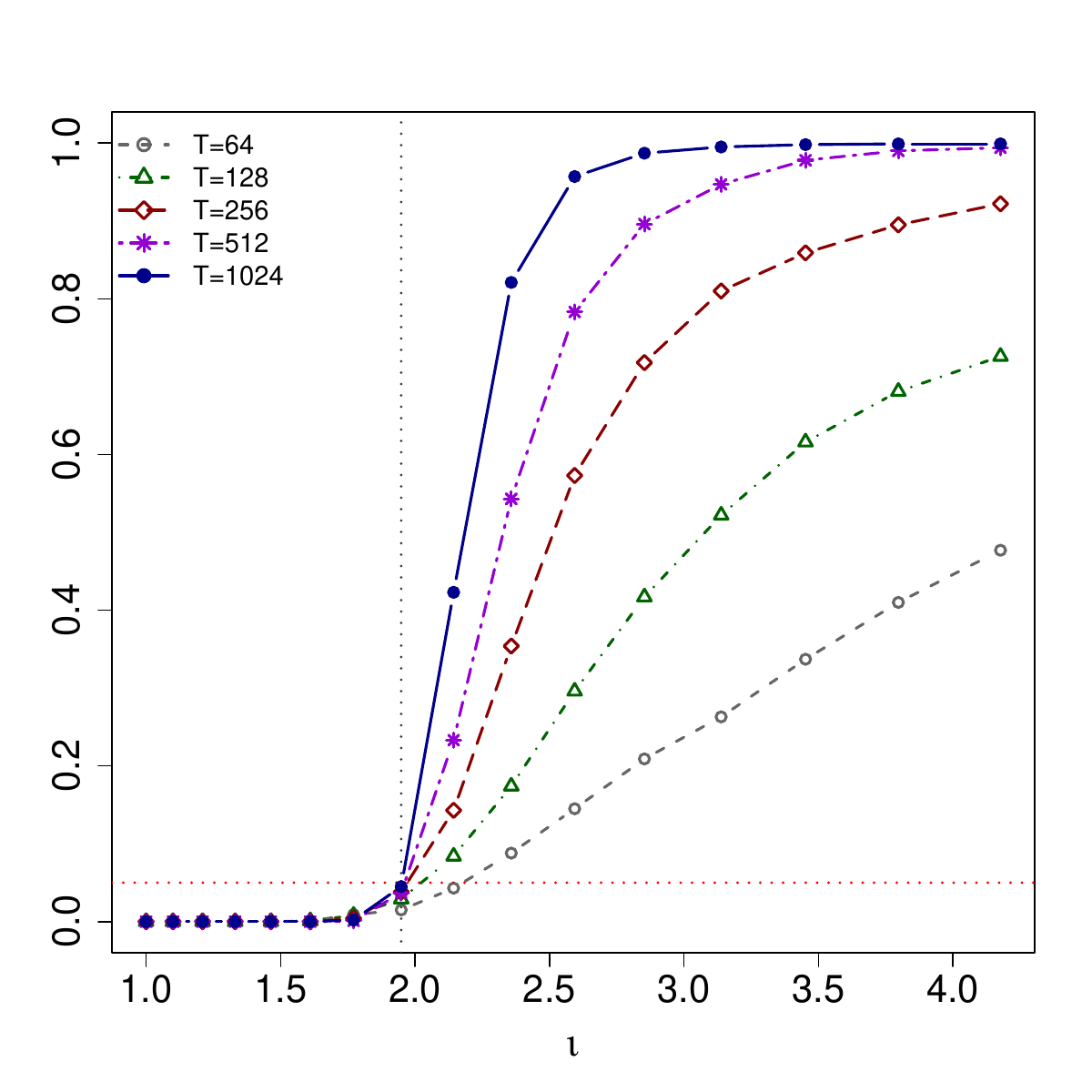}
\setlength{\abovecaptionskip}{-10pt}
\setlength{\belowcaptionskip}{-8pt} \caption{}
\end{subfigure}\hfil
\begin{subfigure}[b]{0.33\linewidth}
\includegraphics[width=\linewidth]{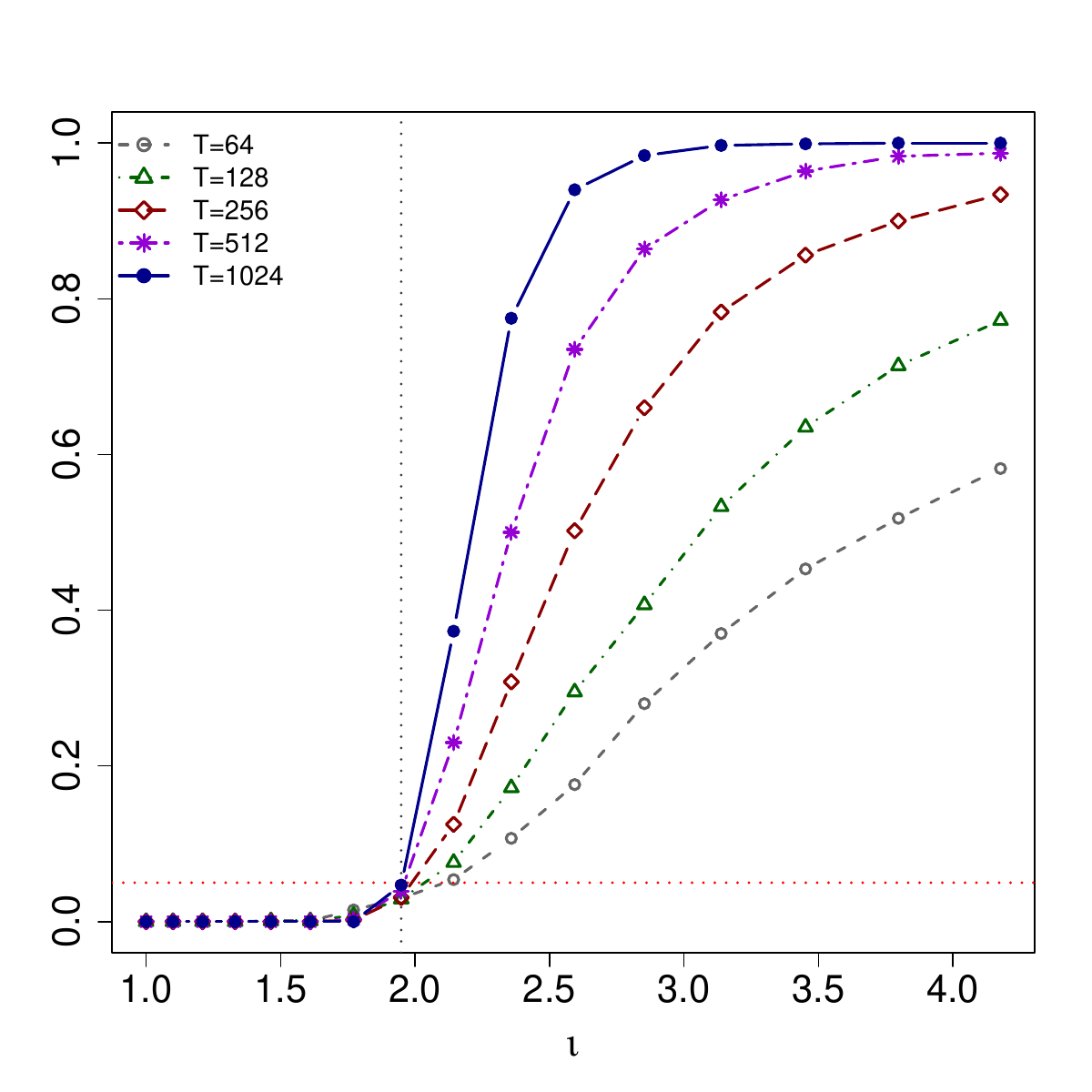}
\setlength{\abovecaptionskip}{-10pt}
\setlength{\belowcaptionskip}{-8pt} \caption{}
\end{subfigure}
\caption{\it ERP under scenario 5 for the relevant hypotheses tests introduced in Section \ref{sec31} (a) and in Section \ref{sec33} for $k=1,2$ ((b)--(c), resp.) as a function of $\iota$ at the nominal level 0.05 (horizontal dotted line). The vertical dotted line illustrates the  true factor $\iota=1.1^{7}$, which induces thresholds for $[a,b]=[0,\pi]$ of $\Delta\approx 0.237$, $\Delta_{\lambda,1} \approx 0.213$, $\Delta_{\lambda,2} \approx 0.022$, respectively. } 
\label{fig:FAR1_lamp}
\end{figure}
 \item {\bf Scenario 6: strength of dependence.}
 The processes $\{X_t\}_{t=1}^{T}$ are again simulated from \eqref{eq:FAR2} with the introduced specifications and with $\kappa_1=0.75$ and $\kappa_2=0$. The alternative processes $\{Y_t\}_{t=1}^{T}$ are generated similarly, except that we vary the value of $\kappa_2$ from $0$ to $-0.75$. Note that this means that the models have strong dependence with complex dynamics and satisfy the existence of a causal solution in the weak sense that an appropriate power of the matrix of autoregressive operators in the state space representation has operator norm less than 1 (see e.g., section 3 in \cite{vDE18}).  We let $\kappa_2=-0.2$ correspond to the boundary of the hypotheses. As illustrated in \autoref{fig:FAR2}(a), the effect on the relative  differences between the true distances and induced thresholds is most dominantly visible in the first two eigenprojectors, the spectral density operators, and to a lesser extent in the largest eigenvalues. The corresponding ERP of the proposed tests of no relevant difference over $[a,b]=[0,\pi]$ for a hypothesized value of $\kappa_2=-0.2$ are provided in panel (b)--(e). It can be seen that the tests hold the nominal level well at the boundary of the null hypothesis (dotted vertical line) and power increases steadily at the interior of the alternative, which corroborates with theory. The power  of the test for the  largest eigenvalues (panel (e)) is smaller compared to those of the eigenprojectors (panel (c)--(d), resp.), which we can again explain by looking at the relative distances in \autoref{fig:FAR2}(a) and using the approximation \eqref{hd50}.
 For the second eigenprojectors, the test is slightly oversized at the boundary which is most likely due to the fact that \autoref{as:eigsep} is not completely satisfied over the entire frequency band. The slower convergence of the test for the largest eigenvalues can be explained by the fact that there are extremely fast changing dynamics over the frequency band, which also affects the test for the spectral density operators.  
 \begin{figure}[h!]
\centering

\begin{subfigure}[b]{0.28\linewidth}
\includegraphics[width=\linewidth]{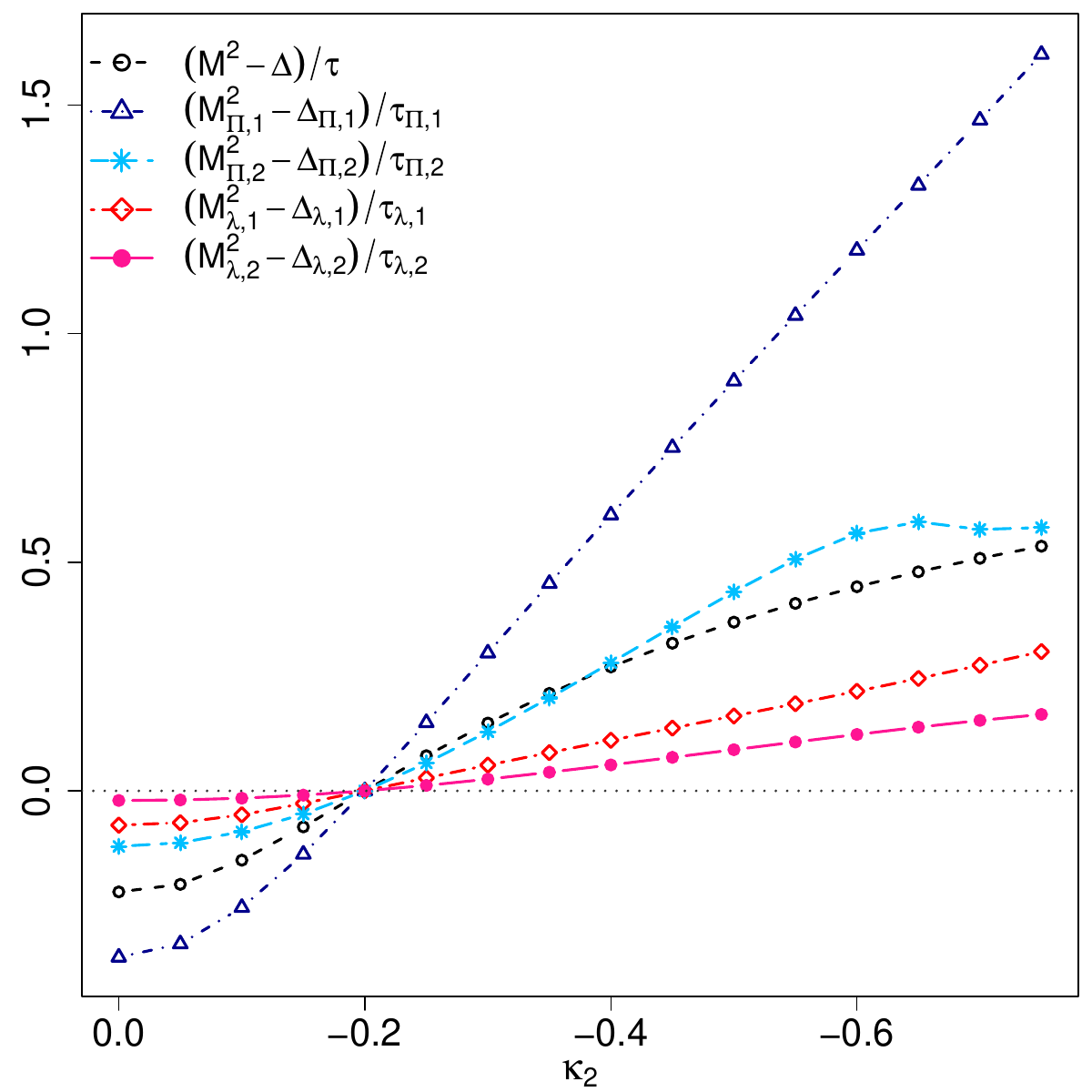}
\setlength{\abovecaptionskip}{-5pt}
\setlength{\belowcaptionskip}{0pt} \caption{}
\end{subfigure}\hfil
\begin{subfigure}[b]{0.28\linewidth}
\includegraphics[width=\linewidth]{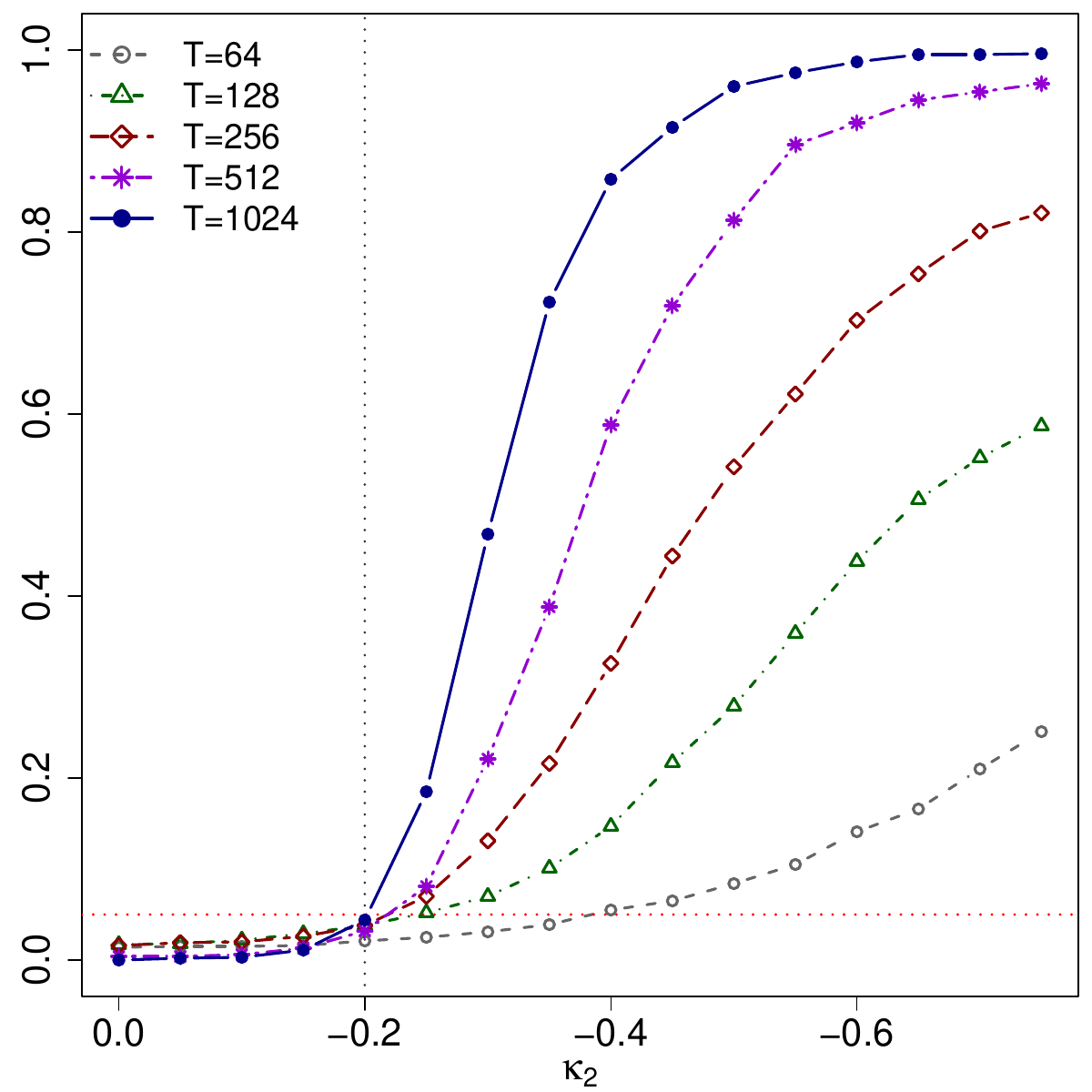}
\setlength{\abovecaptionskip}{-5pt}
\setlength{\belowcaptionskip}{0pt} \caption{}
\end{subfigure}\hfil
\begin{subfigure}[b]{0.28\linewidth}
\includegraphics[width=\linewidth]{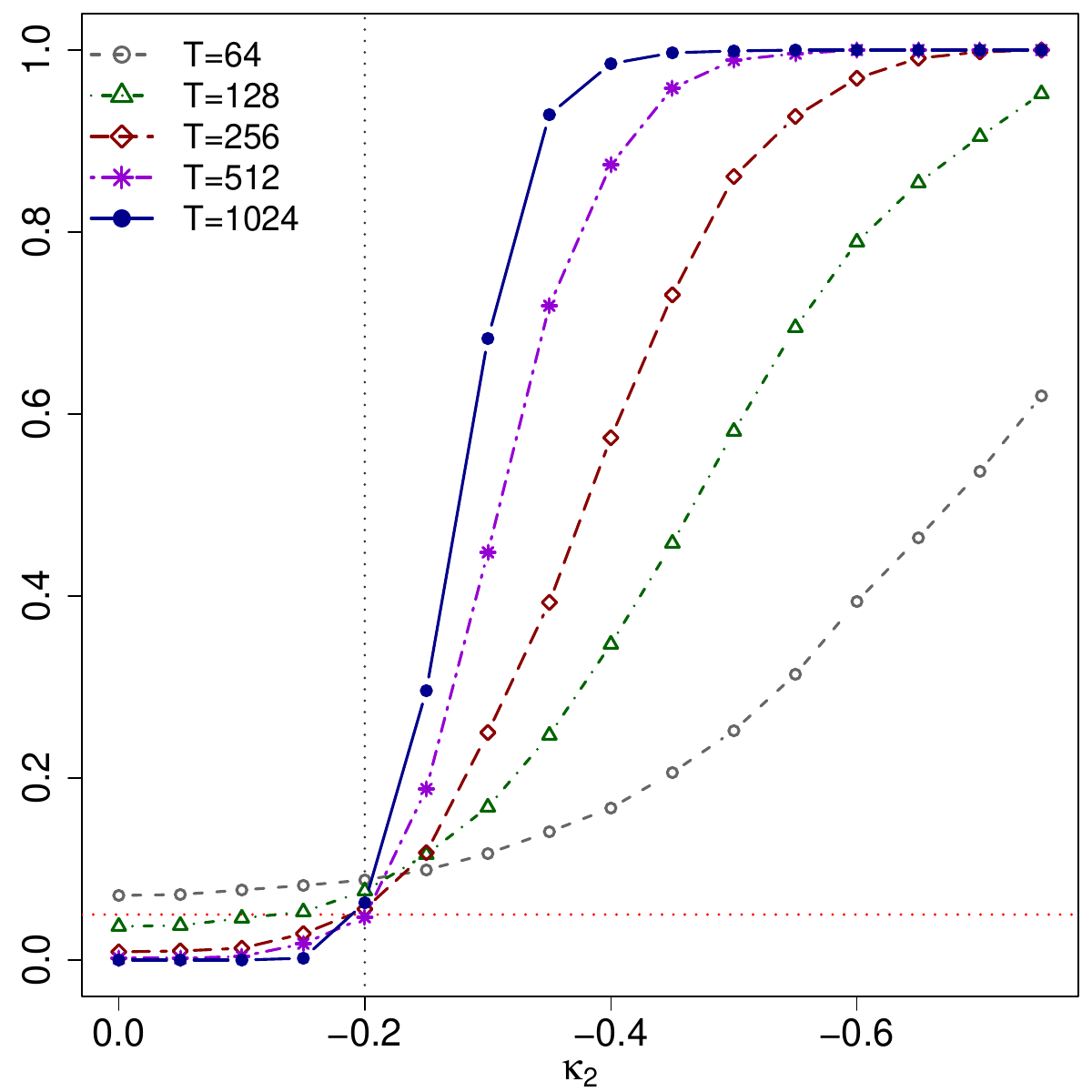}
\setlength{\abovecaptionskip}{-5pt}
\setlength{\belowcaptionskip}{0pt} \caption{}
\end{subfigure}\\
\begin{subfigure}[b]{0.28\linewidth}
\includegraphics[width=\linewidth]{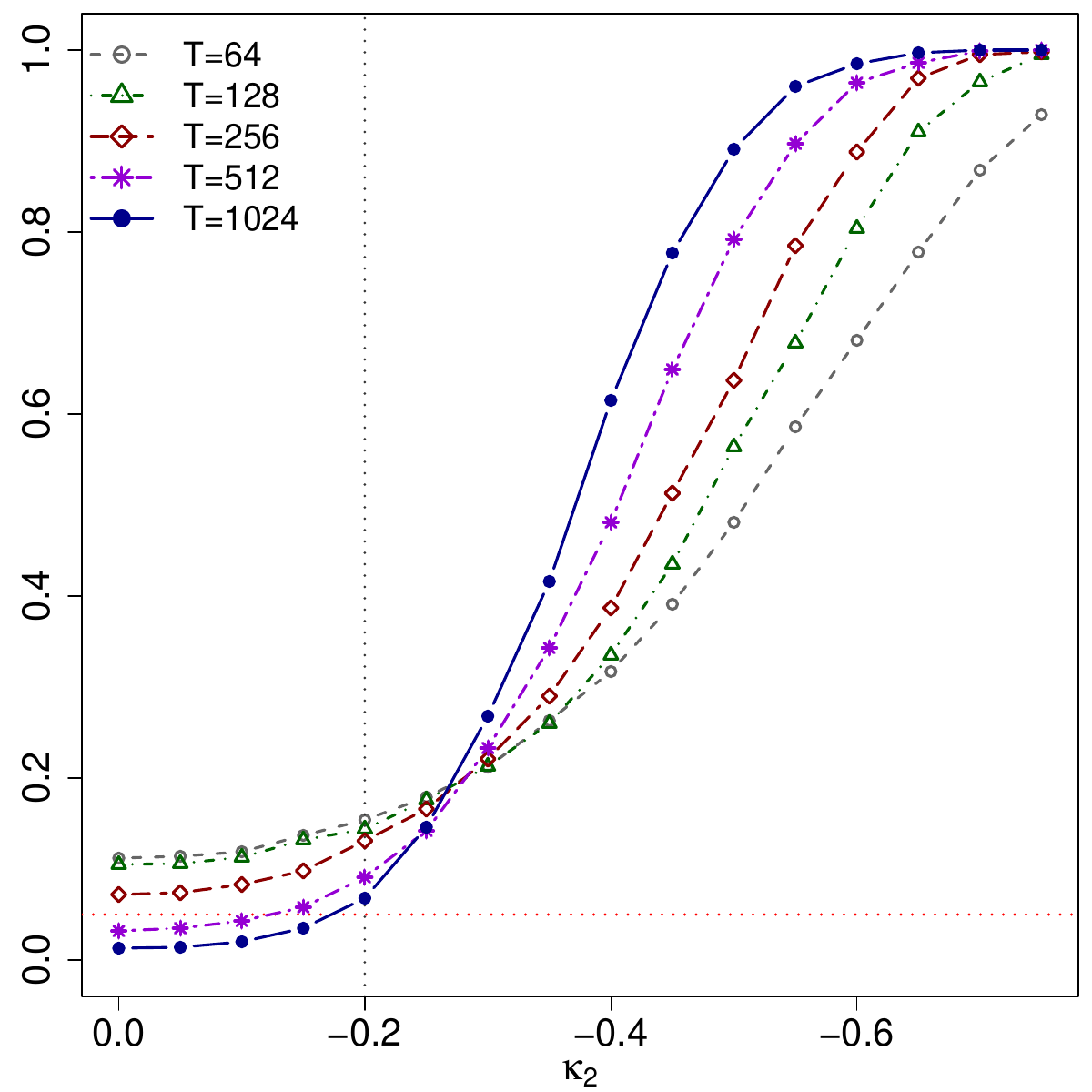}
\setlength{\abovecaptionskip}{-5pt}
\setlength{\belowcaptionskip}{-8pt} \caption{}
\end{subfigure}\hfil
\begin{subfigure}[b]{0.28\linewidth}
\includegraphics[width=\linewidth]{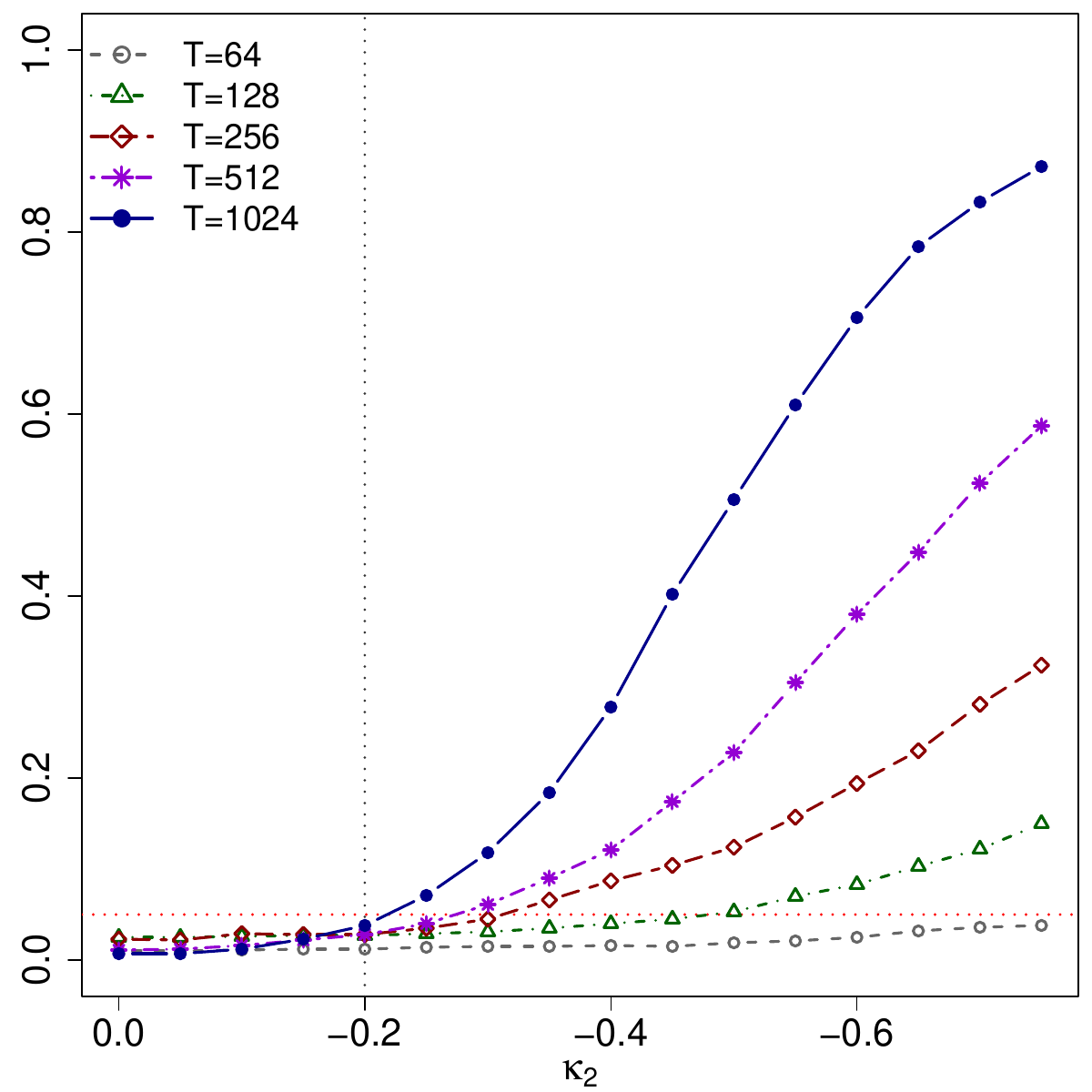}
\setlength{\abovecaptionskip}{-5pt}
\setlength{\belowcaptionskip}{-8pt} \caption{}
\end{subfigure}\hfil
\caption{\it Panel (a):  the relative differences $(M^2-\Delta )/\tau $ and  $( M^2_{\Pi,k}-\Delta_{\Pi,k})/ \tau_{\Pi}$ and $(M^2_{\lambda,k}-\Delta_{\lambda,k})/ \tau_{\lambda}$, for $k=1,2$  as a function of $\kappa_2$ under scenario 6. The  value of $\kappa_2=-0.2$
(vertical dotted line)
induces threshold values for $[a,b]=[0,\pi]$ of $\Delta\approx 0.024$,  $\Delta_{\Pi,1}=0.0314$, $\Delta_{\Pi,2}=0.0444$, $\Delta_{\lambda,1} \approx 0.008$, $\Delta_{\lambda,2} \approx 0$. Panel ((b)--(e)): ERP under scenario 6  as a function of $\kappa_2$ at the nominal level 0.05 (horizontal dotted line) of the test \eqref{testOp} (panel (b)), of the test \eqref{hol30} for $k=1,2$ (panel (c)--(d), resp.), and of the test \eqref{hol31} for $k=1$ (panel (e)).}
\label{fig:FAR2}
\end{figure}
\end{itemize}
\begin{itemize}[leftmargin=*]
\item {\bf Scenario 7: shift in the eigenfunctions}. For this scenario, we consider functional autoregressive processes of the form (see also \cite{Panaretos2010,Zhang2015,auedettrice2019})
\[
X_t(\tau) = \sum_{j=1}^2 \chi_{t,1,j} \sqrt{2} \sin(2\pi \tau j)+ \chi_{t,2,j} \sqrt{2} \cos(2\pi (\tau j +\iota_j)),  \quad \tau\in [0,1], \tageq \label{eq:AR}
\]
where the coefficients $\boldsymbol{\chi}_{t} =(\chi_{t,1,1},\chi_{t,2,1},\chi_{t,1,2},\chi_{t,2,2})$ are simulated from a vector autoregressive process, i.e., ${\boldsymbol{\chi}}_{t} =c \boldsymbol{\chi}_{t-1} +\sqrt{1-c^2} \epsilon_t$ with $\epsilon_t \in \rnum^4$. We generate $\{X_{t}\}^T_{t=1}$  from model \eqref{eq:AR} with $c=.7$, $\epsilon_t \in \rnum^4 \sim \mathcal{N}(\boldsymbol{0}, \mathrm{diag}(4,8,0.5,1.5))$, and $\iota_j=0, j =1,2$.  The alternative processes $\{Y_{t}\}^T_{t=1}$ are simulated in the same way, except that the value of $\iota_1$ is varied between 0 and 0.25.   In this model, higher degrees of persistence (i.e., larger values of $c$) lead to an increase of the relative contribution of the largest eigenvalues to the total variation, and to signal being concentrated in an increasingly narrow frequency band around zero, which results in violation of \autoref{as:eigsep}. Due to the strong persistence in this model for $c=0.7$, \autoref{as:eigsep} is not satisfied over the entire frequency band $[0,\pi]$. In order to ensure that \autoref{as:eigsep} is satisfied, we restrict the analysis to  the interval $[a,b]=[0,\pi/4]$.  We let $\iota=\iota_1 =0.025$ be the shift factor that corresponds to the boundary of the hypotheses. The induced threshold values are  $\Delta \approx 0.93$ $\Delta_{\Pi,1}=\Delta_{\Pi_2}\approx0.17$, respectively. In \autoref{fig:scenario3}, we display the ERP of the test \eqref{testOp} in Section \ref{sec31}  (difference between operators, panel (a)) and of the test \eqref{hol30} in Section \ref{sec32} (difference between projectors for $k=1,2$, panel ((b)--(c), resp.). The results closely align with theory. It is worth mentioning that for $\iota > 0.2$, the relative difference  $(M^2-\Delta)/\tau$ (as a function of $\iota$) exhibits a slight decline (similar to the phenomenon observed in scenario 3), the effect of which is discernible for the smaller sample sizes (see panel (a)).  \begin{figure}[h!]
\centering
\vspace*{-10pt}
\begin{subfigure}[b]{0.33\linewidth}
\includegraphics[width=\linewidth]{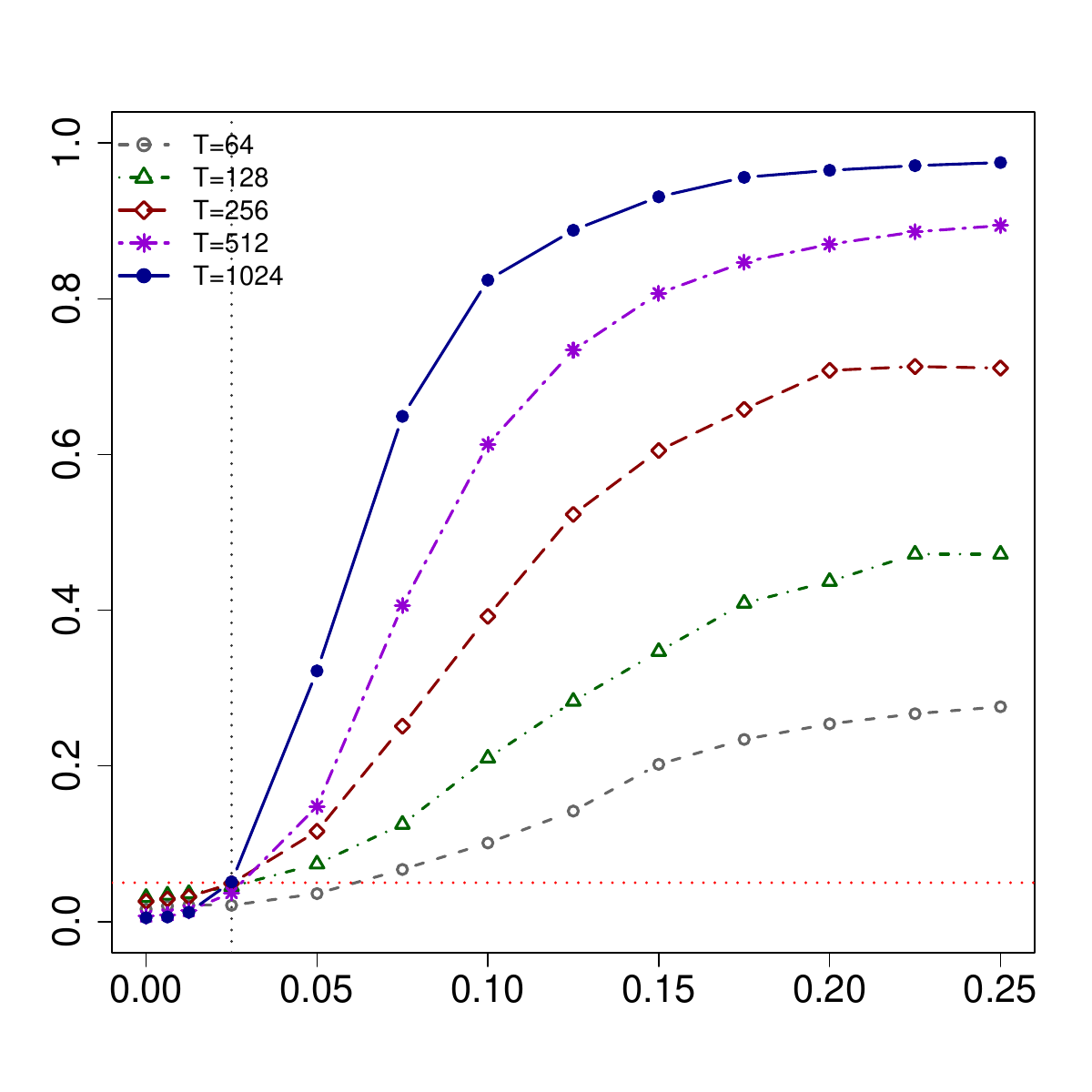}
\setlength{\abovecaptionskip}{-5pt}
\setlength{\belowcaptionskip}{-5pt} \caption{}
\end{subfigure}\hfil
\begin{subfigure}[b]{0.33\linewidth}
\includegraphics[width=\linewidth]{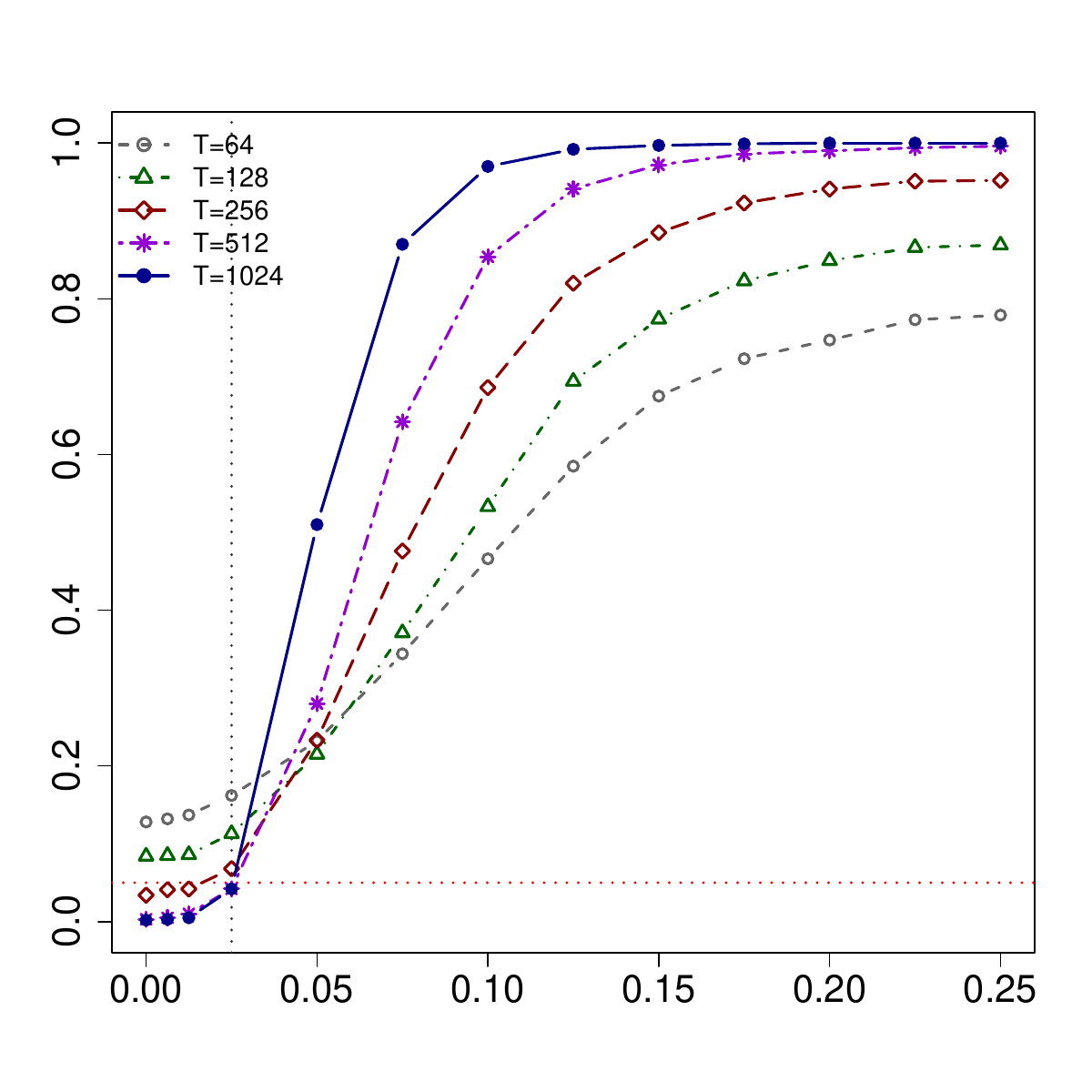}
\setlength{\abovecaptionskip}{-5pt}
\setlength{\belowcaptionskip}{-5pt} \caption{}
\end{subfigure}\hfil
\begin{subfigure}[b]{0.33\linewidth}
\includegraphics[width=\linewidth]{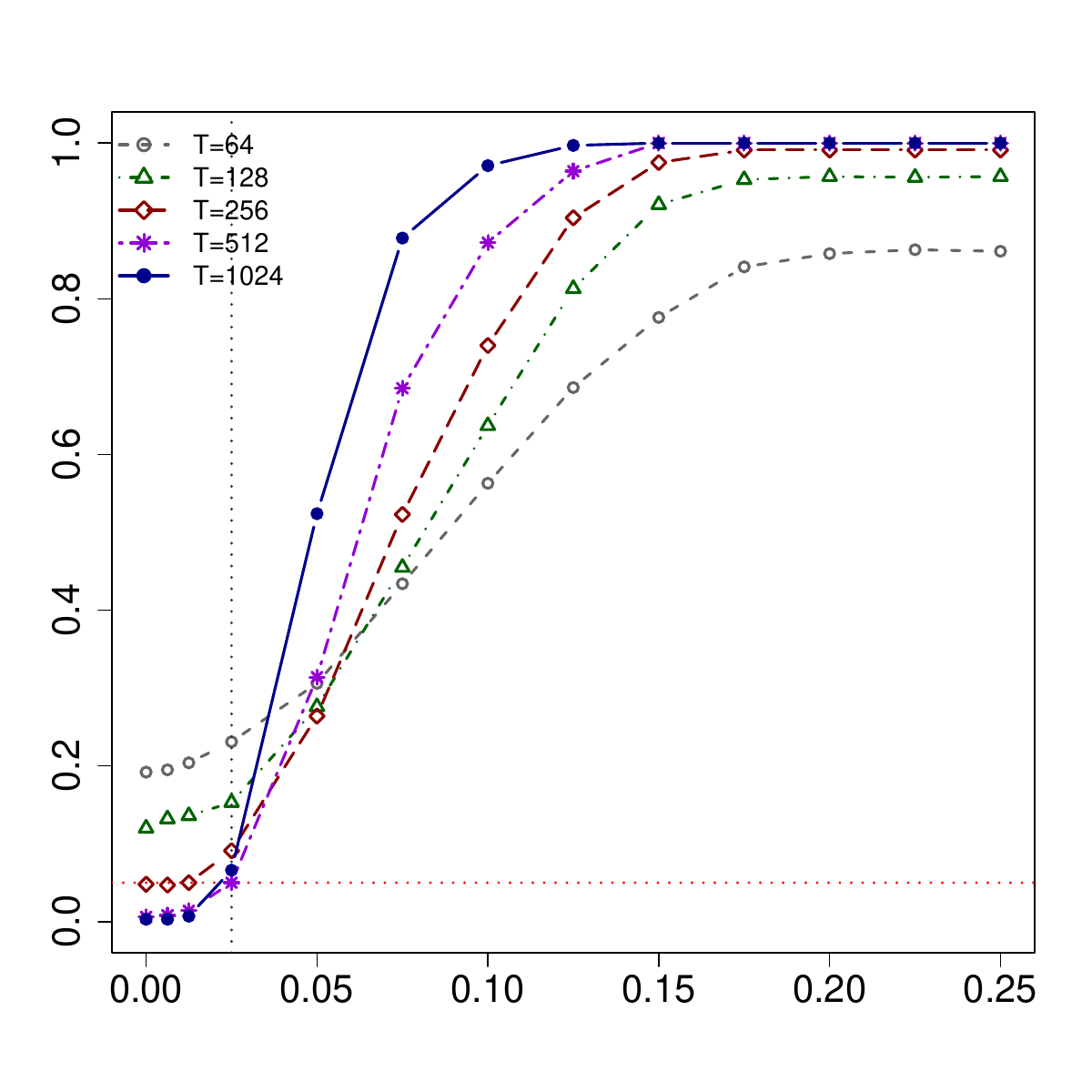}
\setlength{\abovecaptionskip}{-5pt}
\setlength{\belowcaptionskip}{-5pt} \caption{}
\end{subfigure}\hfill
\caption{\it ERP under scenario 7 as a function of $\iota$ at the nominal level 0.05 (horizontal dotted line) of the relevant hypotheses test \eqref{testOp} (panel (a)) and \eqref{hol30} for $k=1,2$ (panels (b)--(c), resp.). The vertical line illustrates the true shift factor $\iota =0.025$. The thresholds for $[a,b]=[0,\pi/4]$ induced by this shift are $\Delta \approx 0.93$, $\Delta_{\Pi,1}\approx \Delta_{\Pi,2}\approx 0.17$, respectively.}
\label{fig:scenario3}
\end{figure}
\end{itemize}

\section{Proof of \autoref{thm:BuildingBlock}} \label{sec:sec4} \label{sec:proofBB}
\def\theequation{5.\arabic{equation}}
\setcounter{equation}{0}
In this section, we prove \autoref{thm:BuildingBlock} and provide exact expressions of the constant $\tau_{XY}$ in terms of the spectral density operators and the factorizations of $\mathcal{U}^{(\omega)}_{XY}$ and $\mathcal{U}^{(\omega)}_{YX}$. We remark once more that the exact expression of $\mathcal{U}^{(\omega)}_{XY}$ depends on the specific hypothesis under consideration   and that both are allowed to depend on both component processes $X$ and $Y$.
In the following, consider 
\begin{equation} 
\mathcal{V}^{(\omega)}_{\X} = (\mathcal{V}^{(\omega)}_{X},\mathcal{V}^{(\omega)}_{Y})^{\top}
 \label{eq:mathV}
 \end{equation}
 where $\mathcal{V}^{(\omega)}_{X}, \mathcal{V}^{(\omega)}_{Y}$ are arbitrary elements of  $S_2(\Hi)$ and where the subscript only refers to the index of the component. Note that $\mathcal{V}^{(\omega)}_{\X}$ is an element of the Hilbert space $\Hs= S_2(\Hi) \oplus S_2(\Hi)$. 
In order to prove \autoref{thm:BuildingBlock}, we prove the following statement in detail. 
 \begin{thm}\label{thm:ZXdistint}
Suppose Assumptions \ref{as:depstruc}-\ref{as:ratiorates} are satisfied. Let $\mathcal{V}^{(\omega)}_{\X}\in \Hs$ be defined by \eqref{eq:mathV}, $\omega \in \rnum$,  
and define
 \[\mathcal{Z}^{\X,\omega}_{T,\eta}= \big( \mathcal{Z}^{X,\omega}_{T_1,\eta}/\sqrt{b_1 T_1},\mathcal{Z}^{Y,\omega}_{T_2,\eta}/\sqrt{b_2 T_2}\big)^\top \in \Hs\] where $ \mathcal{Z}^{X,\omega}_{T_1,\eta}$ and $\mathcal{Z}^{Y,\omega}_{T_2,\eta}$ are given by \eqref{eq:Zx} and \eqref{eq:Zy}, respectively. 
\begin{itemize} 
\item[i).] If the component processes of $\{\X_t\}$ are dependent, then
\begin{align*}
& \Big\{\eta \sqrt{b_1 T_1+b_2 T_2}\Big(\int_a^b\Re\Big(\biginprod{ \mathcal{Z}^{\X,\omega}_{T,\eta} }{\mathcal{V}^{(\omega)}_{\X}}_{\Hs} \Big)d\omega\Big\}_{\eta \in [0,1]}\,{\rightsquigarrow}\,  \Bigg\{\tau_{XY} \eta \, \mathbb{B}(\eta) \Bigg \}_{\eta \in [0,1]}, 
\end{align*}
where $\mathbb{B} $ is a Brownian motion,
 $\tau^2_{XY}=\frac{1}{4}\int^{b}_a \Re\Big[\Gamma^{\omega}_{\X}\big(\mathcal{V}^{(\omega)}_{\X}\big)+\Sigma^{\omega}_{\X}\big(\mathcal{V}^{(\omega)}_{\X}\big)\Big]d\omega$ 
 with 
\begin{align*}
 \Gamma^{\omega}_{\X}\big(\mathcal{V}^{(\omega)}_{\X}\big) 
 & =4\pi^2  \sum_{l,j \in \{X,Y\}} \Big( \biginprod{ \big(\F_{l,j}^{(\omega)} \widetilde{\otimes} \F_{j,l}^{\dagger(\omega)} \big)(\mathcal{V}^{(\omega)}_{j})}{\mathcal{V}^{(\omega)}_{l}}_{S_2}  +\mathrm{1}_{ \omega 
 \in \{0,\pi\}} \biginprod{ \big(\F_{l,j}^{(\omega)} \widetilde{\otimes} \F_{j,l}^{\dagger(\omega)} \big)(\mathcal{V}^{(\omega)}_{j})}{\overline{\mathcal{V}^{\dagger(\omega)}_{l}}}_{S_2}  \\&\phantom{ \frac{1}{\mathcal{W}^{2}_{b_T}}} 
 +\biginprod{ \big(\F_{j,l}^{(\omega)} \widetilde{\otimes} \F_{l,j}^{\dagger(\omega)} \big)(\mathcal{V}^{\dagger(\omega)}_{l})}{\mathcal{V}^{\dagger(\omega)}_{j}}_{S_2} +\mathrm{1}_{ \omega \in \{0,\pi\}} \biginprod{\big(\F_{j,l}^{(\omega)} \widetilde{\otimes} \F_{l,j}^{\dagger(\omega)} \big)({\mathcal{V}^{(\omega)}_{j}})}{\overline{(\mathcal{V}^{\dagger(\omega)}_{j}})}_{S_2}\Big)
\shortintertext{and}
 \Sigma^{\omega}_{\X}\big(\mathcal{V}^{(\omega)}_{\X}\big) &=
 4\pi^2  \sum_{l,j \in \{X,Y\}}  \Big(\mathrm{1}_{\omega \in \{0,\pi\}}\Big(  \biginprod{ \big(\F_{l,j}^{(\omega)} \widetilde{\otimes} \F_{j,l}^{\dagger(\omega)} \big)(\overline{\mathcal{V}^{(\omega)}_{j}})}{\mathcal{V}^{(\omega)}_{l}}_{S_2} + \biginprod{ \big(\F_{l,j}^{(\omega)} \widetilde{\otimes} \F_{j,l}^{\dagger(\omega)} \big)(\mathcal{V}^{\dagger(\omega)}_{j})}{\mathcal{V}^{(\omega)}_{l}}_{S_2} 
 \\& +
\mathrm{1}_{\omega \in \{0,\pi\}}  \biginprod{ \big(\F_{j,l,m}^{(\omega)} \widetilde{\otimes} \F_{l,j}^{\dagger(\omega)} \big)({\mathcal{V}^{\dagger(\omega)}_{l}})}{\overline{\mathcal{V}^{\dagger (\omega)}_{j}}}_{S_2}   + \biginprod{ \big(\F_{j,l}^{(\omega)} \widetilde{\otimes} \F_{l,j}^{\dagger(\omega)} \big)({\mathcal{V}^{\dagger(\omega)}_{l}})}{{\mathcal{V}^{(\omega)}_{j}}}_{S_2}\Big)
 \end{align*}
 \item[ii).] If the component processes of $\{\X_t\}$ are independent, then
 \begin{align*}
& \Big\{\eta \sqrt{b_1 T_1+b_2 T_2}\Big(\int_a^b\Re\Big(\biginprod{ \mathcal{Z}^{\X,\omega}_{T,\eta} }{\mathcal{V}^{(\omega)}_{\X}}_{\Hs} \Big)d\omega\Big\}_{\eta \in [0,1]}\,{\rightsquigarrow}\, \Bigg\{ \eta  \Big(\frac{\upsilon_{X}}{\sqrt{\theta}}\, \mathbb{B}_X(\eta) +\frac{\upsilon_{Y}}{\sqrt{1-\theta}}\, \mathbb{B}_Y(\eta) \Big)\Bigg \}_{\eta \in [0,1]}
\end{align*}
where  $\mathbb{B}_X$ and $\mathbb{B}_Y$ are independent Brownian motions and 
$\upsilon^2_{l}=\frac{1}{4}\int^{b}_a \Re\Big[\Gamma^{\omega}_{l}\big(\mathcal{V}^{(\omega)}_{l}\big)+\Sigma^{\omega}_{l}\big(\mathcal{V}^{(\omega)}_{l}\big)\Big]d\omega$, $l \in \{X,Y\}$, with 
\begin{align*}\Gamma^{(\omega)} _{l}(\mathcal{V}^{(\omega)}_{l})&= { 4\pi^2}
 \Big( \biginprod{ \big(\F_{l}^{(\omega)} \widetilde{\otimes} \F_{l}^{(\omega)} \big)(\mathcal{V}^{(\omega)}_{l})}{\mathcal{V}^{(\omega)}_{l}}_{S_2}  +\mathrm{1}_{\omega \in \{0,\pi\}} \biginprod{ \big(\F_{l}^{(\omega)} \widetilde{\otimes} \F_{l}^{(\omega)} \big)(\mathcal{V}^{(\omega)}_{l})}{\overline{\mathcal{V}^{\dagger(\omega)}_{l}}}_{S_2}  \\& 
 \phantom{ 4\pi^2}
+\biginprod{ \big(\F_{l}^{(\omega)} \widetilde{\otimes} \F_{l}^{(\omega)} \big)(\mathcal{V}^{\dagger(\omega)}_{l})}{\mathcal{V}^{\dagger(\omega)}_{l}}_{S_2} +\mathrm{1}_{\omega \in \{0,\pi\}} \biginprod{\big(\F_{l}^{(\omega)} \widetilde{\otimes} \F_{l}^{(\omega)} \big)({\mathcal{V}^{(\omega)}_{l}})}{\overline{(\mathcal{V}^{\dagger(\omega)}_{l}})}_{S_2}\Big)
\shortintertext{and}\Sigma^{(\omega)} _{l}(\mathcal{V}^{(\omega)}_{l})= &
 4\pi^2  \Big(\mathrm{1}_{\omega \in \{0,\pi\}}\Big(  \biginprod{ \big(\F_{l}^{(\omega)} \widetilde{\otimes} \F_{l}^{(\omega)} \big)(\overline{\mathcal{V}^{(\omega)}_{l}})}{\mathcal{V}^{(\omega)}_{l}}_{S_2} + \biginprod{ \big(\F_{l}^{(\omega)} \widetilde{\otimes} \F_{l}^{(\omega)} \big)(\mathcal{V}^{\dagger(\omega)}_{l})}{\mathcal{V}^{(\omega)}_{l}}_{S_2} 
 \\&  \phantom{ 4\pi^2}+
\mathrm{1}_{\omega \in \{0,\pi\}}  \biginprod{ \big(\F_{l}^{(\omega)} \widetilde{\otimes} \F_{l}^{(\omega)} \big)({\mathcal{V}^{\dagger(\omega)}_{l}})}{\overline{\mathcal{V}^{\dagger (\omega)}_{l}}}_{S_2}   + \biginprod{ \big(\F_{l}^{(\omega)} \widetilde{\otimes} \F_{l}^{(\omega)} \big)({\mathcal{V}^{\dagger(\omega)}_{l}})}{{\mathcal{V}^{(\omega)}_{l}}}_{S_2}\Big).
 \end{align*}
 \end{itemize} 
\end{thm}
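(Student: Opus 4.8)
The plan is to prove a functional central limit theorem for the real-valued, $\eta$-indexed process
\[
G_T(\eta) := \eta\sqrt{b_1T_1+b_2T_2}\int_a^b\Re\langle\mathcal Z^{\X,\omega}_{T,\eta},\mathcal V^{(\omega)}_{\X}\rangle_{\Hs}\,d\omega
\]
by the approximating-martingale method announced in the introduction. I would first record the structural observation that makes the limit take the form $\tau_{XY}\eta\,\mathbb B(\eta)$. Writing $N_i=\flo{\eta T_i}$ and using \eqref{eq:Zx}--\eqref{eq:Zy}, the process factors as $G_T(\eta)=\eta\,S_T(\eta)$ with
\[
S_T(\eta)=\sqrt{b_1T_1+b_2T_2}\Big(\tfrac{1}{\sqrt{b_1T_1}}\!\int_a^b\!\Re\langle\mathcal Z^{X,\omega}_{T_1,\eta},\mathcal V^{(\omega)}_X\rangle d\omega+\tfrac{1}{\sqrt{b_2T_2}}\!\int_a^b\!\Re\langle\mathcal Z^{Y,\omega}_{T_2,\eta},\mathcal V^{(\omega)}_Y\rangle d\omega\Big).
\]
After interchanging $\int_a^b d\omega$ with the double sums (justified by the assumed integrability of $\omega\mapsto\mathcal V^{(\omega)}_{\X}$ and the summability of the weights under \autoref{as:Weights}), each inner integral becomes a centred quadratic form $\tfrac{1}{\sqrt{T_i}}\sum_{s=1}^{N_i}\tilde\xi^{(i)}_{s,T}$, where $\tilde\xi^{(i)}_{s,T}$ is a local (window-width $1/b_i$) spectral fluctuation normalised so that $\mathrm{Var}(\tilde\xi^{(i)}_{s,T})=O(1)$. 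Since multiplication by the deterministic continuous function $\eta$ is a continuous operation on $D[0,1]$, the continuous mapping theorem reduces the claim to the invariance principle $S_T\rightsquigarrow\tau_{XY}\mathbb B$ for the partial sums whose upper index is $\flo{\eta T_i}$.

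The heart of the argument is a martingale approximation of $S_T(\eta)$. Using the causal representation of \autoref{as:depstruc} and the projection operators $P_k=\E[\,\cdot\mid\mathcal F_k]-\E[\,\cdot\mid\mathcal F_{k-1}]$ with $\mathcal F_k=\sigma(\epsilon_j:j\le k)$, I would re-expand the quadratic forms and reorganise the double sums by innovation index, obtaining for each fixed $\eta$ a representation $S_T(\eta)=\tfrac{1}{\sqrt T}\sum_{k\le\flo{\eta T}}L_{k,T}+\mathcal R_T(\eta)$, where $\{L_{k,T}\}_k$ is a martingale-difference triangular array adapted to $\{\mathcal F_k\}$; this is exactly the device of \citet{vD19}, whose quadratic-form limit theory and physical-dependence bounds I would invoke. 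I would then verify the conditions of a triangular-array martingale invariance principle: a conditional Lindeberg condition, which follows from the fourth-moment control on the $L_{k,T}$ afforded by $p=4+\varepsilon$, and convergence of the conditional variance $\sum_{k\le\flo{\eta T}}\E[L_{k,T}^2\mid\mathcal F_{k-1}]\xrightarrow{P}\tau_{XY}^2\,\eta$ for every $\eta$. The linearity in $\eta$ of this limiting conditional variance is precisely what forces the limit to be the scaled Brownian motion $\tau_{XY}\mathbb B$ rather than a time-changed one, and combined with a moment bound on the increments $\E|S_T(\eta')-S_T(\eta)|^2\lesssim|\eta'-\eta|$ and the martingale maximal inequality it yields both finite-dimensional convergence and tightness in $D[0,1]$.

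It remains to identify the constant and to split the two regimes. The limiting conditional variance is computed by expanding $\lim_T\mathrm{Var}(\tilde\xi^{(i)}_{s,T})$ and the cross-covariances through the second- and fourth-order cumulant structure of $\{\X_t\}$; the self-convolution of the window produces the factor $\kappa=\int w^2$ from \autoref{as:Weights}, and the Kronecker-tensor bookkeeping together with the distinction between ordinary frequencies and the real frequencies $\omega\in\{0,\pi\}$ (where the spectral increments degenerate) yields exactly the operators $\Gamma^\omega_{\X}$ and $\Sigma^\omega_{\X}$, whence $\tau_{XY}^2=\tfrac14\int_a^b\Re[\Gamma^\omega_{\X}(\mathcal V^{(\omega)}_{\X})+\Sigma^\omega_{\X}(\mathcal V^{(\omega)}_{\X})]\,d\omega$. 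In the dependent regime ($T_1=T_2$, $b_1\sim b_2$) the $X$- and $Y$-contributions are handled jointly through the bivariate innovation filtration, producing the single Brownian motion $\mathbb B$ and the cross terms $l\ne j$ in $\Gamma^\omega_{\X},\Sigma^\omega_{\X}$; in the independent regime the martingale arrays built from the $X$- and $Y$-innovations are independent, so the joint invariance principle delivers a block-diagonal covariance, i.e.\ two independent Brownian motions $\mathbb B_X,\mathbb B_Y$, while the ratios $\sqrt{b_1T_1+b_2T_2}/\sqrt{b_iT_i}$ converge to $1/\sqrt\theta$ and $1/\sqrt{1-\theta}$ by \autoref{as:ratiorates}, giving the weights $\upsilon_X/\sqrt\theta$ and $\upsilon_Y/\sqrt{1-\theta}$. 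I expect the uniform-in-$\eta$ negligibility of the remainder, $\sup_{\eta\in[0,1]}|\mathcal R_T(\eta)|=o_P(1)$ under only $4+\varepsilon$ moments and without cumulant-summability, to be the main obstacle; this is where the summability $\sum_j\nu_{\hi,p}(\cdot)<\infty$ and the maximal bounds on partial sums of the sequential estimators (\autoref{lem:boundseqSDO}) are essential to pass from pointwise to uniform control, the variance computation itself being lengthy but routine.
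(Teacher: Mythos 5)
Your proposal follows essentially the same route as the paper's proof: an approximating-martingale decomposition of the integrated quadratic form in the spirit of van Delft (2020), a McLeish-type functional CLT for the resulting martingale array via the conditional Lindeberg condition and convergence of the conditional variance to $\tau_{XY}^{2}\,\eta$, uniform-in-$\eta$ control of the approximation error through maximal inequalities for partial sums, and the $\theta$-weighting from Assumption 3.4 to separate the dependent and independent regimes. Two compressions in your sketch are worth noting: the paper's approximation is a genuine double limit $\lim_{m}\lim_{T}$ over an $m$-dependence truncation of the innovations (the $4m$-separation of indices is precisely what makes the conditional variance factor into products of second-order spectral quantities, so no fourth-order cumulants survive in $\Gamma^{\omega}_{\X}$ and $\Sigma^{\omega}_{\X}$ --- consistent with the paper's avoidance of cumulant-summability conditions), and an $L^{2}$ increment bound of order $|\eta'-\eta|$ would not by itself deliver tightness in $D[0,1]$; that is supplied by the martingale invariance principle you already invoke together with the linear-in-$\eta$ limit of the conditional variance.
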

Note that the expression for the covariance and pseudo-covariance in both parts of the statement can be further simplified if we further assume that the $\mathcal{V}^{(\omega)}_{l}, l,j \in\{X,Y\}$ are self-adjoint, which is the case in all our statements. 
\begin{proof}[{Proof of \autoref{thm:ZXdistint}}]
The proof is involved and relies on several auxiliary results, which can be found in \autoref{sec:proofs}. We will only prove   part $i)$. The proof under independence follows similarly by verifying the steps for each component process separately and using the independence to conclude it for the linear combination. Using \autoref{as:ratiorates}, we can ease notation in the dependent scenario and write $T=T_1 =T_2$ throughout this section. Since we only assume very mild moment conditions, it is not obvious how to obtain the distributional properties directly. The principal idea is therefore to construct an approximating process of which the distributional properties can be established and then show that the process limiting distribution is the same as for the approximating process. Before we can introduce this process,
we require some necessary terminology. Let
\begin{align*}
\X^{(m)}_{t}= \E[\X_t| \sigma(\epsilon_t, \epsilon_{t-1},\ldots,\epsilon_{t-m})].
\end{align*}
and define the $\Hi^{\oplus_2}$-valued stochastic process
\[
D^{(\omega)}_{m,k} = (\dmp{k}{X}, \dmp{k}{Y})^\top =  \frac{1}{\sqrt{2\pi}} \sum_{t=0}^{\infty} \big(\E[\X^{(m)}_{t+k}|\G_k]-\E[\X^{(m)}_{t+k}|\G_{k-1}]\big)e^{-\im t \omega},
\tageq \label{eq:Dm}
\]
where  $\G_k = \sigma (\epsilon_k,  \ldots\epsilon_{k-1} , \ldots )$. 
Under Conditions \ref{I}-\ref{II}, this process is an $m$-dependent stationary martingale difference sequence w.r.t. the filtration $\{\G_k  \} $  in $\op^p_{\Hi^{\oplus_2}}$ for each $\omega \in [-\pi,\pi]$.  Additionally, consider a process defined by
 \begin{align}
\ldm{}{T} = \sum_{t=2}^{T} \sum_{s=1}^{t-1}  \tilde{w}^{(\omega)}_{b_T,t,s} \mathcal{D}^{(\omega)}_{XY,s,t}, \tageq \label{eq:ldm}
\end{align}
where $\mathcal{D}^{(\omega)}_{XY,s,t} =\Big(\dmp{t}{X}\otimes \dmp{s}{X}, \dmp{s}{Y} \otimes \dmp{t}{Y}\Big)^\top $ and denote  $\mathcal{W}^2_{b_T}=\sum_{t,s=1}^{T} |\tilde{w}^{(\omega)}_{b_T,t,s}|^2$. Under \autoref{as:depstruc}, the process $\{\mathcal{W}^{-1}_{b_T} \ldm{}{T}\}$ is a martingale in $\op^{p/2}_{\Hs}$ with respect to the filtration $\{\G_T\}$ for $p \ge 4$.
The above claims on the properties of \eqref{eq:Dm} and \eqref{eq:ldm} can be verified similar to Proposition 3.2 and 3.3 of  \cite{vD19} by noting that for any $f =(f_1, f_2)^\top \in \Hs$, we have 
\[ \|f\|_{\Hs} = \snorm{f_1}_{S_2(\Hi)}+ \snorm{f_2}_{S_2(\Hi)}\tageq \label{eq:Hsnorm}\] 
To construct the required approximating process, consider the arrays
\[N^{(\omega)}_{m,T,t} = \sum_{s=1}^{t-1}\biginprod{  \tilde{w}^{(\omega)}_{b_T,t,s}\mathcal{D}^{(\omega)}_{XY,s,t}}{\mathcal{V}^{(\omega)}_{\X}}_{\Hs}+\sum_{s=1}^{t-1}\biginprod{\Big(  \tilde{w}^{(\omega)}_{b_T,t,s} \mathcal{D}^{(\omega)}_{XY,s,t}\Big)^{\dagger}}{\mathcal{V}^{(\omega)}_{\X}}_{\Hs}, \quad 2 \le t \le T, \tageq \label{eq:NmT}\]
and set $N^{(\omega)}_{m,T,1}=0$ for all $\omega \in [-\pi,\pi]$. The following theorem provides the distributional properties of the (scaled) partial sum of the real part of \eqref{eq:NmT} integrated over the frequency band $[a,b]$. The proof is involved and therefore postponed to Section \ref{proofthmNBMf} of the supplement. 
\begin{thm} \label{thm:NBMf}
Let  $N^{(\omega)}_{m,T,t}$ be defined as in \eqref{eq:NmT} and let $\mathcal{{V}}^{(\omega)}_{\X}$
be defined as in \eqref{eq:mathV}. Suppose that Assumptions \ref{as:depstruc}-\ref{as:bwrates} hold. Then, for fixed $m$,
\[
\frac{1}{\mathcal{W}_{b_T}}  \Big \{\sum_{t=1}^{\flo{\eta T}}\Re\Big(\int_a^b N^{(\omega)}_{m,T,t} d\omega\Big)  \Big\}_{\eta \in [0,1]}\, \rightsquigarrow \,\tau_{m,XY} \cdot \Big\{\eta \mathbb{B}(\eta)\Big \}_{\eta \in [0,1]}, \quad (T \to \infty)
\]
where $\tau^2_{m,XY}=\frac{1}{4}\int^{b}_a \Re\Big[\Gamma^{\omega}_{\X,m}\big(\mathcal{{V}}^{(\omega)}_{\X}\big)+\Sigma^{\omega}_{\X,m}\big(\mathcal{{V}}^{(\omega)}_{\X}\big)\Big]d\omega$ with 
\begin{align*}
\Gamma^{\omega}_{\X,m}\big(\mathcal{{V}}^{(\omega)}_{\X}\big) &=4\pi^2  \sum_{l,j \in \{X,Y\}}  \biginprod{ \big(\F_{l,j,m}^{(\omega)} \widetilde{\otimes} \F_{j,l,m}^{\dagger(\omega)} \big)(\mathcal{V}^{(\omega)}_{j})}{\mathcal{V}^{(\omega)}_{l}}_{S_2}  +\mathrm{1}_{\omega \in \{0,\pi\}} \biginprod{ \big(\F_{l,j,m}^{(\omega)} \widetilde{\otimes} \F_{j,l,m}^{\dagger(\omega)} \big)(\mathcal{V}^{(\omega)}_{j})}{\overline{\mathcal{V}^{\dagger(\omega)}_{l}}}_{S_2}  \\&\phantom{ \frac{1}{\mathcal{W}^{2}_{b_T}}} 
 +\biginprod{ \big(\F_{j,l,m}^{(\omega)} \widetilde{\otimes} \F_{l,j,m}^{\dagger(\omega)} \big)(\mathcal{V}^{\dagger(\omega)}_{l})}{\mathcal{V}^{\dagger(\omega)}_{j}}_{S_2} +\mathrm{1}_{\omega \in \{0,\pi\}} \biginprod{\big(\F_{j,l,m}^{(\omega)} \widetilde{\otimes} \F_{l,j,m}^{\dagger(\omega)} \big)({\mathcal{V}^{(\omega)}_{j}})}{\overline{(\mathcal{V}^{\dagger(\omega)}_{j}})}_{S_2}
\shortintertext{and}
\Sigma^{\omega}_{\X,m}\big(\mathcal{{V}}^{(\omega)}_{\X}\big) &=4\pi^2  \sum_{l,j \in \{X,Y\}}  \Big( \biginprod{ \big(\F_{l,j,m}^{(\omega)} \widetilde{\otimes} \F_{j,l,m}^{\dagger(\omega)} \big)(\mathcal{V}^{\dagger(\omega)}_{j})}{\mathcal{V}^{(\omega)}_{l}}_{S_2} + \biginprod{ \big(\F_{j,l,m}^{(\omega)} \widetilde{\otimes} \F_{l,j,m}^{\dagger(\omega)} \big)({\mathcal{V}^{\dagger(\omega)}_{l}})}{{\mathcal{V}^{(\omega)}_{j}}}_{S_2}
\\& \phantom{4\pi^2  \sum_{l,j \in \{X,Y\}} }\mathrm{1}_{\omega \in \{0,\pi\}}  \biginprod{ \big(\F_{l,j,m}^{(\omega)} \widetilde{\otimes} \F_{j,l,m}^{\dagger(\omega)} \big)(\overline{\mathcal{V}^{(\omega)}_{j}})}{\mathcal{V}^{(\omega)}_{l}}_{S_2} 
+ \mathrm{1}_{\omega \in \{0,\pi\}}  \biginprod{ \big(\F_{j,l,m}^{(\omega)} \widetilde{\otimes} \F_{l,j,m}^{\dagger(\omega)} \big)({\mathcal{V}^{\dagger(\omega)}_{l}})}{\overline{\mathcal{V}^{\dagger (\omega)}_{j}}}_{S_2} \Big),
 \end{align*}
and where $\F^{(\omega)}_{XY,m}  =\E( \dmp{0}{X} \otimes \dmp{0}{Y})$.
\end{thm}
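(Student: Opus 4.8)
The plan is to recognise the normalised partial sum as a martingale in $\eta$ and to invoke a functional martingale central limit theorem; the two substantive inputs are the convergence of the conditional quadratic variation and a Lindeberg-type negligibility condition. Write $\xi_{T,t} := \Re\big(\int_a^b N^{(\omega)}_{m,T,t}\,d\omega\big)$. Since each $D^{(\omega)}_{m,k}$ is a stationary $m$-dependent martingale difference with respect to $\{\G_k\}$ (as noted below \eqref{eq:ldm}) and since $\mathcal{D}^{(\omega)}_{XY,s,t}$ with $s<t$ is, in each component, linear in the time-$t$ innovation $\dmp{t}{X}$ (resp. $\dmp{t}{Y}$) with $\G_{t-1}$-measurable coefficient built from $\dmp{s}{X}$ (resp. $\dmp{s}{Y}$), we get $\E[N^{(\omega)}_{m,T,t}\mid\G_{t-1}]=0$ and hence $\E[\xi_{T,t}\mid\G_{t-1}]=0$. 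Thus $\{\mathcal{W}_{b_T}^{-1}\sum_{t=1}^{\flo{\eta T}}\xi_{T,t}\}_{\eta\in[0,1]}$ is a martingale indexed by $\eta$, and I would apply a functional martingale invariance principle (e.g. Hall and Heyde) to extract the Gaussian limit, tightness being supplied automatically by the invariance principle.

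First I would verify the (conditional) Lindeberg condition through the Lyapunov bound $\mathcal{W}_{b_T}^{-4}\sum_{t=1}^{T}\E[\xi_{T,t}^4]\to 0$. Because $D^{(\omega)}_{m,k}$ is $m$-dependent and the weight $w(b_T(t-s))$ decays on the scale $1/b_T$, each $\xi_{T,t}$ is effectively a sum of $O(1/b_T)$ weakly dependent mean-zero summands, so the bound $\E[\xi_{T,t}^4]=O(b_T^{-2})$ follows from \autoref{as:depstruc} (finite $(4+\varepsilon)$-th moments and summable physical dependence) together with \autoref{as:Weights}. Since $\mathcal{W}^2_{b_T}\sim (2\pi)^{-2}\kappa\,T/b_T$, this gives $\mathcal{W}_{b_T}^{-4}\sum_{t=1}^T\E[\xi_{T,t}^4]=O(T^{-1})\to 0$. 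Second, for the conditional quadratic variation I would show
\[
\mathcal{W}_{b_T}^{-2}\sum_{t=1}^{\flo{\eta T}}\E[\xi_{T,t}^2\mid\G_{t-1}]\xrightarrow{P}\tau^2_{m,XY}\,\eta ,
\]
first replacing conditional by unconditional second moments via a weak law of large numbers for the stationary $m$-dependent sequence of conditional variances (the centred sum has variance $o(\mathcal{W}_{b_T}^4)$ by the same counting), thereby reducing the task to evaluating $\mathcal{W}_{b_T}^{-2}\sum_{t=1}^{\flo{\eta T}}\E[\xi_{T,t}^2]$.

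The core computation, and the step I expect to be the main obstacle, is the identification of $\tau^2_{m,XY}$ in terms of $\Gamma^{\omega}_{\X,m}$ and $\Sigma^{\omega}_{\X,m}$. Expanding $\E[\xi_{T,t}^2]$ produces a double frequency integral $\int_a^b\int_a^b$ of fourth-order moments of the innovations. The martingale-difference structure forces the two time-$t$ factors to pair with each other and leaves only the near-diagonal couplings $|s-s'|\le m$ among the $\G_{t-1}$-measurable factors; reducing the surviving fourth-order moments to products of second moments $\E[\dmp{0}{l}\otimes\dmp{0}{j}]=\F^{(\omega)}_{lj,m}$ yields precisely the Kronecker operators $\F_{l,j,m}^{(\omega)}\widetilde{\otimes}\F_{j,l,m}^{\dagger(\omega)}$. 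The \emph{proper} (conjugate) pairings carry oscillatory factors $e^{\im(\omega-\omega')(t-s)}$; summing over $t,s$ and using $b_T\sum_h w^2(b_Th)\to\kappa$ produces a Fej\'er-type kernel concentrating the double integral on the diagonal $\omega=\omega'$, which gives $\Gamma^{\omega}_{\X,m}$. The \emph{improper} (non-conjugate) pairings carry $e^{\im(\omega+\omega')(t-s)}$, which concentrates at $\omega+\omega'\equiv 0$ and therefore contributes only at the boundary frequencies $\{0,\pi\}$ within $[a,b]$, producing the indicator terms in $\Sigma^{\omega}_{\X,m}$. Taking the real part of the weight and adjoint-weight contributions in $N^{(\omega)}_{m,T,t}$ accounts for all four summands in each of $\Gamma^{\omega}_{\X,m}$ and $\Sigma^{\omega}_{\X,m}$, while the constants $\tfrac14$ and $4\pi^2$ arise from the $\Re(\cdot)$-decomposition and the $(2\pi)^{-1}$ normalisation in the weights.

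The delicate point throughout is to make this frequency-domain diagonalisation rigorous under the mild assumptions: bounding uniformly the error in replacing $\int_a^b\int_a^b$ by $\int_a^b$ and controlling the boundary frequencies, which rests on the summability of the physical-dependence measure (ensuring $\omega\mapsto\F^{(\omega)}_{lj,m}$ is well behaved and the remainder sums negligible) rather than on stronger cumulant conditions. Here I use crucially that $m$ is \emph{fixed}, so the defining sums in $\dmp{t}{X}$ are effectively finite and all moment estimates are clean; the passage $m\to\infty$ is deferred to the approximation argument for \autoref{thm:ZXdistint}. Once the conditional quadratic variation limit and the Lindeberg condition are established, the functional martingale central limit theorem delivers the stated weak convergence in $D[0,1]$, the limiting process being Gaussian with the quadratic variation computed above and scaling constant $\tau_{m,XY}$.
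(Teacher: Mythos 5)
Your proposal is correct and follows essentially the same route as the paper: a functional martingale CLT (the paper uses Corollary 3.8 of McLeish (1974) where you cite Hall--Heyde), a Lyapunov fourth-moment bound of order $O(T^{-1})$ for the Lindeberg condition, reduction of the conditional quadratic variation to its deterministic limit via the $m$-dependence and the $s\le t-4m$ truncation, and identification of $\tau^2_{m,XY}$ by pairing innovations into the operators $\F^{(\omega)}_{lj,m}\widetilde{\otimes}\F^{\dagger(\omega)}_{jl,m}$ with diagonal concentration in frequency and boundary contributions at $\omega\in\{0,\pi\}$, the factor $\tfrac14$ coming from the real-part decomposition exactly as in the paper. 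The only slight imprecision is attributing the indicator terms solely to $\Sigma^{\omega}_{\X,m}$: both $\Gamma^{\omega}_{\X,m}$ and $\Sigma^{\omega}_{\X,m}$ contain indicator and non-indicator summands, with $\Sigma$ as a whole arising from the non-conjugate second moment $\E[N^2\mid\G_{t-1}]$, but this does not affect the validity of the argument.
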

We shall use \autoref{thm:NBMf} in order to derive the distributional properties of the process given in \autoref{thm:ZXdistint}. We define
\[{\widetilde{\mathcal{Z}}}^{\X}_{\flo{\eta T}} = \eta \int^b_a {\widetilde{\mathcal{Z}}}^{\X,\omega}_{\flo{\eta T}}d\omega \quad \text{ and }\quad \widetilde{\mathcal{M}}^{\X}_{\flo{\eta T},m}=\frac{\eta}{\mathcal{W}_{b_T}}\sum_{t=1}^{\flo{\eta T}} \int_a^b N^{(\omega)}_{m,T,t} d\omega  \] 
where 
\begin{align*}
{\widetilde{\mathcal{Z}}}^{\X,\omega}_{\flo{\eta T}} &=\frac{1}{\mathcal{W}_{b_T}}\biginprod{ \sum_{s=1}^{\flo{\eta T}}\Big(  \sum_{t=1}^{\flo{\eta T}} \tilde{w}^{(\omega)}_{b_T,s,t}(X_s \otimes X_t, Y_s \otimes Y_t)^\top  - \E\big(\hat{\F}^{\omega}_{X}(\eta),\hat{\F}^{\omega}_{Y}(\eta)\big)^\top \Big)}{\mathcal{V}^{(\omega)}_{\X}}_{\Hs}.
\end{align*}
Let $d_S$ denote the Skorokhod metric on $D[0,1])$ and $d_U$ the uniform metric and recall that $(D[0,1], d_S)$ is a metric space. Let $F$ be a closed set of $D[0,1]$ and denote  $F_{d_S,\epsilon}=\big\{x: d_S\big(x,y\big) \le \epsilon, y \in F\big\}$
Since the Skorokhod metric is weaker than the uniform metric, we have
\begin{align*}
&\mathbb{P}\Big( \big\{\Re(\widetilde{\mathcal{Z}}^{\X}_{\flo{\eta T}})\big\}_{\eta \in [0,1]} \in F \Big)\\&
\le \mathbb{P}\Big( d_S\big(\big\{\Re(\widetilde{\mathcal{Z}}^{\X}_{\flo{\eta T}})\big\}_{\eta \in [0,1]},\big\{\Re(\widetilde{\mathcal{M}}^{\X}_{\flo{\eta T},m})\big\}_{\eta \in [0,1]}\big) \ge \epsilon \Big) + \mathbb{P}\Big(\big\{\Re(\widetilde{\mathcal{M}}^{\X}_{\flo{\eta T},m})\big\}_{\eta \in [0,1]} \in F_{d_S,\epsilon}\Big)\\&
\le \mathbb{P}\Big( d_U\big(\big\{\Re(\widetilde{\mathcal{Z}}^{\X}_{\flo{\eta T}})\big\}_{\eta \in [0,1]},\big\{\Re(\widetilde{\mathcal{M}}^{\X}_{\flo{\eta T},m})\big\}_{\eta \in [0,1]}\big) \ge \epsilon \Big) + \mathbb{P}\Big(\big\{\Re(\widetilde{\mathcal{M}}^{\X}_{\flo{\eta T},m})\big\}_{\eta \in [0,1]} \in F_{d_S,\epsilon}\Big) 
\end{align*}
We will first prove that 
\[
\lim_{m \to \infty} \lim_{T \to \infty}  \mathbb{P}\Big( d_U\big(\big\{\widetilde{\mathcal{Z}}^{\X}_{\flo{\eta T}}\big\}_{\eta \in [0,1]},\big\{\widetilde{\mathcal{M}}^{\X}_{\flo{\eta T},m}\big\}_{\eta \in [0,1]}\big) \ge \epsilon \Big) =0.\tageq \label{eq:limmT}
\]
By Markov's inequality,
\begin{align*}
  \mathbb{P}\Big( d_U\big(\big\{\widetilde{\mathcal{Z}}^{\X}_{\flo{\eta T}}\big\}_{\eta \in [0,1]},\big\{\widetilde{\mathcal{M}}^{\X}_{\flo{\eta T},m}\big\}_{\eta \in [0,1]}\big) \ge \epsilon \Big) 
  \le \epsilon^{-\gamma}\E\Big( \sup_{\eta \in [0,1]}\Big\vert\widetilde{\mathcal{Z}}^{\X}_{\flo{\eta T}}-\widetilde{\mathcal{M}}^{\X}_{\flo{\eta T},m}\Big\vert\Big)^{\gamma},
\end{align*}
where we take $\gamma >2$.
We find 
\begin{align*}
 \E\Big( \sup_{\eta \in [0,1]}\Big\vert\widetilde{\mathcal{Z}}^{\X}_{\flo{\eta T}}-\widetilde{\mathcal{M}}^{\X}_{\flo{\eta T},m}\Big\vert\Big)^{\gamma} &\le \E\Big( \sup_{\eta \in [0,1]} \eta  \int_a^b \Big\vert  \Big(\widetilde{\mathcal{Z}}^{\X,\omega}_{\flo{\eta T}}-\frac{1}{\mathcal{W}_{b_T}}\sum_{t=1}^{\flo{\eta T}} N^{(\omega)}_{m,T,t}
 \Big)\Big\vert d\omega\Big)^{\gamma}
\\& \le  \E\Big( \int_a^b \sup_{\eta \in [0,1]}  \eta \Big\vert  \Big(\widetilde{\mathcal{Z}}^{\X,\omega}_{\flo{\eta T}}-\frac{1}{\mathcal{W}_{b_T}}\sum_{t=1}^{\flo{\eta T}} N^{(\omega)}_{m,T,t}\Big)\Big\vert d\omega \Big)^{\gamma}
\\& \le  (b-a)^{\gamma-1}\int_a^b \E\Big(  \sup_{\eta \in [0,1]}  \eta \Big\vert  \Big(\widetilde{\mathcal{Z}}^{\X,\omega}_{\flo{\eta T}}-\frac{1}{\mathcal{W}_{b_T}}\sum_{t=1}^{\flo{\eta T}} N^{(\omega)}_{m,T,t}\Big)\Big\vert\Big)^{\gamma} d\omega, \tageq \label{eq:approxint2}
\end{align*}

where the last inequality follows from an application of Jensen's inequality to the integral since $\gamma >1$, and from Tonelli's theorem, which allows to interchange the expectation and integral.
Continuity of the Hilbert-Schmid inner product with respect to the product topology on $\mathcal{H} \otimes \mathcal{H}$ and the Cauchy-Schwarz inequality imply for the integrand of \eqref{eq:approxint2}
\begin{align*}
 &\E\Big(  \sup_{\eta \in [0,1]}  \eta \Big\vert  \Big(\widetilde{\mathcal{Z}}^{\X,\omega}_{\flo{\eta T}}-
\frac{1}{\mathcal{W}_{b_T}}\sum_{t=1}^{\flo{\eta T}} N^{(\omega)}_{m,T,t}\Big)\Big\vert\Big)^{\gamma}
\\ & = \E\Big(  \sup_{\eta} \eta \Big \vert  \biginprod{\frac{1}{\mathcal{W}_{b_T}} \Big(\sum_{s=1}^{\flo{\eta T}}
\big(\sum_{t=1}^{\flo{\eta T}}\tilde{w}^{(\omega)}_{b_T,s,t}(X_s \otimes X_t, Y_s \otimes Y_t)^\top   - \E\big(\hat{\F}^{\omega}_{X}(\eta),\hat{\F}^{\omega}_{Y}(\eta)\big)^\top\big)-\ldm{}{\flo{ \eta T}}-\ldm{\dagger}{\flo{\eta T}}\Big)}{\mathcal{V}^{(\omega)}_{\X}}_{\Hs} \Big \vert \Big)^{\gamma}
 \\&
\le \big(\sup_{\omega}\snorm{\mathcal{V}^{(\omega)}_{\X}}^{\gamma}_2 \big) 
\frac{1}{\mathcal{W}^{\gamma}_{b_T}}\E \Big( \sup_{\eta}  \bigsnorm{ \sum_{s=1}^{\flo{\eta T}}\Big(\sum_{t=1}^{\flo{\eta T}}\tilde{w}^{(\omega)}_{b_T,s,t}(X_s \otimes X_t, Y_s \otimes Y_t)^\top- \E\big(\hat{\F}^{\omega}_{X}(\eta),\hat{\F}^{\omega}_{Y}(\eta)\big)^\top\Big) -\ldm{}{\flo{\eta T}}-\ldm{\dagger}{\flo{\eta T}}}_{\Hs} \Big)^{\gamma},
\end{align*}
where $\ldm{}{\flo{\eta T}}$ is defined in \eqref{eq:ldm}. \eqref{eq:limmT} then follows from \autoref{lem:approx} in the Appendix together with \eqref{eq:Hsnorm} and \eqref{eq:approxint2}.
Next, write the real part of a complex random variable as a linear combination with its conjugate and apply the triangle inequality to find
\begin{align*}
\E\Big( \sup_{\eta \in [0,1]}\Big\vert\Re(\widetilde{\mathcal{Z}}^{\X,\omega}_{\flo{\eta T}})-\Re(\ldm{}{\flo{ \eta T}})\Big\vert\Big)^{\gamma} &
\le \E\Big( \frac{1}{2}\sup_{\eta \in [0,1]}\Big\vert\widetilde{\mathcal{Z}}^{\X,\omega}_{\flo{\eta T}}-\ldm{}{\flo{ \eta T}}\Big\vert+\frac{1}{2}\sup_{\eta \in [0,1]}\Big\vert\overline{\widetilde{\mathcal{Z}}^{\X,\omega}_{\flo{\eta T}}}-\overline{\ldm{}{\flo{ \eta T}}}\Big\vert\Big)^{\gamma}
\\& \le \frac{1}{2}\E\Big( \sup_{\eta \in [0,1]}\Big\vert\widetilde{\mathcal{Z}}^{\X,\omega}_{\flo{\eta T}}-\ldm{}{\flo{ \eta T}}\Big\vert\Big)^{\gamma}+ \frac{1}{2}\E\Big( \sup_{\eta \in [0,1]}\Big\vert\overline{\widetilde{\mathcal{Z}}^{\X,\omega}_{\flo{\eta T}}}-\overline{\ldm{}{\flo{ \eta T}}}\Big\vert\Big)^{\gamma}.
\end{align*}
Using \autoref{lem:approx} and \eqref{eq:approxint2} it then follows  that 
\begin{align*}
&\lim_{m \to \infty} \lim_{T \to \infty} \mathbb{P}\Big( d_U\big(\big\{\Re(\widetilde{\mathcal{Z}}^{\X}_{\flo{\eta T}})\big\}_{\eta \in [0,1]},\big\{\Re(\widetilde{\mathcal{M}}^{\X}_{\flo{\eta T},m})\big\}_{\eta \in [0,1]}\big) \ge \epsilon \Big)
\\&\le \lim_{m \to \infty} \lim_{T \to \infty}  \epsilon^{-\gamma}(b-a)^{\gamma-1}  \frac{1}{2} \int_a^b\Big\{\E\Big( \sup_{\eta \in [0,1]}\eta \Big\vert\widetilde{\mathcal{Z}}^{\X,\omega}_{\flo{\eta T}}-{\mathcal{M}}^{\X,(\omega)}_{\flo{\eta T},m}\Big\vert\Big)^{\gamma}+ \E\Big( \sup_{\eta \in [0,1]} \eta\Big\vert\overline{\widetilde{\mathcal{Z}}^{\X,\omega}_{\flo{\eta T}}}-\overline{{\mathcal{M}}^{\X,(\omega)}_{\flo{\eta T},m}}\Big\vert\Big)^{\gamma}\Big\}d\omega =0,
\end{align*}
which proves \eqref{eq:limmT}. Consequently, an application of \autoref{thm:NBMf} yields 
 \begin{align*}
\lim_{T \to \infty} \mathbb{P}\Big( \big\{\Re(\widetilde{\mathcal{Z}}^{X}_{\flo{\eta T}}) \big\}_{\eta \in [0,1]} \in F\Big)&
\le \lim_{m \to \infty} \lim_{T \to \infty}\mathbb{P}\Big(\big\{\Re(\widetilde{\mathcal{M}}^{\X}_{\flo{\eta T},m})\big\}_{\eta \in [0,1]} \in F_{d_S,\epsilon}\Big)\\&
\le \lim_{m \to \infty} \mathbb{P}\Bigg( \Big\{ \tau_{m,XY}  \cdot \mathbb{B}(\eta)\Big \}_{\eta \in [0,1]} \in F_{d_S,\epsilon} \Bigg)
\\&=  \mathbb{P}\Bigg( \Big\{\tau_{XY} \cdot \mathbb{B}(\eta)\Big \}_{\eta \in [0,1]} \in F_{d_S,\epsilon} \Bigg),
\end{align*}
where the last equality follows by taking the limit with respect to $m$ of $\Gamma^{\omega}_{\X,m}$ and $\Sigma^{\omega}_{\X,m}$ to obtain the limiting covariance structure \citep[see Proposition 3.2 of ][]{vD19}. Taking $\epsilon \downarrow 0$, we obtain
\begin{align*}
\Big\{\Re(\widetilde{\mathcal{Z}}^{\X}_{\flo{\eta T}})\Big\}_{\eta \in [0,1]}  \,{\rightsquigarrow}\,  \Bigg\{ \eta {\tau_{XY}} \mathbb{B}(\eta) \Bigg \}_{\eta \in [0,1]}
\end{align*}
\autoref{thm:ZXdistint} now follows from \autoref{lem:BrMbias} in the Appendix 
and from noting that
\begin{align*} \Big \vert \frac{\sqrt{b_{T}}}{\sqrt{\kappa T}}- \frac{1}{\mathcal{W}_{b_T}} \Big \vert &= \Big \vert  \frac{\sqrt{b_{T}}}{\sqrt{\kappa T}} \frac{\mathcal{W}_{b_T}}{\mathcal{W}_{b_T}}- \frac{1}{\mathcal{W}_{b_T}} \Big \vert 
=\frac{1}{\mathcal{W}_{b_T}} \Big \vert  \frac{\sqrt{b_{T}}}{\sqrt{\kappa T}}\Big( \frac{\sqrt{T\kappa}}{\sqrt{b_T }}(1+o(1))\Big)- 1 \Big \vert =o(\frac{1}{\mathcal{W}_{b_T}})=o(1),
\end{align*}
where we used that under \autoref{as:Weights}
$\mathcal{W}^2_{b_T} =  \sum_{|h|< T} (T-|h|)w(b_T h )^2 = \frac{T}{b_T}(\kappa+o(1))$.
\end{proof}

%
%

 \section*{Acknowledgements}
 This work has been supported by the Collaborative Research Center ``Statistical modeling of nonlinear dynamic processes'' (SFB 823, Teilprojekt A1, C1) of the German Research Foundation
(DFG).  The authors would like to thank the referees and the editor for their constructive comments on the first version of this manuscript.
%
%

\begin{supplement}
\textbf{Supplement to ``Pivotal tests for relevant differences in the second order dynamics of functional time series''}. 
The supplement contains the appendices with proofs of the statements presented in this paper as well as further auxiliary lemmas and technical results necessary to complete the proofs. The supplement furthermore contains an application of the proposed methodology to resting state fMRI data. 
\end{supplement}



\small{}

\appendix


\section{Proofs of main statements} \label{sec:mainstat}
\def\theequation{A.\arabic{equation}}
\setcounter{equation}{0}
\subsection{Proofs of statements from \autoref{sec3}}\label{sec:proofshyps}
\begin{proof}[Proof of Proposition \autoref{prop:consF}]
This follows from an adjustment of the proof of Theorem 4.1(ii) of \cite{vD19} for the value of $\ell$. Details are omitted. 
\end{proof}

\begin{proof}[Proof of \autoref{lem:Fbound}]
Using Tonelli's theorem,
\begin{align*}
\E\Big(\sup_{\eta\in [0,1]}\int_{a}^{b}\bigsnorm{\widehat{M}_{\hat\F}(\eta,\omega)-\eta {M}_{\F}(\omega) }^2_2  d\omega\Big)
 \le\int_{a}^{b} \E\Big(\sup_{\eta \in [0,1]}\bigsnorm{\widehat{M}_{\hat\F}(\eta,\omega)-\eta {M}_{\F}(\omega) }^2_2 \Big)  d\omega
\end{align*}
Observe that
\[
\bigsnorm{\widehat{M}_{\hat\F}(\eta,\omega)-\eta {M}_{\F}(\omega) }^2_2 \le 2   \snorm{\eta\big(\hat{\F}_X^{(\omega)}(\eta)-\F_X^{(\omega)}\big)}^2_2 + 2  \snorm{\eta\big(\hat{\F}^{(\omega)}_Y(\eta)-{\F}^{(\omega)}_Y\big)}^2_2,
\]
and that $\frac{\eta}{{\lfloor \eta T\rfloor}} =\frac{\eta T}{\eta T}\frac{\eta}{{\lfloor \eta T\rfloor}} = \frac{1}{T} \big(\frac{\eta T}{\flo{\eta T}} -1\big)+\frac{1}{T}$. Jensen's inequality and Minkowski's inequality yield
\begin{align*}
\E\sup_{\eta \in [0,1]} \bigsnorm{\eta\big(\hat{\F}_X^{(\omega)}(\eta)-\F_X^{(\omega)}\big)}^2_2 &
%
{ \le \Big( \E \big( \sup_{\eta \in [0,1]}\bigsnorm{\frac{1}{T_1} \sum_{s=1}^{\flo{\eta T_1}}\Big(\sum_{t=1}^{\flo{\eta T_1}} \tilde{w}^{(\omega)}_{b_1,s,t} (X_s \otimes X_t) - \F_X^{(\omega)}\Big)}_2\big)^{\gamma}\Big)^{2/\gamma}}
\\
&{ +\sup_{\eta \in [0,1]} \Big \vert \frac{\eta T}{\flo{\eta T}}-1\Big \vert^2 \Big( \E  \big(\sup_{\eta \in [0,1]}\bigsnorm{\frac{1}{T_1} \sum_{s=1}^{\flo{\eta T_1}}\Big(\sum_{t=1}^{\flo{\eta T_1}} \tilde{w}^{(\omega)}_{b_1,s,t} (X_s \otimes X_t) - \F_X^{(\omega)}\Big)}_2\big)^{\gamma}\Big)^{2/\gamma}}
\\&  = O( b^{-1}_1 T_1^{-1}) + O(b^{2 \ell}_1 ).
\end{align*}
where we used \eqref{eq:floor} and \eqref{eq:varB1} and \eqref{eq:biasB1} as in the proof of  \autoref{lem:boundseqSDO}. In complete analogy, we obtain 
\begin{align*}
\E\sup_{\eta \in [0,1]} \bigsnorm{\eta\big(\hat{\F}_Y^{(\omega)}(\eta)-\F_Y^{(\omega)}\big)}^2_2  = O( b^{-1}_2 T_2^{-1}) + O(b^{2 \ell}_2 ).
\end{align*}
The statement now follows from \autoref{as:bwrates} and \autoref{as:ratiorates}. 
\end{proof}

\begin{proof}[Proof of \autoref{thm:hatZFhyp1}]
This follows from \autoref{lem:Fbound} and \autoref{lem:floor} and from using that $c \in \cnum$, $c+\overline{c} = 2 \Re(c)$.
\end{proof}

\begin{proof}[Proof of \autoref{lem:errortildeM}]
Denote the perturbation $\hat{\Delta}_\eta \F^{\omega} = \hat{\F}^{\omega} (\eta)- \F^{\omega}$.
We first consider $\mathrm{(i)}$. Observe that by Minkowski's inequality it suffices to show 
\begin{align*}
\sup_{\eta \in [0,1]}&\int_a^b \eta \Bigsnorm{\phixT{k}(\eta) - \phix{k}-\sum_{\substack{j,  \jpr=1,\\ \{j,  \jpr = k\}^\complement}}^{\infty} \frac{\biginprod{\F_X^{(\omega)} \widetilde{\otimes}\hat{\Delta}_\eta \F_X^{(\omega)}  +\hat{\Delta}_\eta \F_X^{(\omega)}  \widetilde{\otimes} \F_X^{(\omega)}}{\phix{j\jpr}  \widetilde{\otimes} \phix{k}}_{S_2} \phix{j\jpr}}{\lamx{\omega}{j} \lamx{\omega}{\jpr} -(\lamx{\omega}{k})^2}}_2 d\omega  
\\& = O_p\Big(\frac{\log^{4/\gamma}(T_1)}{b_{1} T_1}\Big) \tageq \label{eq:approxErrorX}
\end{align*}
and  
\begin{align*}
\sup_{\eta \in [0,1]}&\int_a^b \eta \Bigsnorm{\phiyT{k}(\eta) - \phiy{k}-\sum_{\substack{j,  \jpr=1,\\ \{j,  \jpr = k\}^\complement}}^{\infty} \frac{\biginprod{\F_Y^{(\omega)} \widetilde{\otimes}\hat{\Delta}_\eta \F_Y^{(\omega)}  +\hat{\Delta}_\eta \F_Y^{(\omega)}  \widetilde{\otimes} \F_Y^{(\omega)}}{\phiy{j\jpr}  \widetilde{\otimes} \phiy{k}}_{S_2} \phiy{j\jpr}}{\lamy{\omega}{j} \lamy{\omega}{\jpr} -(\lamy{\omega}{k})^2}}_2 d\omega  
\\& = O_p\Big(\frac{\log^{4/\gamma}(T_2)}{b_{2} T_2}\Big).
\end{align*}
As the proof for both processes is the same, we shall focus on \eqref{eq:approxErrorX} and drop the subscript $X$ in the following.
From Proposition \autoref{prop:bexp}, we have
\begin{align*}
\hat{\Pi}^{(\omega)}_{k}-{\Pi}^{(\omega)}_{k}&= \biginprod{\hat{\Pi}^{\omega}_{k}(\eta)-{\Pi}^{\omega}_{k}}{{\Pi}^{\omega}_{k}}\Pi^{(\omega)}_{k}\\&+ \sum_{\substack{j,  \jpr=1,\\ \{j,  \jpr = k\}^\complement}}^{\infty} \frac{1}{\lambda^{(\omega)}_{j}\lambda^{(\omega)}_{\jpr} -(\lambda^{(\omega)}_{k})^2}\Big[\biginprod{\big(\F^{\omega} \widetilde{\otimes} \F^{\omega} -\hat{\F}^{\omega}(\eta) \widetilde{\otimes} \hat{\F}^{\omega} (\eta)\big) \Pi^{(\omega)}_{k}}{\Pi^{(\omega)}_{j\jpr}}_{S_2}+
\biginprod{E^{(\omega)}_{k,b_T}(\eta)}{\Pi^{(\omega)}_{j\jpr}}_{S_2}\Big] \Pi^{(\omega)}_{j\jpr},
\end{align*}
where
\begin{align*}
E^{(\omega)}_{k,b_T}(\eta) =\Big[(\F^{\omega} \widetilde{\otimes} \F^{\omega} -\hat{\F}^{\omega}(\eta) \widetilde{\otimes} \hat{\F}^{\omega} (\eta))\Big]\Big[\hat{\Pi}^{(\omega)}_{k}-{\Pi}^{(\omega)}_{k}\Big]+\Big[\big(\lambda^{(\omega)}_k(\eta)\big)^2 \ -\big( \lambda^{(\omega)}_{k}\big)^2\Big]\Big[\hat{\Pi}^{(\omega)}_{k}-{\Pi}^{(\omega)}_{k}\Big]. \tageq \label{eq:Eerror}
\end{align*}
Elementary calculations yield
\begin{align*}
\hat{\F}^{\omega}(\eta) \widetilde{\otimes} \hat{\F}^{\omega} (\eta)- \F^{\omega} \widetilde{\otimes} \F^{\omega} 
&=\F^{\omega} \widetilde{\otimes} \big( \hat{\Delta}_\eta \F^{\omega}\big) +\big( \hat{\Delta}_\eta \F^{\omega} \big)   \widetilde{\otimes} {\F}^{\omega} +\big( \hat{\Delta}_\eta \F^{\omega}\big) \widetilde{\otimes} \big( \hat{\Delta}_\eta \F^{\omega}\big).
\end{align*}
Therefore, we will show that
\begin{align*}
&\sup_{\eta \in [0,1]}\eta \int_a^b\Bigsnorm{ \biginprod{\hat{\Pi}^{\omega}_{k}(\eta)-{\Pi}^{\omega}_{k}}{{\Pi}^{\omega}_{k}}_{S_2}\Pi^{(\omega)}_{k}+ \sum_{\substack{j,  \jpr=1,\\ \{j,  \jpr = k\}^\complement}}^{\infty} \frac{\Big[\biginprod{\hat{\Delta}_\eta \F^{\omega} \widetilde{\otimes} \hat{\Delta}_\eta \F^{\omega} \big) \Pi^{(\omega)}_{k} }{\Pi^{(\omega)}_{j\jpr}}_{S_2}+\biginprod{E^{(\omega)}_{k,b_T}(\eta)}{\Pi^{(\omega)}_{j\jpr}}_{S_2} \Big] \Pi^{(\omega)}_{j\jpr}}{\lambda^{(\omega)}_{j}\lambda^{(\omega)}_{\jpr} -(\lambda^{(\omega)}_{k})^2} }_2 d\omega 
\\&= O_p\Big(\frac{\log^{4/\gamma}(T_1)}{b_{T_1} T_1}\Big).
\end{align*}

By Minkowski's inequality, 
\begin{align*}
&\sup_{\eta \in [0,1]}\eta \int_a^b\Bigsnorm{ \biginprod{\hat{\Pi}^{\omega}_{k}(\eta)-{\Pi}^{\omega}_{k}}{{\Pi}^{\omega}_{k}}_{S_2}\Pi^{(\omega)}_{k}+ \sum_{\substack{j,  \jpr=1,\\ \{j,  \jpr = k\}^\complement}}^{\infty} \frac{\Big[\biginprod{\hat{\Delta}_\eta \F^{\omega} \widetilde{\otimes} \hat{\Delta}_\eta \F^{\omega} \big) \Pi^{(\omega)}_{k} }{\Pi^{(\omega)}_{j\jpr}}_{S_2}+\biginprod{E^{(\omega)}_{k,b_T}(\eta)}{\Pi^{(\omega)}_{j\jpr}}_{S_2} \Big] \Pi^{(\omega)}_{j\jpr}}{\lambda^{(\omega)}_{j}\lambda^{(\omega)}_{\jpr} -(\lambda^{(\omega)}_{k})^2} }_2 d\omega 
\\& \le \sup_{\eta \in [0,1]}\eta \int_a^b \Bigg\{\Bigsnorm{ \biginprod{\hat{\Pi}^{\omega}_{k}(\eta)-{\Pi}^{\omega}_{k}}{{\Pi}^{\omega}_{k}}_{S_2}\Pi^{(\omega)}_{k}}_2 + \Bigsnorm{\frac{\biginprod{\big( \hat{\F}^{\omega}(\eta)  - \F^{\omega}\big) \widetilde{\otimes} \big( \hat{\F}^{\omega} (\eta)- \F^{\omega} \big) \Pi^{(\omega)}_{k} }{\Pi^{(\omega)}_{j\jpr}}_{S_2} \Pi^{(\omega)}_{j\jpr}}{\lambda^{(\omega)}_{j}\lambda^{(\omega)}_{\jpr} -(\lambda^{(\omega)}_{k})^2} }_2 \\& \phantom{\sup_{\eta \in [0,1]}\eta  \int_a^b \quad}+\Bigsnorm{ \sum_{\substack{j,  \jpr=1,\\ \{j,  \jpr = k\}^\complement}}^{\infty} \frac{\biginprod{E^{(\omega)}_{k,b_T}(\eta)}{\Pi^{(\omega)}_{j\jpr}}_{S_2}}{\lambda^{(\omega)}_{j}\lambda^{(\omega)}_{\jpr} -(\lambda^{(\omega)}_{k})^2}\Pi^{(\omega)}_{j\jpr}  }_2\Bigg\} d\omega. \tageq \label{eq:bigerror}
\end{align*}
We treat these terms separately.  Firstly, observe that for any $A,B \in S_2(\mathcal{H})$, we can write
\begin{align*}
\snorm{A}^2_2+\snorm{B}^2_2 = \snorm{A-B}^2_2 + \inprod{A}{B}_{S_2} + \overline{\inprod{A}{B}}_{S_2}.
\end{align*}
Furthermore observe that $\snorm{\,{\Pi}^{\omega}_{k}(\eta)}_2 = \snorm{\,\hat{\Pi}^{\omega}_{k}(\eta)}_2 =1$ and that
\begin{align*}
 \biginprod{\hat{\Pi}^{\omega}_{k}(\eta)-{\Pi}^{\omega}_{k}}{{\Pi}^{\omega}_{k}}_{S_2} = |\inprod{\hat{\phi}^{(\omega)}_k}{{\phi}^{(\omega)}_k}|^2 - 1 =   |\overline{\inprod{\hat{\phi}^{(\omega)}_k}{{\phi}^{(\omega)}_k}}|^2-1 = \overline{\biginprod{\hat{\Pi}^{\omega}_{k}(\eta)-{\Pi}^{\omega}_{k}}{{\Pi}^{\omega}_{k}}}_{S_2}.
\end{align*}
Rearranging terms yields $ \biginprod{\hat{\Pi}^{\omega}_{k}(\eta)-{\Pi}^{\omega}_{k}}{{\Pi}^{\omega}_{k}}_{S_2} = -\frac{1}{2}\snorm{\hat{\Pi}^{\omega}_{k}(\eta)-{\Pi}^{\omega}_{k}}^2_2$. We obtain, using \autoref{lem:diffboundEig}, 
\begin{align*}
\sup_{\eta \in [0,1]}\eta \int_a^b \Bigsnorm{ \biginprod{\hat{\Pi}^{\omega}_{k}(\eta)-{\Pi}^{\omega}_{k}}{{\Pi}^{\omega}_{k}}_{S_2}\Pi^{(\omega)}_{k}}_2 d\omega & 
\le  \frac{1}{2} \int_a^b  \sup_{\eta \in [0,1]}\eta \snorm{\,\hat{\Pi}^{\omega}_{k}(\eta)-{\Pi}^{\omega}_{k}}^2_2 d\omega \tageq \label{eq:error1}
\\& \le C \sum_{i=2}^{4}\int_a^b  \Big(\sup_{\eta \in [0,1]} \big(\snorm{\eta^{1/i}\big(\hat{\F}_X^{(\omega)}(\eta)-\F_X^{(\omega)}\big)}_2\Big)^i d\omega
\end{align*}
for some constants $C>0$.  Consequently, \autoref{thm:maxPS} yields
\begin{align*}
\sup_{\eta \in [0,1]}\eta \int_a^b \Bigsnorm{ \biginprod{\hat{\Pi}^{\omega}_{k}(\eta)-{\Pi}^{\omega}_{k}}{{\Pi}^{\omega}_{k}}_{S_2}\Pi^{(\omega)}_{k}}_2 d\omega  =O_p\Big(\frac{\log^{4/\gamma}(T)}{b_T T}\Big) = o_p\Big(\frac{1}{\sqrt{b_{T_1} T_1+b_{T_2} T_2}}\Big).
\end{align*}
To treat the other two terms of \eqref{eq:bigerror}, we first observe that 
\begin{align*}
\sqrt{\frac{1}{\big(\lambda^{(\omega)}_{j}\lambda^{(\omega)}_{\jpr} -(\lambda^{(\omega)}_{k})^2\big)^2}}=\frac{1}{|\lambda^{(\omega)}_{j}\lambda^{(\omega)}_{l} -(\lambda^{(\omega)}_{k})^2|} \le  C
\end{align*}
for some bounded constant $C>0$.  Indeed, recall that $G_{\widetilde{\otimes},k} =  \inf_{j,l:\{j,  l = k\}^\complement}|\lambda^{(\omega)}_{j}\lambda^{(\omega)}_{l}-(\lambda^{(\omega)}_{k})^2|$, and hence, under \autoref{as:eigsep}, $C=1/G_{\widetilde{\otimes},k}<\infty$. 
By the Cauchy-Schwarz inequality and Holder's inequality for operators, we obtain
\begin{align*}
& \sup_{\eta \in [0,1]}\eta \int_a^b\Bigsnorm{\frac{\biginprod{\big( \hat{\F}^{\omega}(\eta)  - \F^{\omega}\big) \widetilde{\otimes} \big( \hat{\F}^{\omega} (\eta)- \F^{\omega} \big) \Pi^{(\omega)}_{k} }{\Pi^{(\omega)}_{j\jpr}}_{S_2} \Pi^{(\omega)}_{j\jpr}}{\lambda^{(\omega)}_{j}\lambda^{(\omega)}_{\jpr} -(\lambda^{(\omega)}_{k})^2} }_2 d\omega
\\& \le C \sup_{\eta \in [0,1]}\eta \int_a^b\bigsnorm{\hat{\F}^{\omega}(\eta)  - \F^{\omega}}^2_2 d\omega
  =O_p\big(\frac{\log^{2/\gamma}(T_1)}{b_{T_1} {T_1}}\big)=o_p\big(\frac{1}{\sqrt{b_{T_1} T_1+b_{T_2} {T_2}}}\big),
\end{align*}
where we used that the eigenprojectors are rank-one operators (and hence elements of $S_1(\mathcal{H})$) and where the order follows from \autoref{lem:boundseqSDO}. For the last term of \eqref{eq:bigerror}, Parseval's identity and orthogonality of the eigenprojectors yield
\begin{align*}
 &\sup_{\eta \in [0,1]}\eta \int_a^b\Bigsnorm{ \sum_{\substack{j,  \jpr=1,\\ \{j,  \jpr = k\}^\complement}}^{\infty} \frac{\biginprod{E^{(\omega)}_{k,b_T}(\eta)}{\Pi^{(\omega)}_{j\jpr}}_{S_2}}{\lambda^{(\omega)}_{j}\lambda^{(\omega)}_{\jpr} -(\lambda^{(\omega)}_{k})^2}\Pi^{(\omega)}_{j\jpr}  }_2 d\omega
\\& \le  C\sup_{\eta \in [0,1]}\eta \int_a^b\Big( \sum_{r,s=1}^{\infty} \Big \vert \sum_{\substack{j,  \jpr=1,\\ \{j,  \jpr = k\}^\complement}}^{\infty} \biginprod{E^{(\omega)}_{k,b_T}(\eta)}{\Pi^{(\omega)}_{j\jpr}}_{S_2}\biginprod{\Pi^{(\omega)}_{j\jpr}}{\Pi^{(\omega)}_{r,s}}_{S_2} \Big \vert^2 \Big)^{1/2}d\omega 
\\& \le
   C\sup_{\eta \in [0,1]}\eta \int_a^b\bigsnorm{E^{(\omega)}_{k,b_T}(\eta)}_2 d\omega. 
\end{align*}
We find using \autoref{lem:diffboundEig}
\begin{align*}
\bigsnorm{E^{(\omega)}_{k,b_T}(\eta)}_2 & \le \Bigg(2\snorm{\F_X^{(\omega)}}_2 \snorm{\hat{\F}_X^{(\omega)}(\eta)-\F_X^{(\omega)}}_2  + \snorm{\hat{\F}_X^{\omega}(\eta)  - \F_X^{\omega}}^2_2 
+C_1 \snorm{\hat{\F}_X^{(\omega)}(\eta)-\F_X^{(\omega)}}_2+ \snorm{\hat{\F}_X^{(\omega)}(\eta)-\F_X^{(\omega)}}^2_2\Bigg)
\\&\phantom{\Bigg\{2} \times C \Big( 2\snorm{\F_X^{(\omega)}}_2\snorm{\hat{\F}_X^{(\omega)}(\eta)-\F_X^{(\omega)}}_2+ \snorm{\hat{\F}_X^{(\omega)}(\eta)-\F_X^{(\omega)}}^2_2  \Big)
\end{align*}
for some bounded constants $C_1, C$. Similar to \eqref{eq:error1}, we thus obtain from \autoref{thm:maxPS} 
\begin{align*}
\tilde{C}\,  \sup_{\eta \in [0,1]}\eta \int_a^b\bigsnorm{E^{(\omega)}_{k,b_T}(\eta)}_2 d\omega 
&
\le \tilde{C}\, \sum_{i=2}^{4} \int_a^b  \Big(\sup_{\eta \in [0,1]} \snorm{\eta^{1/i}\big(\hat{\F}_X^{(\omega)}(\eta)-\F_X^{(\omega)}\big)}_2\Big)^i d\omega
\\& 
 = o_p\Big(\frac{1}{\sqrt{b_{T_1} T_1+b_{T_2} T_2}}\Big).
\end{align*}
This proves $\mathrm{(i)}$. The proof of $\mathrm{(ii)}$ follows along the same lines and is therefore omitted.
\end{proof}

\begin{proof}[Proof of \autoref{thm:hatZbigPi}]
Using \autoref{lem:errortildeM}, it suffices to show that
\begin{align*}
&\int_a^b \biginprod{\widetilde{M}_{\Pi,k}(\eta,\omega)}{\eta M_{\Pi,k}(\omega)}_{S_2}d\omega+ \int_a^b \overline{\biginprod{\widetilde{M}_{\Pi,k}(\eta,\omega)}{\eta M_{\Pi,k}(\omega)}}_{S_2}d\omega + o_P\Big(\frac{1}{\sqrt{b_1 T_1+b_2T_2}}\Big)
\\& =\int_a^b \Big( \frac{1}{\sqrt{b_1 T_1}}\Re\Big(\biginprod{ \mathcal{Z}^{X,\omega}_{T,\eta} }{\widetilde{\Pi}^{(\omega)}_{X,Y,k}}_{S_2} \Big)
+ \frac{1}{\sqrt{b_2 T_2}}\Re\Big(\biginprod{ \mathcal{Z}^{Y,\omega}_{T,\eta} }{\widetilde{\Pi}^{(\omega)}_{X,Y,k}}_{S_2} \Big)\Big) d\omega+o_P\Big(\frac{1}{\sqrt{b_1 T_1+b_2T_2}}\Big).
\end{align*}
To ease notation, denote
\[U^{X,\omega}_{T,\eta} =\F_X^{(\omega)} \widetilde{\otimes} Z^{X,\omega}_{T,\eta}+ Z^{X,\omega}_{T,\eta}\widetilde{\otimes} \F_X^{(\omega)} \quad \text{and} \quad U^{Y,\omega}_{T,\eta} =\F_Y^{(\omega)} \widetilde{\otimes} Z^{Y,\omega}_{T,\eta}+ Z^{Y,\omega}_{T,\eta}\widetilde{\otimes} \F_Y^{(\omega)}. \tageq \label{eq:Uxom}  \]
Using orthogonality of the eigenfunctions, we can write
\begin{align*}
&\sqrt{b_{T_1} T_1+b_{T_2} T_2}\int_{a}^{b}\biginprod{\widetilde{M}_{\Pi,k}(\eta,\omega)}{\eta M^{(k)}(\omega)}_{S_2}d\omega \\& 
 =\sqrt{b_{T_1} T_1+b_{T_2} T_2}\int_{a}^{b}\biginprod{\frac{1}{\sqrt{b_{T_1} T_1}}\sum_{\substack{j,  \jpr=1,\\ \{j,  \jpr = k\}^\complement}}^{\infty}  \frac{1}{\lambda^{(\omega)}_{X,j}\lambda^{(\omega)}_{X,\jpr} -(\lambda^{(\omega)}_{X,k})^2} \biginprod{U^{X,\omega}_{T,\eta} \phix{k}}{\phix{j\jpr}}_{S_2}  \phix{j\jpr}}{\eta \Big(\phix{k}-\phiy{k}\Big) }_{S_2}d\omega 
 \\& 
\phantom{\approx}
-\sqrt{b_{T_1} T_1+b_{T_2} T_2}\int_{a}^{b}\biginprod{\frac{1}{\sqrt{b_{T_2} T_2}}\sum_{\substack{j,  \jpr=1,\\ \{j,  \jpr = k\}^\complement}}^{\infty}  \frac{1}{\lambda^{(\omega)}_{Y,j}\lambda^{(\omega)}_{Y,\jpr} -(\lambda^{(\omega)}_{Y,k})^2}  \biginprod{U^{Y,\omega}_{T,\eta}  \phiy{k}}{\phiy{j\jpr}}_{S_2}  \phiy{j\jpr}}{\eta \Big(\phix{k}-\phiy{k}\Big) }_{S_2}d\omega 
\\& 
 =\eta\sqrt{b_{T_1} T_1+b_{T_2} T_2}\frac{-1}{\sqrt{b_{T_1} T_1}} \int_{a}^{b}\biginprod{\sum_{\substack{j,  \jpr=1,\\ \{j,  \jpr = k\}^\complement}}^{\infty}  \frac{\biginprod{U^{X,\omega}_{T,\eta} \phix{k}}{\phix{j\jpr}}_{S_2} }{\lambda^{(\omega)}_{X,j}\lambda^{(\omega)}_{X,\jpr} -(\lambda^{(\omega)}_{X,k})^2} \phix{j\jpr}}{\phiy{k} }_{S_2}d\omega  \tageq  \label{eq:MtilMX}
 \\& 
\phantom{\approx}
\eta\sqrt{b_{T_1} T_1+b_{T_2} T_2} \frac{-1}{\sqrt{b_{T_2} T_2}}\int_{a}^{b}\biginprod{\sum_{\substack{j,  \jpr=1,\\ \{j,  \jpr = k\}^\complement}}^{\infty}  \frac{\biginprod{U^{Y,\omega}_{T,\eta}  \phiy{k}}{\phiy{j\jpr}}_{S_2} }{\lambda^{(\omega)}_{Y,j}\lambda^{(\omega)}_{Y,\jpr} -(\lambda^{(\omega)}_{Y,k})^2}   \phiy{j\jpr}}{\phix{k} }_{S_2}d\omega.   \tageq  \label{eq:MtilMY}
\end{align*}
To simplify the expression, 
observe that the properties of the Kronecker product and the Hilbert-Schmidt inner product together with orthogonality of the eigenfunctions yield
\begin{align*}
 \biginprod{  (\F_X^{(\omega)} \widetilde{\otimes} Z^{X,\omega}_{T,\eta})\phix{k}}{\phix{j\jpr}}_{S_2}& = 
 \biginprod{  \F_X^{(\omega)}(\phi^{(\omega)}_{X,k}) {\otimes} Z^{X,\omega}_{T,\eta}(\phi^{(\omega)}_{X,k})}{\phix{j\jpr}}_{S_2}
 \\& = \lambda^{(\omega)}_{k}\biginprod{\phi^{(\omega)}_{X,k}}{\phi^{(\omega)}_{X,j}} \biginprod{\phi^{(\omega)}_{X,\jpr}}{ Z^{X,\omega}_{T,\eta}(\phi^{(\omega)}_{X,k})} = \lambda^{(\omega)}_{X,k} \biginprod{\phi^{(\omega)}_{X,\jpr}}{ Z^{X,\omega}_{T,\eta} (\phi^{(\omega)}_{X,k})}
\end{align*}
for $j=k$ and zero otherwise. Similarly,
\begin{align*}
 \biginprod{  (Z^{X,\omega}_{T,\eta}\ \widetilde{\otimes} \F_X^{(\omega)})\phix{k}}{\phix{j\jpr}}_{S_2}
 =  \lambda^{(\omega)}_{X,k} \biginprod{  Z^{X,\omega}_{T,\eta}(\phi^{(\omega)}_{X,k})}{\phi^{(\omega)}_{X,j}}
\end{align*}
for $\jpr = k$ and zero otherwise. From this and  \eqref{eq:Uxom}, we obtain 
\begin{align*}
&\sum_{\substack{j,  \jpr=1,\\ \{j,  \jpr = k\}^\complement}}^{\infty}  \frac{1}{\lambda^{(\omega)}_{X,j}\lambda^{(\omega)}_{X,\jpr} -(\lambda^{(\omega)}_{X,k})^2} \biginprod{ U^{X,\omega}_{T,\eta} \phix{k} }{\phix{j\jpr}}_{S_2} \phix{j\jpr}\\& 
=\sum_{\jpr \ne k} \frac{\lambda^{(\omega)}_{k} \inprod{\phi^{(\omega)}_{\jpr}}{ Z^{X,\omega}_{T,\eta} \phi^{(\omega)}_k} }{\lambda^{(\omega)}_{X,k}\lambda^{(\omega)}_{X,\jpr} -(\lambda^{(\omega)}_{X,k})^2}  \phix{k\jpr}+\sum_{j \ne k} \frac{\lambda^{(\omega)}_{k} \inprod{  Z^{X,\omega}_{T,\eta}\phi^{(\omega)}_k}{\phi^{(\omega)}_j}}{\lambda^{(\omega)}_{X,j}\lambda^{(\omega)}_{X,k} -(\lambda^{(\omega)}_{X,k})^2}   \phix{j k}
\end{align*}
which means that the integrand of \eqref{eq:MtilMX} becomes
\begin{align*}
& \sum_{\jpr \ne k} \frac{\lambda^{(\omega)}_{k} \inprod{\phi^{(\omega)}_{\jpr}}{ Z^{X,\omega}_{T,\eta} (\phi^{(\omega)}_k}) }{\lambda^{(\omega)}_{X,k}\lambda^{(\omega)}_{X,\jpr} -(\lambda^{(\omega)}_{X,k})^2}  \biginprod{\phix{k\jpr}}{\phiy{k}}_{S_2}+\sum_{j \ne k} \frac{\lambda^{(\omega)}_{k} \inprod{  Z^{X,\omega}_{T,\eta}(\phi^{(\omega)}_k)}{\phi^{(\omega)}_j}}{\lambda^{(\omega)}_{X,j}\lambda^{(\omega)}_{X,k} -(\lambda^{(\omega)}_{X,k})^2} \biginprod{ \phix{j k}}{\phiy{k}}_{S_2}
=\\&\sum_{j \ne k} \frac{\lambda^{(\omega)}_{k}}{\lambda^{(\omega)}_{X,k}\lambda^{(\omega)}_{X,j} -(\lambda^{(\omega)}_{X,k})^2} \Big[\biginprod{\phi^{(\omega)}_{j}}{ Z^{X,\omega}_{T,\eta} (\phi^{(\omega)}_k)} \biginprod{\phix{k j}}{\phiy{k}}_{S_2}+ \biginprod{  Z^{X,\omega}_{T,\eta}(\phi^{(\omega)}_k)}{\phi^{(\omega)}_j} \biginprod{ \phix{j k}}{\phiy{k}}_{S_2} \Big] 
\end{align*}
whereas for its conjugate, which arises in $\overline{\biginprod{\widetilde{M}_{\Pi,k}(\eta,\omega) }{\eta M^{(k)}(\omega)}}_{S_2}$, we obtain
\begin{align*}
\sum_{j \ne k} \frac{\lambda^{(\omega)}_{k}}{\lambda^{(\omega)}_{X,k}\lambda^{(\omega)}_{X,j} -(\lambda^{(\omega)}_{X,k})^2} \Big[\overline{\biginprod{\phi^{(\omega)}_{j}}{ Z^{X,\omega}_{T,\eta} (\phi^{(\omega)}_k)} \biginprod{\phix{k j}}{\phiy{k}}}_{S_2} +  \overline{\biginprod{  Z^{X,\omega}_{T,\eta}(\phi^{(\omega)}_k)}{\phi^{(\omega)}_j}\biginprod{ \phix{j k}}{\phiy{k}}}_{S_2}\Big]. 
\end{align*}
Next, recall that for complex numbers $c_1, c_2 \in \mathbb{C}$, $c_1+\overline{c_1} = \Re(c_1)+\im \Im(c_1) +  \Re(c_1)-\im \Im(c_1)= 2 \Re(c_1)$ and $\Re(c_1 c_2) = \Re(c_1)\Re(c_2)-\Im(c_1)\Im(c_2)$ and $\Re(\overline{c_1} c_2) = \Re(c_1)\Re(c_2)+\Im(c_1)\Im(c_2)$. Therefore, summing the respective integrands of ${\biginprod{\widetilde{M}_{\Pi,k}(\eta,\omega) }{\eta M^{(k)}(\omega)}}_{S_2}$ and $\overline{\biginprod{\widetilde{M}_{\Pi,k}(\eta,\omega) }{\eta M^{(k)}(\omega)}}_{S_2}$ yields
\begin{align*}
&\sum_{j \ne k} \frac{\lambda^{(\omega)}_{X,k}}{\lambda^{(\omega)}_{X,k}\lambda^{(\omega)}_{X,j} -(\lambda^{(\omega)}_{X,k})^2} \Big[2 \Re\Big(\biginprod{\phi^{(\omega)}_{j}}{ Z^{X,\omega}_{T,\eta} \phi^{(\omega)}_k} \biginprod{\phix{k j}}{\phiy{k}}_{S_2}\Big)+ 2 \Re\Big(\biginprod{  Z^{X,\omega}_{T,\eta}\phi^{(\omega)}_k}{\phi^{(\omega)}_j} \biginprod{ \phix{j k}}{\phiy{k}}_{S_2}\Big) \Big] 
\\&=\sum_{j \ne k} \frac{\lambda^{(\omega)}_{X,k}}{\lambda^{(\omega)}_{X,k}\lambda^{(\omega)}_{X,j} -(\lambda^{(\omega)}_{X,k})^2} \Big[2 \Re\Big(\overline{\biginprod{  Z^{X,\omega}_{T,\eta}(\phi^{(\omega)}_k)}{\phi^{(\omega)}_j}}  \biginprod{\phix{k j}}{\phiy{k}}_{S_2}\Big)+ 2 \Re\Big(\biginprod{  Z^{X,\omega}_{T,\eta}(\phi^{(\omega)}_k)}{\phi^{(\omega)}_j} \biginprod{ \phix{j k}}{\phiy{k}}_{S_2}\Big) \Big] 
\\&=\sum_{j \ne k} \frac{\lambda^{(\omega)}_{X,k}}{\lambda^{(\omega)}_{X,k}\lambda^{(\omega)}_{X,j} -(\lambda^{(\omega)}_{X,k})^2} \Big[4 \Re\Big(\biginprod{  Z^{X,\omega}_{T,\eta}(\phi^{(\omega)}_k)}{\phi^{(\omega)}_j}\Big)\Re\Big(\biginprod{\phix{k j}}{\phiy{k}}_{S_2}\Big) \Big]. 
\end{align*}
The result now follows from applying the same argument to the integrand in \eqref{eq:MtilMY} and noting that $\inprod{  Z^{X,\omega}_{T,\eta}(\phi^{(\omega)}_k)} {\phi^{(\omega)}_j} =\biginprod{  Z^{X,\omega}_{T,\eta}}{\phix{jk}}_{S_2} $.
\end{proof}
\begin{proof}[Proof of \autoref{thm:hatZbiglam}]
Similar to the processes $\big\{\hat{Z}^{[a,b]}_{\F,T_1,T_2}(\eta)\big\}_{\eta \in [0,1]}$ and $\big\{\hat{Z}^{[a,b],(k)}_{\Pi,T_1,T_2}(\eta)\big\}_{\eta \in [0,1]}$, we may write
\begin{align*}
\hat{\mathcal{Z}}^{[a,b],(k)}_{\lambda,T_1,T_2} (\eta)&=\int_{a}^{b} \big\vert\widehat{M}_{\lambda,k}(\eta,\omega)\big\vert -\eta^2\big \vert M_{\lambda,k}(\omega)\big\vert^2 d\omega 
\\&= \int_{a}^{b}\Big\{\big \vert\widehat{M}_{\lambda,k}(\eta,\omega) -\eta M_{\lambda,k}(\omega)\big \vert^2 +\eta \overline{M_{\lambda,k}(\omega)}\big(\widehat{M}_{\lambda,k}(\eta,\omega) -\eta M_{\lambda,k}(\omega)\big)\\&
\phantom{\int_{a}^{b}}+\eta M_{\lambda,k}(\omega) \overline{\big(\widehat{M}_{\lambda,k}(\eta,\omega) -\eta M_{\lambda,k}(\omega)\big)}\Big\}d\omega.\tageq \label{eq:Zlamk}
\end{align*}
We can therefore apply Proposition  \ref{prop:bexplam} and \autoref{lem:errortildeMlam} below to find the result.
\end{proof}
\begin{lemma} \label{lem:errortildeMlam}
Suppose Assumptions \ref{as:depstruc}-\ref{as:ratiorates} are satisfied. Then,
\begin{align*}
\mathrm{(i)} & \sup_{\eta \in [0,1]}\int_a^b \Big \vert \widehat{M}_{\lambda,k}(\eta,\omega) - \eta  M_{\lambda,k}(\omega)
-\widetilde{M}_{\lambda,k}(\eta,\omega)  \Big \vert d\omega = o_P\Big (\frac{1}{\sqrt{b_1 T_1+b_2T_2}} \Big );\\
\mathrm{(ii)} & \sup_{\eta \in [0,1]}\int_a^b\Big \vert \widetilde{M}_{\lambda,k}(\eta,\omega) \Big \vert ^2 d\omega = o_P \Big (\frac{1}{\sqrt{b_1 T_1+b_2T_2}} \Big ),\end{align*}
where 
\begin{align*}
&\widetilde{M}_{\lambda,k}(\eta,\omega) =
 \int_{a}^{b}\frac{1}{\sqrt{b_1 T_1}}\biginprod{\mathcal{Z}^{X,\omega}_{T,\eta}}{\Pi^{(\omega)}_{X,k}}_{S_2} -\frac{1}{\sqrt{b_2 T_2}}\biginprod{\mathcal{Z}^{Y,\omega}_{T,\eta}}{\Pi^{(\omega)}_{Y,k}}_{S_2}  d\omega
\end{align*}
with $\mathcal{Z}^{X,\omega}_{T,\eta}$ and $\mathcal{Z}^{Y,\omega}_{T,\eta}$ respectively given by \eqref{eq:Zx} and \eqref{eq:Zy}. 
\end{lemma}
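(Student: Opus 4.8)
The plan is to follow the template of the proof of \autoref{lem:errortildeM}, taking advantage of the fact that eigenvalues are scalar-valued so that no Kronecker-tensor perturbation series is required and the argument is correspondingly lighter. For part $\mathrm{(i)}$ I would apply Proposition \autoref{prop:bexplam} separately to the component processes $\{X_t\}$ and $\{Y_t\}$. Writing $\hat{\Delta}_\eta\F^{(\omega)}_X := \hat{\F}^{(\omega)}_X(\eta)-\F^{(\omega)}_X$, the expansion gives
\[
\eta\big(\lamxT{k}(\eta)-\lamx{\omega}{k}\big) = \eta\biginprod{\hat{\Delta}_\eta\F^{(\omega)}_X}{\Pi^{(\omega)}_{X,k}}_{S_2} + \eta\biginprod{E^{(\omega)}_{\lambda,k,b_1}(\eta)}{\Pi^{(\omega)}_{X,k}}_{S_2},
\]
and the analogous identity for $Y$. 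Subtracting the two and recalling that $\widehat{M}_{\lambda,k}(\eta,\omega)-\eta M_{\lambda,k}(\omega)=\eta(\lamxT{k}(\eta)-\lamx{\omega}{k})-\eta(\lamyT{k}(\eta)-\lamy{\omega}{k})$, the two leading inner-product terms are exactly the target $\widetilde{M}_{\lambda,k}(\eta,\omega)$ once I convert $\eta\,\hat{\Delta}_\eta\F^{(\omega)}_X$ into $\tfrac{1}{\sqrt{b_1T_1}}\mathcal{Z}^{X,\omega}_{T,\eta}$ using linearity of $\biginprod{\cdot}{\Pi^{(\omega)}_{X,k}}_{S_2}$. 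This conversion differs only by the floor factor $\tfrac{\eta T_1}{\flo{\eta T_1}}-1$, which is controlled by \autoref{lem:floor} and is of negligible order under \autoref{as:bwrates}, exactly as in the proof of \autoref{lem:Fbound}.

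It then remains, for part $\mathrm{(i)}$, to show that the two remainder terms $\eta\,\biginprod{E^{(\omega)}_{\lambda,k,b_i}(\eta)}{\Pi^{(\omega)}_{\cdot,k}}_{S_2}$, integrated over $[a,b]$ and with supremum taken over $\eta\in[0,1]$, are $o_P\big(1/\sqrt{b_1T_1+b_2T_2}\big)$. Since the eigenprojectors are rank-one (hence $\snorm{\Pi^{(\omega)}_{X,k}}_2=1$), the Cauchy--Schwarz inequality bounds each remainder by $\snorm{E^{(\omega)}_{\lambda,k,b_i}(\eta)}_2$, and the explicit form of $E^{(\omega)}_{\lambda,k,b_i}(\eta)$ in Proposition \autoref{prop:bexplam} together with \autoref{lem:diffboundEig} shows this is dominated by $\sum_{i=2}^{4}\snorm{\hat{\Delta}_\eta\F^{(\omega)}_X}^i_2$ (and its $Y$-counterpart). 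Applying Tonelli's theorem to interchange $\E$ and $\int_a^b$, followed by the maximal inequality \autoref{thm:maxPS} as in \eqref{eq:error1}, yields the order $O\big(\log^{2/\gamma}(T)/(b_TT)\big)=o\big(1/\sqrt{b_1T_1+b_2T_2}\big)$ uniformly in $\eta$, which establishes $\mathrm{(i)}$.

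For part $\mathrm{(ii)}$ I would argue directly from the definition of $\widetilde{M}_{\lambda,k}(\eta,\omega)$. Using $|x-y|^2\le 2|x|^2+2|y|^2$ and Cauchy--Schwarz with $\snorm{\Pi^{(\omega)}_{\cdot,k}}_2=1$,
\[
\sup_{\eta\in[0,1]}\int_a^b\big|\widetilde{M}_{\lambda,k}(\eta,\omega)\big|^2 d\omega \le \frac{2}{b_1T_1}\sup_{\eta}\int_a^b\snorm{\mathcal{Z}^{X,\omega}_{T,\eta}}^2_2 d\omega + \frac{2}{b_2T_2}\sup_{\eta}\int_a^b\snorm{\mathcal{Z}^{Y,\omega}_{T,\eta}}^2_2 d\omega.
\]
By \autoref{lem:boundseqSDO} each of the two suprema of integrated squared Hilbert--Schmidt norms is $O_P(1)$, so the right-hand side is $O_P\big(1/(b_1T_1)+1/(b_2T_2)\big)$. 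Since $b_1T_1\to\infty$ and, under \autoref{as:ratiorates}, $b_1T_1\asymp b_2T_2\asymp b_1T_1+b_2T_2$, this is $O_P\big(1/(b_1T_1+b_2T_2)\big)=o_P\big(1/\sqrt{b_1T_1+b_2T_2}\big)$, which is $\mathrm{(ii)}$.

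The main obstacle is the uniformity in $\eta\in[0,1]$ in part $\mathrm{(i)}$: the error term must be controlled by a single maximal inequality valid over all sequential truncation levels simultaneously, including small $\eta$ where $\flo{\eta T_i}$ is tiny and the floor corrections are most delicate. This is precisely where \autoref{thm:maxPS} and \autoref{lem:boundseqSDO} do the heavy lifting; once these maximal bounds are in place, the remaining steps are the routine scalar analogues of the eigenprojector argument in \autoref{lem:errortildeM}.
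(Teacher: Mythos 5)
Your proposal follows essentially the same route as the paper, which likewise invokes Proposition \ref{prop:bexplam} together with Lemma \ref{lem:diffboundEig} to show $\sqrt{b_1T_1+b_2T_2}\,\sup_{\eta}\int_a^b\snorm{E^{(\omega)}_{\lambda,k,b_T}(\eta)}_2\,d\omega\to 0$ in probability and then concludes via Lemma \ref{lem:floor}; the only cosmetic quibble is that in part (ii) the suprema $\sup_\eta\int_a^b\snorm{\mathcal{Z}^{X,\omega}_{T,\eta}}_2^2\,d\omega$ are $O_P(\log^{2/\gamma}(T))$ rather than $O_P(1)$ by Theorem \ref{thm:maxPS}, which still yields the claimed $o_P\big(1/\sqrt{b_1T_1+b_2T_2}\big)$.
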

\begin{proof}[Proof of \autoref{lem:errortildeMlam}]
Similar to the proof of \autoref{lem:errortildeM}, we can show (in this case using Proposition \autoref{prop:bexplam} and \autoref{lem:diffboundEig}) that \[\sqrt{b_1 T_1+b_2T_2}\sup_{\eta \in [0,1]}\int^b_a  \snorm{E^{(\omega)}_{\lambda, k,b_T}(\eta)}_2 d\omega\] converges to zero in probability using as $T_1, T_2 \to \infty$. The result then follows from \autoref{lem:floor}. The details are omitted for the sake of brevity. 
\end{proof}

\subsection{Perturbations of eigenelements - proofs of Proposition \autoref{prop:bexp} and \ref{prop:bexplam} } \label{sec:Eigapprox}

\begin{proof}[Proof of Proposition \autoref{prop:bexp}]
We write 
\begin{align*}
\hat{\F}^{\omega} (\eta)\widetilde{\otimes}\hat{\F}^{\omega} (\eta)= \F^{\omega}\widetilde{\otimes} \F^{\omega}+\underbrace{ (\hat{\F}^{\omega}(\eta) \widetilde{\otimes} \hat{\F}^{\omega} (\eta)-\F^{\omega} \widetilde{\otimes} \F^{\omega} )}_{=\Delta_{\eta} \F_{\widetilde{\otimes}}^{\omega}}
\end{align*}
and to ease notation we shall moreover denote $\hat{\F}_{\widetilde{\otimes}}^{\omega}(\eta)=\hat{\F}^{\omega} (\eta)\widetilde{\otimes}\hat{\F}^{\omega} (\eta)$ and $ \F^{\omega}_{\widetilde{\otimes}}= \F^{\omega}\widetilde{\otimes} \F^{\omega}$. Observe that we have the following decomposition
\[
\F^{\omega}_{\widetilde{\otimes}} = \sum_{j, \jpr} \lambda^{(\omega)}_{j}\lambda^{(\omega)}_{\jpr}  \Big(\Pi^{(\omega)}_{j}{\widetilde{\otimes}} \Pi^{(\omega)}_{\jpr}\Big),
\]
and note that 
\[
\F^{\omega}_{\widetilde{\otimes}} \Pi^{(\omega)}_{k,l} = \sum_{j, \jpr} \lambda^{(\omega)}_{j}\lambda^{(\omega)}_{\jpr} \Big( \Pi^{(\omega)}_{j}{\widetilde{\otimes}} \Pi^{(\omega)}_{\jpr} \Big)\Pi^{(\omega)}_{k,l} = \lambda^{(\omega)}_{k} \lambda^{(\omega)}_{l} \Pi^{(\omega)}_{k,l}
\]
Hence $\{\Pi^{(\omega)}_{k,l}\}_{k,l}$ are the eigenfunctions of $\F^{\omega}_{\widetilde{\otimes}}$.
We would like to obtain expressions for $\Delta_{\eta} \Pi^{(\omega)}_{k,l} $ and $\Delta_{\eta} \lambda^{(\omega)}_{k,l}$ by solving the equation
\begin{align*}
\big(\F_{\widetilde{\otimes}}^{\omega}+ \Delta_\eta \F_{\widetilde{\otimes}}^{\omega}\big)(\Pi^{(\omega)}_{k,l}+\Delta_{\eta} \Pi^{(\omega)}_{k,l})&=(\lambda^{(\omega)}_{k,l}+\Delta_{\eta} \lambda^{(\omega)}_{k,l}) (\Pi^{(\omega)}_{k,l}+\Delta_{\eta} \Pi^{(\omega)}_{k,l})
\\& \Leftrightarrow \\
\F^{\omega}_{\widetilde{\otimes}} \Pi^{(\omega)}_{k,l} + \Delta_\eta \F_{\widetilde{\otimes}}^{\omega}\Pi^{(\omega)}_{k,l} + \F_{\widetilde{\otimes}}^{\omega} \Delta_{\eta} \Pi^{(\omega)}_{k,l} + \Delta_\eta \F_{\widetilde{\otimes}}^{\omega} \Delta_{\eta} \Pi^{(\omega)}_{k,l} &=\lambda^{(\omega)}_{k,l} \Pi^{(\omega)}_{k,l} +\Delta_{\eta} \lambda^{(\omega)}_{k,l}\Pi^{(\omega)}_{k,l} + \lambda^{(\omega)}_{k,l} \Delta_{\eta} \Pi^{(\omega)}_{k,l}+\Delta_{\eta} \lambda^{(\omega)}_{k,l} \Delta_{\eta} \Pi^{(\omega)}_{k,l}
\\& \Leftrightarrow \\
\Delta_\eta \F_{\widetilde{\otimes}}^{\omega}\Pi^{(\omega)}_{k,l} + \F_{\widetilde{\otimes}}^{\omega} \Delta_{\eta} \Pi^{(\omega)}_{k,l}  &=\Delta_{\eta} \lambda^{(\omega)}_{k,l}\Pi^{(\omega)}_{k,l} + \lambda^{(\omega)}_{k,l} \Delta_{\eta} \Pi^{(\omega)}_{k,l}+(\Delta_{\eta} \lambda^{(\omega)}_{k,l} -\Delta_\eta \F_{\widetilde{\otimes}}^{\omega} ) \Delta_{\eta} \Pi^{(\omega)}_{k,l}
\end{align*}
since $\F_{\widetilde{\otimes}}^{\omega} \Pi^{(\omega)}_{k,l} =\lambda^{(\omega)}_{k,l} \Pi^{(\omega)}_{k,l} $. First note that $\Delta_{\eta} \Pi^{(\omega)}_{k,l} $ is a well-defined element of $S_2(\Hi)$. we can therefore use a basis expansion to write
 \[
\Delta_{\eta} \Pi^{(\omega)}_{k,l} =\phixT{k,l}(\eta) - \phix{k,l} \overset{S_2}{=}\sum_{n,m=1}^{\infty} a^{\eta}_{n,m}\Pi^{(\omega)}_{nm}
\]
where $a^{\eta}_{n,m}$ is a set of coefficients. Plugging this into the second term on the left and right hand side of the  above equation we have
\begin{align*}
\Delta_\eta \F_{\widetilde{\otimes}}^{\omega}\Pi^{(\omega)}_{k,l} + \F_{\widetilde{\otimes}}^{\omega}\sum_{n,m=1}^{\infty} a^{\eta}_{n,m}\Pi^{(\omega)}_{nm} &=\Delta_{\eta} \lambda^{(\omega)}_{k,l}\Pi^{(\omega)}_{k,l} + \lambda^{(\omega)}_{k,l} \sum_{n,m=1}^{\infty} a^{\eta}_{n,m}\Pi^{(\omega)}_{nm}+(\Delta_{\eta} \lambda^{(\omega)}_{k,l} -\Delta_\eta \F_{\widetilde{\otimes}}^{\omega} ) \Delta_{\eta} \Pi^{(\omega)}_{k,l}.
 \tageq \label{eq:eq1}
\end{align*}
Observe that orthogonality of the eigenfunctions yields
\begin{align*}
  \F_{\widetilde{\otimes}}^{\omega}\sum_{n,m=1}^{\infty} a^{\eta}_{n,m}\Pi^{(\omega)}_{nm}= \sum_{r,s=1}^{\infty} \lambda^{(\omega)}_{r,s} \Pi^{(\omega)}_{r} \widetilde{\otimes} \Pi^{(\omega)}_{s}  \sum_{n,m=1}^{\infty} a^{\eta}_{n,m}\Pi^{(\omega)}_{nm} = \sum_{r,s=1}^{\infty}\lambda^{(\omega)}_{r,s}  a^{\eta}_{r,s}\Pi^{(\omega)}_{r,s}.
\end{align*}
and therefore \eqref{eq:eq1} becomes
\begin{align*}
\Delta_\eta \F_{\widetilde{\otimes}}^{\omega}\Pi^{(\omega)}_{k,l} + \sum_{r,s=1}^{\infty}\lambda^{(\omega)}_{r,s}  a^{\eta}_{r,s}\Pi^{(\omega)}_{r,s}&=\Delta_{\eta} \lambda^{(\omega)}_{k,l}\Pi^{(\omega)}_{k,l} + \lambda^{(\omega)}_{k,l} \sum_{n,m=1}^{\infty} a^{\eta}_{n,m}\Pi^{(\omega)}_{nm}+(\Delta_{\eta} \lambda^{(\omega)}_{k,l} -\Delta_\eta \F_{\widetilde{\otimes}}^{\omega} ) \Delta_{\eta} \Pi^{(\omega)}_{k,l}.
\end{align*}
Taking the Hilbert-Schmidt inner product with $\Pi^{(\omega)}_{j\jpr}, j \ne k, \jpr \ne l$
\begin{align*}
\biginprod{\Delta_\eta \F_{\widetilde{\otimes}}^{\omega}\Pi^{(\omega)}_{k,l}}{\Pi^{(\omega)}_{j\jpr}}_{S_2}+\biginprod{ \sum_{r,s=1}^{\infty}\lambda^{(\omega)}_{r,s}  a^{\eta}_{r,s}\Pi^{(\omega)}_{r,s}}{\Pi^{(\omega)}_{j\jpr}}_{S_2}&=\biginprod{\Delta_{\eta} \lambda^{(\omega)}_{k,l}\Pi^{(\omega)}_{k,l}}{\Pi^{(\omega)}_{j\jpr}}_{S_2} +\biginprod{ \lambda^{(\omega)}_{k,l} \sum_{n,m=1}^{\infty} a^{\eta}_{n,m}\Pi^{(\omega)}_{nm}}{\Pi^{(\omega)}_{j\jpr}}_{S_2}\\&+\biginprod{(\Delta_{\eta} \lambda^{(\omega)}_{k,l} -\Delta_\eta \F_{\widetilde{\otimes}}^{\omega} ) \Delta_{\eta} \Pi^{(\omega)}_{k,l}}{\Pi^{(\omega)}_{j\jpr}}_{S_2}
\end{align*}
which becomes
\begin{align*}
\biginprod{\Delta_\eta \F_{\widetilde{\otimes}}^{\omega}}{\Pi^{(\omega)}_{j\jpr} \widetilde{\otimes}\Pi^{(\omega)}_{k,l}}_{S_2}+\lambda^{(\omega)}_{j,\jpr}  a^{\eta}_{j,\jpr}&=0+ \lambda^{(\omega)}_{k,l}  a^{\eta}_{j,\jpr}+\biginprod{(\Delta_{\eta} \lambda^{(\omega)}_{k,l} -\Delta_\eta \F_{\widetilde{\otimes}}^{\omega} ) \Delta_{\eta} \Pi^{(\omega)}_{k,l}}{\Pi^{(\omega)}_{j\jpr}}_{S_2}
\end{align*}
rearranging, we find the coefficients are given by 
\begin{align*}
a^\eta_{j,\jpr}=\frac{1}{\lambda^{(\omega)}_{j,\jpr} -\lambda^{(\omega)}_{k,l}}\Big[-\biginprod{\Delta_\eta \F_{\widetilde{\otimes}}^{\omega}}{\Pi^{(\omega)}_{j\jpr} \widetilde{\otimes}\Pi^{(\omega)}_{k,l}}_{S_2}
+\biginprod{(\Delta_{\eta} \lambda^{(\omega)}_{k,l} -\Delta_\eta \F_{\widetilde{\otimes}}^{\omega} ) \Delta_{\eta} \Pi^{(\omega)}_{k,l}}{\Pi^{(\omega)}_{j\jpr}}_{S_2}\Big].
\end{align*}
If $j=k, l=\jpr$, we set $a^{\eta}_{k,l} = \biginprod{\hat{\Pi}^{\omega}_{k,l}-{\Pi}^{\omega}_{k,l}}{{\Pi}^{\omega}_{k,l}}_{S_2}$. The statement now follows.
\end{proof}

\begin{proof}[Proof of Proposition \autoref{prop:bexplam}] 
In order to obtain an expression for $\Delta_{\eta} \lambda^{(\omega)}_{k}$, write
\[\hat{\F}^{\omega} (\eta)= \F^{\omega}+\underbrace{ (\hat{\F}^{\omega}(\eta)-\F^{\omega})}_{=\Delta_{\eta} \F^{\omega}}.\]
Since $\F^{\omega} \Pi^{(\omega)}_{k} =\lambda^{(\omega)}_k \Pi^{(\omega)}_{k}$, we obtain similar to the proof of Proposition \autoref{prop:bexp} that
\begin{align*}
\big(\F^{\omega}+ \Delta_\eta \F^{\omega}\big)(\Pi^{(\omega)}_{k}+\Delta_{\eta} \Pi^{(\omega)}_{k})&=(\lambda^{(\omega)}_k+\Delta_{\eta} \lambda^{(\omega)}_{k}) (\Pi^{(\omega)}_{k}+\Delta_{\eta} \Pi^{(\omega)}_{k})
\\& \Leftrightarrow \\
\Delta_\eta \F^{\omega}\Pi^{(\omega)}_{k} + \F^{\omega} \Delta_{\eta} \Pi^{(\omega)}_{k}  &=\Delta_{\eta} \lambda^{(\omega)}_{k}\Pi^{(\omega)}_{k} + \lambda^{(\omega)}_k \Delta_{\eta} \Pi^{(\omega)}_{k}+(\Delta_{\eta} \lambda^{(\omega)}_{k} -\Delta_\eta \F^{\omega} ) \Delta_{\eta} \Pi^{(\omega)}_{k}
\end{align*}
And using again that $\Delta_{\eta} \Pi^{(\omega)}_{k}=\sum_{n,m=1}^{\infty} a^{\eta}_{n,m}\Pi^{(\omega)}_{nm}$ and taking the inner product with $\Pi^{(\omega)}_{k}$
\begin{align*}
&\biginprod{\Delta_\eta \F^{\omega}\Pi^{(\omega)}_{k}}{\Pi^{(\omega)}_{k}}_{S_2} + \biginprod{\sum_{r,m=1}^{\infty}\lambda^{(\omega)}_{r} a^{\eta}_{r,m}\Pi^{(\omega)}_{r m}}{\Pi^{(\omega)}_{k}}_{S_2} \\ &=\biginprod{\Delta_{\eta} \lambda^{(\omega)}_{k}\Pi^{(\omega)}_{k} }{\Pi^{(\omega)}_{k}}_{S_2}+ \biginprod{\lambda^{(\omega)}_k \sum_{n,m=1}^{\infty} a^{\eta}_{n,m}\Pi^{(\omega)}_{nm}}{\Pi^{(\omega)}_{k}}_{S_2}+\biginprod{(\Delta_{\eta} \lambda^{(\omega)}_{k} -\Delta_\eta \F^{\omega} ) \Delta_{\eta}\Pi^{(\omega)}_{k}}{\Pi^{(\omega)}_{k}}_{S_2}
\end{align*}
which becomes, due to orthogonality
\begin{align*}
\biginprod{\Delta_\eta \F^{\omega}\Pi^{(\omega)}_{k}}{\Pi^{(\omega)}_{k}}_{S_2}& + \lambda^{(\omega)}_{k} a^{\eta}_{k,k} =\Delta_{\eta} \lambda^{(\omega)}_{k}+  \lambda^{(\omega)}_{k} a^{\eta}_{k,k}+\biginprod{(\Delta_{\eta} \lambda^{(\omega)}_{k} -\Delta_\eta \F^{\omega} ) \Delta_{\eta}\Pi^{(\omega)}_{k}}{\Pi^{(\omega)}_{k}}_{S_2}\end{align*}
rearranging yields
\begin{align*}
\Delta_{\eta} \lambda^{(\omega)}_{k}& = \biginprod{\Delta_\eta \F^{\omega}\Pi^{(\omega)}_{k}}{\Pi^{(\omega)}_{k}}_{S_2} -\biginprod{(\Delta_{\eta} \lambda^{(\omega)}_{k} -\Delta_\eta \F^{\omega} ) \Delta_{\eta}\Pi^{(\omega)}_{k}}{\Pi^{(\omega)}_{k}}_{S_2} 
\\& = \biginprod{\Delta_\eta \F^{\omega}}{\Pi^{(\omega)}_{k}}_{S_2} -\biginprod{(\Delta_{\eta} \lambda^{(\omega)}_{k} -\Delta_\eta \F^{\omega} ) \Delta_{\eta}\Pi^{(\omega)}_{k}}{\Pi^{(\omega)}_{k}}_{S_2}. 
\end{align*}
\end{proof}

\subsection{Proof of \autoref{thm:NBMf}}\label{proofthmNBMf}

\begin{proof}[\bf Proof of \autoref{thm:NBMf}]
First, observe that $\{ N^{(\omega)}_{m,T,t},1 \le t \le T\}$ forms a (square integrable)  complex-valued martingale difference sequence for each $\omega \in [-\pi,\pi]$. To ease notation, we shall sometimes denote its integral in frequency direction over $[a,b]$ by
\[N_{m,T,t} = \int_a^b N^{(\omega)}_{m,T,t} d\omega, \quad 2 \le t \le T. \tageq \label{eq:NmTf}\]
We derive the result by verifying the conditions of Corollary 3.8 of \cite{McL74}. Firstly, observe that by Cauchy's inequality and \autoref{lem:Burkh}
\begin{align*}
\int_a^b \E\big| N^{(\omega)}_{m,T,t}|d\omega&
\le 2 \sup_{\omega }\norm{\mathcal{V}^{(\omega)}_{\X}}_{\Hs}\int_a^b\E\bignorm{ \sum_{s=1}^{t-1}  \tilde{w}^{(\omega)}_{b_T,t,s}  \mathcal{D}^{(\omega)}_{XY,t,s}}_{\Hs}d\omega
\\&
\le 2 \sup_{\omega }\big(\snorm{\mathcal{U}^{(\omega)}_{X,Y}}_{2}+\snorm{\mathcal{U}^{(\omega)}_{Y,X}}_{2}\big)\\&\times  \int_a^b \E\Big(\bigsnorm{ \sum_{s=1}^{t-1}  \tilde{w}^{(\omega)}_{b_{T},t,s} (\dmp{t}{X} \otimes \dmp{s}{X})}_{2}+ \bigsnorm{ \sum_{s=1}^{t-1}  \tilde{w}^{(\omega)}_{b_T,t,s} (\dmp{t}{Y} \otimes \dmp{s}{Y})}_{2}\Big)d\omega 
\\& \le2 \sup_{\omega }\big(\snorm{\mathcal{U}^{(\omega)}_{X,Y}}_{2}+\snorm{\mathcal{U}^{(\omega)}_{Y,X}}_{2}\big) \times
\\& \int_a^b \|\dmp{t}{X}\|_{\hi,2} \big(\E \bignorm{ \sum_{s=1}^{t-1}  \tilde{w}^{(\omega)}_{b_T,t,s} \dmp{s}{X})}^2_\Hi\big)^{1/2} + \|\dmp{t}{Y}\|_{\hi,2} \big(\E \bignorm{ \sum_{s=1}^{t-1}  \tilde{w}^{(\omega)}_{b_T,t,s} \dmp{s}{Y})}^2_\Hi\big)^{1/2}  d\omega
\\& \le C  \max_{t}\big(\sum_{s=1}^{t-1} | {w}_{b_T,t,s} |^2 \big)^{1/2} = O(b^{-1/2}_T). \tageq \label{eq:FubN}
\end{align*}
for some constant $C$. Therefore, since $T$ is fixed, Fubini's theorem and the tower property imply that, for any 
$G \in  \G_{t-1}$,
\begin{align*}
\sum_{t=1}^{\flo{\eta T}}\Big \vert\E\big[\mathrm{1}_{G}\int_a^b N^{(\omega)}_{m,T,t} d\omega\big]\Big \vert 
& = \sum_{t=1}^{\flo{\eta T}}\Big \vert\E\big[\int_a^b \mathrm{1}_{G} N^{(\omega)}_{m,T,t} d\omega\big]\Big \vert 
\\&= \sum_{t=1}^{\flo{\eta T}}\Big \vert\int_a^b \E\Big[\E[ \mathrm{1}_{G} N^{(\omega)}_{m,T,t}| \G_{t-1} \big]\Big]d\omega \Big \vert 
\\&= \sum_{t=1}^{\flo{\eta T}}\Big \vert\int_a^b \E \Big[  \mathrm{1}_{G}\E[N^{(\omega)}_{m,T,t}| \G_{t-1} \big]\Big]d\omega  \Big \vert=0 \quad \forall \eta \in [0,1],  
\end{align*}
where the last equality follows from the fact that $\{ N^{(\omega)}_{m,T,t},1 \le t \le T\}$ forms a martingale difference sequence with respect to the filtration $\{\G_t\}$. 
We therefore obtain
\[
\sum_{t=1}^{\flo{\eta T}}\Big \vert\E\big[\int_a^b N^{(\omega)}_{m,T,t} d\omega \big \vert \G_{t-1}\big]\Big \vert =0  \quad \text{ a.s. } \forall T \in \mathbb{N}, \forall \eta \in [0,1],
\]
showing that condition (3.11) of \cite{McL74} is satisfied. Next, we verify conditions (3.9) and (3.10) of \cite{McL74}. These follow almost along the same lines as in the proof of Theorem 3.2 of \cite{vD19}. Therefore, we only give the main steps. From Jensen's inequality, Tonelli's theorem, Cauchy Schwarz's inequality  and  \autoref{lem:secM}, we obtain
\begin{align*}
&\E\Big(  \max_{1 \le k \le T}\Big \vert \int_a^{b} \biginprod{ \sum_{t=2}^{k} \sum_{s=t-4m+1 \vee 1}^{t-1}  \tilde{w}^{(\omega)}_{b_T,t,s} \mathcal{D}^{(\omega)}_{XY,t,s}}{\mathcal{V}^{(\omega)}_{\X}}_{\Hs} d\omega \Big \vert\Big)^\gamma
\\& \le(b-a)^{\gamma-1} \int_a^{b} \E\Big(  \max_{1 \le k \le T}\Big \vert  \biginprod{ \sum_{t=2}^{k} \sum_{s=t-4m+1 \vee 1}^{t-1}  \tilde{w}^{(\omega)}_{b_T,t,s} \mathcal{D}^{(\omega)}_{XY,t,s}}{\mathcal{V}^{(\omega)}_{\X}}_{\Hs} \Big \vert\Big)^\gamma d\omega 
\\& \le  (b-a)^{\gamma-1}  2\sup_{\omega}\norm{\mathcal{V}^{(\omega)}_{\X}}^\gamma_{\Hs}\int_a^{b} \E\Big(  \max_{1 \le k \le T} \bignorm{ \sum_{t=2}^{k}\sum_{s=t-4m+1 \vee 1}^{t-1}  \tilde{w}^{(\omega)}_{b_T,t,s} \mathcal{D}^{(\omega)}_{XY,t,s}}_{\Hs}\Big)^\gamma d\omega 
 =o(\mathcal{W}^{\gamma}_{b_T}),
\end{align*}
where we used in the last equation that $o(\frac{T}{b_T}) = o(\mathcal{W}^{2}_{b_T})$.
It is therefore sufficient to focus on 
\[
\mathcal{W}^{-1}_{b_T}\sum_{t=1+4m}^{\flo{\eta T}} \int_a^b \tilde{N}^{(\omega)}_{m,T,t} d\omega, \tageq \label{eq:Ntildeom}
\]
where 
$$\tilde{N}^{(\omega)}_{m,T,t}= \biginprod{ \sum_{s=1 }^{t-4m}  \tilde{w}^{(\omega)}_{b_T,t,s}\mathcal{D}^{(\omega)}_{XY,t,s}}{\mathcal{V}^{(\omega)}_{\X}}_{\Hs} +\biginprod{  \sum_{s=1}^{t-4m}\big(\tilde{w}^{(\omega)}_{b_T,s,t} \mathcal{D}^{(\omega)}_{XY,s,t}\big)^\dagger}{\mathcal{V}^{(\omega)}_{\X}}_{\Hs}~.$$
To verify the conditional Lindeberg condition, observe that
\begin{align*}
\sum_{t=1+4m}^{\flo{\eta T}}\E\Big\{|\tilde{N}_{m,T,t}\big|^2 \mathrm{1}_{|\tilde{N}_{m,T,t}|>\epsilon}\Big\}
&\le   2 \sum_{t=1+4m}^{\flo{\eta T}}\E\Big\{\Big\vert \int_a^b \biginprod{ \sum_{s=1}^{t-4m}  \tilde{w}^{(\omega)}_{b_T,t,s} \mathcal{D}^{(\omega)}_{XY,t,s}}{\mathcal{V}^{(\omega)}_{\X}}_{\Hs}d\omega \Big \vert^2\mathrm{1}_{|\tilde{N}_{m,T,t}|>\epsilon}\Big\}
\\&+2 \sum_{t=1+4m}^{\flo{\eta T}}\E\Big\{\Big\vert \int_a^b\biginprod{  \sum_{s=1}^{t-4m}\big(\tilde{w}^{(\omega)}_{b_T,s,t} \mathcal{D}^{(\omega)}_{XY,s,t}\big)^\dagger}{\mathcal{V}^{(\omega)}_{\X}}_{\Hs} d\omega \Big \vert^2\mathrm{1}_{|\tilde{N}_{m,T,t}|>\epsilon}\Big\}.
\end{align*}
We only verify the first term, because the second is of the same order. Jensen's inequality, 
\autoref{lem:Burkh}, and independence of $ \dmp{t}{\cdot}$ and  $\dmp{s}{\cdot}$, $|s-t|\ge 4m$, yield under \autoref{as:depstruc},
\begin{align*}
&\sum_{t=1+4m}^{\flo{\eta T}}\E\Big\{\Big \vert \mathcal{W}^{-1}_{b_T}\int_a^b\biginprod{ \sum_{s=1}^{t-4m}  \tilde{w}^{(\omega)}_{b_T,t,s}\mathcal{D}^{(\omega)}_{XY,t,s}}{\mathcal{V}^{(\omega)}_{\X}}_{\Hs} d\omega \Big \vert^2\mathrm{1}_{|\mathcal{W}^{-1}_{b_T} \tilde{N}_{m,T,t}|>\epsilon}\Big\}
\\& 
\le 
\frac{1}{\epsilon^2}\frac{C}{\mathcal{W}^{4}_{b_T}} \int_a^{b}\sum_{t=1+4m}^{\flo{\eta T}}\E\Big \vert\biginprod{ \sum_{s=1}^{t-4m}  \tilde{w}^{(\omega)}_{b_T,t,s} \mathcal{D}^{(\omega)}_{XY,t,s}}{\mathcal{V}^{(\omega)}_{\X}}_{\Hs} \Big \vert^4 d\omega
 \\& =O\big(\frac{b^2_T}{ \kappa^2 T^2}\big) O\big(\frac{\flo{\eta T} \kappa^2}{b^2_T}\big)= o(1),
\end{align*}
for some constant $C$ as $T \to \infty$. 
Consequently, for all $\epsilon>0$
\begin{align*}
\lim_{T \to \infty} \sum_{t=1+4m}^{\flo{\eta T}}\E\Big\{ \big(\Re( \mathcal{W}^{-1}_{b_T} \tilde{N}_{m,T,t})\big)^2 \mathrm{1}_{|\Re(\mathcal{W}^{-1}_{b_T} \tilde{N}_{m,T,t})|>\epsilon}\Big\} =0.
\end{align*}
Finally, we verify condition (3.10) of \cite{McL74}. Observe first that for the conditional variance, we obtain
\begin{align*}
\frac{1}{\mathcal{W}^{2}_{b_T}}\sum_{t=1+4m}^{\flo{\eta T}}  \E[ |\tilde{N}_{m,T,t}|^2 | \G_{t-1}] &=
\frac{1}{\mathcal{W}^{2}_{b_T}}\sum_{t=1+4m}^{\flo{\eta T}}  \E\Big[\int_a^b \tilde{N}^{(\omega)}_{m,T,t} d\omega \overline{\int_a^b \tilde{N}^{(\lambda)}_{m,T,t} d\lambda}\Big| \G_{t-1}\Big] \tageq \label{eq:convar1}
\\&=\frac{1}{\mathcal{W}^{2}_{b_T}} \sum_{t=1+4m}^{\flo{\eta T}}\lim_{n\to \infty}\Big(  \sum_{i_1=1}^{n}\frac{(b-a)}{n}  \E\Big[ |\tilde{N}^{\omega_{i_1}}_{m,T,t}|^2 \Big| \G_{t-1}\Big]\\& \phantom{\frac{1}{\mathcal{W}^{2}_{b_T}} \sum_{t=1}^{\flo{\eta T}}\lim_{n\to \infty}\Big(}+ \frac{(b-a)^2}{n^2} \sum_{\substack{i_1,i_2=1 \\i_1 \ne i_2}}^{n}  \E\Big[\tilde{N}^{\omega_{i_1}}_{m,T,t} \overline{\tilde{N}^{\omega_{i_2}}_{m,T,t}} \Big| \G_{t-1}\Big]\Big),
  \end{align*}
where Fubini's theorem justifies, via an analoguous reasoning to the above, the interchange of integrals. The conditions for Fubini's theorem can be verified via a similar derivation as in \eqref{eq:FubN}. More specifically, using the Cauchy-Schwarz inequality  and \autoref{lem:Burkh}, we obtain
\begin{align*}
\int_a^b \int_a^b   \E\Big\vert \tilde{N}^{(\omega)}_{m,T,t} \overline{\tilde{N}^{(\lambda)}_{m,T,t}} \Big|d\omega d\lambda & \le \int_a^b   \sqrt{\E |\tilde{N}^{(\omega)}_{m,T,t}|^2}d\omega  \int_a^b \sqrt{\E|\tilde{N}^{(\lambda)}_{m,T,t}|^2} d\lambda
\\& \le \Big(C \pi  \max_{t}\big(\sum_{s=1}^{t-1} | {w}_{b_T,t,s} |^2 \big)^{1/2}\Big)^2 = O(b^{-1}_T).
\end{align*}
Let $\mathcal{V}^{(\omega)}_{X}$ and $\mathcal{V}^{(\omega)}_{Y}$ be arbitrary elements of $S_2(\Hi)$. These may be written in their canonical form, i.e., in the form $\sum_i s^{(\omega)}_{l,i} (u^{(\omega)}_{l,i} \otimes v^{(\omega)}_{l,i})$ where, for fixed $\omega$, $\{u^{(\omega)}_{l,i}\}$ and $\{v^{(\omega)}_{l,i}\}$ are orthonormal bases of $\Hi$ and $\{s^{(\omega)}_{l,i}\}$ is a non-decreasing sequence of non-negative numbers converging to zero. Hence,   
\begin{equation} 
\mathcal{{V}}^{(\omega)}_{\X} = (\mathcal{V}^{(\omega)}_{X},\mathcal{V}^{(\omega)}_{Y})^{\top} =
 (\sum_i s^{(\omega)}_{X,i} u^{(\omega)}_{X,i}\otimes v^{(\omega)}_{X,i}, \sum_i s^{(\omega)}_{Y,i} u^{(\omega)}_{Y,i}\otimes v^{(\omega)}_{Y,i})^\top.
 \end{equation}
 Using the definition of $\inprod{\cdot}{\cdot}_{\Hs}$ and of $\inprod{\cdot}{\cdot}_{S_2}$, and the fact that the latter is continuous with respect to the $S_2(\Hi)$-norm topology, we can write \eqref{eq:Ntildeom}
 \begin{align*}
\tilde{N}^{(\omega)}_{m,T,t} &= \sum_{s=1}^{t-4m}  \tilde{w}^{(\omega)}_{b_T,t,s}\biginprod{\mathcal{D}^{(\omega)}_{XY,t,s}}{\mathcal{{V}}^{(\omega)}_{\X}}_{\Hs}+\biginprod{  \big(\tilde{w}^{(\omega)}_{b_T,s,t} \mathcal{D}^{(\omega)}_{XY,s,t}\big)^\dagger}{\mathcal{V}^{(\omega)}_{\X}}_{\Hs}
\\& = \sum_{l \in \{X,Y\}} \sum_{i=1}^{\infty} s^{(\omega)}_{l,i} \Big( \dmp{t}{l}(u_{l,i}) \overline{\widetilde{J}^{(\omega)}_{m,b_T,t}{(v_{l,i})}} + \overline{\dmp{t}{l}(v_{l,i})} {\widetilde{J}^{(\omega)}_{m,b_T,t}{(u_{l,i})}}\Big)
 \end{align*}
 where we abbreviated
 \begin{align}
 \dmp{t}{l}(u_{l,i}) : = \inprod{\dmp{t}{l}}{u_{l,i}} ~ \text{ and } ~
 \widetilde{J}^{(\omega)}_{m,b_T,t}(v_{l,i})=\sum_{s=1}^{t-4m} \widetilde{w}^{(\omega)}_{b_T,t,s}\inprod{\dmp{s}{l}}{v_{l,i}} \quad l \in \{X,Y\}. \label{eq:J4m}
\end{align} 
The rest of the proof now follows similar to theorem 3.2 of \cite{vD19}. We use that we can write the left-hand side of \eqref{eq:convar1} in the form $\E(\cdot|\G_{t-1}) = \sum_{k=1}^{m} P_{t-k}(\cdot)+\E(\cdot|\G_{t-m-1})$. Similar techniques as above will then allow one to show that the first term on the right-hand side of the latter is of  order $o_p(1)$. Then using the fact that $\dmp{t}{j}(\cdot)$ is $\G_{t,t-m}$-measurable and $ \widetilde{J}^{(\omega)}_{m,b_T,t}(\cdot)$ is $\G_{t-4m}$-measurable, we obtain from Lemma B.2 of \cite{vD19} that
  \begin{align*}
&\frac{1}{\mathcal{W}^{2}_{b_T}} \sum_{t=1+4m}^{\flo{\eta T}}\lim_{n\to \infty}\Big(  \sum_{r=1}^{n}\frac{(b-a)}{n}  \E\Big[\tilde{N}^{(\omega_{r})}_{m,T,t} \overline{\tilde{N}^{(\omega_{r})}_{m,T,t}} \Big| \G_{t-1}\Big]+ \frac{(b-a)^2}{n^2}  \sum_{\substack{r_1,r_2=1 \\r_1 \ne r_2}}^{n}  \E\Big[\tilde{N}^{(\omega_{r_1})}_{m,T,t} \overline{\tilde{N}^{(\omega_{r_2})}_{m,T,t}} \Big| \G_{t-1}\Big]\Big)
\\&= \frac{1}{\mathcal{W}^{2}_{b_T}} \sum_{t=1+4m}^{\flo{\eta T}} \lim_{n\to \infty}\Big\{\frac{(b-a)}{n}\sum_{r=1}^n \sum_{j,l \in \{X,Y\}}  \sum_{k,i=0}^{\infty} s^{(\omega_r)}_{l,i} s^{(\omega_r)}_{j,k}\Big( \E  \dmpi{t}{r}{l}(u_{l,i})  \overline{\dmpi{t}{r}{j}(u_{j,k})}\E  \overline{\widetilde{J}^{(\omega_r)}_{m,b_T,t}{(v_{l,i})}} {\widetilde{J}^{(\omega_r)}_{m,b_T,t}{(v_{j,k})}}
\\&\phantom{  \frac{1}{\mathcal{W}^{2}_{b_T}} \sum_{t=1+4m}^{\flo{\eta T}} \lim_{n\to \infty}\Big\{} +\E\Big[\dmpi{t}{r}{l}(u_{l,i}) {\dmpi{t}{r}{j}(v_{j,k})}\Big]
 \E\Big[\overline{\widetilde{J}^{(\omega_r)}_{m,b_T,t}{(v_{l,i})}} \overline{\widetilde{J}^{(\omega_r)}_{m,b_T,t}{(u_{j,k})}}  \Big]
 \\&\phantom{ \frac{1}{\mathcal{W}^{2}_{b_T}} \sum_{t=1+4m}^{\flo{\eta T}} \lim_{n\to \infty}\Big\{} + \E \Big[\overline{\dmpi{t}{r}{l}(v_{l,i})}   \overline{\dmpi{t}{r}{j}(u_{j,k})}\Big]\E\Big[{\widetilde{J}^{(\omega_r)}_{m,b_T,t}{(u_{l,i})}} {\widetilde{J}^{(\omega_r)}_{m,b_T,t}{(v_{j,k})}}\Big]
\\&\phantom{\frac{1}{\mathcal{W}^{2}_{b_T}} \sum_{t=1+4m}^{\flo{\eta T}} \lim_{n\to \infty}\Big\{} + \E\Big[\overline{\dmpi{t}{r}{l}(v_{l,i})}{\dmpi{t}{r}{j}(v_{j,k})}\Big] \E \Big[{\widetilde{J}^{(\omega_r)}_{m,b_T,t}{(u_{l,i})}}\overline{\widetilde{J}^{(\omega_r)}_{m,b_T,t}{(u_{j,k})}} \Big]\Big) \Big\}
+o_p(1). \tageq \label{eq:someting}
\end{align*}
Furthermore, using \autoref{lem:inpsZt} and 
\[\E \inprod{\dmp{t}{l}}{x}\overline{\inprod{\dmp{t}{j}}{z}} = \sum_{|k|\le m}  \inprod{C^{(m)}_{l j,k}(z)}{x}e^{-\im \lambda k} =2\pi \inprod{\F^{(\lambda)}_{l,j,m}(z)}{x},\]
where $C^{(m)}_{l j,k}=\E(l^{(m)}_{k}\otimes j_{0}^{(m)})$ for $l, j \in \{X,Y\}$ 
\citep[see proof of Proposition 3.2 of][]{vD19}, 
 \eqref{eq:someting} becomes
\begin{align*}
\frac{4\pi^2\sum_{t=1+4m}^{\flo{\eta T}} \sum_{s=1}^{t-4m}  w(b_T(t-s))^2}{\sum_{t=1}^{T} \sum_{s=1}^{T}  w(b_T(t-s))^2} \lim_{n\to \infty}\Big\{&\frac{(b-a)}{n}\sum_{r=1}^n \sum_{j,l \in \{X,Y\}}  \sum_{k,i=1}^{\infty} s^{(\omega)}_{l,i} s^{(\omega)}_{j,k}\Big( \biginprod{ \big(\F_{l,j,m}^{(\omega_r)} \widetilde{\otimes} \F_{j,l,m}^{\dagger(\omega_r)} \big)(u_{j,k} \otimes v_{j,k})}{u_{l,i}\otimes v_{l,i}}_{S_2} \\& +\mathrm{1}_{\omega_r \in \{0,\pi\}} \biginprod{ \big(\F_{l,j,m}^{(\omega_r)} \widetilde{\otimes} \F_{j,l,m}^{\dagger(\omega_r)} \big)(u_{j,k}\otimes v_{j,k})}{\overline{v_{l,i}} \otimes \overline{u_{l,i}}}_{S_2} \\&
 +\biginprod{ \big(\F_{l,j,m}^{(\omega_r)} \widetilde{\otimes} \F_{j,l,m}^{\dagger(\omega_r)} \big)(v_{l,i} \otimes u_{l,i})}{v_{j,k}  \otimes u_{j,k}}_{S_2} \\&+\mathrm{1}_{\omega_r \in \{0,\pi\}} \biginprod{ \big(\F_{l,j,m}^{(\omega_r)} \widetilde{\otimes} \F_{j,l,m}^{\dagger(\omega_r)} \big)(u_{j,k}\otimes v_{j,k})}{\overline{v_{l,i}} \otimes \overline{u_{l,i}}}_{S_2} \Big)\Big\}
+o_p(1)
\\& = \frac{4\pi^2\sum_{t=1+4m}^{\flo{\eta T}} \sum_{s=1}^{t-4m}  w(b_T(t-s))^2}{\sum_{t=1}^{T} \sum_{s=1}^{T}  w(b_T(t-s))^2} \int_{0}^{\pi} \Gamma^{\omega}_{m,\X}\big(\mathcal{{V}}^{(\omega)}_{\X}\big)+o_p(1)
\end{align*}
 where, by continuity of the inner product $\inprod{\cdot}{\cdot}_{S_2}$ with respect to the $S_2$-norm topology,
  \begin{align*}
 \Gamma^{\omega}_{\X,m}\big(\mathcal{{V}}^{(\omega)}_{\X}\big) &=4\pi^2  \sum_{l,j \in \{X,Y\}}  \biginprod{ \big(\F_{l,j,m}^{(\omega)} \widetilde{\otimes} \F_{j,l,m}^{\dagger(\omega)} \big)(\mathcal{V}^{(\omega)}_{j})}{\mathcal{V}^{(\omega)}_{l}}_{S_2}  +\mathrm{1}_{\omega \in \{0,\pi\}} \biginprod{ \big(\F_{l,j,m}^{(\omega)} \widetilde{\otimes} \F_{j,l,m}^{\dagger(\omega)} \big)(\mathcal{V}^{(\omega)}_{j})}{\overline{\mathcal{V}^{\dagger(\omega)}_{l}}}_{S_2}  \\&\phantom{ \frac{1}{\mathcal{W}^{2}_{b_T}}} 
 +\biginprod{ \big(\F_{j,l,m}^{(\omega)} \widetilde{\otimes} \F_{l,j,m}^{\dagger(\omega)} \big)(\mathcal{V}^{\dagger(\omega)}_{l})}{\mathcal{V}^{\dagger(\omega)}_{j}}_{S_2} +\mathrm{1}_{\omega  \in \{0,\pi\}} \biginprod{\big(\F_{j,l,m}^{(\omega)} \widetilde{\otimes} \F_{l,j,m}^{\dagger(\omega)} \big)({\mathcal{V}^{(\omega)}_{j}})}{\overline{(\mathcal{V}^{\dagger(\omega)}_{j}})}_{S_2}.
\end{align*}
Similar to the conditional covariance, we obtain for the conditional pseudo-covariance 
\begin{align*}
\frac{1}{\mathcal{W}^{2}_{b_T}}\sum_{t=1+4m}^{\flo{\eta T}}  \E[ (\tilde{N}_{m,T,t})^2 | \G_{t-1}] &= \frac{\sum_{t=1+4m}^{\flo{\eta T}} \sum_{s=1}^{t-4m}  w(b_T(t-s))^2}{\sum_{t=1}^{T} \sum_{s=1}^{T}  w(b_T(t-s))^2} \int_a^b \Sigma^{\omega}_{X,m}\big(\mathcal{{V}}^{(\omega)}_{\X}\big) d\omega+o_p(1)
\end{align*}
where now, 
\begin{align*}
 \Sigma^{\omega}_{\X,m}\big(\mathcal{{V}}^{(\omega)}_{\X}\big) &=4\pi^2  \sum_{l,j \in \{X,Y\}}  \Big( \biginprod{ \big(\F_{l,j,m}^{(\omega)} \widetilde{\otimes} \F_{j,l,m}^{\dagger(\omega)} \big)(\mathcal{V}^{\dagger(\omega)}_{j})}{\mathcal{V}^{(\omega)}_{l}}_{S_2} + \biginprod{ \big(\F_{j,l,m}^{(\omega)} \widetilde{\otimes} \F_{l,j,m}^{\dagger(\omega)} \big)({\mathcal{V}^{\dagger(\omega)}_{l}})}{{\mathcal{V}^{(\omega)}_{j}}}_{S_2}
\\& \phantom{4\pi^2  \sum_{l,j \in \{X,Y\}} }\mathrm{1}_{\omega \in \{0,\pi\}}  \biginprod{ \big(\F_{l,j,m}^{(\omega)} \widetilde{\otimes} \F_{j,l,m}^{\dagger(\omega)} \big)(\overline{\mathcal{V}^{(\omega)}_{j}})}{\mathcal{V}^{(\omega)}_{l}}_{S_2} 
+ \mathrm{1}_{\omega \in \{0,\pi\}}  \biginprod{ \big(\F_{j,l,m}^{(\omega)} \widetilde{\otimes} \F_{l,j,m}^{\dagger(\omega)} \big)({\mathcal{V}^{\dagger(\omega)}_{l}})}{\overline{\mathcal{V}^{\dagger (\omega)}_{j}}}_{S_2} \Big).
\end{align*}
By a change of variables;
\begin{align*}\frac{\sum_{t=1}^{\flo{\eta T}} \sum_{s=1}^{\flo{\eta T}}  w(b_T(t-s))^2}{\sum_{t=1}^{T} \sum_{s=1}^{T}  w(b_T(t-s))^2}
&=\frac{\sum_{|h|< \flo{\eta T}} (\flo{\eta T}-|h|)  w(b_T h )^2}{\sum_{|h|< T} (T-|h)  w(b_T h)^2}\\& = \frac{\frac{\sum_{|h|< \flo{\eta T}} (\flo{\eta T}-|h|)  w(b_T h )^2}{\flo{\eta T}b_T^{-1} \int |w(x)|^2 d x} \flo{\eta T}b_T^{-1} \int |w(x)|^2 d x}{\frac{\sum_{|h|< T} (T-|h|)  w(b_T h )^2}{ Tb_T^{-1} \int |w(x)|^2 d x} T b_T^{-1} \int |w(x)|^2 d x} 
\\& \to \eta, \end{align*}
as $T \to \infty$,
which follows from  \autoref{as:Weights} together with \autoref{as:ratiorates}
. Therefore, we obtain for fixed $m$ that
\[\frac{\sum_{t=1+4m}^{\flo{\eta T}} \sum_{s=1}^{t-4m}  w(b_T(t-s))^2}{\sum_{t=1}^{T} \sum_{s=1}^{T}  w(b_T(t-s))^2}=\frac{\eta}{2} +o(1).\]
Observe then that  
\[
\sum_{t=1}^{\flo{\eta T}} \E[ \big(\Re(N_{m,T,t})\big)^2 | \G_{t-1}] = \frac{1}{2}\sum_{t=1}^{\flo{\eta T}} \Re\Big(\E[ |N_{m,T,t}|^2 | \G_{t-1}]\Big) + \frac{1}{2}\sum_{t=1}^{\flo{\eta T}} \Re\Big(\E[ (N_{m,T,t})^2 | \G_{t-1}]\Big)
\]
which, together with the above, yields
\[
\frac{1}{\mathcal{W}^{2}_{b_T}} \sum_{t=1}^{\flo{\eta T}} \E[ \big(\Re(N_{m,T,t})\big)^2 | \G_{t-1}] \to \eta \cdot \frac{1}{4} \Re\Big(\int_a^b \Gamma^{\omega}_{\X,m}\big(\mathcal{V}^{(\omega)}_{\X}\big)+\Sigma^{\omega}_{\X,m}\big(\mathcal{V}^{(\omega)}_{\X}\big) d\omega \Big)
\]
in probability as $T \to \infty$. The result now follows.
\end{proof}

\section{Some technical results and auxiliary statements}\label{sec:proofs}
\def\theequation{B.\arabic{equation}}
\setcounter{equation}{0}

\begin{lemma}[Maximum of partial sums] \label{lem:boundseqSDO}
Let $\{X_t\}_{t \in \znum}$ satisfy Conditions \ref{I}-\ref{II} with $p=4+\epsilon$ and let the complex-valued array 
$\big\{\tilde{w}^{(\omega)}_{b_T,s,t}\big\}_{s,t \in \mathbb{N}}$ be defined through \eqref{weight}. Furthermore, suppose Assumption \ref{as:Weights} and Assumption \ref{as:bwrates} for some $\ell \ge 1$ are satisfied.
Denote the spectral density operators of $\{X_t\}$ by $\F^{(\omega)}$ and consider the partial sum process
\[
S^{\omega}_{k} = \sum_{s=1}^{k}  \sum_{t=1}^{k} \tilde{w}^{(\omega)}_{b_T,s,t}\big( X_s \otimes X_t \big)  \quad 1 \le k \le T.
\]
Then
\begin{align*}
\max_{1\le  k\le T}\frac{b^{1-i}_T}{k^i} \snorm{S^{\omega}_{k}-k \F^{\omega}}_2 &=
O_p(\log^{1/\gamma}(T)),      \quad 1/2 \le i \le 1.
\end{align*}
where $\gamma =2+\epsilon/2$.
\end{lemma}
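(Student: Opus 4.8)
The plan is to split $S^{\omega}_k - k\F^{\omega}$ into a centered stochastic part and a deterministic bias, dispose of the bias directly via \autoref{prop:consF}, and control the stochastic part by a maximal inequality of Móricz type combined with a dyadic decomposition in the index $k$. Write $\tilde S^{\omega}_k := S^{\omega}_k - \E S^{\omega}_k$, so that $S^{\omega}_k - k\F^{\omega} = \tilde S^{\omega}_k + (\E S^{\omega}_k - k\F^{\omega})$. By \autoref{prop:consF} (applied with sample length $k$) the bias obeys $\snorm{\E S^{\omega}_k - k\F^{\omega}}_2 = O(k b_T^{\ell}+1)$ uniformly in $\omega$, whence
\[
\max_{1\le k\le T}\frac{b_T^{1-i}}{k^{i}}\snorm{\E S^{\omega}_k - k\F^{\omega}}_2 = O\bigl((b_T^{1+2\ell}T)^{1/2}\bigr)+O(b_T^{1/2})=o(1)
\]
for $i=\tfrac12$ by \autoref{as:bwrates}, and is $o(1)$ a fortiori for $i>\tfrac12$. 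The bias is therefore negligible and it remains to bound the stochastic term.

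The heart of the argument is the increment estimate
\[
\E\,\snorm{\tilde S^{\omega}_j - \tilde S^{\omega}_i}_2^{\gamma}\;\le\; C\Bigl(\tfrac{j-i}{b_T}\Bigr)^{\gamma/2},\qquad 0\le i<j\le T,
\]
with $\gamma=p/2=2+\epsilon/2$, uniformly in $\omega$. To establish it I would decompose the index region $\{1\le s,t\le j\}\setminus\{1\le s,t\le i\}$ into the \emph{new block} $\{i<s,t\le j\}$ and the two \emph{cross blocks} coupling $(i,j]$ with $[1,i]$. On the new block the sum is a centered lag-window statistic over a stretch of length $j-i$, whose $S_2$-valued $\gamma$-th moment is $O(((j-i)/b_T)^{\gamma/2})$ by the quadratic-form moment bounds of \citet{vD19} underlying \autoref{prop:consF}. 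On the cross blocks the decay of the weight function imposed by \autoref{as:Weights} forces $|t-s|\lesssim 1/b_T$, so that only $O(1/b_T)$ boundary indices contribute; bounding these through the Burkholder/Rosenthal inequality \autoref{lem:Burkh} together with the physical-dependence summability of \autoref{as:depstruc} yields the same order. I expect this to be the main obstacle, since the cross-block contributions must be tracked under only $p=4+\epsilon$ moments and without summable cumulant conditions.

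Given the increment bound, I would finish via a maximal inequality and dyadic chaining. The function $g(i,j):=C(j-i)/b_T$ is additive, hence superadditive, and the bound above reads $\E\snorm{\tilde S^{\omega}_j-\tilde S^{\omega}_i}_2^{\gamma}\le g(i,j)^{\gamma/2}$ with exponent $\gamma/2>1$; Móricz's inequality \citep{Mor76} therefore applies in its non-boundary regime and gives $\E\max_{1\le k\le n}\snorm{\tilde S^{\omega}_k}_2^{\gamma}\le C(n/b_T)^{\gamma/2}$ for every $n\le T$, \emph{without} a logarithmic factor. Decomposing $\{1,\dots,T\}$ into dyadic blocks $B_l=(2^{l-1},2^{l}]$, $l=0,\dots,L$ with $L=\lceil\log_2 T\rceil$, bounding $k^{-i}$ by $2^{-(l-1)i}$ on $B_l$, and applying the block maximal inequality to $\max_{k\le 2^{l}}$ gives $\E\max_{k\in B_l}\bigl(b_T^{1-i}k^{-i}\snorm{\tilde S^{\omega}_k}_2\bigr)^{\gamma}\le C(b_T 2^{l})^{\gamma(1/2-i)}$. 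For $i=\tfrac12$ each term is a constant, so summing over the $L+1$ blocks yields $\E\max_{1\le k\le T}\bigl(b_T^{1/2}k^{-1/2}\snorm{\tilde S^{\omega}_k}_2\bigr)^{\gamma}\le C\log T$, and Markov's inequality delivers the claimed $O_p(\log^{1/\gamma}(T))$. For $i>\tfrac12$ the exponent $\gamma(\tfrac12-i)$ is negative, the geometric sum over $l$ is dominated by its largest-$k$ term and tends to $0$ (using $b_T T\to\infty$), giving $o_p(1)$; here the small blocks are kept under control by replacing the increment bound with $\snorm{\tilde S^{\omega}_k}_{2,\gamma}\le Ck$, which is valid because $k\le (k/b_T)^{1/2}$ in that range. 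Combining these bounds with the negligible bias completes the proof.
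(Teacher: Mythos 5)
Your overall strategy---splitting off the bias via Proposition~\ref{prop:consF}, establishing a $\gamma$-th moment bound of order $((j-i)/b_T)^{\gamma/2}$ for increments, feeding this into M\'oricz's maximal inequality, and chaining over geometric blocks in $k$---is exactly the route the paper takes; the paper uses $\beta$-adic rather than dyadic blocks and phrases the last step as a union bound over blocks with $\varepsilon\asymp \log^{-1/\gamma}(T)$ instead of summing block expectations, but these differences are cosmetic. One place where you are \emph{more} careful than the paper is the increment bound itself: the paper verifies condition (1.1) of M\'oricz only by checking superadditivity of $(j-i)^{\alpha}$ and tacitly treats the anchored moment bound from Proposition~\ref{prop:consF} as if it covered increments $S^{\omega}_j-S^{\omega}_i$, which live on an L-shaped index set; your decomposition into the new block and the two cross blocks, with the observation that only $O(1/b_T)$ boundary pairs carry non-negligible weight, is the right way to justify that step. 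Your treatment of the bias and the entire $i=1/2$ case are correct and reproduce the paper's $O_p(\log^{1/\gamma}(T))$ rate.

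The case $i>1/2$ contains a genuine gap. Since $\gamma(\tfrac12-i)<0$, the sum $\sum_l (b_T 2^l)^{\gamma(1/2-i)}$ is dominated by its \emph{smallest} block, not, as you write, by the largest-$k$ term; and your patch---using $\snorm{S^{\omega}_k-\E S^{\omega}_k}_{S_2,\gamma}\le Ck$ on blocks with $2^l\le 1/b_T$---produces block contributions of order $(b_T2^l)^{\gamma(1-i)}$, which \emph{increase} in $l$ up to the crossover block $2^l\asymp 1/b_T$, where the contribution equals a constant. The two regimes therefore meet at $k\asymp 1/b_T$ with a contribution of order one, the total is $O(1)$, and Markov delivers only $O_p(1)$, not the claimed $o_p(1)$. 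This is not an artifact of the chaining: at $k\asymp 1/b_T$ the normalized quantity is exactly $\snorm{\hat\F^{(\omega)}_k-\F^{(\omega)}}_2$ for a lag-window estimator with effective degrees of freedom $b_Tk\asymp 1$, whose second moment is bounded away from zero, so no bound of the form you (or the paper) use can force it to vanish. For what it is worth, the paper's own proof of this case rests on substituting $\beta^j\le T$ into $(\beta^j b_T)^{p(1/2-i)}$, which is an invalid upper bound when the exponent is negative---so you have not overlooked a device the paper actually possesses---but as written your argument for $i>1/2$ does not close, and any repair must either restrict the maximum away from $k\asymp 1/b_T$ or supply an argument of a different type there.
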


\begin{proof}
It follows from the proof of Theorem 4.1 of \cite{vD19}, that for $q>2$
\begin{align*}\norm{S^{\omega}_{k}-\E S^{\omega}_{k}}^{2}_{S_2,q}
&=   k\sum_{h=1}^{k-1} \big\vert \tilde{w}_{\bf}^{(\lambda)}(b_T h) \big\vert^2 O\Big(
  \big(\sum_{t=0}^{\infty} \nu_{\hi,2q }(X_t)\big)^{4}\Big)
\end{align*}
Thus, if we take $g(b,k)= k \sum_{h=b+1}^{b+k}\big\vert {w}_{\bf}(b_T h) \big\vert^2 $ we observe that for all $b \ge 0$ and $1\le k < k+l$,
\[ 
g(b,k) + g(b+k,l) \leq g(b,k+l),
 \tageq \label{eq:mor1}
\]
which implies condition (1.1) of Theorem 1 of \cite{Mor76} is satisfied. Therefore, we obtain
\[
\E (\max_{1 \le k \le T} \snorm{S^{\omega}_{k}-\E S^{\omega}_{k}}_2)^{\gamma} = O(b_T^{-1-\epsilon/2}T^{1+\epsilon/2}). \tageq \label{eq:varB1}
\]
Note moreover that, under the conditions of Proposition \autoref{prop:consF}, $\snorm{\,\E S^{\omega}_k-k \F^{\omega}}_2 = O(b^{\ell}_T k)$, and thus
\[
\norm{\,\E S^{\omega}_k-k \F^{\omega}}^{\gamma}_{S_2,\gamma}= O(b^{\ell \gamma}_T k^{\gamma}). \tageq \label{eq:biasB1}
\]
Hence,  
\[
\E \big( \max_{1\le k \le T} {\snorm{S^{\omega}_{k}-k\F^{\omega}}_2 } \big)^\gamma =O(b_T^{-\gamma/2}T^{\gamma/2})+ O(b^{\gamma \ell}_T T^\gamma).
\]

By standard arguments, we have for $1/2 \le i \le 1$
\begin{align*}
\max_{1\le  k\le T}\frac{b^{1-i}_T}{k^{i}} \snorm{S^{\omega}_{k}-k \F^{\omega}}_2 &
\le b^{1-i}_T \max_{1 \le j \le c \log(T)} \max_{\beta^{j-1} < k \le  \beta^j}\frac{1}{k^{i}} \snorm{S^{\omega}_{k}-k \F^{\omega}}_2
\\& \le b^{1-i}_T \max_{1 \le j \le c \log(T)}\beta^{-(j-1)i} \max_{\beta^{j-1} < k \le \beta^j} \snorm{S^{\omega}_{k}-k \F^{\omega}}_2, 
\end{align*}
since $x^{-i}$ is a decreasing function of $x$.  Hence, we obtain
\begin{align*}
&\mathbb{P}(\max_{1\le  k\le T}\frac{b^{1-i}_T}{{k}^i} \snorm{S^{\omega}_{k}-k \F^{\omega}}_2 > \varepsilon)
\\&
 \le b^{p(1-i)}_T  \sum_{j=1}^{c \log(T)} \beta^{-(j-1)pi} \varepsilon^{-p} \E \Big[\big( \max_{\beta^{j-1} < k \le \beta^{j}} {\snorm{S^{\omega}_{k}-k\F^{\omega}}_2 } \big)^p\Big]
\\&\le \beta^{pi}     \sum_{j=1}^{c \log(T)} \varepsilon^{-p}2^{p-1} C \Big[ \beta^{j p(1/2-i)} \min(\beta^{j p/2}, b^{-p(1/2)}_T)  b^{p(1-i)}_T   + \beta^{jp(1-i) } b^{p(1-i+\ell )}_{T}\Big]
\\&\le  \beta^{pi}     \sum_{j=1}^{c \log(T)} \varepsilon^{-p}2^{p-1} C \Big[ \min(b^{p(1-i)}_T \beta^{j p(1-i)}, \beta^{j p(1/2-i)} b^{p(1/2-i)}_T)    + \beta^{jp(1-i) } b^{p(1-i+\ell )}_{T}\Big].
\end{align*} 
Taking $b_{T}=T^{-\kappa}$ and noting that $\beta^{j} \le T$ for all $1 \le  j \le c \log(T)$, this is bounded by
\begin{align*}
 \beta^{pi} \tilde{C}     \sum_{j=1}^{c \log(T)} \varepsilon^{-p}\min(b^{p(1-i)}_T \beta^{j p(1-i)}, \beta^{j p(1/2-i)} b^{p(1/2-i)}_T) +\beta^{pi}   \tilde{C}     \sum_{j=1}^{c \log(T)} \varepsilon^{-p}  T^{p\big(1-i -\kappa(1-i+\ell)\big)}.\tageq \label{eq:upboundmax}
\end{align*}
Observe that if $i=1/2$, we can bound the first term by choosing $\varepsilon = c_2 \log^{1/p}(T) $ for some sufficiently large constant $c_2$. The second term can in this case be bounded by a constant if $b_T^{1+2\ell} T=O(1)$, i.e., if $\kappa = 1/(1+2\ell)$ and is of lower order if $\kappa > 1/(1+2\ell)$. 
 It follows therefore that under the conditions of \autoref{lem:boundseqSDO}.
\[
\max_{1\le  k\le T}\frac{\sqrt{b_T}}{\sqrt{k}} \snorm{S^{\omega}_{k}-k \F^{\omega}}_2  =O_p(\log^{1/\gamma}(T)).
\]
Consider then the case $i >1/2$. Let us first look at the second term of \eqref{eq:upboundmax}. Observe that $b_T^{1-i+\ell} T^{1-i}=O(1)$, i.e., if $\kappa = (1-i)/(1-i+\ell)$ and is of lower order if $\kappa > (1-i)/(1-i+\ell)$. Since $\frac{1/2}{ (1/2+\ell)}> \frac{1-i}{(1-i+\ell)}$ for any $1/2<i \le 1$, we obtain $(1-i)<\kappa(1-i+\ell)$ for any $1/2<i \le 1$ under the conditions of \autoref{lem:boundseqSDO} since these require $\kappa \ge \frac{1/2}{ (1/2+\ell)}$.
For the first term of \eqref{eq:upboundmax}, observe that if $\beta^j \le  b^{-1}_T$, the first term in the minimum is smallest, and vice versa if $\beta^j > b^{-1}_T$. Hence,
\begin{align*}
    \sum_{j=1}^{c \log(T)} \min(b^{p(1-i)}_T \beta^{j p(1-i)}, \beta^{j p(1/2-i)} b^{p(1/2-i)}_T) & \le \sum_{j:\beta^j \le  b^{-1}_T} b^{p(1-i)}_T b_{T}^{-p(1-i)} 
    \\&+ \sum_{j:\beta^j >  b^{-1}_T} \beta^{j p(1/2-i)} \beta^{-jp(1/2-i)} = O(\log(T)).
\end{align*} 
Hence, a similar derivation as in the above yields also in this case $ \max_{1 \le k \le T}\frac{b^{1-i}_T}{{k}^i} \snorm{S^{\omega}_{k}-k \F^{\omega}}_2 = O_p(\log^{1/\gamma}(T))$.

\end{proof}
\begin{thm} \label{thm:maxPS}
Suppose Assumptions \ref{as:depstruc}-\ref{as:bwrates} are satisfied. Then,
\begin{align*}
\sup_{\eta \in [0,1]}\int^b_a \bigsnorm{\eta^{1/r}(\hat{\F}_X^{(\omega)}(\eta)- \F^{(\omega)}_X)}^{r}_2 d\omega  =
O_p({\log^{r/\gamma}(T) b^{-1}_{T} T^{-1}}),      \quad r \ge 2,
\end{align*}
 for some $\gamma > 2$.
\end{thm}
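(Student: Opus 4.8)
The plan is to reduce the statement directly to the maximal inequality of \autoref{lem:boundseqSDO} by identifying the sequential estimator with the partial–sum process $S^\omega_k$ studied there. First I would interchange the supremum and the integral, using that for a nonnegative integrand $\sup_{\eta}\int_a^b \le \int_a^b\sup_{\eta}$, so that it suffices to bound $\int_a^b\sup_{\eta\in[0,1]}\bigsnorm{\eta^{1/r}(\hat{\F}_X^{(\omega)}(\eta)-\F^{(\omega)}_X)}_2^r\,d\omega$. Since $\snorm{\eta^{1/r}A}_2^r=\eta\snorm{A}_2^r$, the inner quantity equals $\eta\snorm{\hat{\F}_X^{(\omega)}(\eta)-\F^{(\omega)}_X}_2^r$. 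Writing $\hat{\F}_X^{(\omega)}(\eta)=S^\omega_{\flo{\eta T}}/\flo{\eta T}$ with $S^\omega_k$ as in \autoref{lem:boundseqSDO}, this becomes $\eta\,\flo{\eta T}^{-r}\snorm{S^\omega_{\flo{\eta T}}-\flo{\eta T}\,\F^\omega}_2^r$.

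The central computation is the change of variable $k=\flo{\eta T}$. For $k\ge1$ one has $\eta<(k+1)/T\le 2k/T$, so $\eta\,k^{-r}\snorm{S^\omega_k-k\F^\omega}_2^r\le 2T^{-1}k^{-(r-1)}\snorm{S^\omega_k-k\F^\omega}_2^r$. Choosing $i=(r-1)/r$, which makes $1-i=1/r$, yields the algebraic identity $k^{-(r-1)}\snorm{S^\omega_k-k\F^\omega}_2^r = b_T^{-1}\big(b_T^{1-i}k^{-i}\snorm{S^\omega_k-k\F^\omega}_2\big)^r$, and since $a\mapsto a^r$ is increasing on $[0,\infty)$ the maximum over $k$ commutes with the $r$-th power. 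Taking the maximum over $1\le k\le T$ therefore gives $\sup_{\eta}\eta\snorm{\hat{\F}_X^{(\omega)}(\eta)-\F^{(\omega)}_X}_2^r \le 2(b_T T)^{-1}\big(\max_{1\le k\le T} b_T^{1-i}k^{-i}\snorm{S^\omega_k-k\F^\omega}_2\big)^r$.

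At this point \autoref{lem:boundseqSDO} applies with exactly this $i$. When $r=2$ one has $i=1/2$ and the bracketed term is $O_p(\log^{1/\gamma}(T))$ with $\gamma=2+\varepsilon/2>2$, whose $r$-th power is $O_p(\log^{2/\gamma}(T))$, giving the stated $O_p(\log^{r/\gamma}(T)b_T^{-1}T^{-1})$; when $r>2$ one has $i=(r-1)/r>1/2$ and the bracketed term is $o_p(1)$, whose $r$-th power is $o_p(1)$, giving $o_P(b_T^{-1}T^{-1})$. Crucially, the bounds in \autoref{lem:boundseqSDO} (obtained via Proposition \autoref{prop:consF} and the maximal inequality of \cite{Mor76}) are uniform in $\omega\in[-\pi,\pi]$, because the weights $|\tilde{w}^{(\omega)}_{b_T,s,t}|$ do not depend on $\omega$; hence one may pull $\sup_{\omega}$ outside, and integrating over $[a,b]$ contributes only the finite factor $(b-a)$.

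Finally I would dispose of the degenerate range $\flo{\eta T}=0$, i.e.\ $\eta<1/T$, separately: under the convention $S^\omega_0=0$ the corresponding contribution is $\eta\snorm{\F^{(\omega)}_X}_2^r\le T^{-1}\sup_{\omega}\snorm{\F^{(\omega)}_X}_2^r=O(T^{-1})$, which is $o(b_T^{-1}T^{-1})$ since $b_T\to0$, and is therefore absorbed into the claimed rates. The only genuine subtlety is the bookkeeping of the exponent $i=(r-1)/r$ and verifying that it lands in the correct regime of \autoref{lem:boundseqSDO}, namely $i=1/2$ precisely when $r=2$ and $i>1/2$ when $r>2$; everything else is a mechanical application of that lemma together with the uniformity in $\omega$.
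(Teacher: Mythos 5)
Your proposal is correct and follows essentially the same route as the paper: reduce the supremum over $\eta$ to the maximum over $k=\flo{\eta T}$ of the partial-sum process, absorb the factor $\eta/k^{r}$ into the normalization $b_T^{1-i}k^{-i}$ via the substitution $i=1-1/r$, and then invoke \autoref{lem:boundseqSDO} in its two regimes ($i=1/2$ for $r=2$, $i>1/2$ for $r>2$), with the integration over $\omega$ justified by the uniformity in $\omega$ of the underlying moment bounds. The only cosmetic difference is that the paper first applies Tonelli's theorem to move the expectation inside the $d\omega$-integral and argues in $L^r$, whereas you argue pathwise and appeal to uniformity at the end; both rest on the same Móricz-type moment inequality from the proof of that lemma.
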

\begin{proof}
We have
\begin{align*}
\sup_{\eta \in [0,1]}\int^b_a \bigsnorm{\eta^{1/r}(\hat{\F}_X^{(\omega)}(\eta)- \F^{(\omega)}_X)}^{r}_2 d\omega
&\le \int^b_a \sup_{\eta \in [0,1]} \bigsnorm{\eta^{1/r}(\hat{\F}_X^{(\omega)}(\eta)- \F^{(\omega)}_X)}^{r}_2 d\omega  
\\& \le \int^b_a \Big( \sup_{\eta \in [0,1]} \bigsnorm{\eta^{1/r}(\hat{\F}_X^{(\omega)}(\eta)- \F^{(\omega)}_X)}_2\Big)^{r} d\omega,  
\end{align*}
We remark that the integrand on the right-hand side is measurable. Observe then that 
\[\frac{{\eta}^{1/r}}{\lfloor \eta T\rfloor} =\frac{1}{b^{1/r}_{T} T^{1/r}}\frac{ {(\eta T)}^{1/r} }{ (\flo{\eta T})^{1/r}} \frac{b^{1/r}_T}{(\flo{\eta  T})^{1-1/r}}\]
where, similar to the proof of \autoref{lem:floor}, we can set $\frac{\eta T}{\flo{\eta T}}=0$ for $\eta \le 1/T$ as otherwise the term is zero. Then using \eqref{eq:floor} and a change of variables $i=1-1/r$, therefore yields for $\eta \ge 1/T$, 

\[\frac{{\eta}^{1-i}}{\lfloor \eta T\rfloor} = \frac{1}{b^{1-i}_{T} T^{1-i}}\frac{ {(\eta T)}^{1-i} }{ (\eta T)^{1-i}} \frac{b^{1-i}_T}{(\flo{\eta  T})^{i}} +\frac{1}{b^{1-i}_{T} T^{1-i}}O(T^{-1+i}) \frac{b^{1-i}_T}{(\flo{\eta  T})^{i}} .
\]
The result then follows from {the proof of} \autoref{lem:boundseqSDO} and Minkowski's inequality. 
\end{proof}

\begin{lemma} \label{lem:floor}
Let $\mathcal{Z}^{X,\omega}_{T,\eta}$ and $\mathcal{Z}^{Y,\omega}_{T,\eta}$ be defined as in \autoref{thm:BuildingBlock} and let $\mathcal{U}^{(\omega)}_{XY} \in S_2(\Hi)$, then 
\begin{align*}
&\sup_{\eta \in [0,1]}\Big \vert\int_a^b \biginprod{\eta(\hat{\F}_X^{(\omega)}(\eta)- \F_X^{(\omega)})}{\eta \mathcal{U}^{(\omega)}_{XY}} d\omega -\int_a^b \biginprod{\frac{\mathcal{Z}^{X,\omega}_{T,\eta}}{\sqrt{b_1 T_1}} }{\eta \mathcal{U}^{(\omega)}_{XY}}  d\omega \Big \vert =o_P\Big(\frac{1}{\sqrt{b_{1} T_1+ b_{2}T_2}}\Big); \tageq \label{eq:lemFbound1}
\\&
\sup_{\eta \in [0,1]}\Big \vert\int_a^b \biginprod{\eta(\hat{\F}_Y^{(\omega)}(\eta)- \F_Y^{(\omega)})}{\eta \mathcal{U}^{(\omega)}_{XY}} d\omega -\int_a^b \biginprod{\frac{\mathcal{Z}^{Y,\omega}_{T,\eta}}{\sqrt{b_2 T_2}} }{\eta \mathcal{U}^{(\omega)}_{XY}}  d\omega \Big \vert=o_P\Big(\frac{1}{\sqrt{b_{1} T_1+ b_{2}T_2}}\Big). \tageq \label{eq:lemFbound2}
\end{align*}
\end{lemma}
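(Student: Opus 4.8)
The plan is to reduce both displays to a single exact algebraic identity that isolates the floor-rounding discrepancy between the normalisations $1/\flo{\eta T_1}$ and $1/T_1$, and then to absorb the resulting error into the maximal inequality \autoref{lem:boundseqSDO}. Writing $S^{\omega}_{k}=\sum_{s,t=1}^{k}\tilde{w}^{(\omega)}_{b_1,s,t}(X_s\otimes X_t)$ as in \autoref{lem:boundseqSDO}, the definitions \eqref{eq:Zx} and \eqref{eq:seqFQ} give
\[
\eta\big(\hat{\F}_X^{(\omega)}(\eta)-\F_X^{(\omega)}\big)=\frac{\eta}{\flo{\eta T_1}}\big(S^{\omega}_{\flo{\eta T_1}}-\flo{\eta T_1}\,\F_X^{(\omega)}\big),\qquad
\frac{\mathcal{Z}^{X,\omega}_{T,\eta}}{\sqrt{b_1T_1}}=\frac{1}{T_1}\big(S^{\omega}_{\flo{\eta T_1}}-\flo{\eta T_1}\,\F_X^{(\omega)}\big),
\]
so that, exactly as in the proof of \autoref{lem:Fbound}, their difference is
\[
D^{(\omega)}_\eta:=\eta\big(\hat{\F}_X^{(\omega)}(\eta)-\F_X^{(\omega)}\big)-\frac{\mathcal{Z}^{X,\omega}_{T,\eta}}{\sqrt{b_1T_1}}
=\frac{1}{T_1}\Big(\frac{\eta T_1}{\flo{\eta T_1}}-1\Big)\big(S^{\omega}_{\flo{\eta T_1}}-\flo{\eta T_1}\,\F_X^{(\omega)}\big).
\]

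Next I would bound the integrated inner product, following the pattern of the proofs of \autoref{lem:Fbound} and \autoref{thm:maxPS}. Moving $\sup_\eta$ inside the frequency integral and applying the Cauchy--Schwarz inequality for $\inprod{\cdot}{\cdot}_{S_2}$ gives
\[
\sup_{\eta\in[0,1]}\Big|\int_a^b\biginprod{D^{(\omega)}_\eta}{\eta\,\mathcal{U}^{(\omega)}_{XY}}_{S_2}d\omega\Big|
\le \int_a^b \snorm{\mathcal{U}^{(\omega)}_{XY}}_2\,\Big(\sup_{\eta\in[0,1]}\eta\,\snorm{D^{(\omega)}_\eta}_2\Big)\,d\omega.
\]
I would then control $\sup_\eta\eta\snorm{D^{(\omega)}_\eta}_2$ at each fixed $\omega$ by combining the uniformly negligible floor factor $\sup_\eta\big|\tfrac{\eta T_1}{\flo{\eta T_1}}-1\big|=O(T_1^{-1})$ from \eqref{eq:floor} (for $\eta<1/T_1$ the partial sum is empty) with \autoref{lem:boundseqSDO}, which in the form $\snorm{S^{\omega}_{k}-k\F_X^{(\omega)}}_2\le(k/b_1)^{1/2}M^{\omega}_{T_1}$, where $M^{\omega}_{T_1}=\max_{1\le k\le T_1}(b_1/k)^{1/2}\snorm{S^{\omega}_{k}-k\F_X^{(\omega)}}_2=O_p(\log^{1/\gamma}(T_1))$ and $\gamma=2+\varepsilon/2$, yields after bounding $\eta\le1$ and $\flo{\eta T_1}\le T_1$ the pointwise bound $\sup_\eta\eta\snorm{D^{(\omega)}_\eta}_2=O_p\big(T_1^{-3/2}b_1^{-1/2}\log^{1/\gamma}(T_1)\big)$. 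Since the moment estimates underlying \autoref{lem:boundseqSDO} are uniform in $\omega$ (inherited from the uniform-in-$\omega$ rates of \autoref{prop:consF}) and $\int_{-\pi}^{\pi}\snorm{\mathcal{U}^{(\omega)}_{XY}}_2\,d\omega<\infty$ by assumption, integrating the pointwise bound against this finite weight — exactly the Tonelli/maximal-inequality step of the proof of \autoref{thm:maxPS} — preserves the rate, so the left side of \eqref{eq:lemFbound1} is $O_p\big(b_1^{-1/2}T_1^{-3/2}\log^{1/\gamma}(T_1)\big)$.

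Finally I would convert this into the claimed stochastic order. Dividing the rate by the target $(b_1T_1)^{-1/2}$ leaves $T_1^{-1}\log^{1/\gamma}(T_1)\to0$, and under \autoref{as:ratiorates} one has $b_1T_1+b_2T_2\asymp b_1T_1$ — in the independent case because $b_1T_1/(b_1T_1+b_2T_2)\to\theta\in(0,1)$, and in the dependent case because $T_1=T_2$ and $b_1\sim b_2$ — so that $1/\sqrt{b_1T_1+b_2T_2}\asymp1/\sqrt{b_1T_1}$ and the error is indeed $o_P\big(1/\sqrt{b_1T_1+b_2T_2}\big)$. The second assertion \eqref{eq:lemFbound2} follows verbatim with $(X,b_1,T_1)$ replaced by $(Y,b_2,T_2)$, the only change being that the rate is now compared against $1/\sqrt{b_2T_2}\asymp1/\sqrt{b_1T_1+b_2T_2}$.

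The step I expect to be the main obstacle is not any single estimate but the simultaneous handling of the supremum over $\eta$, the integral over $\omega$, and the expectation: the argument hinges on pushing $\sup_\eta$ inside $\int_a^b$ and invoking the maximal inequality \emph{at each fixed frequency}, using the integrability $\int\snorm{\mathcal{U}^{(\omega)}_{XY}}_2\,d\omega<\infty$ to avoid needing a bound uniform in $\omega$. The floor approximation \eqref{eq:floor} is what makes the whole difference vanish one order faster than either of the two terms it compares, which is precisely what the extra factor $T_1^{-1}$ in the final rate reflects.
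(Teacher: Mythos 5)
Your proposal is correct and follows essentially the same route as the paper: isolate the difference as the floor-rounding factor $\frac{1}{T_1}\big(\frac{\eta T_1}{\flo{\eta T_1}}-1\big)$ times the centered partial sum, bound the inner product by Cauchy--Schwarz using the integrability of $\mathcal{U}^{(\omega)}_{XY}$, control the partial-sum maximum via \autoref{lem:boundseqSDO} (handling $\eta\le 1/T_1$ separately), and convert to the stated $o_P$ rate via \autoref{as:ratiorates}. The only cosmetic deviation is that you keep the $\omega$-integrability of $\mathcal{U}^{(\omega)}_{XY}$ in integral form where the paper uses a uniform bound, which changes nothing.
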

\begin{proof}
We prove only \eqref{eq:lemFbound1} as \eqref{eq:lemFbound2} follows similarly. Note from the definition of $ \mathcal{Z}^{X,\omega}_{T,\eta}$ and the triangle inequality, it suffices to show that 
\begin{align*}
&  \sup_{\eta \in [0,1]} \int_a^b  \Big \vert \frac{\eta T_1}{\flo{\eta T_1}}-1\Big \vert \Big \vert  \biginprod{\frac{1}{\sqrt{b_1 T_1}} \mathcal{Z}^{X,\omega}_{T,\eta}}{\eta \mathcal{U}^{(\omega)}_{XY}} \Big \vert d\omega =o_P\Big(\frac{1}{\sqrt{b_{1} T_1+ b_{2}T_2}}\Big). \tageq \label{eq:lemFbound1}
\end{align*}
The Cauchy-Schwarz inequality implies
\begin{align*}
& \E \sup_{\eta \in [0,1]} \int_a^b  \Big \vert \frac{\eta T}{\flo{\eta T}}-1\Big \vert \Big \vert  \biginprod{\frac{1}{T_1} \sum_{s=1}^{\flo{\eta T_1}}\Big(\sum_{t=1}^{\flo{\eta T_1}}  \tilde{w}^{(\omega)}_{b_1,s,t} (X_s \otimes X_t) - \F_X^{(\omega)}\Big)}{\eta  \mathcal{U}^{(\omega)}_{XY}} \Big \vert d\omega
\\& \le \sup_{\eta \in [0,1]} \Big \vert \frac{\eta T}{\flo{\eta T}}-1\Big \vert  \E \int_a^b \sup_{\eta \in [0,1]}\bigsnorm{ \frac{1}{ T_1} \sum_{s=1}^{\flo{\eta T_1}}\Big(\sum_{t=1}^{\flo{\eta T_1}}  \tilde{w}^{(\omega)}_{b_1,s,t} (X_s \otimes X_t) - \F_X^{(\omega)}\Big)}_2 \bigsnorm{\eta  \mathcal{U}^{(\omega)}_{XY}}_2d\omega
\end{align*}
Observe that for $\eta \le 1/T$, the summation is identically zero, while the approximation to the floor function is identically zero for $\eta=1/T$. Hence, we only we only have to consider the case where  $\eta > 1/T$. The approximation error of the floor function is in this case given by 
\[
\sup_{\eta \in (1/T,1]} \Big \vert \frac{\eta T}{\flo{\eta T}}-1\Big \vert  \le C\sup_{\eta \in (1/T,1]} \Big \vert \frac{1}{\flo{\eta T}}\Big \vert=O(1/T). \tageq \label{eq:floor} 
\] 
Then, using additionally that $\sup_{\eta} \snorm{\eta \mathcal{U}^{(\omega)}_{XY}}_2 \le C$ for some constant $C$, we find
\begin{align*}
& \le \big(\frac{C}{T_1}\big) \int_a^b \E \sup_{\eta \in [0,1]}\bigsnorm{ \frac{1}{ T_1} \sum_{s=1}^{\flo{\eta T_1}}\Big(\sum_{t=1}^{\flo{\eta T_1}}  \tilde{w}^{(\omega)}_{b_1,s,t} (X_s \otimes X_t) - \F_X^{(\omega)}\Big)}_2 d\omega
\\&{ \le \big(\frac{C}{T_1}\big) \int_a^b\Big( \E \big( \sup_{\eta \in [0,1]}\bigsnorm{ \frac{1}{T_1} \sum_{s=1}^{\flo{\eta T_1}}\Big(\sum_{t=1}^{\flo{\eta T_1}}  \tilde{w}^{(\omega)}_{b_1,s,t} (X_s \otimes X_t) - \F_X^{(\omega)}\Big)}_2\big)^{\gamma}\Big)^{1/\gamma} d\omega}
\\& 
=O\big(\frac{1}{T}\big) O\big(T_1^{-1} b_1^{-\gamma/2} T_1^{\gamma/2} +b_1^{2 \gamma \ell}\big)^{1/\gamma} 
\end{align*}
where we used Jensen's inequality in the last inequality and where the last line follows from \eqref{eq:varB1} and \eqref{eq:biasB1}, and \autoref{as:ratiorates}.
\end{proof}

\begin{lemma}\label{lem:approx}
Suppose Assumptions \ref{as:depstruc}-\ref{as:bwrates} are satisfied. Denote $\hat{\F}^{(\omega)}_{k} = k^{-1}\sum_{s,t=1}^{k}  \tilde{w}^{(\omega)}_{b_T,s,t}(X_s \otimes X_t) $ and let $\mathcal{M}^{(\omega)}_{X,k,m} = k^{-1}\sum_{s=1}^{k} \sum_{t=0}^{s-1} \tilde{w}^{(\omega)}_{b_T,s,t}(\dmp{s}{X} \otimes \dmp{t}{X})  $ where $\{\dmp{s}{X}\}$ is given by \eqref{eq:Dm}. Then, for any $1 \le k \le T$
\begin{align*}
\lim_{m\to \infty} \limsup_{T \to \infty} & \frac{1}{\mathcal{W}^\gamma_{b_T}}
\E \Big( \max_{1 \le k \le T}\bigsnorm{k(\hat{\F}^{(\omega)}_{k}-\E\hat{\F}^{(\omega)}_{k}) -\ldm{}{k}-\ldm{\dagger}{k} }_{2}\Big)^{\gamma} =0,
\end{align*}
for $\gamma >2$.
\end{lemma}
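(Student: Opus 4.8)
The goal is to show that the partial-sum process built from the quadratic-form estimator $\hat{\F}^{(\omega)}_k$ can be approximated uniformly in $k$ by the martingale-type process $\ldm{}{k}+\ldm{\dagger}{k}$ associated with the $m$-dependent approximations $\dmp{s}{X}$, with the approximation error vanishing as $m\to\infty$ after $T\to\infty$. \textbf{The plan is} to decompose the difference into (a) the error from replacing the full innovations $X_s$ by their $m$-dependent truncations $\X^{(m)}_s$, and (b) the error from reorganizing the doubly-indexed centered quadratic form $k(\hat\F^{(\omega)}_k-\E\hat\F^{(\omega)}_k)$ into the off-diagonal martingale sums $\ldm{}{k}+\ldm{\dagger}{k}$. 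The essential point is that $\ldm{}{k}$ captures exactly the terms $s>t$ (and its adjoint the terms $s<t$), so the difference involves the diagonal $s=t$ contributions together with the tail contributions from the physical-dependence truncation.

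\textbf{First I would} write $k(\hat\F^{(\omega)}_k-\E\hat\F^{(\omega)}_k)=\sum_{s,t=1}^{k}\tilde w^{(\omega)}_{b_T,s,t}\big((X_s\otimes X_t)-\E(X_s\otimes X_t)\big)$ and split the double sum according to $s>t$, $s<t$, $s=t$. The strictly-triangular parts should be matched against $\ldm{}{k}$ and $\ldm{\dagger}{k}$ after substituting the martingale-difference decomposition \eqref{eq:Dm}: since $D^{(\omega)}_{m,k}$ is the projection $\E[\cdot|\G_k]-\E[\cdot|\G_{k-1}]$ summed over the future, telescoping these projections recovers (up to the $m$-dependent truncation) the centered products. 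The diagonal terms $\tilde w^{(\omega)}_{b_T,s,s}((X_s\otimes X_s)-\F^{(\omega)})$ contribute $O(k)$ summands each $O_P(1)$, so their maximum over $k$ is of order $k=O(T)$, which is $o(\mathcal{W}_{b_T}^{\gamma/\gamma})=o(\sqrt{T/b_T})$ since $b_T T\to\infty$; this is where \autoref{as:bwrates} enters decisively.

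\textbf{The key technical device} will be a maximal inequality for the martingale (or approximating-martingale) increments, analogous to \autoref{lem:boundseqSDO} and \autoref{thm:maxPS}, combined with the Móricz moment inequality \citep{Mor76} to pass from a bound on $\E\bignorm{\cdot}_2^\gamma$ for fixed $k$ to a bound on $\E(\max_{1\le k\le T}\bignorm{\cdot}_2)^\gamma$. Concretely, I would establish that for each fixed $k$,
\[
\frac{1}{\mathcal{W}^\gamma_{b_T}}\E\bignorm{k(\hat\F^{(\omega)}_k-\E\hat\F^{(\omega)}_k)-\ldm{}{k}-\ldm{\dagger}{k}}_2^\gamma \le C\,\phi(m)
\]
where $\phi(m)\to 0$ as $m\to\infty$, uniformly in the window weights, using that $\sum_{j>m}\nu_{\hi,p}(X_j)\to 0$ under \ref{II}. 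The superadditivity condition $(k)^\alpha+(l)^\alpha\le (k+l)^\alpha$ (with $\alpha=1+\epsilon/2$, $\gamma=2+\epsilon$) verifies the hypothesis of Theorem 1 of \cite{Mor76}, yielding the maximal bound with the same order, after which taking $\limsup_{T\to\infty}$ and then $\lim_{m\to\infty}$ gives the claim.

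\textbf{The hard part} will be controlling the substitution error $X_s\mapsto \X^{(m)}_s$ inside the quadratic form uniformly over $k$ while keeping the normalization $\mathcal{W}_{b_T}^{-\gamma}$: one must show that the cross terms $(X_s-\X^{(m)}_s)\otimes X_t$ and $\X^{(m)}_s\otimes(X_t-\X^{(m)}_t)$, summed against the weights, have $\gamma$-th moment of maximal partial sum of order $o(\mathcal{W}^\gamma_{b_T})$ with a constant tending to zero in $m$. This requires a careful Burkholder/Rosenthal-type argument (cf. \autoref{lem:Burkh}) for Hilbert-Schmidt-valued martingale differences together with the physical-dependence summability $\sum_j\nu_{\hi,4+\varepsilon}(X_j)<\infty$ from \autoref{as:depstruc}, ensuring the weights $\sum_{s,t}|\tilde w^{(\omega)}_{b_T,s,t}|^2=\mathcal{W}^2_{b_T}\sim \kappa T/b_T$ absorb the fourth-moment growth exactly as in the Lindeberg verification of \autoref{thm:NBMf}. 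Once this truncation error is shown to be $C\phi(m)\mathcal{W}^\gamma_{b_T}$ with $\phi(m)\to0$, the iterated limit vanishes and the lemma follows.
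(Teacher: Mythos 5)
Your overall architecture coincides with the paper's: establish a fixed-$k$ moment bound for $\snorm{k(\hat\F^{(\omega)}_k-\E\hat\F^{(\omega)}_k)-\ldm{}{k}-\ldm{\dagger}{k}}_2$ whose leading term is $O\big((k/b_T)^{1/2}\Upsilon_{2p,m}\big)$ with $\Upsilon_{2p,m}=2\sum_{t\ge0}\min\big(\nu_{\hi,2p}(X_t),\sum_{i\ge m}\nu^2_{\hi,2p}(X_i)\big)\to0$ as $m\to\infty$, verify the superadditivity hypothesis of Theorem 1 of \cite{Mor76} to upgrade this to the maximum over $k\le T$, divide by $\mathcal{W}^{\gamma}_{b_T}\asymp (T/b_T)^{\gamma/2}$, and take the iterated limits. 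The paper imports the fixed-$k$ bound wholesale from Lemma 3.3 of \cite{vD19} rather than re-deriving the truncation/reorganization decomposition, but that is the same substance as your steps (a) and (b).

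There is, however, a genuine error in your treatment of the diagonal. You bound $\sum_{s=1}^{k}\tilde w^{(\omega)}_{b_T,s,s}\big((X_s\otimes X_s)-\E(X_s\otimes X_s)\big)$ crudely by $O(k)=O(T)$ and then assert $T=o(\sqrt{T/b_T})$ ``since $b_TT\to\infty$''. This is backwards: $T=o(\sqrt{T/b_T})$ is equivalent to $Tb_T\to0$, which \emph{contradicts} \autoref{as:bwrates}; a triangle-inequality bound of order $T$ on the diagonal is therefore fatal, not harmless. What actually saves the diagonal is the centering: the diagonal weights are constant, $\tilde w^{(\omega)}_{b_T,s,s}=(2\pi)^{-1}w(0)$, and the centered summands $(X_s\otimes X_s)-\E(X_s\otimes X_s)$ form a weakly dependent $S_2(\Hi)$-valued sequence with finite moments of order $\gamma>2$ and summable dependence coefficients under \autoref{as:depstruc}, so their partial sums are of order $\sqrt{k}$ in $L^{\gamma}$ --- this is precisely the term $|w(0)|\sqrt{k}\sum_{t\ge0}\nu_{\hi,2p}(X_t)$ appearing in the paper's bound. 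One then concludes via $\sqrt{T}/\mathcal{W}_{b_T}\sim\sqrt{b_T}\to0$, i.e.\ using $b_T\to0$, not $b_TT\to\infty$. With this correction (and with the $k^{\gamma/2}$-rate, rather than $k^{\gamma}$-rate, fed into the M\`oricz superadditivity condition for the diagonal piece as well), the remainder of your argument goes through as in the paper.
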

\begin{proof}[Proof of \autoref{lem:approx}]
From the proof of Lemma 3.3 of \cite{vD19}, applied to the sequential spectral density estimators, we have for some constant $C_{2\gamma}$
\begin{align*}
&\Big( \E\bigsnorm{k(\hat{\F}^{(\omega)}_{k}-\E\hat{\F}^{(\omega)}_{k}) -\ldm{}{k}-\ldm{\dagger}{k} }^\gamma_{2}\Big)^{1/\gamma} 
\\& \le  C_{2\gamma} \Big(k \sum_{h=1}^{k-1}\big\vert {w}(b_T h) \big\vert^2 \Big)^{1/2} \Upsilon_{2\gamma,m}\sum_{t=0}^{\infty}\nu_{\hi,2\gamma}(X_t) +|w(0)| \sqrt{k}\sum_{t=0}^{\infty}\nu_{\hi,2\gamma}(X_t)\\& +C_{2\gamma}\|X_0\|^2_{\hi,2\gamma} m^{3/2} \sqrt{k}\Big( \max_{t}|w(b_T t)|+\big(m\sum_{h=1}^{k}| w(b_Th)- w(b_T(h-1))|^2 \big)^{1/2}\Big),
\end{align*}
where $\Upsilon_{2p,m} = 2\sum_{t=0}^{\infty}\min\big(\nu_{\hi, 2p}(X_t),\big(\sum_{i=m}^{\infty}\nu_{\hi, 2p}^2(X_i))^{1/2}\big)$.  
Therefore, using a similar argument as in the proof of \autoref{lem:boundseqSDO}  shows that Theorem 1 of M{\'o}ricz, (1976) and the conditions on the weight function imply, for $\gamma >2$,
\begin{align*}
\E\Big(\max_{1\le k\le T}\bigsnorm{k(\hat{\F}^{(\omega)}_{k}-\E\hat{\F}^{(\omega)}_{k}) -\ldm{}{k}-\ldm{\dagger}{k} }_{2} \Big)^\gamma = O( T^{\gamma/2} b_T^{-\gamma/2} \Upsilon^{\gamma}_{2p,m})+m^{3/2 \gamma} T^{\gamma/2}\big(O(1)+ o(m^{1/2} b^{-1/2}_T) \big)^{\gamma}.
\end{align*}
Consequently, since $b_T \to 0$ as $T \to \infty$,
\begin{align*}
&\lim_{m \to \infty} \limsup_{T \to \infty} b^{\gamma/2}_T T^{-\gamma/2}\E\Big(\max_{1\le k\le T}\bigsnorm{k(\hat{\F}^{(\omega)}_{k}-\E\hat{\F}^{(\omega)}_{k}) -\ldm{}{k}-\ldm{\dagger}{k} }_{2}\Big)^\gamma  
\\& =\lim_{m \to \infty} \limsup_{T \to \infty} O( \Upsilon^{\gamma}_{2\gamma,m})
+\lim_{m \to \infty} \limsup_{T \to \infty}m^{3/2\gamma}b^{\gamma/2}_T \big(O(1)+ o(m^{1/2} b^{-1/2}_T) \big)^{\gamma}=0,
\end{align*}
where we used  that $\lim_{m \to \infty} \Upsilon^{\gamma}_{2\gamma,m}=0$.
\end{proof}

\begin{lemma}\label{lem:BrMbias}
Suppose Assumptions \ref{as:depstruc}-\ref{as:bwrates} hold. 
Then, for any $1 \le k \le T_1$, 
\begin{align*}
&\lim_{T_1, T_2 \to \infty}\frac{\sqrt{b_{T_1} T_1+b_{T_2} T_2}}{\sqrt{b_{T_1} T_1}}  \sup_{\omega} \max_{1\le k\le T_1}  \biginprod{\frac{\sqrt{b_{T_1}}}{\sqrt{T_1}}\E\sum_{s=1}^{k} \Big(\sum_{t=1}^{k} \tilde{w}^{(\omega)}_{b_1,s,t}(X_s \otimes X_t)  - {\F}^{\omega}_X \Big)}{\mathcal{U}^{(\omega)}_{X,Y}}_S =0.
\end{align*}
\end{lemma}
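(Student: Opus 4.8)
The plan is to reduce the statement to a deterministic bias estimate for the sequential lag–window estimator and then balance it against the normalising rate. Write $S^{\omega}_k=\sum_{s,t=1}^{k}\tilde{w}^{(\omega)}_{b_1,s,t}(X_s\otimes X_t)$, so that $k^{-1}\E S^{\omega}_k=\E\hat{\F}^{(\omega)}_{X,k}$ is the expectation of the estimator \eqref{eq:relFQ} built from the first $k$ observations with bandwidth $b_1$. Since $\mathcal{U}^{(\omega)}_{XY}$ is bounded in $\omega$ in each of the three applications, so that $\sup_{\omega}\snorm{\mathcal{U}^{(\omega)}_{XY}}_2<\infty$, the Cauchy–Schwarz inequality for $\inprod{\cdot}{\cdot}_{S_2}$ gives
\begin{align*}
\sup_{\omega}\max_{1\le k\le T_1}\Big\vert\biginprod{\tfrac{\sqrt{b_1}}{\sqrt{T_1}}\E\big(S^{\omega}_k-k\F_X^{(\omega)}\big)}{\mathcal{U}^{(\omega)}_{XY}}_{S_2}\Big\vert
\le \frac{\sqrt{b_1}}{\sqrt{T_1}}\Big(\sup_{\omega}\snorm{\mathcal{U}^{(\omega)}_{XY}}_2\Big)\sup_{\omega}\max_{1\le k\le T_1}\bignorm{\E\big(S^{\omega}_k-k\F_X^{(\omega)}\big)}_2,
\end{align*}
so everything reduces to a uniform (in $\omega$ and $k\le T_1$) bound on the sequential bias $\snorm{\E(S^{\omega}_k-k\F_X^{(\omega)})}_2$.

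For this bias I would invoke Proposition \autoref{prop:consF} applied to $\hat{\F}^{(\omega)}_{X,k}=k^{-1}S^{\omega}_k$, which yields $\snorm{\E\hat{\F}^{(\omega)}_{X,k}-\F_X^{(\omega)}}_2=O(b_1^{\ell}+k^{-1})$ uniformly in $\omega$. The only point requiring care—and the main obstacle—is that Proposition \autoref{prop:consF} is phrased for a bandwidth tied to the sample size, whereas here the single bandwidth $b_1=b(T_1)$ is used for every window length $k\le T_1$. To certify uniformity in $k$ I would expand, with $C_h:=\E(X_h\otimes X_0)$ and using weak stationarity together with the evenness of $w$,
\begin{align*}
\E\big(S^{\omega}_k-k\F_X^{(\omega)}\big)&=\frac{1}{2\pi}\sum_{|h|<k}(k-|h|)\big(w(b_1 h)-1\big)C_h e^{-\im\omega h}\\
&\quad-\frac{1}{2\pi}\sum_{|h|<k}|h|\,C_h e^{-\im\omega h}-\frac{k}{2\pi}\sum_{|h|\ge k}C_h e^{-\im\omega h},
\end{align*}
and bound termwise in $\snorm{\cdot}_2$ using $\snorm{C_h}_2\le\snorm{C_h}_1$ and the summability $\sum_h|h|^{\ell}\nu_{\hi,2}(X_h)<\infty$ of \eqref{eq:sumnul} (which controls $\sum_h|h|^{\ell}\snorm{C_h}_1$). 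The smoothing bias (first term) is then $O(kb_1^{\ell})$ independently of $k$, while the triangular–weight defect and the tail (second and third terms) are $O(1)$, indeed $O(k^{1-\ell})$ for the tail. Multiplying by $k$ gives $\snorm{\E(S^{\omega}_k-k\F_X^{(\omega)})}_2=O(kb_1^{\ell}+1)$, hence
\begin{align*}
\sup_{\omega}\max_{1\le k\le T_1}\bignorm{\E\big(S^{\omega}_k-k\F_X^{(\omega)}\big)}_2=O\big(T_1 b_1^{\ell}+1\big).
\end{align*}

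It then remains to substitute and simplify. Using $\tfrac{\sqrt{b_1}}{\sqrt{T_1}}=\tfrac{\sqrt{b_1 T_1}}{T_1}$, the prefactor collapses to
\begin{align*}
\frac{\sqrt{b_1 T_1+b_2 T_2}}{\sqrt{b_1 T_1}}\cdot\frac{\sqrt{b_1}}{\sqrt{T_1}}=\frac{\sqrt{b_1 T_1+b_2 T_2}}{T_1},
\end{align*}
so the whole expression is $O\big(\sqrt{b_1 T_1+b_2 T_2}\,b_1^{\ell}\big)+O\big(\sqrt{b_1 T_1+b_2 T_2}/T_1\big)$. In either scenario of \autoref{as:ratiorates} one has $b_1 T_1+b_2 T_2\asymp b_1 T_1$, whence the first term is $\asymp\sqrt{b_1^{1+2\ell}T_1}\to0$ by \autoref{as:bwrates}, and the second is $\asymp\sqrt{b_1/T_1}\to0$ since $b_1\to0$ and $T_1\to\infty$. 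It is reassuring that the worst case $k\asymp T_1$ is precisely the regime calibrated by \autoref{as:bwrates}, so that the uniform maximum over $k$ costs nothing beyond the bandwidth balance already imposed.
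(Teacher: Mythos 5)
Your proposal is correct and follows essentially the same route as the paper: Cauchy--Schwarz to peel off $\sup_\omega\snorm{\mathcal{U}^{(\omega)}_{XY}}_2$, the bias bound $\snorm{\E(S^\omega_k-k\F_X^{(\omega)})}_2=O(kb_1^{\ell}+1)$ from Proposition \ref{prop:consF}, and the rate balance $b_1^{\ell+1/2}T_1^{1/2}\to0$ from Assumptions \ref{as:bwrates} and \ref{as:ratiorates}. The only difference is that you re-derive the sequential bias explicitly to justify uniformity in $k$ at the fixed bandwidth $b(T_1)$, whereas the paper simply cites the proposition; your extra $O(1)$ term is consistent with the $T^{-q}$ contribution already present there.
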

\begin{proof}[Proof of \autoref{lem:BrMbias}]
For any $1 \le k \le T_1$, we obtain using the Cauchy-Schwarz inequality and Proposition \autoref{prop:consF}
\begin{align*} 
\sup_{\omega}  \max_{1\le k\le T_1} \bigsnorm{\frac{\sqrt{b_{T_1}}}{\sqrt{T_1}}\E\sum_{s=1}^{k} \Big(\sum_{t=1}^{k} \tilde{w}^{(\omega)}_{b_{T_1},s,t}(X_s \otimes X_t)   - {\F}^{\omega}_X \Big)}_2 \bigsnorm{\mathcal{U}^{(\omega)}_{X,Y}}_2 & =
  \max_{1\le k\le T}O(\frac{\sqrt{b_{T_1}}}{\sqrt{T_1}}) O(b_{T_1}^\ell k)
\\& = O(b_{T_1}^{\ell+1/2} T_1^{1/2}),
\end{align*}
which goes to zero for bandwidths satisfying satisfying \autoref{as:bwrates} 
The result now follows from \autoref{as:ratiorates}.
\end{proof}

\begin{lemma}\label{lem:diffboundEig}
Let $\F_X^{(\omega)}$ be the spectral density operator of a weakly stationary functional time series $\{X_t\}_{t\in \znum}$ with eigendecomposition $\sum_{k=1}^{\infty}\lambda^{(\omega)}_{X,k} \Pi^{(\omega)}_{X,k}$. Let $\Big\{  \hat{\lambda}^{(\omega)}_{X,k} (\eta) \Big\}_{k \ge 1}$ $\Big\{  \hat{\Pi}^{(\omega)}_{X,k} (\eta) \Big\}_{k \ge 1}$ be the sequence of eigenvalues and eigenprojectors, respectively, of the sequential estimators $\hat{\F}_X^{(\omega)}(\eta)$, $\eta \in [0,1]$ of $\F_X^{(\omega)}$ as defined in \eqref{eq:seqFQ}. Then, under Assumption \ref{as:depstruc}-\ref{as:bwrates}, for each $\omega \in [-\pi,\pi]$, 
\begin{align*}
\mathrm{(i)}\, & \bigsnorm{\hat{\Pi}^{(\omega)}_{X,k}(\eta) - \Pi^{(\omega)}_{X,k}}_2  \le  \frac{\sqrt{8}}{G_{\widetilde{\otimes},k} }\Big(
\snorm{\hat{\F}_X^{(\omega)}(\eta)-\F_X^{(\omega)}}^2_2+2 \snorm{\F_X^{(\omega)}}_2 \snorm{\hat{\F}_X^{(\omega)}(\eta)-\F_X^{(\omega)}}_2 \Big)
\\  \mathrm{(ii)}\,&  \Big\vert (\lamxT{k}(\eta))^2 \ -( \lamx{\omega}{k})^2 \Big\vert 
\le C \snorm{\hat{\F}_X^{(\omega)}(\eta)-\F_X^{(\omega)}}_2+ \snorm{\hat{\F}_X^{(\omega)}(\eta)-\F_X^{(\omega)}}^2_2,
\end{align*} 
where  $C\ge 0$ is a bounded constant and where $G_{\widetilde{\otimes},k} =  \inf_{j,l:\{j,  l = k\}^\complement}|\lambda^{(\omega)}_{X,l}\lambda^{(\omega)}_{X,j}-(\lambda^{(\omega)}_{X,k})^2|$. 
\end{lemma}

\begin{proof}
We drop the dependency on $\omega$ in the notation below. For (i), we will first show that
\begin{align*}
\bigsnorm{\hat{\Pi}_{kk}(\eta) - \Pi_{kk}}^2_2 \le 2\sum_{\substack{l,j=1,\\ \{j,  l = k\}^\complement}}^{\infty} \Big \vert \biginprod{\hat{\Pi}_{kk}(\eta)}{{\Pi}_{lj}}\Big\vert^2. \tageq \label{eq:PinormB5}
\end{align*}
By Parseval's identity and orthogonality of the projectors
\begin{align*}
\bigsnorm{\hat{\Pi}_{kk}(\eta) - \Pi_{kk}}^2_2& = \sum_{j,l=1}^{\infty} \Big \vert \biginprod{\hat{\Pi}_{kk}(\eta)- \Pi_{kk}}{{\Pi}_{lj}}\Big\vert^2
 =\Big \vert \biginprod{\hat{\Pi}_{kk}(\eta)}{{\Pi}_{kk}} -1\Big\vert^2+\sum_{\substack{l,j=1,\\ \{j,  l = k\}^\complement}}^{\infty}\Big \vert \biginprod{\hat{\Pi}_{kk}(\eta)}{{\Pi}_{lj}}\Big\vert^2.
\end{align*}
Then observe that, using again Parseval's identity, and that the (empirical) projectors have unit norm;
\begin{align*}
\Big \vert \biginprod{\hat{\Pi}_{kk}(\eta)}{{\Pi}_{kk}} -1\Big\vert^2 &= \Big \vert \biginprod{\hat{\Pi}_{kk}(\eta)}{{\Pi}_{kk}} \Big\vert^2+ \sum_{j,l=1}^{\infty} \Big \vert \biginprod{\hat{\Pi}_{kk}(\eta)}{{\Pi}_{lj}}\Big\vert^2  - 2\Re\biginprod{\hat{\Pi}_{kk}(\eta)}{{\Pi}_{kk}}
\\& = \sum_{\substack{l,j=1,\\ \{j,  l = k\}^\complement}}^{\infty} \Big \vert \biginprod{\hat{\Pi}_{kk}(\eta)}{{\Pi}_{lj}}\Big\vert^2 +2\Big \vert \biginprod{\hat{\phi}_{k}(\eta)}{\phi_{k}} \Big\vert^4 - 2\Big \vert \biginprod{\hat{\phi}_{k}(\eta)}{\phi_{k}} \Big\vert^2
\\& \le \sum_{\substack{l,j=1,\\ \{j,  l = k\}^\complement}}^{\infty} \Big \vert \biginprod{\hat{\Pi}_{kk}(\eta)}{{\Pi}_{lj}}\Big\vert^2 
\end{align*}
where the last equality follows from the fact that $\big \vert \inprod{\hat{\phi}_{k}(\eta)}{\phi_{k}} \big\vert \le \|\hat{\phi}_{k}(\eta)\|_\Hi\|\phi_{k}\|_\Hi \le 1$. This proves \eqref{eq:PinormB5}. Next, we observe that 
\begin{align*}
\bigsnorm{\F_{\widetilde{\otimes}}\big(\hat{\Pi}_{kk}(\eta)\big) - \lambda^2_{k}\hat{\Pi}_{kk}(\eta)}_2 & \le 
\bigsnorm{\big(\F_{\widetilde{\otimes}}-\hat{\F}_{\widetilde{\otimes}}(\eta)\big)\big(\hat{\Pi}_{kk}(\eta)\big)}_2+\bigsnorm{ \big(\hat{\lambda}^2_{k}(\eta)-\lambda^2_{k}\big)\hat{\Pi}_{kk}(\eta)}_2
\\&  \le 
2 \bigsnorm{\F_{\widetilde{\otimes}}-\hat{\F}_{\widetilde{\otimes}}(\eta)}_2. \tageq \label{eq:FnormB5}
\end{align*}
Furthermore 
\begin{align*}
\bigsnorm{\F_{\widetilde{\otimes}}(\hat{\Pi}_{kk}(\eta)) - \lambda^2_{k}(\hat{\Pi}_{kk}(\eta))}^2_2 & = \sum_{l,j=1}^{\infty}\Big \vert \biginprod{\F_{\widetilde{\otimes}}(\hat{\Pi}_{kk}(\eta))}{\Pi_{lj}} -\biginprod{\lambda^2_{k}(\hat{\Pi}_{kk}(\eta))}{\Pi_{lj}}\Big \vert^2
\\& = \sum_{\substack{l,j=1,\\ \{j,  l = k\}^\complement}}^{\infty}\Big \vert \biginprod{\hat{\Pi}_{kk}(\eta)}{(\lambda_{l}\lambda_{j}-\lambda^2_{k})\Pi_{lj}}\Big \vert^2
\\& \ge G^2_{\widetilde{\otimes},k}  \sum_{\substack{l,j=1,\\ \{j,  l = k\}^\complement}}^{\infty} \Big \vert \biginprod{\hat{\Pi}_{kk}(\eta)}{{\Pi}_{lj}}\Big\vert^2
\end{align*}
where $G_{\widetilde{\otimes},k} =  \inf_{j,l:\{j,  l = k\}^\complement}|\lambda_{l}\lambda_{j}-\lambda^2_{k}|$. 
This together with \eqref{eq:PinormB5} and 
\eqref{eq:FnormB5} yields
after rearranging
\begin{align*}
\bigsnorm{\hat{\Pi}_{kk}(\eta) - \Pi_{kk}}_2 &\le  \frac{\sqrt{8}}{G_{\widetilde{\otimes},k} }
\bigsnorm{\F_{\widetilde{\otimes}}-\hat{\F}_{\widetilde{\otimes}}(\eta)}_2
\le  \frac{\sqrt{8}}{G_{\widetilde{\otimes},k} }\Big(
\snorm{\F-\hat{\F}(\eta)}^2_2+2 \snorm{\F}_2 \snorm{\F-\hat{\F}(\eta)}_2 \Big).
\end{align*}
For (ii), we simply observe that \[
  \Big\vert \hat{\lambda}^2_k(\eta)\ -\lambda^2_k(\eta) \Big\vert
\le 2 \lambda_k(\eta) \Big\vert \hat{\lambda}_k(\eta) - \lambda_k(\eta) \Big\vert +  \Big\vert\hat{\lambda}_k(\eta)\ -\lambda_k(\eta) \Big\vert^2.
\] 
\end{proof}
\begin{lemma}[Lemma A.1 of \citep{vD19}] \label{lem:Burkh}
Let ${\{M_k\}}_{k=1,\ldots,n} \in \op^p_H, p>1$, be a martingale with respect to $\G$ with $\{D_k\}$ denoting its difference sequence and let $\{A_k\}_{k=1,\ldots,n} \in S_\infty(H)$. Then, for $q=\min(2,p)$,
\[
\Bignorm{\sum_{k=1}^{n} A_k (D_k)}^q_{\hi,p} \le K^q_{p} \sum_{k=1}^{n} \snorm{A_k}_{\infty}^q \|D_k\|^q_{\hi,p}
\]
where $K^q_{p}= (p^\star-1)^q$ with $p^\star=\max(p,\frac{p}{p-1})$.
\end{lemma}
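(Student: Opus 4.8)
The plan is to recognize the left-hand side as the $L^p$-norm of a Hilbert-space-valued martingale and to apply a square-function (Burkholder-type) inequality, after which the claimed bound with exponent $q=\min(2,p)$ follows from elementary convexity. First I would observe that, since each $A_k$ is a (deterministic, hence $\G_{k-1}$-measurable) bounded linear operator and a bounded operator commutes with the $\Hi$-valued conditional expectation (understood as a Bochner integral), the sequence $e_k:=A_k(D_k)$ satisfies $\E[e_k\mid \G_{k-1}]=A_k\,\E[D_k\mid\G_{k-1}]=0$. Hence $\{e_k\}_{k=1,\dots,n}$ is again an $\Hi$-valued martingale difference sequence and $N_n:=\sum_{k=1}^n e_k$ an $\Hi$-valued martingale in $\op^p_\Hi$.

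The key imported ingredient is the Burkholder square-function inequality for Hilbert-space-valued martingales with sharp constant: for any $\Hi$-valued martingale difference sequence $\{e_k\}$ and $p>1$,
\[
\Bignorm{\sum_{k=1}^n e_k}_{\hi,p} \le (p^\star-1)\,\Bignorm{\Big(\sum_{k=1}^n \|e_k\|_\Hi^2\Big)^{1/2}}_{\hi,p},
\]
which reflects the fact that a Hilbert space is UMD with martingale-transform constant $p^\star-1$. I would either cite this directly (Burkholder, Pinelis) or, if a self-contained argument is wanted, derive it from Burkholder's functional/Bellman-function method. Combined with the pointwise operator bound $\|e_k\|_\Hi=\|A_k(D_k)\|_\Hi\le \snorm{A_k}_\infty\,\|D_k\|_\Hi$, which gives $\sum_k\|e_k\|_\Hi^2\le \sum_k \snorm{A_k}_\infty^2\|D_k\|_\Hi^2$ pointwise, this reduces the whole problem to estimating the $L^p$-norm of the scalar square function $\big(\sum_k \snorm{A_k}_\infty^2\|D_k\|_\Hi^2\big)^{1/2}$, in which the $\|D_k\|_\Hi$ are random.

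The final step is the case split that produces the exponent $q=\min(2,p)$. When $p\ge2$ (so $q=2$) I would apply Minkowski's inequality in $L^{p/2}$ to the nonnegative scalars $\snorm{A_k}_\infty^2\|D_k\|_\Hi^2$, giving $\big\|\sum_k \snorm{A_k}_\infty^2\|D_k\|_\Hi^2\big\|_{L^{p/2}}\le \sum_k \snorm{A_k}_\infty^2\|D_k\|_{\hi,p}^2$; squaring the Burkholder bound then yields the claim with $K_p=p^\star-1$. When $1<p<2$ (so $q=p$) the exponent $p/2<1$ forbids Minkowski, so instead I would use the elementary subadditivity $(\sum_k a_k)^{p/2}\le \sum_k a_k^{p/2}$ for $a_k\ge0$, take expectations, and obtain $\E\big(\sum_k \snorm{A_k}_\infty^2\|D_k\|_\Hi^2\big)^{p/2}\le \sum_k \snorm{A_k}_\infty^p\|D_k\|_{\hi,p}^p$; raising the Burkholder bound to the $p$-th power then gives the claim. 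In both regimes the constant is exactly $K_p^q=(p^\star-1)^q$.

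I expect the only genuine obstacle to be the sharp-constant square-function inequality; pinning down the optimal constant $p^\star-1$ (rather than some non-explicit $C_p$) is the delicate point and is what is inherited from the martingale literature, whereas the operator-norm domination and the two convexity estimates are routine. A secondary point to state carefully is the measurability/adaptedness of the $A_k$ required for $\{A_k(D_k)\}$ to remain a martingale difference sequence; in the applications of this paper the $A_k$ are deterministic weight operators, so this holds trivially.
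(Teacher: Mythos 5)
Your proposal is correct and follows essentially the same route as the paper's own proof: Burkholder's square-function inequality for Hilbert-valued martingales with the sharp constant $p^\star-1$, the pointwise bound $\|A_k(D_k)\|_{\Hi}\le \snorm{A_k}_\infty\|D_k\|_{\Hi}$, and the same case split (Minkowski in $L^{p/2}$ for $p\ge 2$, subadditivity of $x\mapsto x^{p/2}$ for $1<p<2$). The only cosmetic difference is that you apply the operator-norm domination inside the square function before the case split and explicitly verify that $\{A_k(D_k)\}$ is a martingale difference sequence, a point the paper leaves implicit.
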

\begin{proof}[Proof of \autoref{lem:Burkh}]
By Burkholder's inequality,
\begin{align*}
\Bignorm{\sum_{k=1}^{n}A_k (D_k)}^q_{\hi,p} =\Big(\E\Bignorm{\sum_{k=1}^{n} A_k (D_k)}^p_H\Big)^{q/p}  \le 
(p^\star-1)^q \Big(\E  \Big| \Big(\sum_{k=1}^{n}\|A_k (D_k)\|^2_H\Big)^{1/2}\Big|^p\Big)^{q/p}.
\end{align*}
Let $p<2$. Then, applying the inequality $|\sum_k x_k|^{r} \le \sum_k |x_k|^r$ for $r<1$ to $x_k = \|A_k(D_k)\|^2_{H}$, we obtain
\begin{align*}
(p^\star-1)^q \Big(\E \Big| \Big(\sum_{k=1}^{n}\|A_k (D_k)\|^2_H\Big)^{1/2}\Big|^{p}\Big)^{q/p} 
& \le (p^\star-1)^q \Big( \E\sum_{k=1}^{n}  \|A_k (D_k)\|^{p}_H\Big)^{q/p}  
\\& \le (p^\star-1)^q \Big( \sum_{k=1}^{n}  \snorm{A_k}_{\infty}^p \E\| D_k\|^{p}_H\Big)^{q/p} 
\\&  \le (p^\star-1)^q \sum_{k=1}^{n}  \snorm{A_k}^q _{\infty}\|D_k\|^q_{\hi,p},
\end{align*}
where the one before last inequality follows from Holder's inequality for operators and from the fact that $p=q$. 
For $p\ge2$, $q=2$. Therefore, an application of Minkowski's inequality to $\|\cdot\|_{\mathbb{C},{p/q}}$ and Holder's inequality yield in this case
\begin{align*}
(p^\star-1)^q \Big(\E \Big| \Big(\sum_{k=1}^{n}\|A_k (D_k)\|^q_H\Big)^{1/q}\Big|^p\Big)^{q/p} 
& \le 
(p^\star-1)^q \Big(\E \Big\vert\sum_{k=1}^{n}\|A_k (D_k)\|^q_H\Big\vert^{p/q}\Big)^{q/p} 
\\& \le (p^\star-1)^q \sum_{k=1}^{n} \snorm{A_k}^q _{\infty}( \E\|D_k\|^p_{H})^{q/p}
\\&  \le (p^\star-1)^q\sum_{k=1}^{n}  \snorm{A_k}^q _{\infty}\|D_k\|^q_{\hi,p}.
\end{align*}
\end{proof}

\begin{lemma}\label{lem:secM}
Let $\{\mathcal{D}^{(\omega)}_{XY,s,t}\}$ be defined as in \eqref{eq:ldm}. Then,  under \autoref{as:Weights}, for $\gamma >2$,
\begin{align*}
&\E\Big(\max_{1 \le k \le T}\bignorm{\sum_{t=2}^{k} \sum_{s=t-4m+1 \vee 1}^{t-1} \tilde{w}^{(\omega)}_{b_T,s,t} \mathcal{D}^{(\omega)}_{XY,s,t}}_{\Hs}\Big)^{\gamma} = o(T^{\gamma/2}/{b^{\gamma/2}_T}). 
\end{align*}
\end{lemma}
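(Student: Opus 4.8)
The plan is to exploit that the inner sum is \emph{near-diagonal}: for each $t$ the index $s$ runs only over the finitely many values $s\in\{(t-4m+1)\vee 1,\dots,t-1\}$, so after reindexing by the lag $h=t-s\in\{1,\dots,4m-1\}$ (a range that is fixed once $m$ is fixed) we may write
\[
\sum_{t=2}^{T}\sum_{s=(t-4m+1)\vee 1}^{t-1}\tilde{w}^{(\omega)}_{b_T,s,t}\mathcal{D}^{(\omega)}_{XY,s,t}
=\sum_{h=1}^{4m-1}\sum_{t=h+1}^{T}\tilde{w}^{(\omega)}_{b_T,t-h,t}\mathcal{D}^{(\omega)}_{XY,t-h,t}.
\]
By \eqref{eq:Hsnorm} and Minkowski's inequality it then suffices to bound, for each fixed $h$, the $\op^2_{\Hs}$-norm of the inner sum over $t$, since there are only $4m-1$ terms in the outer sum.

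The key observation is that for each fixed $h\ge 1$ the sequence $\{\tilde{w}^{(\omega)}_{b_T,t-h,t}\mathcal{D}^{(\omega)}_{XY,t-h,t}\}_{t}$ is a martingale difference sequence with respect to $\{\G_t\}$. Indeed, both components $\dmp{t}{X}\otimes\dmp{t-h}{X}$ and $\dmp{t-h}{Y}\otimes\dmp{t}{Y}$ are $\G_t$-measurable, while by construction \eqref{eq:Dm} the factor carrying index $t$ (namely $\dmp{t}{X}$, resp. $\dmp{t}{Y}$) is a $\G_t$-measurable martingale difference; since $\dmp{t-h}{\cdot}$ is $\G_{t-1}$-measurable for $h\ge 1$, conditioning on $\G_{t-1}$ and pulling the measurable factor out of the (Bochner) conditional expectation gives $\E[\mathcal{D}^{(\omega)}_{XY,t-h,t}\mid\G_{t-1}]=0$. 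This is exactly the step that drives the result: a naive triangle-inequality bound produces $O(mT)$ summands each of order one and hence only $O(T)$, which is \emph{not} $o(\sqrt{T/b_T})$ because $b_T T\to\infty$; the martingale cancellation is what reduces the bound to the required $O(\sqrt{T})$. I regard verifying this martingale structure (and the commutation of conditional expectation with the tensor map) as the only genuine point to get right — everything else is routine moment bookkeeping.

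Given this, I would apply \autoref{lem:Burkh} with $H=\Hs$, $p=2$, operators $A_t=\tilde{w}^{(\omega)}_{b_T,t-h,t}\,\mathrm{Id}$ (so $\snorm{A_t}_\infty=|\tilde{w}^{(\omega)}_{b_T,t-h,t}|$), and difference sequence $D_t=\mathcal{D}^{(\omega)}_{XY,t-h,t}$, obtaining
\[
\bignorm{\sum_{t=h+1}^{T}\tilde{w}^{(\omega)}_{b_T,t-h,t}\mathcal{D}^{(\omega)}_{XY,t-h,t}}_{\Hs,2}^{2}
\le\sum_{t=h+1}^{T}|\tilde{w}^{(\omega)}_{b_T,t-h,t}|^{2}\,\bignorm{\mathcal{D}^{(\omega)}_{XY,t-h,t}}_{\Hs,2}^{2}.
\]
It remains to bound the summands uniformly. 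From \eqref{eq:Hsnorm} one has $\|\mathcal{D}^{(\omega)}_{XY,t-h,t}\|_{\Hs}=\|\dmp{t}{X}\|_{\Hi}\|\dmp{t-h}{X}\|_{\Hi}+\|\dmp{t-h}{Y}\|_{\Hi}\|\dmp{t}{Y}\|_{\Hi}$, and by the Cauchy--Schwarz inequality together with stationarity and $\dmp{k}{X},\dmp{k}{Y}\in\op^{4}_{\Hi}$ (valid under \autoref{as:depstruc}, $p=4+\varepsilon$) we get $\E\|\mathcal{D}^{(\omega)}_{XY,t-h,t}\|_{\Hs}^{2}\le C$ uniformly in $t,h,\omega$. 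Since $w$ is bounded (\autoref{as:Weights}), $|\tilde{w}^{(\omega)}_{b_T,t-h,t}|^{2}=(2\pi)^{-2}|w(b_T h)|^{2}\le C'$, so each inner sum is $O(\sqrt{T})$. Summing the $4m-1$ values of $h$ gives $O(m\sqrt{T})$, and for fixed $m$ this is $O(\sqrt{T})=o(\sqrt{T/b_T})$, because $\sqrt{T}/\sqrt{T/b_T}=\sqrt{b_T}\to0$ by \autoref{as:bwrates}. This yields the claim.
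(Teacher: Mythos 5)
Your proof is correct and uses essentially the same mechanism as the paper's: the martingale-difference structure in the index $t$ (the factor $\dmp{t}{\cdot}$ is a $\G_t$-martingale increment while the lagged factor is $\G_{t-1}$-measurable) yields $L^2$-orthogonality, reducing the double sum to $O(mT)$ squared bounded weights, with the fourth moments supplied by \autoref{as:depstruc}. Your reindexing by the lag $h$ followed by Minkowski over the $4m-1$ lags is only a cosmetic reorganization of the paper's argument, which instead applies orthogonality over $t$ directly and \autoref{lem:Burkh} to the inner sum over $s$; both give $o(\sqrt{T/b_T})$ for fixed $m$ since $b_T\to 0$.
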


\begin{proof}
By definition of $\Hs$, we can write
\begin{align*}
\bignorm{\sum_{t=2}^{k} \sum_{s=t-4m+1 \vee 1}^{t-1} \tilde{w}^{(\omega)}_{b_T,s,t} \mathcal{D}^{(\omega)}_{XY,s,t}}^2_{\Hs,\gamma} & =\bignorm{\sum_{t=2}^{k} \sum_{s=t-4m+1 \vee 1}^{t-1} \tilde{w}^{(\omega)}_{b_T,s,t} \mathcal{D}^{(\omega)}_{XY,s,t}}^2_{\Hs,\gamma} 
\\& \le 2  \sum_{i=1}^{2} \bignorm{\sum_{t=2}^{k} \sum_{s=t-4m+1 \vee 1}^{t-1} \tilde{w}^{(\omega)}_{b_T,s,t} \big( \dmp{s}{X_i} \otimes  \dmp{t}{X_i} \big)}^2_{\Hi_i,\gamma} 
\end{align*}
Using \autoref{lem:Burkh} and Holder's inequality, we obtain for fixed $m$,
\begin{align*}
& \sum_{i=1}^{2}\bignorm{\sum_{t=2}^{k} \sum_{s=t-4m+1 \vee 1}^{t-1} \tilde{w}^{(\omega)}_{b_T,s,t} \big( \dmp{t}{X_i} \otimes  \dmp{s}{X_i} \big)}^2_{\Hi_i,\gamma} 
 \le  \sum_{i=1}^2 \sum_{t=2}^{k} \norm{  \dmp{t}{X_i}}^2_{\hi_i,2\gamma} \bignorm{\sum_{s=t-4m+1 \vee 1}^{t-1} \tilde{w}^{(\omega)}_{b_T,s,t} \dmp{s}{X_i} }^2_{\hi_i,2\gamma}
  \\& =2  K^4_{2\gamma} \sum_{i=1}^2  \|\dmp{0}{X_i}\|^4_{\hi_i,2\gamma} \Big( \sum_{t=2}^{4m} \sum_{s=1}^{t-1 }|\tilde{w}^{(\omega)}_{b_T,s,t}|^2  + \sum_{t=4m+1}^{k} \sum_{s=t-4m+1 }^{t-1}| \tilde{w}^{(\omega)}_{b_T,s,t}|^2 \Big)
  \\& =2  K^4_{2\gamma} \sum_{i=1}^2  \|\dmp{0}{X_i}\|^4_{\hi_i,2\gamma} \Big( \sum_{h=1}^{4m-1} \sum_{t=h+1}^{4m }|w(b_T h)|^2  + \sum_{h=1}^{4m-1} \sum_{t=4m+1 }^{k}|w(b_T h)|^2 \Big).
\end{align*}
The result now follows by Theorem 1 of \cite{Mor76}.
\end{proof}

\begin{lemma}\label{lem:inpsZt}
Let $\widetilde{J}^{(\lambda)}_{m,b_T,t}$ be defined as in \eqref{eq:J4m} and suppose 
\autoref{as:Weights} is satisfied. Furthermore, assume $\lambda_1 \pm\lambda_2 \ne 0 \mod 2\pi$. Then for any $u, v \in \Hi$,
\[\sum_{t=1+4m}^{T} |\E \widetilde{J}^{(\lambda_1)}_{m,b_T,t}(u) \widetilde{J}^{(\lambda_2)}_{m,b_T,t}(v)| =o(T/b_T).\]
\end{lemma}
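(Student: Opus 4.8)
The plan is to reduce the double sum defining the product of the two tapered martingale transforms to a single diagonal sum by exploiting the martingale-difference structure, and then to extract the required cancellation from the oscillatory factor $e^{-\im(\lambda_1+\lambda_2)h}$. First I would expand, writing (recall \eqref{eq:J4m})
\[
\E\,\widetilde{J}^{(\lambda_1)}_{m,b_T,t}(u)\,\widetilde{J}^{(\lambda_2)}_{m,b_T,t}(v) = \sum_{s_1=1}^{t-4m}\sum_{s_2=1}^{t-4m} \tilde{w}^{(\lambda_1)}_{b_T,t,s_1}\tilde{w}^{(\lambda_2)}_{b_T,t,s_2}\, \E\big[\inprod{D^{(\lambda_1)}_{l,m,s_1}}{u}\inprod{D^{(\lambda_2)}_{l,m,s_2}}{v}\big].
\]
Since $\{D^{(\lambda)}_{l,m,s}\}_{s}$ is, for each fixed $\lambda$, a martingale difference sequence with respect to $\{\G_s\}$ (as recorded below \eqref{eq:Dm}), conditioning on $\G_{s_1-1}$ with $s_1>s_2$ shows $\E[\inprod{D^{(\lambda_1)}_{l,m,s_1}}{u}\inprod{D^{(\lambda_2)}_{l,m,s_2}}{v}]=0$ whenever $s_1\ne s_2$, so only the diagonal $s_1=s_2=:s$ survives. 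By stationarity of the increments the resulting expectation is independent of $s$, and $m$-dependence makes it a finite constant $c_{l,m}(u,v):=\E[\inprod{D^{(\lambda_1)}_{l,m,0}}{u}\inprod{D^{(\lambda_2)}_{l,m,0}}{v}]$. Using \eqref{weight} and evenness of $w$ to combine the two weights, the diagonal weight product equals $(2\pi)^{-2}w(b_T(t-s))^2 e^{-\im(\lambda_1+\lambda_2)(t-s)}$, so that after the substitution $h=t-s$,
\[
\E\,\widetilde{J}^{(\lambda_1)}_{m,b_T,t}(u)\,\widetilde{J}^{(\lambda_2)}_{m,b_T,t}(v) = (2\pi)^{-2}c_{l,m}(u,v)\sum_{h=4m}^{t-1} w(b_T h)^2 e^{-\im\mu h}, \qquad \mu:=\lambda_1+\lambda_2 .
\]

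The heart of the argument is then to bound this oscillatory window sum uniformly in $t$. Since $\mu\ne 0 \bmod 2\pi$ by hypothesis, the Dirichlet partial sums $S_n:=\sum_{h=4m}^{n}e^{-\im\mu h}$ satisfy $\sup_n|S_n|\le |\sin(\mu/2)|^{-1}=:C_\mu<\infty$. Summation by parts then gives
\[
\Big|\sum_{h=4m}^{t-1} w(b_Th)^2 e^{-\im\mu h}\Big| \le \|w\|_\infty^2\,C_\mu + C_\mu\sum_{h\in\znum}\big|w(b_T(h+1))^2-w(b_Th)^2\big|,
\]
uniformly in $1\le t\le T$. I would control the last sum by the total variation of $w^2$ over the grid $b_T\znum$: the contribution from $|h|\le L/b_T$ is bounded by the finite total variation of $w^2$ on compacts (using that $w$ is bounded and continuous except at finitely many points), while the tail $|h|>L/b_T$ is rendered negligible, uniformly in $b_T$, by the decay condition $\sup_{0\le b\le1} b\sum_{|h|\ge L/b}w^2(bh)\to0$ of \autoref{as:Weights}. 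This yields $\sup_{1\le t\le T}\big|\sum_{h}w(b_Th)^2 e^{-\im\mu h}\big|=O(1)$. (Alternatively, viewing the sum as the discrete Fourier transform of the dilated window $\{w(b_T h)^2\}_h$ evaluated at the fixed frequency $\mu$, a Riemann--Lebesgue argument delivers the sharper $o(1/b_T)$; either bound suffices below.)

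Assembling the pieces, the diagonal constant $c_{l,m}(u,v)$ is finite for fixed $m$ by Cauchy--Schwarz and the moment assumption, whence
\[
\sum_{t=1+4m}^{T}\big|\E\,\widetilde{J}^{(\lambda_1)}_{m,b_T,t}(u)\,\widetilde{J}^{(\lambda_2)}_{m,b_T,t}(v)\big| \le (2\pi)^{-2}|c_{l,m}(u,v)|\sum_{t=1+4m}^{T} O(1) = O(T) = o(T/b_T),
\]
since $b_T\to0$. The main obstacle is the uniform cancellation estimate of the previous paragraph: everything else is bookkeeping, but that step is where the hypothesis is essential, since for $\mu\equiv0 \bmod 2\pi$ the window sum is of exact order $\kappa/b_T$ and no cancellation occurs. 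The companion identity carrying a complex conjugate on the second factor (needed to kill the off-diagonal terms in the conditional \emph{covariance}, as opposed to the pseudo-covariance, in the proof of \autoref{thm:NBMf}) is treated identically upon replacing $\mu$ by $\lambda_1-\lambda_2$, which is precisely why the two-sided condition $\lambda_1\pm\lambda_2\ne0 \bmod 2\pi$ is imposed.
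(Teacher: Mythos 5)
Your argument is correct, and it is worth noting that the paper itself does not prove this lemma internally: its ``proof'' is a one-line deferral to Lemma B.3 of \cite{vD19}, so your write-up is essentially a self-contained reconstruction of that cited argument rather than a different route. The skeleton is exactly right: the martingale-difference property of $\{\dmp{s}{l}\}_s$ with respect to $\{\G_s\}$ kills all off-diagonal terms $s_1\ne s_2$, joint stationarity collapses the diagonal to a single constant $c_{l,m}(u,v)$ (finite by Cauchy--Schwarz for fixed $m$), and the problem reduces to the oscillatory window sum $\sum_{h}w(b_Th)^2e^{-\im(\lambda_1+\lambda_2)h}$, where the hypothesis $\lambda_1\pm\lambda_2\ne0\bmod 2\pi$ is what produces the cancellation (the $-$ sign being needed for the conjugated companion identity, as you observe). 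The one soft spot is your primary justification of the cancellation step: boundedness plus continuity except at finitely many points does \emph{not} imply that $w^2$ has finite total variation on compacts, so the summation-by-parts bound $C_\mu\sum_h|w(b_T(h+1))^2-w(b_Th)^2|=O(1)$ does not follow from \autoref{as:Weights} as literally stated (it does hold for every standard lag window, but that is an additional regularity assumption). Your parenthetical alternative is the one that actually closes the gap under the stated assumptions: since $w^2$ is Riemann integrable on compacts and satisfies the uniform tail condition, approximating $w^2$ in $L^1$ by step functions and using the Dirichlet bound $O(b_T)$ for each step yields $b_T\sum_hw(b_Th)^2e^{-\im\mu h}\to0$, i.e.\ the window sum is $o(1/b_T)$, which more than suffices since even a uniform $O(1/b_T)$ per term would only give $O(T/b_T)$ whereas you in fact obtain $o(T/b_T)$ (indeed $O(T)$ under the BV route). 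I would recommend stating explicitly which of the two routes you are taking and, if it is the first, adding bounded variation of $w$ as a hypothesis.
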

\begin{proof}This follows from a slight adjustment of the proof of Lemma B.3 of \cite{vD19}.\end{proof}




\section{Application to resting state fMRI} \label{sec:data}

In ths section, we illustrate the methodology developed in this paper by an application to resting state functional Magneting Resonance Imaging (fMRI) data. fMRI data allows to capture brain activity and consists of a sequence of three-dimensional images of the brain recorded every few seconds. Since the brain operates as a single unit of which we record the activity at a large number of spatial locations, it is natural to model the brain as a function and hence the time record as a functional time series, thereby taking into account the present temporal dependencies. The data we use are publicly available as part of the \textit{1000 connectome project} \citep[][]{Biswal10}. In order to avoid differences in scanner types and locations, we consider testing for relevant differences for 6 subjects of which the data was  measured at a single site (Beijing, China). For each subject, the resting state scans are comprised of 225 temporal scans, measured 2 seconds apart, where each temporal scan consists of three dimensional images of size 64$\times$64$\times$33 voxels. The fMRI data set for one of the subjects is depicted in \autoref{fig:fmriscan}. In order to correct for technical effects such as scanner drift, a polynomial trend of order 3 was removed from each voxel time series which are voxel-wise normalized \citep[see also][]{Worsley02}. 

\begin{figure}[t!th]
\centering 
\includegraphics[scale=0.4]{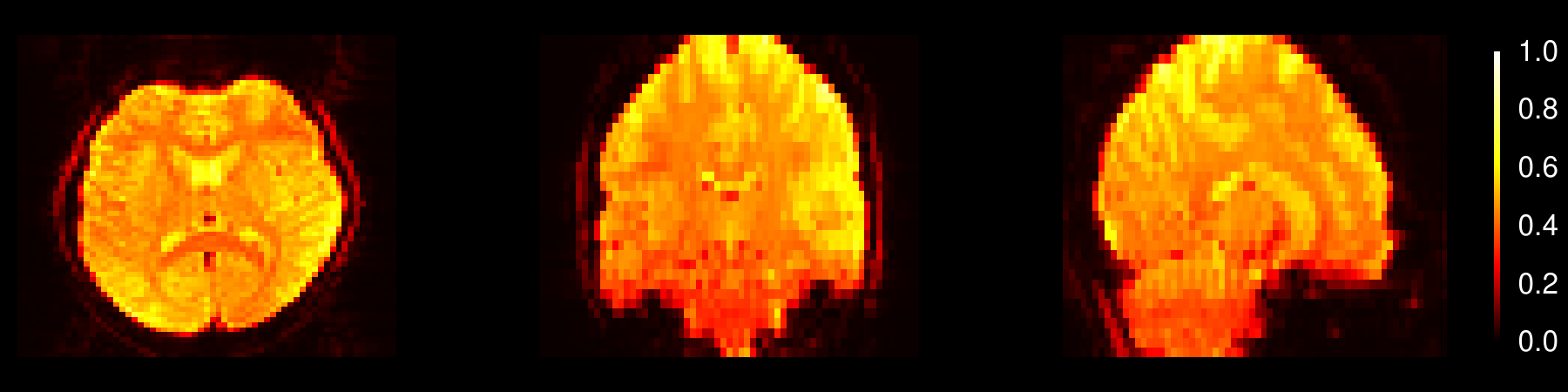}
\caption{Slices of the fMRI data set for one of the subjects.}
\label{fig:fmriscan}
\end{figure}

The high-dimensionality of fMRI data and hence of the corresponding second order dependence structure requires an efficient method to allow for the functional eigenanalysis. To make this computationally efficient and to avoid spurious identifiability issues that come with alternative discretized matrix approaches \citep[see][for a discussion]{aston2012}, we shall assume the functional component has a separable structure. The assumption of separable functions in the context of brain imaging is commonly applied as a method to deal with the high dimensionality of this type of data \citep[see e.g.,][]{Worsley02,Ruttetal1998,aston2012,StAsKi2019} 
To make this more precise in our set up, let $\prod_{i=1}^{3} \mathcal{S}_i$ be a product of compact sets. Then we model each fMRI data set as a functional time series $\{X_t(\tau_1,\tau_2,\tau_3) : \tau_i  \in \mathcal{S}_i, i =1,2, 3 \}_{t \in \znum}$ with well-defined  spectral density operator in $S_2(\prod_{i=1}^{3} \mathcal{S}_i)$ that satisfies
\begin{align*}
\F^{(\omega)}_{X} = \F^{(\omega)}_{X,1} \widetilde{\bigotimes}\F^{(\omega)}_{X,2} \widetilde{\bigotimes} \F^{(\omega)}_{X,3}, \tageq \label{eq:sepF0}
\end{align*}
where $\F^{(\omega)}_{X,i}$ denotes the $i$-th \textit{directional spectral density operator} 
which has a kernel function in $L^2_{\mathbb{C}}(\mathcal{S}_i\times  \mathcal{S}_i)$ that replicates across the other two directions. For example, the first directional component has kernel given by 
\begin{align*}
\F^{(\omega)}_{X,1}(\tau_1,\sigma_1) = \frac{1}{2\pi}\sum_{h \in \znum}\int_{\mathcal{S}_2} \int_{\mathcal{S}_3} \Cov\big(X_{h}(\tau_1,\tau_2,\tau_3), X_{0}(\sigma_1,\tau_2,\tau_3)\big) d \tau_2 d \tau_3 e^{-\im h \omega} \quad \tau_1, \sigma_1 \in \mathcal{S}_1,
\end{align*}
The kernels of $\F^{(\omega)}_{X,2}$ and $\F^{(\omega)}_{X,3}$ are similarly defined. Let the eigenelements of $\F^{(\omega)}_{X,i}$ be denoted as $\big\{\lambda^{(\omega)}_{i,j}, \phi^{(\omega)}_{i,j}\big\}_{j \ge 1}$ for $i=1,2,3$. Then the eigendecomposition of the  operator in \eqref{eq:sepF0} at frequency $\omega$ is given by
\begin{align*}
\F^{(\omega)}_{X} = \sum_{j,k,l=1}^{\infty} \lambda^{(\omega)}_{jkl} \Pi^{(\omega)}_{jkl} {\bigotimes} \Pi^{(\omega)}_{jkl}. \tageq \label{eq:sepF}
\end{align*}
where the eigenvalues $\lambda^{(\omega)}_{jkl}: = \lambda^{(\omega)}_{1,j} \lambda^{(\omega)}_{2,k} \lambda^{(\omega)}_{3,l}$ are ordered in a descending manner, i.e., $\{\lambda^{(\omega)}_{jkl}\}_{jkl}\searrow 0$ and where $\Pi^{(\omega)}_{jkl}= \phi^{(\omega)}_{1,j} \otimes  \phi^{(\omega)}_{2,k} \otimes  \phi^{(\omega)}_{3,l}$ is the eigenfunction belonging the $jkl$-th largest eigenvalue at frequency $\omega$  (see also \autoref{sec:Eigapprox}). The sequential eigenelements can now be efficiently estimated via consistent estimators of the sequential eigenelements of the directional operators as given in \eqref{eq:seqFQ}. For example, the estimator of the first directional sequential spectral density kernels is defined as 
\[
\hat{\F}_{X,1}^{(\omega)}(\eta)(\tau_1,\sigma_1) =\frac{1}{\flo{\eta T}}\sum_{s=1}^{\flo{\eta T}}\Bigg(\sum_{t=1}^{\flo{\eta T}}  \frac{\tilde{w}^{(\omega)}_{b,s,t}}{|\mathcal{S}_2| |\mathcal{S}_3|} \sum_{\substack{\tau_2 \in \mathcal{S}_2\\ \tau_3 \in \mathcal{S}_3}}(X_s(\tau_1, \tau_2, \tau_3) (X_t(\sigma_1, \tau_2, \tau_3)\Big), \quad \tau_1, \sigma_1 \in \mathcal{S}_1, \tageq  \label{eq:seqFQfmri}
\]
where $|\mathcal{S}_i|$ is the set of the discrete observations of the function in the $i$-th direction. 

The raw data was converted into functional observations by using cubic b-spline functions on $[0,1]^3$.
It is worth mentioning that in the computation of the eigenfunctions we took into account that  the B-spline basis functions do not form an orthogonal basis. Estimators of the sequential sequence of directional spectral density kernels were obtained using the same parameters as in the previous section and evaluated at a $100^2$ equispaced grid of $[0,1]^ 2$. We then investigated whether we could find evidence of relevant differences in the second order structure of the different subjects by applying the three tests developed in Section \ref{sec31}-\ref{sec33} pairwise.  Since the sampling rate of the data is $0.5$ Hertz per second, we restrict our analysis to the interval $[0,\pi/2)$.

\begin{figure}[t!]
\centering 
\includegraphics[width=0.33\textwidth]{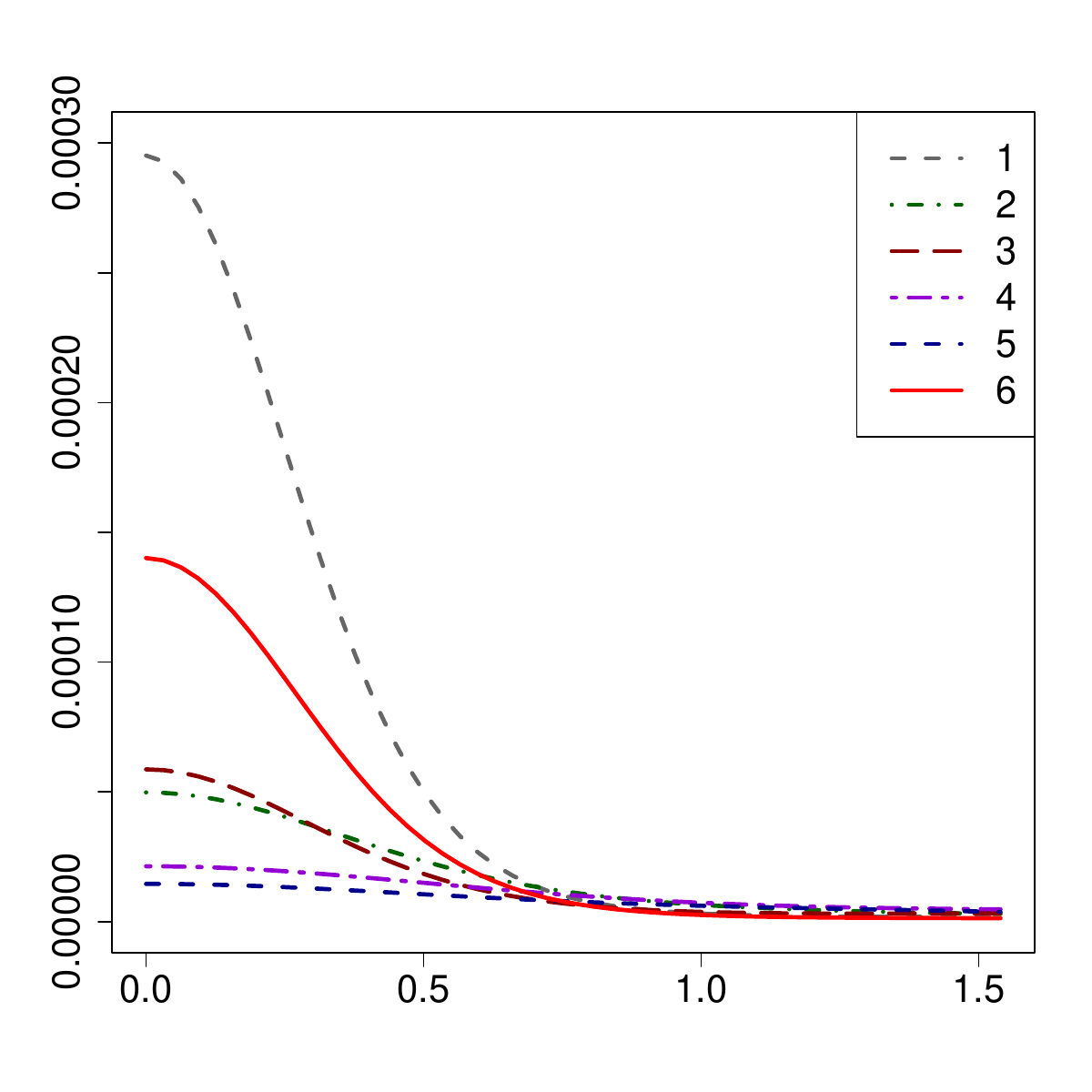}
\includegraphics[width=0.33\textwidth]{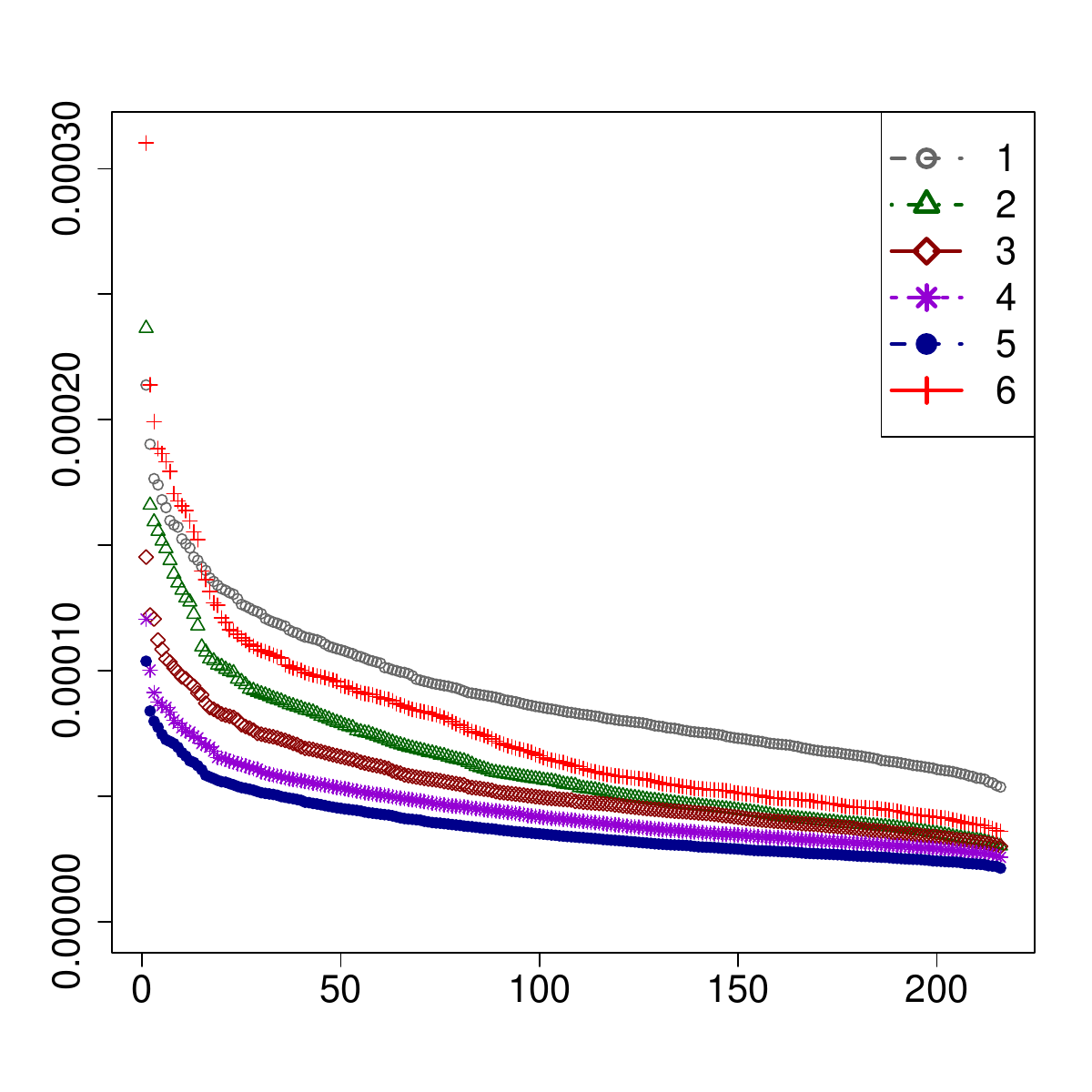}
\caption{\it  Squared Hilbert-Schmidt norm of the estimator of \eqref{eq:sepF} as a function of frequency for the 6 subjects (left) and the corresponding largest 216 eigenvalues averaged over $[0,\pi/2)$ (right).}
\label{fig:fmri_Flam}
\end{figure}

\begin{figure}[h!]
\centering 
\includegraphics[width=0.33\textwidth]{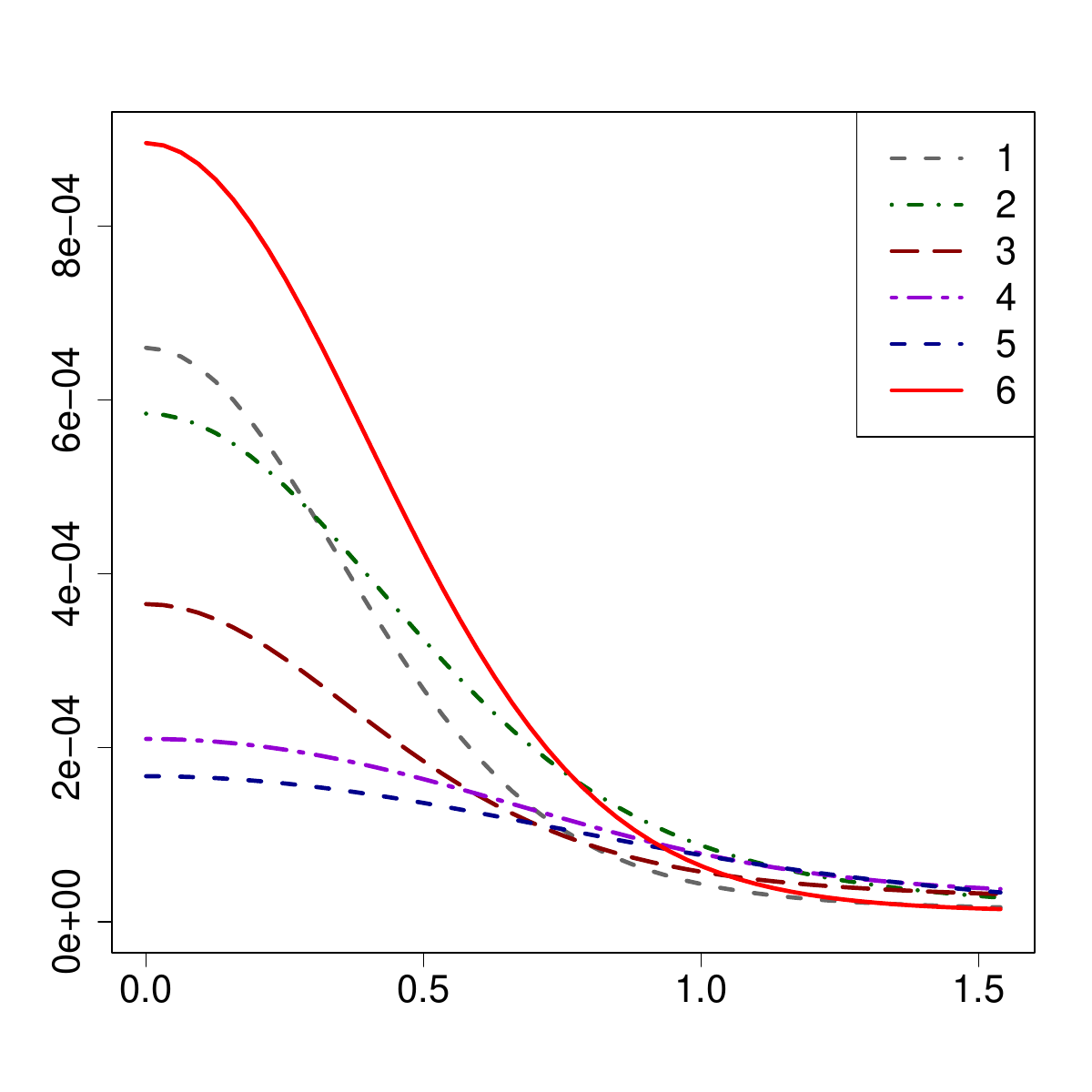}
\includegraphics[width=0.33\textwidth]{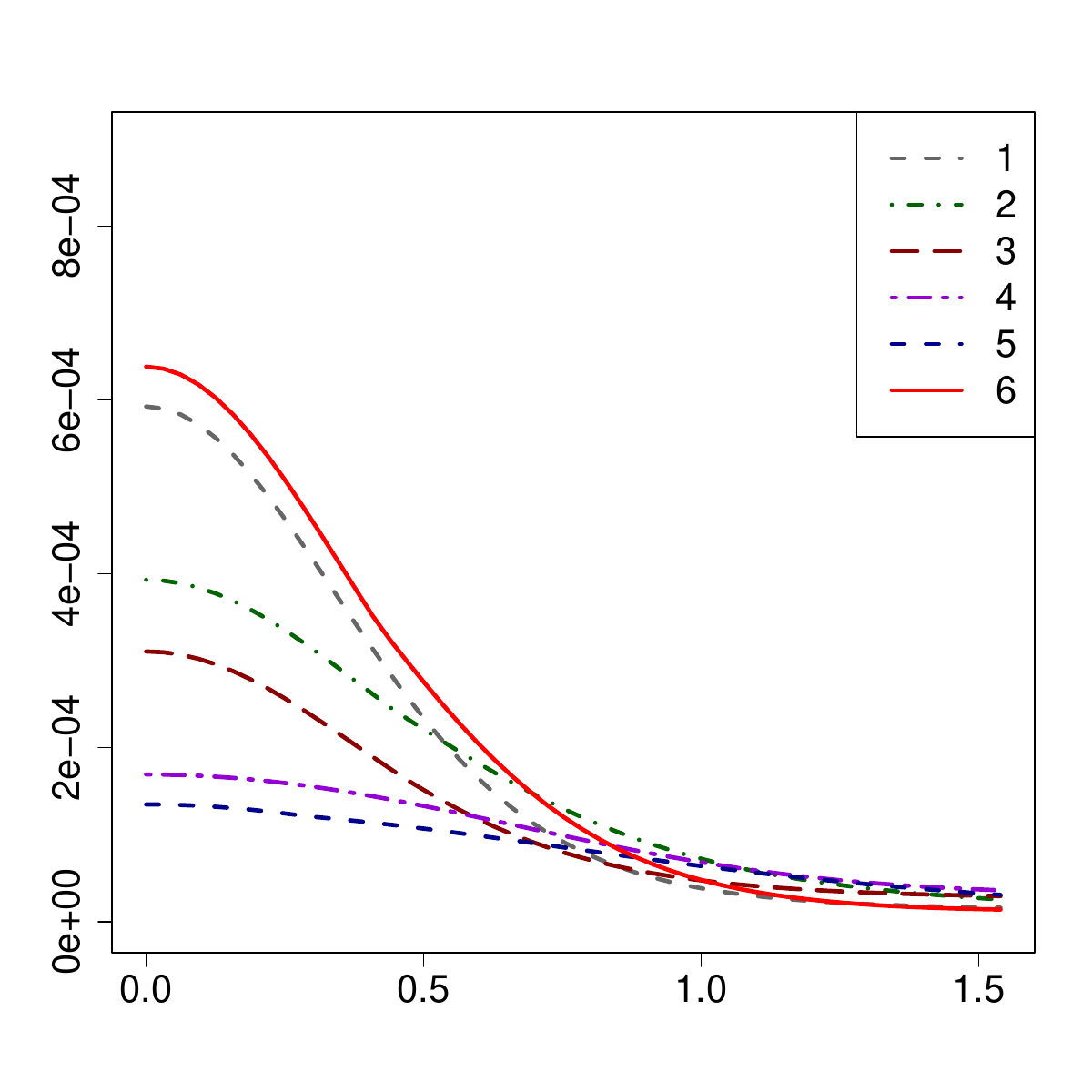}
\caption{\it  The first (left) and second (right) estimated eigenvalues of \eqref{eq:sepF} as a function of frequency for the six subjects.}
\label{fig:fmri_ev}
\end{figure}
In Figure \eqref{fig:fmri_Flam}, the squared Hilbert-Schmidt norm of the estimator of \eqref{eq:sepF} is plotted as a function of frequency for each of the six subjects. We observe most signal in the low-frequency band (<0.1 Hertz). 
The first 216 estimated eigenvalues averaged over frequences are given in the right graph. For fixed frequency, these were obtained by taking the  first $T^{1/3}={(225)}^{1/3}$ eigenvalues of each of the directional operators, taking the kronecker product of these eigenvalues and arranging the resulting values in descending order. For all subjects, a clear gap is present after approximately 2 to 3 largest averaged eigenvalues, which then taper off slowly. Tables~\ref{tab:opfmri}--\ref{tab:oplam} provides the p-values of the test statistics $\hat{\mathbb{D}}^{[a,b]}_{T_1,T_2}$, $\widehat{\mathbb{D}}^{[a,b],k}_{\Pi,T_1,T_2}$ and $\widehat{\mathbb{D}}^{[a,b],k}_{\lambda,T_1,T_2}$, respectively, for the specified relevant hypotheses. To clarify that the tests were conducted over the frequency band $[a,b]$, the hypotheses values are equipped with a superscript $[a,b]$, e.g., we write $\Delta^{[a,b]}$. For the testing frameworks of no relevant differences between the spectral density operator and between the  eigenvalues,  the thresholds $\Delta^{[a,b]}$ and $\Delta_{\lambda,i}^{[a,b]}$ are specified based upon 
the overall signal present in the data. 

For the hypothesis of a relevant difference of at most $\Delta_{[0,\pi/2)}=1e-06$ between the spectral density operators, we observe only between subjects 3 and 5 approximately a rejection at the 6\% level, but not for $\Delta_{[0,\pi/2)}=2e-06$.  When we restrict to the frequency band where most signal is present (<0.1 Hertz), we find p-values less than $0.05$ for subjects 3 and 5 and we find some marginal evidence of relevant differences for between subjects 2 and 4 with a p-value of $0.11$ for the hypothesis $\Delta_{[0,\pi/5]}=1e-06$. For the first two eigenvalues, which are plotted in \autoref{fig:fmri_ev}, we do not find evidence of significant differences (\autoref{tab:oplam}), also not when we restrict to $[0,\pi/5]$ or when we change the value of $\Delta^{[0,\pi/2)}_{\lambda,i}$. This appears to be caused by a relatively large variance.

 More interestingly,  \autoref{tab:opPi} provides clear evidence of relevant differences between the eigenprojectors. The null of no relevant differences between the first eigenprojectors with $\Delta^{[0,\pi/2)}_{\Pi,1}=0.2$ can be rejected  for quite a few combinations at the 10\% level and in particular for all those combinations with  subject 1, for which the p-values are less than $0.07$.  The p-values corresponding to the hypotheses $\Delta^{[0,\pi/2)}_{\Pi,1}=0.3$ provides for some pairs less clear evidence of relevant differences in the first eigenprojector. In case of the second eigenprojectors, we do however reject the null of no relevant differences  for both the hypotheses  $\Delta^{[0,\pi/2)}_{\Pi,2}=0.2$ and  $\Delta^{[0,\pi/2)}_{\Pi,2}=0.3$ in most cases at a 1\% or 5\% level. Except for the tests between subject 2 and 6, we in fact reject all cases at 10\% level. This behavior did not change when we restricted to $[0,\pi/5]$.

This preliminary analysis would indicate that differences in brain activity between subjects might be driven by differences in shapes in their primary modes of variation. However, given the complexity of the data a more detailed analysis and a longer observation length might be of interest. This is however beyond the scope of this paper and is left for future research. 
\begin{table}[!h] 
\centering
\hspace*{-0pt}
\begin{tabular}{r| lllll}
  \hline
  \multicolumn{6}{c}{$H_0: \Delta^{[0,\pi/2)}=1e-06$}  \\ \hline
&  2 & 3 & 4 & 5 & 6 \\ 
 \hline
1& 0.336 & 0.328 & 0.296 & 0.269 & 0.422 \\ 
 2&  & 0.375 & 0.193 & 0.175 & 0.227 \\ 
 3&  &  & 0.175 & 0.053 & 0.176 \\ 
 4&   &  &  & 0.957 & 0.19 \\ 
 5 &  &  &  &  & 0.185 \\ 
   \hline
\end{tabular}
\begin{tabular}{lllll}
  \hline
  \multicolumn{5}{c}{$H_0: \Delta^{[0,\pi/2)}=2e-06$}  \\ \hline
  2 & 3 & 4 & 5 & 6 \\ 
  \hline
   0.341 & 0.337 & 0.299 & 0.272 & 0.434 \\ 
    & 0.829 & 0.496 & 0.287 & 0.265 \\ 
    &  & 0.364 & 0.141 & 0.218 \\ 
    &  &  & 0.998 & 0.208 \\ 
    &  &  &  & 0.204 \\ 
   \hline
\end{tabular}\\
\hspace*{-0pt}
\begin{tabular}{r|lllll}
  \hline
  \multicolumn{6}{c}{$H_0: \Delta^{[0,\frac{\pi}{5}]}=1e-06$}  \\ \hline
 & 2 & 3 & 4 & 5 & 6 \\ 
  \hline
1   & 0.332 & 0.326 & 0.296 & 0.265 & 0.416 \\ 
2   &  & 0.242 & 0.107 & 0.127 & 0.214 \\ 
3   &  &  & 0.129 & 0.031 & 0.15 \\ 
4   &  &  &  & 0.615 & 0.187 \\ 
5   &  &  &  &  & 0.182 \\ 
   \hline
\end{tabular}
\begin{tabular}{lllll}
  \hline
  \multicolumn{5}{c}{$H_0: \Delta^{[0,\frac{\pi}{5}]}=2e-06$}  \\ \hline
  2 & 3 & 4 & 5 & 6 \\ 
 \hline
  0.337 & 0.328 & 0.296 & 0.27 & 0.422 \\ 
    & 0.403 & 0.182 & 0.167 & 0.234 \\ 
     &  & 0.196 & 0.046 & 0.177 \\ 
     &  &  & 0.946 & 0.2 \\ 
    &  &  &  & 0.189 \\ 
   \hline
\end{tabular}
\caption{\it  $p$-values corresponding to the test in \autoref{thm:leveltestZF} for the specified null hypotheses and frequency bands.}
\label{tab:opfmri}
\end{table}
\begin{table}[h!] \label{tab:opPi}
\centering
\hspace*{-0pt}
\begin{tabular}{r | lllll }
  \hline
  \multicolumn{6}{c}{$H_0: \Delta^{[0,\pi/2)}_{\Pi,1}=0.2$ }  \\ \hline
 & 2 & 3 & 4 & 5 & 6 \\ 
  \hline
1 & 0.001 & 0.001 & 0.061 & 0.014 & 0.064 \\ 
  2 &  & 0.073 & 0.314 & 0.289 & 0.465 \\ 
  3 &  &  & 0.043 & 0.122 & 0.131 \\ 
  4 &  &  &  & 0.583 & 0.375 \\ 
  5 &  &  &  &  & 0.278 \\ 
   \hline
\end{tabular}
\hspace{-0pt}
\begin{tabular}{l  llll }
  \hline
  \multicolumn{5}{c}{$H_0: \Delta^{[0,\pi/2)}_{\Pi,2}=0.2$}  \\ \hline
  2 & 3 & 4 & 5 & 6 \\ 
  \hline
  0.004 & 0.031 & 0.057 & 0.034 & 0.024 \\ 
    & 0.006 & 0.001 & 0.007 & 0.195 \\ 
    &  & 0.004 & 0.007 & 0.029 \\ 
    &  &  & 0.019 & 0.004 \\ 
    &  &  &  & 0.056 \\ 
   \hline
\end{tabular}
\hspace{0.0pt}
\begin{tabular}{r | lllll }
  \hline
  \multicolumn{6}{c}{$H_0: \Delta^{[0,\pi/2)}_{\Pi,1}=0.3$ }  \\ \hline
 & 2 & 3 & 4 & 5 & 6 \\ 
  \hline
1 & 0.002 & 0.001 & 0.132 & 0.036 & 0.17 \\ 
  2 &  & 0.205 & 0.482 & 0.414 & 0.627 \\ 
  3 &  &  & 0.117 & 0.282 & 0.242 \\ 
  4 &  &  &  & 0.887 & 0.508 \\ 
  5 &  &  &  &  & 0.42 \\ 
   \hline
\end{tabular}
\hspace{0.0pt}
\begin{tabular}{r  llll }
  \hline
  \multicolumn{5}{c}{$H_0: \Delta^{[0,\pi/2)}_{\Pi,2}=0.3$}  \\ \hline
  2 & 3 & 4 & 5 & 6 \\ 
  \hline
0.007 & 0.038 & 0.076 & 0.043 & 0.038 \\ 
   & 0.014 & 0.001 & 0.016 & 0.24 \\ 
   &  & 0.007 & 0.016 & 0.039 \\ 
   &  &  & 0.039 & 0.007 \\ 
   &  &  &  & 0.08 \\ 
   \hline
\end{tabular}
\caption{\it $p$-values corresponding to the eigenprojector test (\autoref{thm:leveltestZPi}) applied to the first two  eigenprojectors for the stated hypotheses. }
\label{tab:opPi}
\end{table}

  \begin{table}[!h] 
\centering
 \vspace*{-0pt}\begin{tabular}{l |lllll }
  \hline
  \multicolumn{6}{c}{$H_0: \Delta^{[0,\pi/2]}_{\lambda,1}=(2e-06)^2$ }  \\ \hline
 & 2 & 3 & 4 & 5 & 6 \\ 
  \hline
1 & 0.473 & 0.342 & 0.311 & 0.264 & 0.341 \\ 
  2 &  & 0.289 & 0.239 & 0.234 & 0.214 \\ 
  3 &  &  & 0.304 & 0.184 & 0.241 \\ 
  4 &  &  &  & 0.291 & 0.215 \\ 
  5 &  &  &  &  & 0.216 \\ 
   \hline
\end{tabular}
\begin{tabular}{  lllll }
  \hline
  \multicolumn{5}{c}{$H_0: \Delta^{[0,\pi/2]}_{\lambda,2}=(2e-06)^2$}  \\ \hline
 2 & 3 & 4 & 5 & 6 \\ 
  \hline
 0.446 & 0.32 & 0.302 & 0.255 & 0.487 \\ 
    & 0.295 & 0.204 & 0.206 & 0.204 \\ 
    &  & 0.319 & 0.2 & 0.207 \\ 
    &  &  & 0.255 & 0.187 \\ 
    &  &  &  & 0.196 \\ 
   \hline
\end{tabular}
\caption{\it  $p$-values corresponding to the eigenvalue test (\autoref{thm:leveltestZlam}) applied to the first two eigenvalues for the stated hypotheses. }
\label{tab:oplam}
\end{table}

\FloatBarrier

\small{}

\end{document}